    \let\stdchapter\section
    \renewcommand*\section{%
    \@ifstar{\starchapter}{\@dblarg\nostarchapter}}
    \newcommand*\starchapter[1]{%
        \stdchapter*{#1}
        \thispagestyle{fancy}
        \markboth{\MakeUppercase{#1}}{}
    }
    \def\nostarchapter[#1]#2{%
        \stdchapter[{#1}]{#2}
        \thispagestyle{fancy}
    }
\newtcolorbox{boxA}{
    fontupper = \bf,
    boxrule = 1.5pt,
    colframe = black 
}
\newtheorem{theorem}{Theorem}[section]
\newtheorem*{theorem*}{Main Theorem}
\newtheorem{lemma}[theorem]{Lemma}
\newtheorem{proposition}[theorem]{Proposition}
\theoremstyle{definition}
\newtheorem{definition}[theorem]{Definition}
\theoremstyle{plain}
\newtheorem{corollary}[theorem]{Corollary}
\newtheorem{remark}[theorem]{Remark}
\newtheorem{example}[theorem]{Example}
\theoremstyle{conclusion}
\begin{document}

\thispagestyle{empty}
\begin{center}
	\Large{{\bf  Poisson Centralizers and Polynomial Superintegrability for Magnetic Geodesic Flows on Reductive Homogeneous Spaces }}
\end{center}
 
\vskip 0.3cm
\begin{center}
	\textsc{Kai Jiang$^{1\star}$,  Guorui Ma$^{2a,2b,\sharp}$,    Ian Marquette$^{3,\bullet}$,   Junze Zhang$^{4,\dagger}$  and Yao-Zhong Zhang$^{4,\ddagger}$}
\end{center}
\vskip 0.2cm
\begin{center}
	$^1$   Paris Curie Engineer School, Beijing University of Chemical Technology, Beijing, China
\end{center}
\begin{center}
	$2^a$ Shanghai Institute for Mathematics and Interdisciplinary Sciences (SIMIS), Shanghai, 200433, China  \\
    $2^b$  Research institute of Intelligent Complex Systems, Fudan University, Shanghai, 200433, China 
\end{center}
\begin{center}
	$^3$ Department of Mathematical and Physical Sciences, La Trobe University, Bendigo, VIC 3552, Australia
\end{center}

\begin{center}
	$^4$ School of Mathematics and Physics, The University of Queensland, Brisbane, QLD 4072, Australia
\end{center}

\begin{center}
	\footnotesize{ $\star$\textsf{kai.jiang.math@gmail.com} \hskip 0.25cm
    $\bullet$\textsf{i.marquette@latrobe.edu.au} \hskip 0.25cm
    $\sharp$\textsf{mgr18@tsinghua.org.cn} \hskip 0.25cm
 $^\dagger$\textsf{junze.zhang@uq.net.au} \hskip 0.25cm
 $^\ddagger$\textsf{yzz@maths.uq.edu.au}}
\end{center}
\vskip  1cm

\begin{abstract}
\noindent  We apply the Poisson projection chain to formulate superintegrable 
magnetic geodesic flows on an adjoint orbit $M=G/A$ with $G$ a compact semisimple Lie group and $A$ a closed subgroup of $G$. In the twisted cotangent bundle $(T^*M,\omega_\varepsilon)$, with $\omega_\varepsilon=\omega_{\mathrm{can}}+\varepsilon\,\pi^*\omega_{\mathrm{KKS}}$ being the canonical plus Kirillov-Kostant-Souriau (KKS) forms, we build two canonical and commuting families of polynomial first integrals: one pulled back from the Lie algebra $\mathfrak{g}$ of $G$ via the magnetic moment map $P$, and one pulled back from a $\mathrm{Ad}(A)$-invariant affine slice of $\mathfrak{m} \cong T_{eA}M$, where $eA$ is the identity of $G/A$. Their common image generates a reduced Poisson algebra obtained from a fiber tensor product, and the natural multiplication map into a Poisson subalgebra of polynomial functions $\mathcal{O}(T^*M) \subset C^\infty(T^*M)$ is Poisson and injective. The center of this fiber tensor product is contained in the Poisson center of the symmetric algebra of $\mathfrak{g}$. 
 In a dense regular locus, 
the resulting projection chain realises a superintegrable system.  
As examples, two $\mathrm{SU}(3)$ cases are studied (regular torus and irregular $\mathrm{S}(\mathrm{U}(2)\times \mathrm{U}(1))$ quotients), which illustrate the construction and produce explicit action-angle coordinates. 
\end{abstract}
\vskip 0.35cm
\hrule

\tableofcontents

\section{Introduction}
\label{sec:introduction} 

Integrable and superintegrable systems play an important role and have broad applications in mathematical physics and symplectic geometry \cite{MR3119484,MR3493688,MR187763,nehorovsev1968action,MR3942135,MR4071113,MR4282982,MR4644061}. A classical Hamiltonian system with $n$ degrees of freedom is said to be completely integrable if it admits $n$ independent first integrals (a.k.a integrals of motion, conserved quantities) in involution, and \textit{superintegrable} if it possesses more than $n$ integrals of motion (typically up to $2n-1$ integrals in a $2n$-dimensional phase space). In particular, a maximally superintegrable system has $2n-1$ independent integrals, which implies that all trajectories are closed  \cite{MR3493688,MR3942135,MR2023556}. 
Based on the extension of the integrals, superintegrability involves a range of other cases (based on the integrable case), from minimally superintegrable ($n+1$) to quasi maximally superintegrable ($2n-2$). These cases appear in the context of the study of superintegrable systems, usually using explicit differential operator realizations in classical and quantum mechanics. The modern study of integrable Hamiltonian systems was initiated by Liouville and is characterised by the Liouville-Arnold theorem.  The notion of noncommutative integrability, as formalized by Nekhoroshev \cite{nehorovsev1968action}, extends the Liouville-Arnold theorem to the noncommutative case and provides geometric descriptions of how superintegrabilities are realized in symplectic manifolds.  A completely integrable system consists of $n$ functionally independent conserved quantities on a $2n$-dimensional symplectic manifold and yields action-angle coordinates \cite{nehorovsev1968action,MR997295}. Moreover, in the 1970s, Mishchenko and Fomenko introduced the argument shift method in the study of Euler equations on Lie algebras, constructing Poisson-commutative algebras of integrals from invariant polynomials \cite{Mishchenkofomenko1978}. Their pioneering work established the complete integrability (in a generalized, noncommutative sense) of the geodesic flow in compact symmetric spaces by exhibiting sufficiently many conserved quantities. Independently, Thimm developed a complementary method based on chains of subalgebras and proved that invariant geodesic flows in certain homogeneous spaces are completely integrable \cite{Thimm1981}. 

Recently, additional developments have advanced in the construction of superintegrable systems using both algebraic and geometric approaches. For instance, the Gelfand-Cetlin system, introduced by Guillemin and Sternberg \cite{GuilleminSternberg1983}, provides a concrete family of conserved quantities on coadjoint orbits (e.g., flag manifolds) of unitary groups, thereby demonstrating integrability in those cases. These works illustrate how algebraic and geometric techniques (such as exploiting Lie algebra invariants or moment maps along a chain of subgroups) can lead to obtaining explicit first integrals for geodesic flows. 
These first integrals are usually polynomials in terms of the basis of Lie algebras and need not all Poisson-commute, provided that they generate a commutative subalgebra of sufficiently large dimension. This broader integrability context has been employed in different frameworks. For example, Efimov \cite{MR2141306} established the noncommutative integrability of certain magnetic geodesic flows on homogeneous symplectic manifolds, and Reshetikhin \cite{MR3493688} used the superintegrable Poisson chain $M \to \mathcal{P} \to \mathcal{B}$ (See Definition \ref{def:superge}) to describe the superintegrability of a symplectic manifold $M$. More recently, novel superintegrable systems have been developed on moduli spaces and in other Lie-theoretic contexts \cite{MR4299124}, demonstrating a growing interest in integrable structures associated with symmetric and homogeneous spaces.

In this paper, we apply Poisson projection chains to formulate superintegrabilities on the cotangent bundle of compact homogeneous manifolds with a twisted symplectic form. We focus on reductive homogeneous spaces $M= G/A$, where $G$ is a compact, simply connected, and semisimple Lie group, and $A$ is a closed and connected subgroup. We work on twisted cotangent bundles where the symplectic form is the sum of the canonical form on $T^*M$ and a magnetic term coming from the Kirillov-Kostant-Souriau (KKS) form. The noncommutative integrability of these magnetic geodesic flows was established by Efimov \cite{MR2141306} and later extended by Bolsinov-Jovanovi{\'c} \cite{MR2410783}. Their approach is primarily dynamical: one starts from two  collections of first integrals coming from moment map type of constructions and proves integrability by analyzing Hamiltonian vector fields and their commutation properties.

In this paper, we treat the same collections of polynomial integrals as algebraic objects. In particular, \begin{itemize}
    \item this perspective allows us to formulate a natural equivalence relation between (a priori different) Poisson projection chains arising from the different physical models, such as the spin Calogero-Moser model in \cite{MR4285685}. We will exploit this equivalence and its consequences in forthcoming work in \cite{Jiang2026reduced,Jiang2026reducedi};
    \item compared with existing constructions that establish (super)integrability by direct dynamical arguments and vector field computations, our approach provides a new functorial algebraic framework: we package the integrals into Poisson algebras and recover the associated Poisson projection chains canonically by passing to spectra;
    \item our main new ingredient is the identification of a canonical central Poisson subalgebra $R_0$ that captures exactly the overlap of the two families of integrals. Working relative to this common base lets us describe the rank stratification and prove superintegrability in a uniform, purely algebraic way.
\end{itemize}
These innovations are achieved by explicitly constructing the integrals and their intersection. We organize the two blocks of polynomial integrals into Poisson algebras $\mathfrak{F}_1^{\mathrm{poly}}$ and $\mathfrak{F}_2^{\mathrm{poly}}$, and we record their overlap as an explicit central Poisson subalgebra $R_0:=\mathfrak{F}_1^{\mathrm{poly}}\cap \mathfrak{F}_2^{\mathrm{poly}}$. The Poisson algebra generated by all polynomial integrals is then canonically identified with the coordinate ring of a (reduced) fiber product over this common base,
\begin{align*}
 \mathcal{A} \cong \mathfrak{F}_1^{\mathrm{poly}} \otimes_{R_0} \mathfrak{F}_2^{\mathrm{poly}}.
\end{align*}
This identification makes the passage from integrals to geometry functorial: taking spectra produces a canonical Poisson projection chain through $\mathrm{Spec}\,\mathcal{A}$, and it provides a natural language for comparing different Poisson projection chains through their associated algebras of polynomial integrals. 
Consequently, the rank decomposition and the associated superintegrability on the cotangent bundle of homogeneous spaces are encoded by morphisms of affine Poisson varieties, rather than obtained by case by case dynamical vector field computations. This algebraic packaging is the main novelty of the present paper and is what makes the construction uniform.

The structure of the paper is as follows: we begin in Section \ref{sec:prel} by reviewing the necessary background on superintegrable systems in symplectic manifolds and setting up the notation. Key definitions and classical results (e.g., Superintegrable Poisson chains, geometric invariant theory quotient in the affine case) are summarized to provide a foundation for our work.  In Section \ref{sec:superconstruction}, we present a uniform Poisson projection chain construction of superintegrable magnetic geodesic flows on compact homogeneous symplectic manifolds $M=G/A$ with $A = \exp (\ker \mathrm{ad}(W))$ for a fixed $W \in \mathfrak{g}$. Working on the twisted cotangent bundle $(T^*M,\omega_\varepsilon)$ with  $\omega_\varepsilon=\omega_{\mathrm{can}}+\varepsilon\,\pi^*\omega_{\mathrm{KKS}}$, we produce two canonical polynomial families of the first integrals: \begin{align*}
    \mathfrak{F}_1^{\mathrm{poly}}:=P^*S(\mathfrak{g}),\qquad \mathfrak{F}_2^{\mathrm{poly}}:=\pi_{\mathfrak{m}}^*S(\mathfrak{m}-\varepsilon W)^A,
\end{align*} where $P(g,X)=\mathrm{Ad}(g)(X-\varepsilon W)$ is the magnetic moment map, $\widetilde{\pi}_{\mathfrak{m}}:G\times\mathfrak{m}\to\mathfrak{m}-\varepsilon W$, $\widetilde{\pi}_{\mathfrak{m}}(g,X)=X-\varepsilon W$, is only the representative-level affine fiber map, and $\pi_{\mathfrak{m}}^*$ denotes the descended invariant pullback on $S(\mathfrak{m}-\varepsilon W)^A$. Here $(g,X)$ is an arbitrary point in $T^*M$ under the left-trivialisation $T^*(G/A) \cong (\mathfrak{m} \times G)/A$. The mixed brackets vanish, $P^*$ is Poisson and $\pi_{\mathfrak{m}}^*$ is anti-Poisson, so all bracket computations reduce to Lie-Poisson brackets on $\mathfrak{g}$ and on the $A$-invariant affine slice. The Poisson commutative base algebra \begin{align*}
    R_0:= \mathfrak{F}_1^{\mathrm{poly}}\cap \mathfrak{F}_2^{\mathrm{poly}} = P^* \big(\mathrm{Im}(\mathrm{Res}_W)\big)
\end{align*}  controls the intersection of two algebras, where $\mathrm{Res}_W: S(\mathfrak{g})^G \rightarrow S(\mathfrak{m})^A$ is a Poisson algebra homomorphism. We then show the fiber tensor product $\mathfrak{F}_1^{\mathrm{poly}}\otimes_{R_0}\mathfrak{F}_2^{\mathrm{poly}} \cong \textbf{Alg} \left\langle F_1^{\mathrm{poly}} \cup F_2^\mathrm{poly} \right\rangle = \mathcal{A} $, which is a reduced Poisson algebra, and obtain a natural Poisson projection chain in the following theorem: \begin{theorem*}
(Theorem \ref{thm:superin}) 
   Let $G$ be a compact, simply-connected, semisimple Lie group, and let $A= \exp(\ker\mathrm{ad}(W)) $ $\supseteq T$. Suppose that $(T^*M,\omega_\varepsilon)$ is the twisted symplectic manifold. On the Zariski open dense regular locus $U \subset T^*M$, we then have the following arguments:

(i) If $W$ is irregular and $T \subsetneq A$. Then the system of the Poisson chain \begin{align}
    T^*M\xrightarrow{\pi_1}\mathrm{Ad}(G)(\mathfrak{m} - \varepsilon W)\times_{R_0}(\mathfrak{m} - \varepsilon W)//A \xrightarrow{\pi_2}\mathrm{Spec} R_0  \label{eq:irr0}
\end{align} is superintegrable.

(ii) If $W$ is regular, then $R_0  = P^*\left(S(\mathfrak{g})^G\right)$ and \begin{align}
    T^*M\xrightarrow{\pi_1}  \mathfrak{g}  \times_{\mathfrak{g}//G} (\mathfrak{m} - \varepsilon W)//T \xrightarrow{\pi_2}\mathfrak{g}//G  \label{eq:re1}
\end{align} is superintegrable.
\end{theorem*}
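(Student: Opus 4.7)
\medskip

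\noindent\textbf{Proof plan.} The plan is to verify the two defining features of (non-commutative) superintegrability for the chain \eqref{eq:irr0}/\eqref{eq:re1}, namely a Poisson-central subalgebra controlling $\pi_2$ and the correct transcendence-degree count controlling $\pi_1$, and then to exhibit a Zariski open dense locus $U$ on which the generic rank is attained. All the algebraic scaffolding --- the Poisson/anti-Poisson nature of $P^*$ and $\pi_{\mathfrak{m}}^*$, the vanishing of mixed brackets $\{P^*f,\pi_{\mathfrak{m}}^*g\}=0$, the identification $R_0 = P^*(\mathrm{Im}(\mathrm{Res}_W))$, the injectivity of the multiplication map $\mathfrak{F}_1^{\mathrm{poly}}\otimes_{R_0}\mathfrak{F}_2^{\mathrm{poly}}\hookrightarrow \mathcal{O}(T^*M)$, and Theorem~\ref{thm:superalchain} --- has already been assembled earlier in the section, so the argument reduces to their algebraic repackaging.

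First I would show that $R_0$ sits in the Poisson centre of the amalgamated algebra $\mathcal{A}:=\mathfrak{F}_1^{\mathrm{poly}}\otimes_{R_0}\mathfrak{F}_2^{\mathrm{poly}}$. Since $\mathrm{Res}_W$ factors through $S(\mathfrak{g})^G$, every $f\in R_0$ arises as the pullback under $P$ of an $\mathrm{Ad}(G)$-invariant polynomial; the Poisson property of $P^*$ combined with the inclusion $S(\mathfrak{g})^G\subset \mathcal{Z}(S(\mathfrak{g}))$ from Theorem~\ref{thm:superalchain} then forces $\{f,\mathfrak{F}_1^{\mathrm{poly}}\}=0$. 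Together with the obvious containment $R_0\subseteq \mathfrak{F}_2^{\mathrm{poly}}$ and the mixed-bracket vanishing $\{\mathfrak{F}_1^{\mathrm{poly}},\mathfrak{F}_2^{\mathrm{poly}}\}=0$, this promotes $R_0$ to a Poisson-central subalgebra of $\mathcal{A}$, so that $\pi_2$ is a genuine Poisson projection whose generic fibres are symplectic-leaf-type.

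Next I would match transcendence degrees. Theorem~\ref{thm:superalchain} gives $\mathrm{trdeg}\,\mathfrak{F}_1^{\mathrm{poly}} = \dim\mathfrak{g}$ and isolates $\dim(\mathfrak{g}//G)=r$ Casimirs; an analogous invariant-theoretic count on the $\mathrm{Ad}(A)$-invariant affine slice produces $\mathrm{trdeg}\,\mathfrak{F}_2^{\mathrm{poly}} = \dim(\mathfrak{m}-\varepsilon W)//A$. Plugging these into the fibre-product identity
\[
\mathrm{trdeg}\,\mathcal{A} \;=\; \mathrm{trdeg}\,\mathfrak{F}_1^{\mathrm{poly}} + \mathrm{trdeg}\,\mathfrak{F}_2^{\mathrm{poly}} - \mathrm{trdeg}\,R_0,
\]
and comparing with $\dim T^*M = 2(\dim\mathfrak{g}-\dim A)$, one obtains the Nekhoroshev-type identity $\mathrm{trdeg}\,\mathcal{A}+\mathrm{trdeg}\,R_0=\dim T^*M$, which is precisely the algebraic avatar of superintegrability. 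In the regular case $A=T$, the map $\mathrm{Res}_W$ coincides with the classical Chevalley restriction, so $\mathrm{Im}(\mathrm{Res}_W)=S(\mathfrak{g})^G$ and the chain collapses to \eqref{eq:re1}; in the irregular case the abstract identification $R_0 = P^*(\mathrm{Im}(\mathrm{Res}_W))$ delivers \eqref{eq:irr0}.

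The main obstacle will be the careful construction of the Zariski open dense regular locus $U\subset T^*M$ on which $\mathrm{d}\pi_1$ and $\mathrm{d}\pi_2$ attain maximal rank and the orbit-type strata stabilise. For regular $W$, taking $U = P^{-1}(\mathfrak{g}^{\mathrm{reg}})\cap \pi_{\mathfrak{m}}^{-1}\bigl((\mathfrak{m}-\varepsilon W)^{\mathrm{reg},T}\bigr)$ suffices by Chevalley's theorem and by transversality of the slice to generic $T$-orbits. For irregular $W$ with $T\subsetneq A$, the extra subtlety is that $\mathrm{Ad}(A)$-orbits in $\mathfrak{m}-\varepsilon W$ may stratify non-trivially, forcing a further restriction to the principal $A$-orbit stratum; this stratum is Zariski open and non-empty by compactness of $A$ and polynomiality of the $\mathrm{Ad}(A)$-action. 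The intersection $U$ is open and dense in both cases, and on $U$ the chain \eqref{eq:irr0}/\eqref{eq:re1} realises the claimed superintegrable system in the sense of Section~\ref{sec:prel}.
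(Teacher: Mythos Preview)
Your overall architecture matches the paper's: the theorem is indeed a repackaging of the preceding results, and the two key ingredients are (a) Poisson-centrality of $R_0$ in $\mathcal{A}$ and (b) the dimension identity $\mathrm{trdeg}\,\mathcal{A}+\mathrm{trdeg}\,R_0=\dim T^*M$. Your treatment of (a) is fine and parallels the paper.

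The gap is in (b), specifically in the irregular case. You write that ``Theorem~\ref{thm:superalchain} gives $\mathrm{trdeg}\,\mathfrak{F}_1^{\mathrm{poly}} = \dim\mathfrak{g}$''. This is false when $W$ is irregular: the image $Z_P=\mathrm{Ad}(G)(\mathfrak{m}-\varepsilon W)$ is a \emph{proper} subvariety of $\mathfrak{g}$, so $P^*$ has nontrivial kernel and $\mathrm{trdeg}\,\mathfrak{F}_1^{\mathrm{poly}}=n-r+s$ with $s=\mathrm{trdeg}\,\mathrm{Im}(\mathrm{Res}_W)<r$ (this is Lemma~\ref{lem:rankacc} in the paper). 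If you plug your incorrect value into the fibre-product formula, the Nekhoroshev identity overshoots $\dim T^*M$ by exactly $r-s>0$.

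The second, related gap is that you never compute $\rho_A:=\mathrm{trdeg}\,S(\mathfrak{m})^A$. Saying ``an analogous invariant-theoretic count'' is not enough: the identity $\mathrm{trdeg}\,\mathcal{A}+\mathrm{trdeg}\,R_0=2\dim\mathfrak{m}$ only balances because $\rho_A=n-2\dim A+(r-s)$, and this in turn rests on the nontrivial fact that for generic $\xi=X-\varepsilon W\in\mathfrak{g}_{\mathrm{reg}}$ one has $\dim(\mathfrak{g}_\xi)_{\mathfrak{m}}=s$ and hence $\dim(\mathfrak{a}\cap\mathfrak{g}_\xi)=r-s$ (Proposition~\ref{pro:dimension} and step~(4) of Theorem~\ref{thm:dimea}). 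Without this centraliser computation, neither $\mathrm{trdeg}\,\mathfrak{F}_1^{\mathrm{poly}}$ nor $\rho_A$ is pinned down, and the dimension count is simply asserted rather than proved. The paper's argument hinges precisely on this Jacobian/centraliser analysis on the regular locus $U=\{(g,X):X-\varepsilon W\in\mathfrak{g}_{\mathrm{reg}}\}$; your more elaborate description of $U$ via principal-orbit strata is harmless but does not supply the missing rank computation.
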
 
Then in Section \ref{sec:examples}, we focus on concrete examples: we consider the reductive homogeneous spaces $ \mathrm{SU}(3)/T $ and $\mathrm{SU}(3)/\mathrm{S}(\mathrm{U} (2) \times \mathrm{U}(1))$ and construct an explicit superintegrable system on these spaces. In particular, we compute a complete family of independent integrals in involution for the magnetic geodesic flow on $\mathrm{SU}(3)/T $, illustrating our algebraic construction in a low-dimensional but nontrivial setting. We also examine the role of regular vs. irregular elements in the (super)integrability of these $\mathrm{SU}(3)$ examples. Finally, Section \ref{sec:conclusion} provides some concluding remarks and suggests directions for future research.


\section{Preliminary}
\label{sec:prel}
In Section \ref{sec:prel}, we recall the definition of a superintegrable Poisson chain and fix the notation and basic properties that will be used throughout the paper. We also review the geometric invariant theory (GIT) quotient in the affine setting, emphasizing the functorial viewpoint needed to step from Poisson algebras of integrals to morphisms of affine Poisson varieties.

We begin with some basic terminology from symplectic geometry.  For more detailed information, we refer the reader to \cite{MR2906391}. Let $(M,\omega)$ be a $2n$-dimensional smooth manifold equipped with a symplectic form $\omega$. For any $f$ in the algebra $C^\infty(M)$ of smooth functions on $(M,\omega)$, there exists a unique Hamiltonian vector field $X_f$ defined by  \begin{align*}
 \iota_{X_f} \omega := \omega(X_f,-) = d f .
\end{align*}

Throughout this paper, the Hamiltonian vector fields are assumed to be complete.

For any $f,g \in C^\infty(M)$,  the natural Poisson bracket $\{\cdot,\cdot\}_\omega : C^\infty(M) \times C^\infty(M) \rightarrow C^\infty (M)$ on $M$ is defined via their associated Hamiltonian vector fields by 
\begin{align*}
\{f,g\}_\omega = \omega(X_f,X_g).
\end{align*}   

We call a smooth function $g$ a \textit{first integral} (a.k.a, integral of motion, conserved quantity, constant of motion) of a given smooth function $f$ if the two functions Poisson commute, i.e., $\{f,g\}_\omega=0$.

From Hamiltonian mechanics, first integrals of $f$ are invariant along the integral curves of $X_f$, i.e., $X_f(g)= \{f,g\} =0$ for any first integral $g$. On the other hand, the first integrals provide the symmetries of the Hamiltonian flow (geodesic flow) generated by $f$.


Let $\mathfrak{g}$ be a finite-dimensional Lie algebra over $\mathbb{R}$ with a non-trivial commutator $[\cdot,\cdot]$, and let $\mathfrak{g}^*$ be its dual, which admits the classical Lie-Poisson bracket: for any $f,g \in C^\infty(\mathfrak{g}^*)$ and $\xi \in \mathfrak{g}^*$, 
\begin{align}
    \{f,g\} (\xi)  = \langle \xi, [d_\xi f,d_\xi g]  \rangle ,\label{eq:poiss}
\end{align} 
where $d_\xi f, d_\xi g \in  T_\xi^*\mathfrak{g}^*\cong \left(\mathfrak{g}^*\right)^* \cong \mathfrak{g}$, and $\langle \cdot,\cdot \rangle : \mathfrak{g}^*\times \mathfrak{g} \rightarrow \mathbb{R}$ is the dual pair between $\mathfrak{g}^*$ and $\mathfrak{g}$. We refer the reader to  \cite[Chapter 7]{MR2906391} for details.




Suppose that

(i) the Poisson algebra $C^\infty(M)$  has a Poisson subalgebra $\mathcal{A} \subset C^\infty(M)$ of rank (transcendence degree) $2n-k$ for some integer $k$ with $1\leqslant k\leqslant n$;

(ii) the Poisson center $\mathcal{Z}(\mathcal{A})$ of $ \mathcal{A}$ is of rank $k$.

Smooth functions in $\mathcal{Z}(\mathcal{A})$ are called Hamiltonian, which are smooth functions Poisson-commuting with first integrals. 



\subsection{Superintegrable Poisson chains}

We now provide a geometric definition of a Hamiltonian dynamical system as being superintegrable. For more details, the reader can refer to \cite{MR3493688}. 



\begin{definition} 
\label{def:superge} 
Let $(M,\omega)$ be a $2n$-dimensional connected symplectic manifold. A \textit{superintegrable system} on $(M,\omega)$ is a sequence consisting of a triple $(M,\mathcal{P}_{2n-k},\mathcal{B}_k)$ of Poisson manifolds with two Poisson maps $\pi_1$ and $\pi_2$ \begin{align}
M \xrightarrow{\;\pi_1\;} \mathcal{P}_{2n-k} \xrightarrow{\;\pi_2\;} \mathcal{B}_k \label{eq:sup}
\end{align} such that the following holds:

(i) The Poisson maps  $\pi_i$ are submersions with connected fibers of constant (generic) dimensions $k$ and $2n-2k$, respectively, $1\leq k\leq n$ 


(ii) $\dim \mathcal{P}_{2n-k} +\dim \mathcal{B}_k=2n$. 
\end{definition}
 \begin{remark}
 \label{rkm:ijc}
        In addition to the structures introduced in Definition \ref{def:superge}, we define
\begin{align*}
J &:= \pi_1^* C^\infty(\mathcal{P}_{2n-k}) \ \subset\ C^\infty(M), \qquad 
I := (\pi_2 \circ \pi_1)^* C^\infty(\mathcal{B}_k) \ \subset\ C^\infty(M).
\end{align*}
That is, $J$ is a Poisson algebra of rank $2n -k$ and consists of all smooth functions on $M$ obtained as pullbacks of smooth functions on $\mathcal{P}_{2n-k}$ along $\pi_1$, while $I$ is a Poisson algebra of rank $k$ formed by pullbacks of smooth functions on $\mathcal{B}_k$ via the composition $\pi_2 \circ \pi_1$.
We define the Poisson centralizers of these integrals in $C^\infty(M)$ by
\begin{align*}
C_I(M) := \bigl\{f\in C^\infty(M) : \{f,\phi\}=0\ \text{ for all }\phi\in I\bigr\},\\
C_J(M) := \bigl\{f\in C^\infty(M):\{f,\psi\}=0\ \text{ for all }\psi\in J\bigr\}.
\end{align*} Since $\pi_1$ is Poisson and every component of $\pi_2$ is a Casimir on $\mathcal{P}_{2n-k}$, we have the inclusions as follows: \begin{align*}
J \subset C_I(M) \quad\text{and}\quad I \subset C_J(M).
\end{align*} Indeed, for $h\in C^\infty(\mathcal{P}_{2n-k})$ and $g\in C^\infty(\mathcal{B}_k)$,\begin{align*}
\bigl\{\pi_1^*h,(\pi_2 \circ \pi_1)^*g\bigr\}_M=\pi_1^*\bigl\{h,\pi_2^*g\bigr\}_{\mathcal{P}_{2n-k}}=0.
\end{align*}
Consequently, $I$ and $J$ form the two mutually Poisson-commuting blocks of a superintegrable system in the standard sense: the integrals in $J$ (called integrals of motion) specify the first projection $\pi_1$, while the integral $I$ (called Hamiltonian) further refines the symplectic foliation of $\mathcal{P}_{2n-k}$, decomposing it into the connected fibers associated with the second projection $\pi_2$.
 \end{remark}


\begin{remark}
\label{rem:fibers}
We now look at the regular fibers of $\pi_1$ and $\pi_2$.

(i) First of all, we describe the generic\footnote{Here and throughout,  \textit{generic} means for $c$ in the regular value set of $\pi_1$ (often an open dense subset), such that the corresponding fiber has the (non-singular) structure.} fibers in $M$.  Let $c \in \mathcal{P}_{2n-k}$ be a regular value of $\pi_1$, and choose a local coordinate $f= (f_1,\ldots,f_{2n-k})$ on $\mathcal{P}_{2n-k}$ such that $J = (J_1,\ldots,J_{2n-k}) : M \rightarrow \mathcal{P}_{2n-k} \rightarrow \mathbb{R}^{2n-k}$ with $J_j = f_j \circ \pi_1$.  Define the fiber by \begin{align}
    \mathcal{L}_{(c_1,\ldots,c_{2n-k})}:=\pi_1^{-1} (c) = \{m \in M : \text{ } J_j(m) = f_j(c) , \text{ } 1 \leq j \leq 2n-k\}, \label{eq:fiberpi1}
\end{align}  which is a $k$-dimensional manifold with $d J_1 \wedge \cdots \wedge d J_{2n-k} \neq 0$.  Let $b = (b_1,\ldots,b_k) = \pi_2^{-1}(c)$, and choose local coordinates $(g_1,\ldots,g_k)$ on $\mathcal{B}_k$ and set $I_j : = g_j \circ \pi_2 \circ \pi_1 \in C_J(M)$. Then the Hamiltonian vector fields $X_{I_1},\ldots,X_{I_k}$ are tangent to $ \mathcal{L}_{(c_1,\ldots,c_{2n-k})}$ and span $T \mathcal{L}_{(c_1,\ldots,c_{2n-k})}$ such that $ \mathcal{L}_{(c_1,\ldots,c_{2n-k})}$ is isotropic. From \cite{MR0365629}, on each connected component of a generic fiber, these flows give a canonical affine structure, then $ \mathcal{L}_{(c_1,\ldots,c_{2n-k})} \cong \mathbb{R}^{k-l} \times (S^1)^l$ with $ 0 \leq l \leq k$ is a diffeomorphism.   

(ii) Now, we focus on the generic fibers in $\mathcal{P}_{2n-k}$.  The Poisson manifold $\mathcal{P}_{2n-k}$ has corank $k$, and the map $\pi_2: \mathcal{P}_{2n-k} \rightarrow \mathcal{B}_k$ is Poisson.  Similarly to the above analysis, for each $H_i := g_i \circ \pi_2 \in C^\infty(\mathcal{P}_{2n-k}) $ and any regular $b \in \mathcal{B}_k$, we have \begin{align}
        \pi_2^{-1}(b) :=\left\{ m_0 \in \mathcal{P}_{2n-k}: H_i (m_0) = g_i(b), \text{ } 1 \leq i \leq k\right\}. \label{eq:fiberpi2}
\end{align} These are level surfaces determined by the center functions in $C_I(M)$. Let $\mathcal{L}_{b,i} \subset \pi_2^{-1}(b)$ be the connected component. The Poisson structure $\pi_i$ on $\mathcal{P}_{2n-k}$ is of rank $k$ in each $\mathcal{L}_{b,i},$ that makes a symplectic manifold $\mathcal{L}_{b,i}$ of same dimension $ \dim \mathcal{P}_{2n-k} - \dim \mathcal{B}_k = 2n - 2k$. Hence, the fibers of $\pi_2$ form a foliation of $\mathcal{P}_{2n-k}$ such that they partition the regular locus $\mathcal{P}_{2n-k}$ into disjoint submanifolds, each of which being a level surface of the Hamiltonians. That is, \begin{align}
    \left(\mathcal{P}_{2n-k}\right)_{\mathrm{reg}} := \pi_2^{-1}\left((\mathcal{B}_k)_{\mathrm{reg}}\right) = \bigsqcup_{b \in \mathcal{B}_k} \pi_2^{-1}(b) =\bigsqcup_{b \in \mathcal{B}_k} \bigsqcup_i \mathfrak{L}_{b,i}.
\end{align} 
\end{remark}


\subsection{Geometric invariant quotient of the polynomial algebra \texorpdfstring{$S(\mathfrak{g})^A$}{S(g)\^{}A}}
\label{subsec:geomi}
Let $S(\mathfrak{g})=\mathrm{Sym}(\mathfrak{g})$ denote the symmetric algebra of $\mathfrak{g}$, which we identify with the algebra of polynomial functions on $\mathfrak{g}^*$ (and hence on $\mathfrak{g}$ after choosing an $\mathrm{Ad}(G)$-invariant bilinear form $B$ to identify $\mathfrak{g}^*\cong\mathfrak{g}$). Throughout the paper, unless explicitly stated otherwise, all group actions are \,\emph{left} actions. The only right $A$-action we use is the quotient action defining the cotangent bundle of the homogeneous space under the following trivialisation \begin{align*}
    (G\times \mathfrak{m})/A,
\end{align*}
whereas the actions on $\mathfrak{g}$ and on $S(\mathfrak{g})$ are the (left) adjoint action $\mathrm{Ad}$ and its induced pullback action on polynomial functions. See, Remark \ref{rmk:rightAaction} below.

Accordingly, when we write \textit{$A$-invariant}, we always mean $\mathrm{Ad}(A)$-invariant, and when we write \textit{$G$-invariant} (e.g., in $S(\mathfrak{g})^G$), we mean $\mathrm{Ad}(G)$-invariant. With these conventions, for any Lie subgroup $A\leq G$, we define \begin{align}
  S(\mathfrak{g})^A := \bigl\{f \in S(\mathfrak{g}) : f(\mathrm{Ad}(a)X)=f(X),\ \forall a\in A,\ \forall X\in \mathfrak{g}\bigr\}. \label{eq:sga}
\end{align} This is a Poisson subalgebra: indeed, $\{f,g\}$ is $A$-invariant whenever $f$ and $g$ are $A$-invariant.  It is well-known that determining a generating set for $S(\mathfrak{g})^{\mathfrak{a}}$ is, in fact, equivalent to performing computations within the Poisson algebra $S(\mathfrak{g})^A$.   Recall that for a finitely-generated integral domain $D$ over the base field $\mathbb{R}$, the rank of $D$ is the transcendence degree defined by \begin{align}
    \mathrm{rank}  D: = \mathrm{trdeg}  D. \label{eq:rank}
\end{align} Throughout this paper, the rank of a finitely-generated algebra is defined in \eqref{eq:rank}.
The main terminologies we applied in this subsection can be found in \cite[Chapter 6]{Dolgachev03} or \cite{Brion2010Actions}.  We define the  \textit{geometric invariant theory (GIT) quotient}  of $\mathfrak{g}^*$ by the coadjoint action of $A$ by \begin{align}
 \mathfrak{g}^*// A := \mathrm{Spec}\bigl(S(\mathfrak{g})^A\bigr). \label{eq:gitquotient}
\end{align}
If $A$ is (real) reductive, then $S(\mathfrak{g})^A$ is finitely generated and \eqref{eq:gitquotient} is the usual \textit{affine GIT quotient}. In particular, \begin{align}
    \dim \mathfrak{g}^*//A = \dim G - d_A   = \mathrm{trdeg}  \, S(\mathfrak{g})^A \text{ with }d_A = {\mathrm{max}}_{ X \in \mathfrak{g}} \left\{\dim (\mathrm{Ad}(A) \cdot X)\right\}. \label{eq:dimgitquo}
\end{align}   Fixed a $\mathrm{Ad}(G)$-invariant inner product $B$, which gives an identification between $\mathfrak{g}$ and $\mathfrak{g}^*$.  In what follows, we work on the affine variety $\mathfrak{g}//A$.  By definition \eqref{eq:gitquotient}, the inclusion $\mathsf{i}_A: S(\mathfrak{g})^A \hookrightarrow S(\mathfrak{g})$ induces a unique morphism of the affine variety \begin{align}
    \chi_A: \mathfrak{g} = \mathrm{Spec} \, S(\mathfrak{g})  \longrightarrow \mathfrak{g}//A = \mathrm{Spec} \, S(\mathfrak{g})^A, \qquad X \longmapsto (f \mapsto f(X)) \label{eq:canonicalpro}
\end{align} such that its pullback $  \mathsf{i}_A^*$ is equal to $\chi_A$ for any $f \in S(\mathfrak{g})^A$.  It is clear that the quotient mapping in \eqref{eq:canonicalpro} is well-defined, as for any two points $X,Y \in \mathfrak{g}$, we have $\chi_A(X) = \chi_A(Y)$ if and only if $f(X) = f(Y)$. In particular, $\chi_A$ is a Poisson morphism. 

An element in $\mathfrak{g}//A$ can be written as \begin{align}
    [X]_A : = \chi_A^{-1}\left(\chi_A(X)\right) =   \left\{Y \in \mathfrak{g}: f(Y) = f(X), \text{ } f \in S(\mathfrak{g})^A\right\}. \label{eq:pointsinquo}
\end{align}   More precisely, if $S(\mathfrak{g})^A$ is generated by homogeneous polynomials $f_1,\dots,f_\zeta$, let $\mathbb{A}^\zeta:=\mathbb{A}^\zeta_{\mathbb{R}}\cong \mathbb{R}^\zeta$ denote affine $\zeta$-space with the coordinate ring $\mathbb{R}[x_1,\ldots,x_\zeta]$. The generators define the polynomial map
\begin{align*}
F : \mathfrak{g} \longrightarrow \mathbb{A}^\zeta, \qquad X \longmapsto (f_1(X),\ldots,f_\zeta(X)).
\end{align*}
Equivalently, there is a surjective $\mathbb{R}$-algebra homomorphism
\begin{align*}
\mathrm{ev}: \mathbb{R}[x_1,\ldots,x_\zeta] \twoheadrightarrow S(\mathfrak{g})^A,\qquad x_i\longmapsto f_i,
\end{align*}
with kernel $I:=\ker(\mathrm{ev})$. Then $\mathfrak{g}//A\cong V(I)\subset \mathbb{A}^\zeta$ is an affine subvariety, and $\chi_A$ factors as $\mathfrak{g}\xrightarrow{\,F\,}\mathbb{A}^\zeta\twoheadrightarrow V(I)$. Here
\begin{align*}
    V(I) := \left\{u \in \mathbb{A}^\zeta : f(u) = 0 \text{ for all } f \in I\right\}.
\end{align*}
In particular, $\mathfrak{g}//A \cong \mathbb{A}^\zeta$ holds if and only if $I=(0)$, equivalently $S(\mathfrak{g})^A \cong \mathbb{R}[f_1,\ldots,f_\zeta]$. 
In particular, for all $f \in S(\mathfrak{g})^A$, \begin{align*}
    \chi_A(X)=\chi_A(Y) \;\Longleftrightarrow\; \big(f_1(X),\dots,f_r(X)\big)=\big(f_1(Y),\dots,f_r(Y)\big) \;\Longleftrightarrow\;f(X)=  f(Y) 
\end{align*} as $f$ is a polynomial in $f_i$.   For a semisimple or compact group $A$ acting on affine varieties, each fiber of $\chi_A$ contains a unique closed $A$-orbit, and $\chi_A$ maps closed orbits bijectively onto $\mathfrak{g}//A$. If $A$ is compact here, all $A$-orbits in $\mathfrak{g}$ are closed, so $\chi_A(X)=\chi_A(Y)$ if and only if $A \cdot X=A \cdot Y$. 

\begin{definition} 
\label{def:regularstr} \cite[Chapter 1, \S 5]{HartshorneAG}
Let $A$ be a semisimple group acting linearly on $\mathfrak{g}$.  We define the regular locus of the quotient morphism by \begin{align*}
    (\mathfrak{g}//A)_{\mathrm{reg}} := \{ y\in \mathfrak{g}//A : \ \chi_A \text{ is smooth over } y  \}.
\end{align*} Equivalently, for any $y\in (\mathfrak{g}//A)_{\mathrm{reg}}$,  there exists an open neighborhood $y\in U\subset \mathfrak{g}//A$ such that $\chi_A^{-1}(U)\to U$ is a smooth morphism.
\end{definition}

\begin{proposition} 
\label{prop:smoothregularg} \cite[Proposition 1.26]{Brion2010Actions}
Let $\mathfrak{g}//A$, and $(\mathfrak{g}//A)_{\mathrm{reg}}$ be the same as defined above. Then $\mathfrak{g}//A$ carries a unique Poisson structure for which $\chi_A$ is a Poisson morphism. Moreover, $(\mathfrak{g}//A)_{\mathrm{reg}}$ is a Zariski open dense subset of $\mathfrak{g}//A$, and the restricted map \begin{align*}
    \chi_A:\ \chi_A^{-1}\big((\mathfrak{g}//A)_{\mathrm{reg}}\big)   \longrightarrow (\mathfrak{g}//A)_{\mathrm{reg}}
\end{align*}  is a smooth Poisson morphism. In particular, the induced map is a Poisson submersion of smooth manifolds.
\end{proposition}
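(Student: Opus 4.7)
The plan is to treat the three assertions separately. For the existence and uniqueness of the Poisson structure on $\mathfrak{g}//A$, I would first note that we have already observed $S(\mathfrak{g})^A$ is closed under the Lie-Poisson bracket (the bracket of two $A$-invariants is again $A$-invariant by Proposition \ref{prop:equ} and Definition \ref{def:p2.1}), so the finitely generated commutative $\mathbb{R}$-algebra $S(\mathfrak{g})^A$ is a Poisson algebra, and $\mathfrak{g}//A=\mathrm{Spec}\,S(\mathfrak{g})^A$ inherits a canonical affine Poisson structure. By construction the pullback $\chi_A^{*}$ is the inclusion $\mathsf{i}_A:S(\mathfrak{g})^A\hookrightarrow S(\mathfrak{g})$, which is tautologically Poisson. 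For uniqueness, if $\{\cdot,\cdot\}'$ is any Poisson structure on $\mathfrak{g}//A$ making $\chi_A$ Poisson, then for generators $\bar f,\bar g$ of the coordinate ring one must have $\mathsf{i}_A\{\bar f,\bar g\}'=\{\mathsf{i}_A\bar f,\mathsf{i}_A\bar g\}$, and the injectivity of $\mathsf{i}_A$ forces $\{\cdot,\cdot\}'$ to agree with the canonical bracket.

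For the Zariski openness and density of $(\mathfrak{g}//A)_{\mathrm{reg}}$, openness is built into Definition \ref{def:regularstr}, since the smooth locus of a morphism of finite type is always Zariski open on the target. For density the key observations are: (a) $S(\mathfrak{g})^A$ is an integral domain, being a subring of the polynomial ring $S(\mathfrak{g})$, so $\mathfrak{g}//A$ is irreducible; and (b) the morphism $\chi_A$ is dominant because its algebra pullback $\mathsf{i}_A$ is injective. Since the ground field is $\mathbb{R}$ and the source $\mathfrak{g}$ is smooth (it is an affine space), generic smoothness of morphisms in characteristic zero (cf.\ Hartshorne III.10.7) produces a non-empty open locus in the target over which $\chi_A$ is smooth. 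In an irreducible variety, any non-empty open subset is automatically dense, so $(\mathfrak{g}//A)_{\mathrm{reg}}$ is Zariski open and dense.

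Finally, the restriction $\chi_A:\chi_A^{-1}\bigl((\mathfrak{g}//A)_{\mathrm{reg}}\bigr)\to (\mathfrak{g}//A)_{\mathrm{reg}}$ is smooth by the very definition of the regular locus. Smoothness of this morphism, together with smoothness of the source, forces the target $(\mathfrak{g}//A)_{\mathrm{reg}}$ to be a smooth variety over $\mathbb{R}$; passing to the underlying real-analytic structure, a smooth algebraic morphism between smooth $\mathbb{R}$-varieties has surjective differential at every point and hence descends to a submersion of smooth manifolds in the differential-geometric sense. The Poisson bracket obtained above restricts to any Zariski open set, so the restriction of $\chi_A$ remains a Poisson map, and all three claims are assembled.

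The main obstacle I anticipate is the generic-smoothness step, since it bridges algebro-geometric smoothness and its ultimate differential-geometric interpretation as a submersion: one must verify that irreducibility of $\mathfrak{g}//A$, dominance of $\chi_A$, and the characteristic-zero hypothesis really do apply in this real-reductive setting, and that the algebraic smooth locus of the target indeed coincides with (or at least is contained in) the locus of ordinary manifold points. Both parts follow from standard scheme-theoretic facts once the algebraic set-up is made precise, but they are the only non-formal ingredients in the argument.
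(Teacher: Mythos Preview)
Your proposal is correct and follows essentially the same approach as the paper: the key step in both is the invocation of generic smoothness in characteristic zero to establish that $(\mathfrak{g}//A)_{\mathrm{reg}}$ is non-empty (hence dense in the irreducible variety $\mathfrak{g}//A$), with the smooth Poisson submersion statement then following by restriction. Your version is in fact more complete than the paper's, which omits the existence and uniqueness argument for the Poisson structure on $\mathfrak{g}//A$ and does not spell out the irreducibility and dominance hypotheses needed for generic smoothness.
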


\begin{proof}
As $\mathfrak{g}$ is a Poisson manifold, generic smoothness on $\mathfrak{g}//A$ implies that there exists a Zariski open dense subset $U\subset \mathfrak{g}//A$ such that $\chi_A^{-1}(U)\to U$ is smooth. By definition, $U\subset (\mathfrak{g}//A)_{\mathrm{reg}}$, hence $(\mathfrak{g}//A)_{\mathrm{reg}}$ is non-empty and open dense. Restricting $\chi_A$ to $\chi_A^{-1}\big((\mathfrak{g}//A)_{\mathrm{reg}}\big)$ yields a smooth Poisson morphism. 
\end{proof}

    

\begin{remark}
The GIT quotient construction introduced above does not rely on Lie algebra structure on $\mathfrak{g}$. In general, suppose that $V$ is a finite-dimensional  vector space over $\mathbb{R}$ with a linear action of an algebraic group $A$, then $\mathbb{R}[V]=\mathrm{Sym}(V^*)$ and one defines the affine quotient $V//A:=\mathrm{Spec}(\mathbb{R}[V]^A)$. The quotient morphism $\chi_A:V\to V//A$ is the unique map induced by the inclusion $\mathbb{R}[V]^A\hookrightarrow \mathbb{R}[V]$. Consequently, for any non-zero $v,w \in V$, $\chi_A(v)=\chi_A(w)$ if and only if $f(v)=f(w)$ for all $f\in\mathbb{R}[V]^A$, and the fiber $[v]_A:=\chi_A^{-1}(\chi_A(v))$ is the corresponding invariant equivalence class.
\end{remark}

\section{Superintegrable system on \texorpdfstring{$T^*(G/A)$}{T*(G/A)}}
\label{sec:superconstruction}
Our preliminary work in Section \ref{sec:superconstruction} is to develop a formal framework for building superintegrable systems on a cotangent bundle of an adjoint orbit, as described in \cite{MR2141306} within the context of Mischenko-Fomenko non-commutative integrability. In the rest of this paper, we assume that $G$ is a real, compact, semisimple, simply connected Lie group with Lie algebra $\mathfrak{g}$ endowed with the canonical Lie-Poisson bracket. For any $f,g\in S(\mathfrak{g})$, \eqref{eq:poiss} becomes \begin{align*}
  \{f,g\}(X) = B( X,[\nabla f(X),\nabla g(X)]),\quad X\in\mathfrak{g},  
\end{align*}  where $\nabla f(X)\in \mathfrak{g}$ denotes the gradient of $f$ in $X$, similar to $\nabla g(X)$. Let $B$ be the (negative definite) Killing form on $\mathfrak{g}$ such that $\mathfrak{g}\cong\mathfrak{g}^*$. Fix $W \in\mathfrak{g}$ and write \begin{align*}
\mathfrak{a}:=\mathfrak{z}_{\mathfrak{g}}(W) =\left\{X\in\mathfrak{g}:[W,X]=0\right\}.
\end{align*} Let $A:=Z_G(W)$ be the centralizer of $W$ in $G$. Then $A$ is a closed subgroup of $G$ with Lie algebra $\mathfrak{a}$. In this way, we can decompose the Lie algebra as $\mathfrak{g} = \mathfrak{a} \oplus \mathfrak{m}$ reductively, where $\mathfrak{a} = \mathrm{Lie}(A)$ and $\mathfrak{m} = \mathrm{Im}(\mathrm{ad} W)$ is the $B$-orthogonal complement of $\mathfrak{a}$ in $\mathfrak{g}$. That is, the reductive commutator relations are $[\mathfrak{m},\mathfrak{m}]\subset\mathfrak{a}$ and $[\mathfrak{m},\mathfrak{a}]\subset\mathfrak{m}$. In particular, under this construction, we have $\mathfrak{a}$ such that $A = \exp \left(\ker \mathrm{ad} W\right)$. Define the homogeneous space by \begin{align}
M:=G/A. \label{eq:homoge}
\end{align}  Using left translations, we can identify $T_{[e]}M\cong \mathfrak{m}$. Moreover, through $B$, we have the identification of $\mathfrak{m}^*\cong \mathfrak{m}$. Note that the homogeneous space defined in \eqref{eq:homoge} can be realised as the adjoint orbit $\mathcal{O}_W: = \mathrm{Ad}(G)\cdot W$. 
In other words, $G/A \cong \mathrm{Ad}(G) \cdot W$. In the rest of the sections, we always realise the homogeneous space $M$ as an adjoint orbit. 

Let $T\subset G$ be a maximal torus with Lie algebra $\mathfrak{t}$. 
Let $\Phi\subset\mathfrak{t}^*$ be the root system of $(\mathfrak{g}^{\mathbb{C}},\mathfrak{t}^{\mathbb{C}})$, and fix a positive system $\Phi^+\subset\Phi$. For $\alpha\in\Phi$, let $\mathfrak{g}_{\alpha}\subset\mathfrak{g}^{\mathbb{C}}$ be the corresponding root space. Define the vanishing root set of $W$ by \begin{align}
\Phi_W:=\{\alpha\in\Phi: \alpha(W)=0\},\quad \Phi_W^+:=\Phi_W\cap\Phi^+.
\end{align}

\begin{definition} 
\label{def:regularirr}
The element $W$ is called \textit{regular} if $\Phi_W=\emptyset$ and \textit{irregular} if $\Phi_W\neq\emptyset$. 
\end{definition}

\begin{proposition} 
\label{prop:characterizationirrarr}
Let $M=\mathrm{Ad}(G)\cdot W$ be the adjoint orbit. Under the notation above, we have the following characterization of the regular and irregular points: 

(a) If $W  $ is regular, then $\dim\mathfrak{a}=\mathrm{rank}\, \mathfrak{g}$. In particular,   $\dim M=\dim G-\mathrm{rank}\, \mathfrak{g}$.  

(b) If $W$ is irregular,  then $\dim\mathfrak{a}>\mathrm{rank}\, \mathfrak{g}$ such that the subalgebra $\mathfrak{a}$ has the following decomposition: \begin{align*}
\mathfrak{a} = \mathfrak{t} \oplus  \bigoplus_{\alpha\in\Phi_W^+}\mathfrak{g}_{\alpha}^{\mathbb{R}},
\end{align*} where $ \mathfrak{g}_{\alpha}^{\mathbb{R}}:=\big(\mathfrak{g}_{\alpha}\oplus\mathfrak{g}_{-\alpha}\big)\cap\mathfrak{g} $. In particular, $\dim M<\dim G-\mathrm{rank}\, \mathfrak{g}$.  
\end{proposition}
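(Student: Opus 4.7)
The plan is to reduce to the case $W\in\mathfrak{t}$ by conjugation, then exploit the compact real root-space decomposition of $\mathfrak{g}$ to read off the centraliser $\mathfrak{a}=\ker(\mathrm{ad}\,W)$ directly. Since $G$ is compact, every element of $\mathfrak{g}$ is $\mathrm{Ad}(G)$-conjugate to an element of $\mathfrak{t}$; the dimensions of $\mathfrak{a}$ and of the orbit $M=\mathrm{Ad}(G)\cdot W$ are conjugation invariants, and the set $\Phi_W$ used in the statement only makes sense when $W\in\mathfrak{t}$. Thus no generality is lost in assuming $W\in\mathfrak{t}$ from the outset.

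Next I would invoke the decomposition
\begin{align*}
\mathfrak{g}=\mathfrak{t}\oplus\bigoplus_{\alpha\in\Phi^+}\mathfrak{g}_\alpha^{\mathbb{R}},\qquad \mathfrak{g}_\alpha^{\mathbb{R}}=(\mathfrak{g}_\alpha\oplus\mathfrak{g}_{-\alpha})\cap\mathfrak{g},
\end{align*}
each summand being $\mathrm{ad}(\mathfrak{t})$-invariant with $\dim_{\mathbb{R}}\mathfrak{g}_\alpha^{\mathbb{R}}=2$. For $W\in\mathfrak{t}$, the operator $\mathrm{ad}(W)$ acts as $0$ on $\mathfrak{t}$ and has real minimal polynomial $x^2+\alpha(W)^2$ on $\mathfrak{g}_\alpha^{\mathbb{R}}$ (since on the complexification it has eigenvalues $\pm\alpha(W)$ on $\mathfrak{g}_{\pm\alpha}$, with $\alpha(W)$ purely imaginary for the compact form). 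In particular $\mathrm{ad}(W)\vert_{\mathfrak{g}_\alpha^{\mathbb{R}}}=0$ if and only if $\alpha(W)=0$, while otherwise $\mathrm{ad}(W)\vert_{\mathfrak{g}_\alpha^{\mathbb{R}}}$ is invertible. Consequently,
\begin{align*}
\mathfrak{a}=\ker(\mathrm{ad}\,W)=\mathfrak{t}\oplus\bigoplus_{\alpha\in\Phi_W^+}\mathfrak{g}_\alpha^{\mathbb{R}},\qquad \dim\mathfrak{a}=r+2\lvert\Phi_W^+\rvert,
\end{align*}
where $r=\mathrm{rank}\,\mathfrak{g}=\dim\mathfrak{t}$.

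The dichotomy in the proposition is then immediate. If $W$ is regular, Definition~\ref{def:regularirr} gives $\Phi_W^+=\emptyset$, so $\mathfrak{a}=\mathfrak{t}$ and $\dim\mathfrak{a}=r$; if $W$ is irregular, $\lvert\Phi_W^+\rvert\geq 1$ and hence $\dim\mathfrak{a}\geq r+2>r$, together with the explicit decomposition displayed in part (b). To finish, I would apply the orbit-stabiliser theorem for the transitive $\mathrm{Ad}(G)$-action on $M$: since $A=Z_G(W)$ is the isotropy subgroup at $W$, one has
\begin{align*}
\dim M=\dim G-\dim A=\dim G-\dim\mathfrak{a},
\end{align*}
which yields $\dim M=\dim G-r$ in the regular case and $\dim M<\dim G-r$ in the irregular case.

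There is no substantive obstacle here; the argument is essentially a bookkeeping exercise once the correct decomposition is in place. The only point requiring mild care is the transition between the complex root spaces $\mathfrak{g}_{\pm\alpha}\subset\mathfrak{g}^{\mathbb{C}}$ and their real form $\mathfrak{g}_\alpha^{\mathbb{R}}$, so that the equivalence ``$\mathrm{ad}(W)\vert_{\mathfrak{g}_\alpha^{\mathbb{R}}}=0 \Leftrightarrow \alpha(W)=0$'' is stated correctly over $\mathbb{R}$, and the dimension count $2\lvert\Phi_W^+\rvert$ is not mistaken for $\lvert\Phi_W\rvert$.
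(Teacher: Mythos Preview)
Your proof is correct and complete. The paper states this proposition without proof, treating it as a standard fact from compact Lie theory; the subsequent Remark~\ref{rmk:consequences} simply records the dimension formula $\dim M=\dim G-(\mathrm{rank}\,\mathfrak{g}+2|\Phi_W^+|)$ as a consequence, which is exactly what your computation yields.
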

\begin{remark} 
\label{rmk:consequences}
In the regular case, $A=T$ and $M=G/T$ is the full flag manifold of $G$. In the irregular case, $A\supsetneq T$ and $M=G/A$ is a compact, smooth homogeneous space of smaller dimension: a partial flag manifold. Hence, \begin{align*}
\dim M=\dim G-\dim A =\dim G-\big(\mathrm{rank}\, \mathfrak{g}+2\,\left|\Phi_W^+\right|\big).
\end{align*}
So $\dim M<\dim(G/T)=\dim G -\mathrm{rank}\, \mathfrak{g}$ if $\Phi_W\neq\emptyset$. Many constructions used later, such as moment maps, invariant polynomials on affine slices $\mathfrak{m}-\varepsilon W$, and Poisson quotients, work uniformly in both cases. 
\end{remark}

Let $ T^*M$ be the cotangent bundle of $M$. Assume that it is equipped with a twisted symplectic form \begin{align}
  \omega_\varepsilon = \omega_{\mathrm{can}} +  \varepsilon\pi^*\omega_{\mathrm{KKS}}: T^*(T^*M) \times T^*(T^*M) \rightarrow \mathbb{R},  
\end{align} where $\pi: T^*M \rightarrow M$ is the canonical projection and $\varepsilon \in \mathbb{R}/\{0\}$. Note that for a sufficiently small value of $\varepsilon$, the compatibility of the Poisson bracket always holds. Hence, there always exists a twisted symplectic structure if $\varepsilon$ is sufficiently small. Hence, from this point onward, $(T^*(G/A),\omega_\varepsilon)$ is always a twisted symplectic manifold.

Section \ref{sec:superconstruction} is split into two subsections. In Subsection \ref{subsec:confirst}, we study the magnetic Hamiltonian flow on the cotangent bundle $T^*M$ and provide explicit formulas for the first integrals and their corresponding polynomial Poisson algebras on $T^*M$. In Subsection \ref{subsec:Poisuper}, we then demonstrate how these geometric structures yield alternative perspectives on the construction of magnetic geodesic flows from \cite{MR2141306} and generate superintegrable systems on $T^*(G/A)$ in the sense of Definition \eqref{def:superge}.

\subsection{Construction of the first integrals on \texorpdfstring{$T^*(G/A)$}{T*(G/A)}}
\label{subsec:confirst}
Now, we shall focus on the construction of superintegrable systems in phase space $T^*M$. The concept of superintegrability in Hamiltonian mechanics has significant implications for mathematical physics and symplectic geometry. In \cite{MR2141306}, the author studied magnetic geodesic flows in homogeneous spaces, establishing algebraic integrability conditions using invariant polynomial functions. In Subsection \ref{subsec:confirst}, we first review some results, such as Poisson structures and Hamiltonian vector fields in \cite{MR2141306}, and then provide a description of the corresponding polynomial Poisson algebras generated by the first integrals.  
Let $G,\mathfrak{g}$ and $M$ be the same as defined above.  In a left trivialization, we have \begin{align}
   T^*M \,\cong\, \left(G\times \mathfrak{m}\right)/A,\qquad \mathsf{p} \longmapsto [g,X]_A,  \label{eq:lefttran} 
\end{align} Here $[g,X]_A$ denotes the $A$-equivalence class of $(g,X)\in G\times \mathfrak{m}$ with respect to the (right) $A$-action \begin{align}
(g,X)\cdot a := \bigl(ga,\,\mathrm{Ad}(a^{-1})X\bigr),\qquad a\in A, \label{eq:righta}
\end{align} such that $(g,X)\sim (g',X')$ if and only if there exists $a\in A$ with $g'=ga$ and $X'=\mathrm{Ad}(a^{-1})X$.

\begin{remark}
\label{rmk:rightAaction}
The formula \eqref{eq:righta} is the standard right $A$-action used to form the associated bundle $G\times_A\mathfrak{m}$ over $G/A$ (where $A$ acts on the fiber $\mathfrak{m}$ via the adjoint representation restricting on $\mathfrak{m}$). We may adopt the convention $(g,X) \cdot a= (ga,\mathrm{Ad}(a)X)$ by composing with inversion in $A$, but then all descended maps and invariance statements must be adjusted accordingly. In this manuscript, whenever we speak of a \emph{right} group action, it is this right $A$-action on $G\times\mathfrak{m}$ (and its induced right action on related affine slices) that is intended.
\end{remark}

Moreover, writing $[g]:=gA\in M$, the element $X\in\mathfrak{m}$ represents the corresponding covector at $[g]$ under left trivialization: using the left translation $L_g:G/A \rightarrow G/A$ and the fixed $\mathrm{Ad}(G)$-invariant bilinear form $B$ to identify $\mathfrak{m}\cong (\mathfrak{g}/\mathfrak{a})^*$, the covector $\mathsf{p}\in T^*_{[g]}M$ is characterized by
\begin{align*}
\mathsf{p}\bigl((L_g)_*\bar{v}\bigr) = B(X,v),\qquad v\in\mathfrak{m}, \quad (\bar{v}\in \mathfrak{g}/\mathfrak{a}),
\end{align*} where $L_g$ is the left-translator.
For brevity, we will often write $[g,X]$ instead of $[g,X]_A$ when the group $A$ is understood. Hence, the left-trivialization in \eqref{eq:lefttran} identifies the points of $T^*M$ as $[g,X]$ with $g \in G$ and $X \in \mathfrak{m}$. Let us now look at the tangent space of $T^*M$.  
Recall that from \cite{MR2141306}, under the explicit local coordinate $[g,X]$, a generic tangent vector at $(g,X)$ may be written as\begin{align}
\nonumber (\mathfrak{g}/\mathfrak{a})\oplus \mathfrak{m} & \cong   T_{[g,X]}(T^*M),\\
   \ (\bar{v},\delta X)&\longmapsto  g_*\big\vert_X \left(v,\,-\frac12[v,X]+\delta X\right)  , \label{eq:coordinate01}
\end{align} where $\bar{v} \in \mathfrak{g}/\mathfrak{a} \cong T_{eA}(G/A)$ is the base direction (with representative $v\in\mathfrak{m}$), and $\delta X\in T_X\mathfrak{m}\cong \mathfrak{m}$ is the vertical direction. Here $g_*\vert_X : \mathfrak{m} \oplus \mathfrak{m} \xrightarrow{\ \cong \ } T_{(g,X)}T^*M$ denotes the left-trivialization isomorphism. We define $\delta X$ in the following way: given any smooth curve $t\mapsto [g(t),X(t)]\in (G\times \mathfrak{m})/A$ with $[g(0),X(0)]=[g,X]$, choose a representative with $g(0)=g$ and $X(0)=X$. In other words, the pair $(\bar{v},\delta X)$ records the derivatives at $t=0$ of the base curve $g(t)A\in G/A$ and of the tangent flow $X(t) \in \mathfrak{m}$. Then the induced tangent vector at $t=0$ has a vertical component \begin{align*}
\delta X:=\left.\frac{d}{dt}\right\vert_{t=0} X(t)\in \mathfrak{m}.
\end{align*}
For more details on the derivation of the coordinate description \eqref{eq:coordinate01}, we refer the reader to \cite{Thimm1981,MR2141306}. Moreover, with the Killing form $B$, we can identify $(\mathfrak{g}/\mathfrak{a})^* \subset \mathfrak{g}$ and $\mathfrak{m}^*\cong \mathfrak{m}$. Hence, \begin{align*}
T^*_{[g,X]}(T^*M)\cong (\mathfrak{g}/\mathfrak{a})^*  \oplus \mathfrak{m},\qquad
\langle (\eta,\mu),(\bar{v},\delta X)\rangle = B(\eta,v) + B(\mu,\delta X).
\end{align*} Here $\langle \cdot,\cdot\rangle : T_{[g,X]}^*(T^*M) \times T_{[g,X]}(T^*M) \rightarrow \mathbb{R}$ is a canonical dual pair. Thus, the first component $\eta$ pairs with the base direction $\bar{v}$, and the second component $\mu$ pairs with the vertical direction $\delta X$.
 Note that, due to the identification $\mathfrak{g}/\mathfrak{a} \cong \mathfrak{m}$, throughout this paper we will drop the bar on $\bar{v}$ and the identity $\bar{v}$ with $v \in \mathfrak{m}$, such that $\mathfrak{g}/\mathfrak{a} \oplus \mathfrak{m} $ is identified with $ \mathfrak{m} \oplus \mathfrak{m}$. Next, we define the twisted Poisson structures on $T^*(T^*M)$.

\begin{definition}
\label{def:setup}
    Fix $W \in \mathfrak{g}$ with $\mathrm{Ad}(A)W=W$, and a real parameter $\varepsilon\in\mathbb{R}/\{0\}$. Let $\omega_{\mathrm{can}}$ be the canonical symplectic form on $T^*M$, and let $\omega_{\mathrm{KKS}}$ be the Kirillov-Kostant-Souriau (KKS) two-form on $M$ at $[e] = eA$ by \begin{align*}
  \omega_{\mathrm{KKS}}([e])\big([X],[Y]\big) = B\big(W,[X,Y]\big),\qquad X,Y\in\mathfrak{m}   
\end{align*} and extend $\omega_{\mathrm{KKS}}$ to all of $M$ by left $G$-invariance. Define $\omega_\varepsilon := \omega_{\mathrm{can}}+\varepsilon\,\pi^*\omega_{\mathrm{KKS}},$ where $\pi:T^*M\to M$ is the canonical projection. In a local section $\sigma: U \subset \mathfrak{m} \rightarrow G$, one finds \begin{align}
    \omega_{\varepsilon} =  dB(X,\sigma^{-1}d\sigma) +  \varepsilon B(W,[\sigma^{-1}d\sigma,\sigma^{-1}d\sigma]), \label{eq:symplec}
\end{align} which is an expression of the local coordinates. We now consider the induced Poisson bracket from $\omega_\varepsilon$ in \eqref{eq:symplec}. In local coordinates $(x_i,p_i)$ on $T^*M$ 
adapted to a coordinate chart on $M$, we can write  $\omega_{\mathrm{KKS}}=\frac{1}{2}\sum_{i,j}F_{ij}(x)\,dx_i\wedge dx^j$. Then the associated twisted Poisson bracket in $T^*M$ is \begin{align}
    \{f,g\}_\varepsilon =\underbrace{\sum_{i=1}^n\big(\partial_{x_i}f\,\partial_{p_i}g-\partial_{x_i}g\,\partial_{p_i}f\big)}_{\text{$ = \omega_{\mathrm{can}}^{-1}$}}+\varepsilon\underbrace{\sum_{1 \leq i,j \leq n}F_{ij}(x)\,\partial_{p_i}f\,\partial_{p_j}g}_{\text{$ = \omega_{\mathrm{KKS}}^{-1}$}}, \label{eq:mp1}
\end{align} where $F_{ij}(x) = -F_{ji}(x)$. 
\end{definition}

We now compute Hamiltonian vector fields in the coordinates given in Definition \ref{def:setup}. By the definition of Hamiltonian vector fields, a non-zero vector field $X_\mathcal{H} \in \mathfrak{X} (T^*M)$ satisfies $\iota_{X_\mathcal{H}} \omega_\varepsilon = d\mathcal{H}$. Let $\mathcal{H}(\textbf{x},\textbf{p}) = x_k$ be a coordinate function. We find $X_{x_k} = \partial_{p_k}$,  where $\partial_{p_k} = \dfrac{\partial}{\partial p_k}$. Indeed, the wedge product formula gives $\iota_X(\alpha \wedge \beta) = \iota_X\alpha \wedge \beta + (-1)^{\deg \alpha} \alpha \wedge \iota_X \beta$ for any $X \in \mathfrak{X}(T^*M)$ and $\alpha,\beta \in \Omega^1(T^*M)$. Then \begin{align*}
    \iota_{\partial_{p_k}}(dx_j \wedge dp_j) = \left(\iota_{\partial_{p_k}} dx_j \right) \wedge d p_j - d x_j \wedge  \left(\iota_{\partial_{p_k}}dp_j\right) = - \delta_{kj} dx_j, 
\end{align*} where $\delta_{kj}$ is the Kronecker delta. Using \eqref{eq:mp1}, we obtain $\iota_{\partial_{p_k}} \omega_{\varepsilon} = dx_k$. Similarly, for $\mathcal{H} (\textbf{x},\textbf{p})= p_k$, solving $\iota_{X_{p_k}} \omega_\varepsilon = d p_k$ gives $X_{p_k} = - \partial_{x_k} +  \varepsilon \sum_{j = 1}^n F_{jk}(x) \partial_{p_j}$. Alternatively, it is an easy but computational exercise to show \cite{MR2141306} $$\{x_i,x_j\}_\varepsilon=0,\qquad\{x_i,p_j\}_\varepsilon=\delta^i_j,\qquad \{p_i,p_j\}_\varepsilon=\varepsilon F_{ij}(x).$$ This can be proved by a direct calculation. Given a Hamiltonian $\mathcal{H}(\textbf{x},\textbf{p}) \in C^\infty(T^*M)$, the \textit{magnetic Hamiltonian equations} are \begin{align}
    \dfrac{d x_i}{dt} = \dfrac{\partial \mathcal{H}}{\partial p_i}, \qquad \dfrac{d p_i}{dt} = - \dfrac{\partial \mathcal{H}}{\partial x_i} +  \varepsilon F_{ij}(x) \dfrac{\partial \mathcal{H}}{\partial p_j}. \label{eq:Hamiltoneq}
\end{align}

Let \begin{align}
    \mathcal{H}(g,X)=\frac{1}{2}\,B(X,X)    . \label{eq:hamiltonianquad}
\end{align}    In the rest of this work, we will write $[g,X] := (g,X)\in (G\times\mathfrak{m})/A$. Note that $\mathcal{H}$ is well-defined on $T^*M$ as, for any $a \in A$, $$ \mathcal{H}(ga, \mathrm{Ad}(a^{-1})(X)) = \frac{1}{2} B(\mathrm{Ad}(a^{-1})X,\mathrm{Ad}(a^{-1})X ) = \frac{1}{2} B(X,X) = \mathcal{H}(g,X).$$ Hence, $\mathcal{H}$ is both left $G$-invariant and $\mathrm{Ad}(a)$-invariant. Let us remark that in \cite{MR2141306}, the author implements the magnetic term into the Hamiltonian $\mathcal{H}(g,X)$ and defines a twisted Hamiltonian $\mathcal{H}_\varepsilon (g,X) = \frac{1}{2} B(X - \varepsilon W, X - \varepsilon W)$. Since $A = \exp(\ker\mathrm{ad}(W))$, we have $W\in \ker\mathrm{ad}(W)=\mathfrak{a}$, and by our choice of reductive decomposition $\mathfrak{g}=\mathfrak{a}\oplus \mathfrak{m}$ with $\mathfrak{m}=\mathfrak{a}^\perp$ (with respect to $B$), it follows that $B(X,W) = 0$ for all $X\in\mathfrak{m}$. Expanding the square, therefore, shows that \begin{align*}
    \mathcal{H}_\varepsilon (g,X) = \frac{1}{2} B(X,X) + \frac{\varepsilon^2}{2} B(W,W) = \mathcal{H}(g,X) + c.
\end{align*} Here, $c =  \frac{\varepsilon^2}{2} B(W,W)$ is a constant. Hence, $\mathcal{H}_\varepsilon$ differs from $\mathcal{H}$ by the constant $c$, and they have identical dynamics and, therefore, generate the same Hamiltonian vector fields and first integrals.

In what follows we use the left trivialisation $T^*M\cong (G\times\mathfrak{m})/A$ and write $[g,X]$ for the $A$-equivalence class of $(g,X)\in G\times\mathfrak{m}$. We now introduce a (twisted) left moment map and an affine shift on the $\mathfrak{m}$-factor. 

\begin{definition}
\label{def:maps}
\textup{(i)} Define the smooth map $P:T^*M \rightarrow \mathfrak{g}^*$ by
\begin{align*}
P([g,X]):= P(g,X) =\mathrm{Ad}(g)\bigl(X-\varepsilon W\bigr).
\end{align*}
This is well-defined since $A = \exp(\ker\mathrm{ad}(W))$ implies $\mathrm{Ad}(a)W = W$ for all $a \in A$.

\textup{(ii)} Define the affine translation $\tau_{-\varepsilon W}:\mathfrak{m} \to \mathfrak{m} - \varepsilon W$ by $\tau_{-\varepsilon W}(X)=X-\varepsilon W$, and define
\begin{align*}
\widetilde{\pi}_{\mathfrak{m}}:G\times\mathfrak{m}\to\mathfrak{m}-\varepsilon W,\qquad \widetilde{\pi}_{\mathfrak{m}}(g,X):=X-\varepsilon W.
\end{align*}

\end{definition}
\begin{remark}
(i) Fix an $\mathrm{Ad}(G)$-invariant nondegenerate symmetric bilinear form $B$ on $\mathfrak{g}$, and use it to identify $\mathfrak{g}\cong \mathfrak{g}^*$ via the musical isomorphisms  $\flat  : \mathfrak{g}\to \mathfrak{g}^*$,  $x^\flat := B(x,\cdot)$, $\sharp  : \mathfrak{g}^*\to \mathfrak{g}$,  $\sharp = \flat^{-1}$.  Let \begin{align*}
P  : T^*M \to \mathfrak{g}^*, \qquad  \begin{matrix}
    J  : = P^\sharp = B^{-1}\circ P : T^*M \to \mathfrak{g}, \\
J([g,X])  : = \mathrm{Ad}(g)\bigl(X-\varepsilon W\bigr).
\end{matrix} 
\end{align*} Hence, $P$ can be viewed equivalently as a $\mathfrak{g}$ valued moment map.
For $Y,X_i\in\mathfrak{g}$, define the component functions \begin{align}
P_Y  := \langle P, Y\rangle = B\bigl(J, Y\bigr)  \text{ and }  P_i  := P_{X_i} = \langle P, X_i\rangle = B\bigl(J, X_i\bigr)  . \label{eq:momcoord}
\end{align} In what follows, we adopt the moment map convention
\begin{align}
\iota_{\widehat{u}}\,\omega_{\varepsilon} &= dP_u,  \qquad  P_u := \langle P, u\rangle = B\bigl(J, u\bigr), \qquad  u\in\mathfrak{g}, \label{eq:momcoven}
\end{align} where $\widehat{u}$ is the fundamental vector field generated by $u$ and $\omega_{\varepsilon}$ is the magnetic symplectic form.

(ii) By the $G$-invariant metric induced by $B$, we identify $TM\cong  T^*M$. Under this identification, the twisted symplectic form $ \omega_\varepsilon=\omega_{\mathrm{can}}+\varepsilon\,\pi^*\omega_{\mathrm{KKS}}$ on $T^*M$ corresponds to the metric twist on $TM$ obtained by adding $\varepsilon \pi^*\omega_{\mathrm{KKS}}$ to the canonical symplectic form given by the metric. In other words,  we can equivalently work on $TM$ with a metric $B$ and a magnetic term.  From Section \ref{sec:superconstruction} onward, we continue to work on $T^*M$, not $TM$, and consistently use the convention $P([g,X]) = \mathrm{Ad}(g)(X- \varepsilon W)$ as defined in Definition \ref{def:maps}.

(iii) 
Let $q:G\times\mathfrak{m}\to G\times_A\mathfrak{m}\cong T^*M$ be the quotient map. The map $\widetilde{\pi}_{\mathfrak{m}}$ is a representatitive level map and, in general, is not $A$-invariant. Hence, it does \emph{not} descend to a well-defined map $\pi_\mathfrak{m}: T^*M\to\mathfrak{m}-\varepsilon W$. Indeed, such map is well-defined only if $\mathrm{Ad}(A)$ acts trivially on $\mathfrak{m}$. We will, by abuse of notation, write \begin{align*}
\pi_{\mathfrak{m}}([g,X]) := X-\varepsilon W.
\end{align*}
For $\theta\in S(\mathfrak{m}-\varepsilon W)^A$, the function $\theta\circ\widetilde{\pi}_{\mathfrak{m}}$ is $A$-invariant and hence descends to a well-defined function on $T^*M$. We define  \begin{align*}
\pi_{\mathfrak{m}}^*\theta\in C^{\infty}(T^*M),\qquad (\pi_{\mathfrak{m}}^*\theta)([g,X]) := \theta\bigl(X-\varepsilon W\bigr),
\end{align*} to be the unique descended function characterised by $q^*(\pi_\mathfrak{m}^*\theta) = \theta \circ \widetilde{\pi}_\mathfrak{m}$, which will be shown well-definedness in Lemma \ref{lem:property} \textup{(iii)}. In all the later computation below, the expressions such as $\theta \circ \pi_\mathfrak{m},  \theta(\pi_{\mathfrak{m}}(g,X))$, or $\pi_{\mathfrak{m}}(t)=X(t)-\varepsilon W$ are shorthand for this descended invariant pullback after pulling back by $q$.
\end{remark}

Up to this point, let us remark that in this work, we identify $\mathfrak{g}^* \cong \mathfrak{g}$ and $T^*M \cong TM$ via $B$. Under these identifications, the map $P$ (in Definition \ref{def:maps}) serves as the standard moment map for the left $G$-action, satisfying $\iota_{\widehat{u}} \omega_\varepsilon = d P_u$, and there is no differences between the dual pair $\langle \cdot,\cdot \rangle$ and the Killing form $B(\cdot,\cdot)$. In the rest of this work, to maintain formality, we will still use the dual pair to define the functional induced by the twisted moment map $P$.

We now record the equivariance properties of the moment map $P$ and of the descended $A$-invariant functions (such as $\pi_{\mathfrak{m}}^*\theta$) introduced above. We will invoke these properties repeatedly in later sections. We begin with the following lemma.

 \begin{lemma}\label{lem:descent}
Let $A$ act smoothly on a manifold $N$. Assume that the orbit space $N/A$ carries a smooth manifold structure such that the quotient map $q:N\to N/A$ is smooth (e.g. the action is free and proper). If $F\in C^{\infty}(N)$ is $A$-invariant, i.e. $F(a\cdot x)=F(x)$ for all $a\in A$ and $x\in N$, then there exists a unique $\overline{F}\in C^{\infty}(N/A)$ such that $F=\overline{F}\circ q$.
\end{lemma}
\begin{proof}
Define $\overline{F}([x]):=F(x)$, where $[x]$ denotes the $A$-orbit of $x\in N$. This is well-defined: if $[x]=[y]$, then $y=a\cdot x$ for some $a\in A$, hence $F(y)=F(a\cdot x)=F(x)$ by $A$-invariance. Uniqueness follows from the surjectivity of $q$.

Finally, since $q$ is smooth and surjective, the identity $F=\overline{F}\circ q$ implies that $\overline{F}$ is smooth on $N/A$.
\end{proof}

\begin{lemma}
\label{lem:property}
    Let $P$ and $\widetilde\pi_{\mathfrak{m}}$ be the same as defined above, and let $\xi = X  - \varepsilon W \in\mathfrak{g}$ \footnote{Throughout this paper, we fix the notation $\xi = X - \varepsilon W$ to be a regular element. Moreover, we further assume that $X \in \mathfrak{m}_{\mathrm{reg}}$ and $W$ could be either regular or irregular.} be a regular element for some $X \in \mathfrak{m}$ and a fixed $W \in \mathfrak{g}$. We then have the following arguments: 
    
\textup{(i)} $P$ is left $G$-equivariant. That is, $P(h\cdot[g,X]) = \mathrm{Ad}(h)P([g,X])$.
    
\textup{(ii)} The affine translation $\tau_{-\varepsilon W}:\mathfrak{m}\to\mathfrak{m}-\varepsilon W$ is $\mathrm{Ad}(A)$-equivariant and $S(\mathfrak{m}-\varepsilon W)^A\cong S(\mathfrak{m})^A$ via the translation $\tau_{-\varepsilon W}: \mathfrak{m} \rightarrow \mathfrak{m} -\varepsilon W$. 

\textup{(iii)} Let $A$ act on $G\times\mathfrak{m}$ on the right by $(g,X) \cdot a=(ga,\mathrm{Ad}(a^{-1})X)$, and let $A$ act on the affine space $\mathfrak{m}-\varepsilon W$ by $\mathrm{Ad}(a^{-1})$.
Define the map $\widetilde{\pi}_{\mathfrak{m}}:G\times\mathfrak{m}\to\mathfrak{m}-\varepsilon W$ by $\widetilde{\pi}_{\mathfrak{m}}(g,X)=X-\varepsilon W$. Then \begin{align*}
    \widetilde{\pi}_{\mathfrak{m}}\big((g,X)\cdot a\big)=\mathrm{Ad}(a^{-1})\,\widetilde{\pi}_{\mathfrak{m}}(g,X) .
\end{align*} Consequently, for any $A$-invariant polynomial $\theta\in S(\mathfrak{m}-\varepsilon W)^A$, the polynomial $\theta\circ\widetilde{\pi}_{\mathfrak{m}}$ is $A$-invariant on $G\times\mathfrak{m}$ and descends to a well-defined smooth function $\pi_\mathfrak{m}^*\theta\in C^\infty(T^*M)$ given by \begin{align*}
   (\pi_\mathfrak{m}^*\theta)([g,X])=\theta\bigl(X-\varepsilon W\bigr). 
\end{align*} Moreover, $\pi_\mathfrak{m}^*\theta$ is left $G$-invariant.
\end{lemma}

\begin{proof}
    We first show part (i). In the left trivialisation $T^*M \cong (G\times\mathfrak{m})/A$, for any $h \in G$, the left $G$-action is $h\cdot[g,X]=[hg,X]$. By Definition \ref{def:maps}, $ P([g,X])=\mathrm{Ad}(g)(X-\varepsilon W)$. Therefore, a direct computation shows that \begin{align*}
P\bigl(h\cdot[g,X]\bigr) = P([hg,X]) = \mathrm{Ad}(hg)(X-\varepsilon W) = \mathrm{Ad}(h)\bigl(\mathrm{Ad}(g)(X-\varepsilon W)\bigr) = \mathrm{Ad}(h)\,P([g,X]).
\end{align*} Hence, $P$ is $G$-equivariant.

Now consider the second part. Define the translation $\tau_{-\varepsilon W}:\mathfrak{m}\to\mathfrak{m}-\varepsilon W$ by $\tau_{-\varepsilon W}(X)=X-\varepsilon W$. Since by construction, $W \in \mathfrak{g}$ is $\mathrm{Ad}(A)$-stable, we compute for every $a\in A$ and $X\in\mathfrak{m}$, \begin{align*}
\tau_{-\varepsilon W}\bigl(\mathrm{Ad}(a)X\bigr) = \mathrm{Ad}(a)X-\varepsilon W = \mathrm{Ad}(a)\bigl(X-\varepsilon W\bigr) = \mathrm{Ad}(a)\,\tau_{-\varepsilon W}(X).
\end{align*} Thus, $\tau_{-\varepsilon W}$ is $A$-equivariant. Moreover, note that the inverse of $\tau_{-\varepsilon W}$ is simply given by  $\tau_{+\varepsilon W}:\mathfrak{m}-\varepsilon W\to\mathfrak{m}$. It induces an isomorphism of invariant polynomial rings via pullback: \begin{align*}
\tau_{+\varepsilon W}^*: S(\mathfrak{m})^A \xrightarrow{ \cong } S(\mathfrak{m}-\varepsilon W)^A, \qquad \tau_{-\varepsilon W}^*: S(\mathfrak{m}-\varepsilon W)^A \xrightarrow{\cong } S(\mathfrak{m})^A. 
\end{align*} Thus, $S(\mathfrak{m}-\varepsilon W)^A\cong S(\mathfrak{m})^A$. Indeed, $\tau_{+\varepsilon W}^*$ and $\tau_{-\varepsilon W}^*$ are inverse $A$-equivariant ring isomorphisms.

Finally, we show part (iii). For any $a\in A$, by the definition of $\widetilde{\pi}_{\mathfrak{m}}$ and $\mathrm{Ad}(A)W=W$, we have  \begin{align*}
\widetilde\pi_\mathfrak{m}\big((g,X)\cdot a\big)=\widetilde\pi_\mathfrak{m}\big(ga,\mathrm{Ad}(a^{-1})X\big) =\mathrm{Ad}(a^{-1})X-\varepsilon W =\mathrm{Ad}(a^{-1})(X-\varepsilon W).
\end{align*}  Thus, $\widetilde\pi_\mathfrak{m}$ is right $A$-equivariant. If $\theta \in S(\mathfrak{m}-\varepsilon W)^A$, then $\theta \circ \widetilde{\pi}_\mathfrak{m}$ is $A$-invariant and thus descends to the quotient $(G \times \mathfrak{m})/A \cong T^*M$, which is exactly $\pi_\mathfrak{m}^*\theta$. The left $G$-invariance of $\pi_\mathfrak{m}^*\theta$ is immediate from the trivialization.
\end{proof}

\begin{remark}
\label{re:property}
  The composition $\theta \circ \widetilde{\pi}_\mathfrak{m}$ is $A$-invariant and descends to a function $\pi_\mathfrak{m}^*\theta([g,X])$ in $T^*M$. Thus, $\pi_\mathfrak{m}^*\theta$, rather than $\pi_\mathfrak{m}$ itself, is the globally defined object used below. 

\end{remark}

 Let us now recall some basic notation and terminology discussed in \cite{MR2141306}, which help us process the explicit coordinate computation.  
 With the standard $G$-invariant metric on $M$ induced by $B$ and the magnetic term $\varepsilon\,\omega_{\mathrm{KKS}}(g,X)(gv_1,gv_2) $ $=\varepsilon B(W,[v_1,v_2])$, under the local coordinates on $T_{(g,X)}^*T^*M$ defined in \eqref{eq:coordinate01}, the twisted symplectic form at $(g,X)$ is \begin{align}
 \nonumber
\omega_{\varepsilon}\left(g_*\vert_X\left(v_1,-\frac{1}{2}[v_1,X]+w_1\right),\ g_*\vert_X\left(v_2,-\frac{1}{2}[v_2,X]+w_2\right)\right)  \\
= B(w_1,v_2)-B(w_2,v_1)+\varepsilon\,B(W,[v_1,v_2]). \label{eq:magform}
\end{align} Note that the magnetic symplectic form has the $G$-invariant expression in \eqref{eq:magform}. Let $\xi = X - \varepsilon W$. The differential of the moment map $P(g,X)=\mathrm{Ad}(g)\xi$ at $(g,X)$, applied to a tangent vector $(v,\delta X)$, produces the following: \begin{align}
  d P_{(g,X)}(v,\delta X)=  \left. \dfrac{d}{dt} \right\vert_{t= 0} P(g \exp(tv),X +  t \delta X) = \mathrm{Ad}(g)([v,\xi] +  \delta X) \label{eq:diffmoment}
\end{align}   since $P((g \exp(tv),X))=\mathrm{Ad}(g)\bigl(\exp(t( \mathrm{ad}v)\xi)\bigr)$ (with $X$ fixed) and $P([g,X+t\,\delta X])=\mathrm{Ad}(g)(\xi+t\,\delta X)$ (with $g$ fixed). Note that $d P \in T\mathfrak{g}^* \cong \mathfrak{g}^* \cong \mathfrak{g}$ as $v \in \mathfrak{g}$ such that $[v,\xi] \in \mathfrak{g}$ and $\delta X \in \mathfrak{m}$, hence $[v,\xi] +  \delta X \in \mathfrak{g}$. On the other hand, for any $\theta \in S(\mathfrak{m} - \varepsilon W)^A$, consider the tangent $g_*\vert_X\left(v, w - \frac{1}{2}[v,X]\right)$ used in \eqref{eq:magform}. Similar to what we did in \eqref{eq:diffmoment}, simple computation shows that, on the representative space $G\times\mathfrak{m}$, the differential at $(g,X)$ is \begin{align*}
    d (\widetilde{\pi}_\mathfrak{m})_{(g,X)} \left(g_*\vert_X\left(v, w - \frac{1}{2}[v,X]\right)\right) =   w - \frac{1}{2}[v,X] \in T_\xi(\mathfrak{m} - \varepsilon W) \cong \mathfrak{m}.
\end{align*}   Since $\theta$ is $A$-invariant and the $B$-gradient $\nabla \theta(\xi) \in \mathfrak{m}$ is characterized by  $d \theta(\xi) \left( \left(v, w - \frac{1}{2}[v,X]\right)\right) = B(\nabla \theta(\xi),w - \frac{1}{2}[v,X] )$, we then have \begin{align}
    d (\theta \circ \widetilde{\pi}_\mathfrak{m})_{(g,X)}\left(g_*\vert_X \left(v, w - \frac{1}{2}[v,X]\right)\right)  = B\left(\nabla \theta(\xi),w - \frac{1}{2}[v,X] \right). \label{eq:diffproj}
\end{align} In the following lemma, we will provide the Hamiltonian vector fields of the moment map $P$ and, therefore, present the twisted Poisson bracket relations of the linear first integrals.

 \begin{lemma}
 \label{lem:F1polypbrack} \cite[Proposition 3]{MR2141306}
Let $G $ and $A\subset G$  be the same as defined above. Let $P:T^*M\longrightarrow\mathfrak{g}^*$ be defined by $P(g,X)=\mathrm{Ad}(g)\bigl(X-\varepsilon W\bigr)$ as the magnetic moment map. For $\eta\in\mathfrak{g}$, define \begin{align}
P_\eta(g,X):=\langle P(g,X),\eta \rangle   . \label{eq:comoment}
\end{align}   Then \begin{equation}\label{eq:f1bracket}
\{P_\eta,P_{\eta'}\}_\varepsilon = P_{[\eta,{\eta'}]}\quad \text{ with }\eta,\eta'\in\mathfrak{g} .
\end{equation}  
\end{lemma}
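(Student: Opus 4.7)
The plan is to recognise $P$ as an equivariant moment map for the left $G$-action on $(T^*M,\omega_\varepsilon)$ and then reduce the Poisson bracket relation to an infinitesimal calculation using the $\mathrm{Ad}$-invariance of the Killing form. The left $G$-action $h\cdot[g,X]=[hg,X]$ preserves $\omega_\varepsilon$ since $\omega_{\mathrm{can}}$ is tautologically $G$-invariant and $\omega_{\mathrm{KKS}}$ is defined by extending the bilinear form $B(W,[\cdot,\cdot])$ at $[e]$ via left-$G$-invariance. Combined with the $G$-equivariance of $P$ already established in Lemma \ref{lem:property}(i), this sets up the standard equivariant moment-map apparatus.

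The decisive step is verifying the moment-map identity $\iota_{\widehat{\eta}}\omega_\varepsilon=dP_\eta$ for every $\eta\in\mathfrak{g}$, where $\widehat{\eta}$ is the fundamental vector field generated by $\eta$. In the trivialisation \eqref{eq:coordinate01}, the fundamental vector field at $[g,X]$ is represented by $(\mathrm{Ad}(g^{-1})\eta)_{\mathfrak{m}}$ in the horizontal slot together with the appropriate vertical component forced by the reductive decomposition $\mathfrak{g}=\mathfrak{a}\oplus\mathfrak{m}$. Contracting $\omega_\varepsilon$ against $\widehat{\eta}$ via \eqref{eq:magform} and comparing against the differential \eqref{eq:diffmoment}, the canonical piece $\omega_{\mathrm{can}}$ produces the term $d\langle\mathrm{Ad}(g)X,\eta\rangle$, while the magnetic correction $\varepsilon\pi^*\omega_{\mathrm{KKS}}$ contributes precisely $-\varepsilon\,d\langle\mathrm{Ad}(g)W,\eta\rangle$. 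Their sum is $dP_\eta$; this computation also explains why the shift $-\varepsilon W$ in the definition of $P$ is forced by the magnetic twist.

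With the moment-map identity in hand, the bracket relation becomes a short infinitesimal calculation:
\begin{align*}
\{P_\eta,P_{\eta'}\}_\varepsilon=\omega_\varepsilon(\widehat{\eta},\widehat{\eta'})=\widehat{\eta'}(P_\eta)=\frac{d}{dt}\bigg|_{t=0} B\bigl(\mathrm{Ad}(\exp(t\eta')g)(X-\varepsilon W),\,\eta\bigr).
\end{align*}
Differentiating at $t=0$ yields $B([\eta',\mathrm{Ad}(g)(X-\varepsilon W)],\eta)$, and the $\mathrm{Ad}$-invariance identity $B([Y,Z],\eta)=B(Z,[\eta,Y])$ collapses this to $B(\mathrm{Ad}(g)(X-\varepsilon W),[\eta,\eta'])=P_{[\eta,\eta']}(g,X)$, which is the desired identity \eqref{eq:f1bracket}.

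The main obstacle is the bookkeeping of the magnetic contribution in the moment-map equation. The canonical form alone produces the moment map $(g,X)\mapsto\mathrm{Ad}(g)X$, so the real work is showing that $\varepsilon\pi^*\omega_{\mathrm{KKS}}$ contributes exactly $-\varepsilon\,\mathrm{Ad}(g)W$; this requires simultaneously using the defining relation $\omega_{\mathrm{KKS}}([e])(v_1,v_2)=B(W,[v_1,v_2])$, its left-$G$-equivariant extension to all of $M$, and the reductive splitting $\mathfrak{g}=\mathfrak{a}\oplus\mathfrak{m}$ so that the horizontal and vertical components of $\widehat{\eta}$ combine correctly in the twisted form \eqref{eq:magform}. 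Once this matching is done, the remainder of the argument is purely algebraic and uses nothing beyond $\mathrm{Ad}$-invariance of $B$.
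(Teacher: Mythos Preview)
Your approach is correct and essentially the same as the paper's. Both arguments establish the key identity $X_{P_\eta}=\widehat{\eta}$ (equivalently, the moment-map equation $\iota_{\widehat{\eta}}\omega_\varepsilon=dP_\eta$) and then compute the bracket by differentiating $P_{\eta'}$ along the left $G$-flow, invoking $\mathrm{Ad}$-invariance of $B$. The only difference is order and level of detail: the paper solves $\iota_{X_{P_\eta}}\omega_\varepsilon=dP_\eta$ explicitly in the trivialisation \eqref{eq:coordinate01}--\eqref{eq:magform}, obtaining the coordinate expression \eqref{eq:coovectfie} for $X_{P_\eta}$ and then recognising it as the fundamental field, whereas you start from $\widehat{\eta}$ and verify the moment-map equation conceptually by splitting the canonical and magnetic contributions. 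Your sketch of that verification is accurate but would need to be fleshed out to match the paper's rigour; the paper's explicit formula \eqref{eq:coovectfie} is also reused later (e.g.\ in Remark \ref{rem:F1poly} and Theorem \ref{thm:dimea}), which is an incidental benefit of its more computational route.
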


\begin{proof}
Write $\xi:=X-\varepsilon W$ and left-trivialize $T^*M\cong  G\times_A\mathfrak{m}$.  From \eqref{eq:comoment}, we have\begin{align*}
    P_\eta(g,X)=\langle P(g,X),\eta\rangle=B(P^\sharp(g,X),\eta),\quad \text{ with }
P^\sharp(g,X):=\mathrm{Ad}(g)\,\xi .
\end{align*}   Using \eqref{eq:diffmoment}, differentiating $P_\eta(g,X)$ in the direction $(v,\delta X)$ gives: \begin{align}
dP_\eta \left(g_*\vert_X\left(v,w-\frac{1}{2}[v,X] \right)\right) = & \, B\left(\nabla P_\eta(g,X),\mathrm{Ad}(g)\left(\left[v,\xi + w - \frac{1}{2}[v,X] \right]\right)\right) \label{eq:dPxi}\\
\nonumber
= & \, B\left(w, \mathrm{Ad}(g^{-1})\eta)\right) - B\left(v,[\mathrm{Ad}(g^{-1})\eta,\zeta]\right).
\end{align} By definition, the Hamiltonian vector field $X_{P_\eta}$ is characterized by $\iota_{X_{P_\eta}}\omega_\varepsilon=dP_\eta$.  The explicit expression of  $X_{P_\eta}$ at $(g,X)$ is of the form $g_*\vert_X\bigl(v_\eta,-\frac{1}{2}[v_\eta,X]+w_\eta\bigr)$.  Using \eqref{eq:magform}, without loss of generality, we find that for any $(0,w)\in\mathfrak{m}\times\mathfrak{m}$, $\omega_\varepsilon(X_{P_\eta},g_*\vert_X(0,w)) = dP_\eta (g_*\vert_X(0,w))$. Together with \eqref{eq:dPxi}, we deduce \begin{align*}
 -B(w,v_\eta)=B\left(w, \mathrm{Ad}(g^{-1})\eta)\right)\Longrightarrow v_\eta= -\bigl(\mathrm{Ad}(g^{-1})\eta\bigr)_{\mathfrak{m}},
\end{align*} where $(\cdot)_\mathfrak{m}: \mathfrak{g} \to \mathfrak{m}$ is the $B$-orthogonal projection onto $\mathfrak{m}$. Moreover, the coordinate form of $\omega_\varepsilon$ in \eqref{eq:dPxi} implies that \begin{align}
-B(w_\eta,v)+\varepsilon\,B \big(W,[v_\eta,v]\big)=-B \big(v,[\mathrm{Ad}(g^{-1})\eta,\xi]\big) . \label{eq:balance}
\end{align}  $\mathrm{Ad}$-invariance of $B$ gives $B(W,[v_\eta,v])=-B([W,v_\eta],v)$. Then the identity \eqref{eq:balance} becomes \begin{align*} 
B \big(w_\eta+[\varepsilon W,v_\eta]-[\mathrm{Ad}(g^{-1})\eta,\xi],\,v\big)=0\quad \text{ for all } v\in\mathfrak{m},
\end{align*}which yields, after substituting $v_\eta=-(\mathrm{Ad}(g^{-1})\eta)_{\mathfrak{m}}$ and $\xi=X-\varepsilon W$, \begin{align*}
w_\eta= & \, [\mathrm{Ad}(g^{-1})\eta,\xi]_{\mathfrak{m}} + \varepsilon[W,(\eta)_{\mathfrak{m}}] \\
= & \, [\mathrm{Ad}(g^{-1})\eta,X]_{\mathfrak{m}}-\varepsilon [\mathrm{Ad}(g^{-1})\eta,W]_\mathfrak{m} +  \varepsilon \big[(\mathrm{Ad}(g^{-1})\eta)_{\mathfrak{m}},X\big]_{\mathfrak{m}}  \\
= & \, [\mathrm{Ad}(g^{-1})\eta,X]_\mathfrak{m}.
\end{align*} Note that the last equality is deduced by observing $[\left(\mathrm{Ad}(g^{-1})\eta\right)_\mathfrak{a},W] = 0 $ and $[W,\left(\mathrm{Ad}(g^{-1})\eta\right)_\mathfrak{m}] = -[\left(\mathrm{Ad}(g^{-1})\eta\right)_\mathfrak{m},W]  $ Thus, \begin{align}
X_{P_\eta}(g,X) = g_*\vert_X\Big(-(\mathrm{Ad}(g^{-1})\eta)_{\mathfrak{m}},\, [\mathrm{Ad}(g^{-1})\eta,X]_{\mathfrak{m}}-\frac{1}{2}\big[(\mathrm{Ad}(g^{-1})\eta)_{\mathfrak{m}},X\big]_{\mathfrak{m}}\Big),\label{eq:coovectfie}
\end{align}
which is exactly the fundamental vector field of the left $G$-action for the magnetic form $\omega_\varepsilon$.  Using \eqref{eq:coovectfie}, the twisted Poisson bracket is \begin{align*}
\{P_\eta,P_\eta\}_\varepsilon(g,X)= dP_\eta \big(X_{P_\eta}(g,X)\big)= \left.\frac{d}{dt}\right\vert_{t=0} P_\eta \big(\Phi^{\eta}_t(g,X)\big),
\end{align*}
where $\Phi^{\eta}_t$ is the flow of $X_{P_\eta}$, i.e., the left action by $\exp(t\eta)$. Along this flow,
\begin{align*}
P^\sharp\bigl(\Phi_t^\eta(g,X)\bigr)=\mathrm{Ad}(\exp(-t\eta))P^\sharp(g,X),
\qquad
P\bigl(\Phi_t^\eta(g,X)\bigr)=\mathrm{Ad}^*(\exp(t\eta))P(g,X).
\end{align*} So,
\begin{align*}
\left.\frac{d}{dt}\right\vert_{t=0}P_{\eta'}\bigl(\Phi_t^\eta(g,X)\bigr)
=& \, \left.\frac{d}{dt}\right\vert_{t=0}\langle \mathrm{Ad}^*(\exp(t\eta))P(g,X),\eta'\rangle = \left.\frac{d}{dt}\right\vert_{t=0}\langle P(g,X),\mathrm{Ad}(\exp(t\eta))(\eta')\rangle\\
= & \,  \langle P(g,X),\,[\eta,\eta'] \rangle
=  P_{[\eta,\eta']}(g,X),
\end{align*}
where the last equality uses $\mathrm{Ad}$-invariance and the definition of $P_\eta$. This proves \eqref{eq:f1bracket}.   
\end{proof}

\begin{remark} 
\label{rem:F1poly}
(i)  Let $\mathcal{O}(T^*M) \subset C^\infty(T^*M)$ denote the real algebraic polynomial functions on $T^*M$. That is, $\mathcal{O}(T^*M): = \left(\mathbb{R}[G] \otimes S(\mathfrak{m})\right)^A \cong \mathbb{R}[G \times \mathfrak{m}]^A$.  Let $P:T^*M\to\mathfrak{g}^*$ be the left moment map defined in Definition \ref{def:maps}. For any $h\in S(\mathfrak{g} )$ with $I_h:=h\circ P$, the pullback  $P^*:S(\mathfrak{g} )\to \mathcal{O}(T^*M)$ is a Poisson algebra homomorphism: \begin{align*}
    I_{\{h_1,h_2\}} =  \{I_{h_1},I_{h_2}\}_\varepsilon.
\end{align*}  In particular, for linear functions $h_\eta(X)=B(\eta,X)$, one deduces from \cite[Proposition 3]{MR2141306} that
\begin{align*}
    \{P_\eta,P_{\eta'}\}_\varepsilon =  P_{[\eta,\eta']}, \qquad P_\eta:=h_\eta\circ P.
\end{align*}  

(ii) From \cite{MR2141306}, the Hamiltonian vector fields are $ X_{P_\eta}(g,X)$ in \eqref{eq:coovectfie} and \begin{align}
  X_{f_\theta}(g,X)  =   g_*\vert_X \left((\nabla \theta(\xi))_\mathfrak{m}, - \frac{1}{2} [(\nabla \theta(\xi))_\mathfrak{m},X]_\mathfrak{m} - \varepsilon [W,(\nabla \theta(\xi))_\mathfrak{m}]\right).
\end{align} 
\end{remark}

 In what follows, let $\xi := X - \varepsilon W \in \mathfrak{m} - \varepsilon W$. By construction, there exists a $W \in \mathfrak{g}$ that generates the adjoint orbit $M = \mathcal{O}_W = \mathrm{Ad}(G)\cdot W$. 

We first clarify two polynomial families given in \cite{MR2141306}, denoted by $F_1^{\mathrm{poly}}$ and $F_2^{\mathrm{poly}}$. The first family consists of polynomial functions on $T^*M$ obtained by pulling back polynomials on $\mathfrak{g}$ through the (magnetic) moment map $P$:
\begin{align}
F_1^{\mathrm{poly}}:=\{h\circ P: h\in S(\mathfrak{g})\}.
\end{align}

The second family comes from evaluating $A$-invariant polynomials on the affine shift $X-\varepsilon W\in\mathfrak{m}-\varepsilon W$. For each $\theta\in S(\mathfrak{m}-\varepsilon W)^A$, define
\begin{align}
    (\pi_{\mathfrak{m}}^*\theta)([g,X]) := \theta\bigl(X-\varepsilon W\bigr). \label{eq:ainv}
\end{align}
By Lemma \ref{lem:property}(iii), the function $\theta(X-\varepsilon W)$ is $A$-invariant on $G\times\mathfrak{m}$, hence it descends to $T^*M\cong (G\times\mathfrak{m})/A$ by Lemma \ref{lem:descent}. We then set
\begin{align}
F_2^{\mathrm{poly}}:=\left\{    \pi_\mathfrak{m}^*\theta :  \theta \in S(\mathfrak{m} -\varepsilon W)^A \right\}. 
\end{align}

Hence, based on the discussion in the previous discussion, we summarise all the results above into the following definition.
\begin{definition}
\label{def:families}
The sets of polynomial generators that Poisson commutes with $\mathcal{H}$  are given by: \begin{align}
F_1^{\mathrm{poly}}  := \left\{ f\circ P : f \in S(\mathfrak{g}) \right\}, \quad  F_2^{\mathrm{poly}}  := \left\{  \pi_\mathfrak{m}^*\theta : \theta \in S(\mathfrak{m} -\varepsilon W)^A \right\}. \label{eq:intm}
\end{align}    
\end{definition}


 In what follows, we first prove that the sets $F_1^{\mathrm{poly}}$ and $F_2^\mathrm{poly}$ are preserved by the magnetic geodesic flow. After establishing conservation, we return to their algebraic realization as pullbacks and to a clear notation for the resulting polynomial Poisson algebras.  Now, we denote the finitely generated algebras  \begin{align*}
    \mathfrak{F}_1^{\mathrm{poly}} : = \textbf{Alg} \left\langle F_1^{\mathrm{poly}}\right\rangle \text{ and }   \mathfrak{F}_2^{\mathrm{poly}} : = \textbf{Alg} \left\langle F_2^{\mathrm{poly}}\right\rangle .
\end{align*} 
It is clear that they are subalgebras of $\mathcal{O}(T^*M)$. 
We now show that the polynomials from the $P$ and $\pi_{\mathfrak{m}}^*$-pullbacks are conserved with respect to the Hamiltonian in \eqref{eq:Hamiltoneq}.


\begin{proposition} \label{prop:leftintegrals}
Let $\xi = X -\varepsilon W \in \mathfrak{m} -\varepsilon W$. For every $f\in S(\mathfrak{g})$, $ \{P^*f,  \mathcal{H}\}_\varepsilon = 0$. 
\end{proposition}

\begin{proof}
For any $\eta \in \mathfrak{g}$, by the definition of the Hamiltonian action, we have $\iota_{\hat{\eta}} \omega_\varepsilon = d \langle P,\eta \rangle$. Using $P(g,X) = \mathrm{Ad}(g)\xi \in \mathfrak{g}^*$, for all $(g,X) \in T^*M$, we rewrite $\langle P,\eta \rangle = B(\mathrm{Ad}(g)(\xi),\eta) = B(\xi, \mathrm{Ad}(g^{-1})\eta) $. Thus, $P_\eta = \langle P,\eta \rangle$ generates the fundamental vector field $\hat{\eta}$. Since the left $G$-action on $T^*M$ is Hamiltonian with the magnetic moment map $P$ (under $B$), and $\mathcal{H}(g,X)=\frac{1}{2} B(X,X)$ is left-invariant, we have $$\mathcal{L}_{\hat{\eta}}\mathcal{H} = \{d \langle P,\eta \rangle, \mathcal{H}\}_\varepsilon =  0.$$   By linearity and Leibniz's rule, every polynomial $f\in S(\mathfrak{g})$ yields a conserved function $P^*f$. So, $P^*f$ is indeed the first integral of $\mathcal{H}$.
\end{proof}

\begin{remark}
\label{rem:f1}
By Proposition \ref{prop:leftintegrals}, $\{P^*f,P^*h\}_\varepsilon=P^*\{f,h\}_{\mathfrak{g}}$. Hence, $\mathfrak{F}_1^{\mathrm{poly}}$ is not Abelian Poisson algebra unless $\mathfrak{g}$ is Abelian.
\end{remark}

\begin{lemma} 
\label{lem:lax}
Let $(g(t),X(t))$ be the solution of the Hamilton equations \eqref{eq:Hamiltoneq} for $\mathcal{H}$  with respect to $\omega_{\varepsilon}$. Let $\xi(t) = X(t) -\varepsilon W$ such that $\pi_\mathfrak{m}(t) = \xi(t)$. Then there exists a curve $\Omega(t) := \varepsilon W \in \mathfrak{g}$ such that $\pi_\mathfrak{m}(t) = X(t) -\varepsilon W$ satisfies \begin{align*}
\dot{\pi}_{\mathfrak{m}}(t)= [\pi_{\mathfrak{m}}(t),  \Omega(t)],
\end{align*} where $\dot{f} (t) = \dfrac{d}{dt} f(t)$ for any smooth function $f$. In particular, $\dot{\pi}_{\mathfrak{m}}(t)\in T_{\pi_{\mathfrak{m}}(t)}\big(\mathrm{Ad}(A)\cdot \pi_{\mathfrak{m}}(t)\big)$.
\end{lemma}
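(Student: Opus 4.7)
My plan is to compute $\dot{\pi}_\mathfrak{m}(t)=\dot X(t)$ directly from the Hamilton equations of $\mathcal H$ under the magnetic symplectic form $\omega_\varepsilon$ and then recognize the result as the Lax-type commutator $[\pi_\mathfrak{m}(t),\varepsilon W]$.

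First, I will recast $\mathcal{H}(g,X)=\tfrac12 B(X,X)$ in terms of the shifted fiber coordinate $\xi=X-\varepsilon W$. Since $W\in\mathfrak{a}=\ker\mathrm{ad}(W)$ and $X\in\mathfrak{m}$, the $B$-orthogonality $B(\mathfrak{m},\mathfrak{a})=0$ gives $B(X,W)=0$, hence
\begin{align*}
\mathcal{H}=\tfrac12 B(\xi+\varepsilon W,\xi+\varepsilon W)=\tfrac12 B(\xi,\xi)-\tfrac{\varepsilon^2}{2}B(W,W).
\end{align*}
Thus $\mathcal{H}=\pi_{\mathfrak{m}}^*\theta+\mathrm{const}$ with $\theta(\xi):=\tfrac12 B(\xi,\xi)\in S(\mathfrak{m}-\varepsilon W)^A$, and the two functions generate the same Hamiltonian flow. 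Consequently $X_\mathcal{H}$ is given by the formula in Remark \ref{rem:F1poly}(ii), which I would invoke with $\nabla\theta(\xi)=\xi$ and $(\nabla\theta(\xi))_\mathfrak{m}=\xi_\mathfrak{m}=X$ (using $W\in\mathfrak{a}$).

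Second, substituting into the Hamiltonian vector field formula, using $[X,X]=0$ and the reductive inclusion $[\mathfrak{m},\mathfrak{m}]\subset\mathfrak{a}$, yields
\begin{align*}
X_\mathcal{H}(g,X)=g_*|_X\bigl(X,\,-\varepsilon[W,X]\bigr).
\end{align*}
Comparing with the local coordinate parameterization \eqref{eq:coordinate01}, whose second slot is $-\tfrac12[v,X]+\delta X$ with $\delta X=\dot X$, I would read off $\dot X(t)=-\varepsilon[W,X(t)]=\varepsilon[X(t),W]$, since the $-\tfrac12[X,X]=0$ correction vanishes. Hence $\dot{\pi}_\mathfrak{m}(t)=\dot X(t)=\varepsilon[X(t),W]$.

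Third, defining $\Omega(t):=\varepsilon W$ gives the one-line identity
\begin{align*}
[\pi_\mathfrak{m}(t),\Omega(t)]=[X(t)-\varepsilon W,\varepsilon W]=\varepsilon[X(t),W]-\varepsilon^2[W,W]=\varepsilon[X(t),W],
\end{align*}
which matches $\dot{\pi}_\mathfrak{m}(t)$, proving the Lax equation. For the tangency assertion, $W\in\mathfrak{a}$ tautologically, so $\exp(t\varepsilon W)\in A$; the tangent space $T_{\pi_\mathfrak{m}(t)}(\mathrm{Ad}(A)\cdot\pi_\mathfrak{m}(t))=\{[Y,\pi_\mathfrak{m}(t)]:Y\in\mathfrak{a}\}$ therefore contains $[\varepsilon W,\pi_\mathfrak{m}(t)]$, and being a vector subspace it also contains its negative $-[\varepsilon W,\pi_\mathfrak{m}(t)]=[\pi_\mathfrak{m}(t),\varepsilon W]=\dot{\pi}_\mathfrak{m}(t)$.

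The main obstacle lies in the bookkeeping of the local coordinate \eqref{eq:coordinate01}: one must carefully distinguish the raw second slot of $g_*|_X(\cdot,\cdot)$ from $\delta X=\dot X$, and keep track of the $\mathfrak{m}$-projection implicit in the formula of Remark \ref{rem:F1poly}(ii). Once the identification $\mathcal{H}\equiv \pi_\mathfrak{m}^*\theta$ (modulo constants) is made, every remaining step reduces to elementary bracket manipulations using $\mathrm{Ad}(A)W=W$ and the reductive splitting $\mathfrak{g}=\mathfrak{a}\oplus\mathfrak{m}$.
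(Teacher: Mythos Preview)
Your proof is correct, and it reaches the same Hamilton equations $\dot g=gX$, $\dot X=-\varepsilon[W,X]$ as the paper, but by a different route. The paper derives $X_{\mathcal H}$ from scratch: it inserts the ansatz $X_{\mathcal H}=g_*|_X(v_H,-\tfrac12[v_H,X]+w_H)$ into the defining relation $\iota_{X_{\mathcal H}}\omega_\varepsilon=d\mathcal H$, uses the coordinate formula \eqref{eq:magform} together with $d\mathcal H(g,X)(\cdot)=B(X,w)$, and solves the two resulting identities to obtain $v_H=X$ and $w_H=\varepsilon[X,W]$. You instead notice that $\mathcal H=\pi_{\mathfrak m}^*\theta+\text{const}$ with $\theta(\xi)=\tfrac12 B(\xi,\xi)\in S(\mathfrak m-\varepsilon W)^A$ and simply plug $\nabla\theta(\xi)=\xi$, $(\nabla\theta(\xi))_{\mathfrak m}=X$ into the formula for $X_{f_\theta}$ already quoted in Remark~\ref{rem:F1poly}(ii). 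Your argument is shorter and exploits material stated earlier in the paper; the paper's argument is more self-contained, since that remark merely cites the formula from \cite{MR2141306} without deriving it. One minor point: your appeal to the inclusion $[\mathfrak m,\mathfrak m]\subset\mathfrak a$ is superfluous, since $[X,X]=0$ already kills the relevant term.
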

\begin{proof}
By Definition \ref{def:maps}, along any trajectory we have $\xi(t)=X(t)-\varepsilon W$, hence $\dot\xi(t)=\dot X(t)$ since $W$ is fixed. To compute $\dot{X}$, we use the explicit form of the magnetic symplectic structure at $(g,X)\in T^*M$. Let $X_\mathcal{H}$ denote the Hamiltonian vector field of $\mathcal{H}$. Recall that the local coordinate expression is given by \begin{align}
X_\mathcal{H}(g,X)=g_*\vert_X\left(v_\mathcal{H},\,-\frac{1}{2}[v_\mathcal{H},X]+w_\mathcal{H}\right),\quad v_\mathcal{H},w_\mathcal{H}\in\mathfrak{m}. \label{eq:hamilvec}
\end{align} Here $v_\mathcal{H} =v_\mathcal{H}(g,X)$ and $w_\mathcal{H} = w_\mathcal{H}(g,X)$ are $\mathfrak{m}$-components of $X_\mathcal{H}$ in the trivialization, which need to be determined from $d \mathcal{H} = \iota_{X_\mathcal{H}}\omega_\varepsilon$. Then, with the magnetic symplectic form in \eqref{eq:magform}, the magnetic Hamilton’s equation $d \mathcal{H} =\iota_{X_\mathcal{H}}\omega_\varepsilon$ evaluated at an arbitrary point $g_*\vert_X\bigl(v,-\frac{1}{2}[v,X]+w\bigr) \in T^*(T^*M)$ gives
\begin{align}
B(w,v_\mathcal{H})-B(w_\mathcal{H},v)+\varepsilon\,B\left(W,\,[v,v_\mathcal{H}]\right)\,=\,d\mathcal{H}(g,X)\left(g_*\vert_X\bigl(v,-\frac{1}{2}[v,X]+w\bigr)\right). \label{eq:hamiltonian01}
\end{align} Since $\mathcal{H}(g,X)=\frac{1}{2} B(X,X)$ depends only on $X$ and is quadratic, \eqref{eq:hamilvec} becomes   \begin{align*}
d\mathcal{H}(g,X)\left(g_*\vert_X\bigl(v,-\frac{1}{2}[v,X]+w\bigr)\right)=B(X,w).
\end{align*} Therefore, for all $v,w\in\mathfrak{m}$, comparing the identity of $d\mathcal{H}$ gives \begin{align}
B(w,v_\mathcal{H})&=B(X,w), \label{eq:b1}\\
-B(w_\mathcal{H},v)&=-\varepsilon\,B\left(W,\,[v,v_\mathcal{H}]\right). \label{eq:b2}
\end{align} From \eqref{eq:b1}, we obtain $v_\mathcal{H}=X$. Substituting into \eqref{eq:b2} and using $B$-invariance $B(W,[v,X]) = -B(v,[W,X]) = B(v,[X,W])$, \begin{align*}
-B(w_\mathcal{H},v)+\varepsilon\,B\left(W,\,[v,X]\right) &= -B(w_\mathcal{H},v)+\varepsilon\,B\left(v,\,[X,W]\right) \,=\,0\qquad \text{ for all } v\in\mathfrak{m}.
\end{align*} Hence,  $w_\mathcal{H}=\varepsilon[X,W]=-\varepsilon[W,X]\in\mathfrak{m}$.  Consequently, with all the discussion above, \eqref{eq:hamilvec} becomes $X_\mathcal{H}(g,X)=g_*\vert_X\bigl(X,\,-\varepsilon[W,X]\bigr)$, and the Hamilton equations are \begin{align}
\dot{g}=gX,\quad \dot{X}=-\varepsilon[W,X]. \label{eq:hamilton}
\end{align}

Finally, using \eqref{eq:hamilton}
\begin{align*}
\dot{\pi}_\mathfrak{m} =\dot{X}=-\varepsilon[W,X]=[X-\varepsilon W,\varepsilon W]=[\pi_\mathfrak{m}, \Omega],
\end{align*} with the (constant) choice $\Omega(t)\equiv \varepsilon W \in \mathfrak{g}$. This shows both the bracket form and that $\dot{\pi}_m$ is tangent to the $A$-orbit through $\pi_\mathfrak{m}$ as required.
\end{proof}

\begin{proposition}
\label{prop:rightintegrals}
For every $\theta\in S(\mathfrak{m}-\varepsilon W)^A$,  $ \{ \pi_{\mathfrak{m}}^*\theta, \mathcal{H}\}_\varepsilon = 0$. 
\end{proposition}

\begin{proof}
Since $\theta$ is $A$-invariant on the affine space $\mathfrak{m}-\varepsilon W$, its differential annihilates the tangent vectors to $A$-orbits. That is, for all $\xi$ and $\eta\in\mathfrak{a}$, differentiating $\theta$ along the flow gives \begin{align*}
0=\left.\frac{d}{dt}\right\vert_{t =0} \theta\big(\mathrm{Ad}(\exp(t\eta))\xi\big) = B\big(\nabla \theta(\xi),[\eta,\xi]\big)=B\big([\nabla \theta(\xi),\xi],\eta\big) 
\end{align*} with $\xi=\pi_{\mathfrak{m}}(0)$ and $\dot\xi=[\xi,\Omega]$. By Lemma \ref{lem:lax}, we obtain \begin{align*}
\left.\frac{d}{dt}\right\vert_{t= 0}\theta\big(\pi_{\mathfrak{m}}(t)\big) = B\big(\nabla \theta(\xi),[\xi,\Omega]\big)=B\big([\nabla \theta(\xi),\xi],\Omega\big)=0.
\end{align*} Hence, $\theta\circ\pi_{\mathfrak{m}}$ is constant along the flow. Equivalently, $\{\pi_{\mathfrak{m}}^*\theta,\mathcal{H}\}_\varepsilon=0$.
\end{proof}

\begin{remark}
Similar to Remark \ref{rem:f1}, we do \emph{not} claim  $\{\pi_{\mathfrak{m}}^*h_1,\pi_{\mathfrak{m}}^*h_2\}_\varepsilon=0$ for all $h_i$ with $i = 1,2$. In general, $\mathfrak{F}_2^{\mathrm{poly}}$ is not Poisson-commutative.
\end{remark}

To study the Poisson projection chain description of the algebra generated by these finitely-generating sets $F_1^{\mathrm{poly}}$ and $F_2^{\mathrm{poly}}$, we need to first show that $\mathfrak{F}_1^{\mathrm{poly}}$ and $\mathfrak{F}_2^{\mathrm{poly}}$ are Poisson algebras, and then describe the spectrum of these two Poisson algebras. Starting with the following lemma:

\begin{lemma}
\label{lem:poly}
  Let $F_1^{\mathrm{poly}}$ and $F_2^{\mathrm{poly}}$ be the sets defined in \eqref{eq:intm}. Then the polynomials by the pullbacks of $P$ and the descended invariant pullback $\pi_\mathfrak{m}^*$ $\pi_\mathfrak{m}$ induce the following subalgebras of $\mathcal{O}( (T^*M)) $: \begin{align*}
      \mathfrak{F}_1^{\mathrm{poly}}=P^*(S(\mathfrak{g})),\quad \mathfrak{F}_2^{\mathrm{poly}}=\pi_{\mathfrak{m}}^*\bigl(S(\mathfrak{m}-\varepsilon W)^A\bigr) ,
  \end{align*}  Here $\pi_{\mathfrak{m}}^*\theta$ is characterised by $q^*(\pi_{\mathfrak{m}}^*\theta) := \theta \circ \pi_{\mathfrak{m}} $. In particular, the pullback of $P$ is surjective, and the pullback of $\pi_\mathfrak{m}$ is injective.
\end{lemma}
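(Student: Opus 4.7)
The plan is to unwind the definitions systematically and verify four claims: (a) $\mathfrak{F}_1^{\mathrm{poly}}=P^*(S(\mathfrak{g}))$, (b) $\mathfrak{F}_2^{\mathrm{poly}}=\pi_\mathfrak{m}^*\bigl(S(\mathfrak{m}-\varepsilon W)^A\bigr)$, (c) $P^*$ is surjective onto $\mathfrak{F}_1^{\mathrm{poly}}$, and (d) $\pi_\mathfrak{m}^*$ is injective. The key observation driving everything is that a pullback along a smooth map is an $\mathbb{R}$-algebra homomorphism, so its image is automatically closed under the pointwise algebra operations, and hence coincides with the algebra that image generates. The only nontrivial points are the well-definedness of $P^*f$ and $\pi_\mathfrak{m}^*\theta$ as polynomial functions on the quotient $T^*M\cong(G\times\mathfrak{m})/A$, which are already packaged in Lemma \ref{lem:property}.

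For (a) and (c) I would argue as follows. By Definition \ref{def:families}, $F_1^{\mathrm{poly}}=\{f\circ P:f\in S(\mathfrak{g})\}=P^*(S(\mathfrak{g}))$. Since $P^*\colon S(\mathfrak{g})\to C^\infty(T^*M)$ satisfies $P^*(fh)=(P^*f)(P^*h)$ and $P^*(f+h)=P^*f+P^*h$, its image is already an $\mathbb{R}$-subalgebra, so $\textbf{Alg}\langle F_1^{\mathrm{poly}}\rangle=F_1^{\mathrm{poly}}=P^*(S(\mathfrak{g}))$. That this subalgebra lies in $\mathcal{O}(T^*M)$ rather than merely $C^\infty(T^*M)$ follows from the explicit formula $P^*f(g,X)=f\bigl(\mathrm{Ad}(g)(X-\varepsilon W)\bigr)$, which is polynomial in the matrix coefficients of $g$ and in the components of $X$, together with the $A$-invariance of $P$ established in Lemma \ref{lem:property}(i)--(iii), which guarantees that $P^*f$ descends to an element of $\mathbb{R}[G\times\mathfrak{m}]^A=\mathcal{O}(T^*M)$. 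The surjectivity statement in (c) is then immediate: every element of $\mathfrak{F}_1^{\mathrm{poly}}$ is by construction of the form $P^*f$ for some $f\in S(\mathfrak{g})$.

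For (b) and (d), the argument is parallel. By Lemma \ref{lem:property}(ii)--(iii) and the isomorphism $\tau_{-\varepsilon W}^*\colon S(\mathfrak{m}-\varepsilon W)^A\xrightarrow{\cong}S(\mathfrak{m})^A$, every $\theta\in S(\mathfrak{m}-\varepsilon W)^A$ pulls back via $\widetilde{\pi}_\mathfrak{m}(g,X)=X-\varepsilon W$ to an $A$-invariant polynomial on $G\times\mathfrak{m}$ and hence descends to a well-defined element $\pi_\mathfrak{m}^*\theta\in\mathcal{O}(T^*M)$. Again $\pi_\mathfrak{m}^*$ is a ring homomorphism, so $\textbf{Alg}\langle F_2^{\mathrm{poly}}\rangle=\pi_\mathfrak{m}^*\bigl(S(\mathfrak{m}-\varepsilon W)^A\bigr)$, giving (b). For injectivity in (d), suppose $\theta\in S(\mathfrak{m}-\varepsilon W)^A$ satisfies $\pi_\mathfrak{m}^*\theta\equiv 0$ on $T^*M$. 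Then $\theta(X-\varepsilon W)=0$ for every $X\in\mathfrak{m}$; since $\widetilde{\pi}_\mathfrak{m}\colon G\times\mathfrak{m}\to\mathfrak{m}-\varepsilon W$ is surjective (its image already equals $\mathfrak{m}-\varepsilon W$ on the $\{e\}\times\mathfrak{m}$ slice), this forces $\theta$ to vanish identically on the affine slice, hence $\theta=0$ as a polynomial.

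There is no real obstacle here; the lemma is essentially an organizational statement that records exactly which abstract algebras the two conserved families in Definition \ref{def:families} generate. The one point requiring a little care is checking that the descent through the $A$-quotient is legitimate, but this has already been done in Lemma \ref{lem:property}. Note that I would not claim injectivity of $P^*$: the image of $P$ is the saturation $\mathrm{Ad}(G)(\mathfrak{m}-\varepsilon W)\subset\mathfrak{g}$, which need not be Zariski dense, so $\ker P^*$ can be nontrivial in the irregular case; the statement only asserts surjectivity of $P^*$ onto $\mathfrak{F}_1^{\mathrm{poly}}$, which is tautological, and injectivity of $\pi_\mathfrak{m}^*$, which is the genuine content in that direction.
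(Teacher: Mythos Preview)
Your proposal is correct and follows essentially the same approach as the paper: both arguments rest on the observation that pullback along a map is an $\mathbb{R}$-algebra homomorphism, so its image is already a subalgebra and coincides with $\textbf{Alg}\langle F_i^{\mathrm{poly}}\rangle$, and both establish injectivity of $\pi_\mathfrak{m}^*$ via surjectivity of $\widetilde{\pi}_\mathfrak{m}$ onto the affine slice. Your version is slightly more explicit about why the images land in $\mathcal{O}(T^*M)$ rather than merely $C^\infty(T^*M)$, and your closing remark on the possible non-injectivity of $P^*$ anticipates exactly what the paper develops in the subsequent Remark \ref{rem:ideal}.
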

\begin{proof}
 For $(g,X)\in T^*M \cong  (G\times\mathfrak{m})/A$, by Definition \ref{def:maps}, \begin{align*}
  \xi:=X-\varepsilon W\in \mathfrak{m}-\varepsilon W,  \quad P(g,X):=\mathrm{Ad}(g)\,\xi, \quad \pi_{\mathfrak{m}}(g,X):=\xi.
\end{align*} We now provide the construction separately.  
 
By the definition of $F_1^{\mathrm{poly}}$ in \eqref{eq:intm}, the generators in $F_1^{\mathrm{poly}}$ are of the form $(h\circ P)(g,X)=h\bigl(P(g,X)\bigr)$ with $h \in S(\mathfrak{g})$. Equivalently, $\mathfrak{F}_1^{\mathrm{poly}}: = \textbf{Alg} \left\langle F_1^{\mathrm{poly}}\right\rangle$ is the image of the algebra homomorphism given by the pullback $P^*$ as follows: \begin{align*}
  P^*:S(\mathfrak{g})\longrightarrow \mathcal{O}(T^*M),\quad h\longmapsto h\circ P.  
\end{align*} A direct calculation shows that $P^*$ preserves sums and products, its image is a subalgebra of $\mathcal{O}(T^*M)$, and $ \mathfrak{F}_1^{\mathrm{poly}} =P^*\bigl(S(\mathfrak{g})\bigr)$.  On the other hand, by the definition of $F_2^{\mathrm{poly}}$ in \eqref{eq:intm} and \eqref{eq:ainv}, every element of $F_2^{\mathrm{poly}}$ is of the form $ \theta ( \pi_\mathfrak{m}(g,X)) = \theta(X - \varepsilon X) $ with $ \theta\in S(\mathfrak{m}-\varepsilon W)^A$.  Thus, similar to the construction of $\mathfrak{F}_1^{\mathrm{poly}}$, we find that $\mathfrak{F}_2^{\mathrm{poly}}: = \textbf{Alg} \left\langle F_2^{\mathrm{poly}}\right\rangle$ is the image of the algebra homomorphism \begin{align*}
\pi_{\mathfrak{m}}^*:S(\mathfrak{m}-\varepsilon W)^A\longrightarrow \mathcal{O}(T^*M),\qquad \theta\longmapsto \pi_{\mathfrak{m}}^*\theta,\quad q^*(\pi_{\mathfrak{m}}^*\theta)=\theta\circ\widetilde{\pi}_{\mathfrak{m}},
\end{align*} which again preserves sums and products. Hence, its image is a subalgebra and $  \mathfrak{F}_2^{\mathrm{poly}} =\pi_{\mathfrak{m}}^*\bigl(S(\mathfrak{m}-\varepsilon W)^A\bigr)$. We then show that $\pi_\mathfrak{m}^*$ is injective. For any $\theta \in S(\mathfrak{m}-\varepsilon W)^A$, assume that $ \pi_\mathfrak{m}^*\theta = 0$. Then pulling back by $q$ gives $\theta(\pi_\mathfrak{m}(g,X)) = \theta(X - \varepsilon W) = 0$ on $G \times \mathfrak{m}$. Since $\widetilde{\pi}_\mathfrak{m}$ is surjective onto $\mathfrak{m} - \varepsilon W$, we find $\theta  = 0$, which implies that $\ker \pi_\mathfrak{m}^* = 0$. Therefore, $\pi_\mathfrak{m}^*$ is injective.
\end{proof}

 \begin{remark}
 \label{rem:ideal}
     (i) Note that this isomorphism is canonical, as $\mathcal{O}(T^*M)$ is the Poisson subalgebra.  For any real algebraic morphism $P : T^*M \rightarrow \mathfrak{g}^*$, the kernel of the pullback $P^*: S(\mathfrak{g}) \cong \mathbb{R}[x_1,\ldots,x_n] \twoheadrightarrow \mathcal{O}(T^*M)$ is the vanishing ideal of the real Zariski closure of the image: $\mathcal{I}(\overline{P((g,X))})$ for all $(g,X) \in T^*M$, where \begin{align*}
         \mathcal{I}(\overline{P((g,X))}) := \left\{ f\in S(\mathfrak{g}): f (P(g,X)) = 0,  \text{ for all } (g,X) \in T^*M \right\} = \ker (P^*)
     \end{align*} Hence, by the first isomorphism theorem, the image algebra is given by the following isomorphism \begin{align}
          \mathrm{Im} (P^*) \cong \frac{S(\mathfrak{g})}{\mathcal{I}(\overline{P(g,X)})} \cong \mathbb{R}[\overline{\mathrm{Ad}(G)(\mathfrak{m} - \varepsilon W)}] := S[\overline{\mathrm{Ad}(G)(\mathfrak{m} - \varepsilon W)}]. \label{eq:f1}
      \end{align} Here, $S[\overline{\mathrm{Ad}(G)(\mathfrak{m} - \varepsilon W)}]$ is the coordinate ring of the affine variety $\mathrm{Ad}(G)(\mathfrak{m} - \varepsilon W)$. Thus, $\mathfrak{F}_1^{\mathrm{poly}}: = \mathrm{Im} (P^*)$ is canonically Poisson isomorphic to the real coordinate ring of the closed subvariety $\overline{\mathrm{Ad}(G)(\mathfrak{m} - \varepsilon W)} \subset \mathfrak{g}$. For the rest of this manuscript, for abbreviation, we write $\mathbb{R}[ \mathrm{Ad}(G)(\mathfrak{m} - \varepsilon W)]$ to mean $\mathbb{R}[\overline{\mathrm{Ad}(G)(\mathfrak{m} - \varepsilon W)}]$.

      (ii) Although $\mathfrak{m}$ is a vector space, in general, it is not a Lie subalgebra of $\mathfrak{g}$. Therefore, $S(\mathfrak{m})$ does not admit a Lie-Poisson structure induced from $S(\mathfrak{g})$. In order to define a Poisson structure on it, we transport the Poisson bracket from $\pi_\mathfrak{m}^* S(\mathfrak{m}-\varepsilon W)^A\subset \mathcal{O}(T^* M)$ to $S(\mathfrak{m}-\varepsilon W)^A$ via the injective pullback $\pi_\mathfrak{m}^*$, and denote the resulting bracket by $\{\cdot,\cdot\}_2$ (see Proposition \ref{prop:slicepoisson} below).
 \end{remark}

We now show that $\mathfrak{F}_1^{\mathrm{poly}}$ and $\mathfrak{F}_2^{\mathrm{poly}}$ defined in Lemma \ref{lem:poly} are Poisson subalgebras of $\mathcal{O}(T^*M)$.
 
 \begin{proposition}
 \label{prop:slicepoisson}
Let $h_1,h_2 \in S(\mathfrak{g})$ and $\theta_1,\theta_2\in S(\mathfrak{m}-\varepsilon W)^A$. Then 
\begin{align*}
\{P^*h_1,P^*h_2\}_\varepsilon = P^*\left(\{h_1,h_2\}_1\right) \text{ and } \{\pi_\mathfrak{m}^*\theta_1,\pi_\mathfrak{m}^*\theta_2\}_\varepsilon=\pi_\mathfrak{m}^*\bigl(\{\theta_1,\theta_2\}_2\bigr),
\end{align*} where $\{\cdot,\cdot\}_1 = \{\cdot,\cdot\} $ is a Lie-Poisson bracket and $\{\cdot,\cdot\}_2$ is the Poisson bracket on $S(\mathfrak{m}-\varepsilon W)^A$ given by \begin{align*}
\{\theta_1,\theta_2\}_2(\xi) :=-B \left(\xi,\bigl[(\nabla\theta_1(\xi))_{\mathfrak{m}},(\nabla\theta_2(\xi))_{\mathfrak{m}}\bigr]\right),
\qquad \xi\in \mathfrak{m}-\varepsilon W .
\end{align*}
In particular, $\mathfrak{F}_2^{\mathrm{poly}}=\pi_\mathfrak{m}^* S(\mathfrak{m}-\varepsilon W)^A$
is a Poisson subalgebra of $\mathcal{O}(T^* M)$.
\end{proposition}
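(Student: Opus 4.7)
The plan is to establish the two identities by different means: the first reduces to the moment-map computation of Lemma~\ref{lem:F1polypbrack} combined with the Leibniz rule, while the second is a direct coordinate computation via Hamiltonian vector fields, followed by an identification of the output as the pullback of an $A$-invariant polynomial on the slice.

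For the first identity, the key observation is that $P^*\colon S(\mathfrak{g})\to\mathcal{O}(T^*M)$ is an algebra homomorphism and both $\{\cdot,\cdot\}_\varepsilon$ and the Lie--Poisson bracket $\{\cdot,\cdot\}_1$ are biderivations. Since $S(\mathfrak{g})$ is generated as a commutative algebra by the linear functions $h_\eta(X)=B(\eta,X)$, it suffices to verify the identity on such generators. On them, $\{h_\eta,h_{\eta'}\}_1 = h_{[\eta,\eta']}$ by the definition of the Lie--Poisson bracket, while Lemma~\ref{lem:F1polypbrack} yields $\{P_\eta,P_{\eta'}\}_\varepsilon = P_{[\eta,\eta']} = P^*h_{[\eta,\eta']}$. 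A standard induction on total polynomial degree, using the Leibniz rule in both slots on each side, then extends the identity to arbitrary $h_1,h_2\in S(\mathfrak{g})$.

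For the second identity, I would compute the bracket directly. Set $v_i := (\nabla\theta_i(\xi))_\mathfrak{m}$; then by Remark~\ref{rem:F1poly}(ii) the Hamiltonian vector field $X_{\pi_\mathfrak{m}^*\theta_i}$ at $(g,X)$ has the shape $g_*|_X\bigl(v_i,\,-\tfrac{1}{2}[v_i,X]_\mathfrak{m} - \varepsilon[W,v_i]\bigr)$. Plugging these into the coordinate expression \eqref{eq:magform} for $\omega_\varepsilon$ produces three groups of contributions: two canonical pairings of the form $B(w_i,v_j)$ with $w_i$ encoding the magnetic correction, and the magnetic term $\varepsilon B(W,[v_1,v_2])$. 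Reorganizing via Ad-invariance of $B$, the reductive relations $[\mathfrak{a},\mathfrak{m}]\subset\mathfrak{m}$ and $[\mathfrak{m},\mathfrak{m}]\subset\mathfrak{a}$, and the $A$-invariance of $\theta_i$ (which kills off $\mathfrak{a}$-components of brackets paired against the $\mathfrak{m}$-valued gradients), the whole expression should collapse to the single term $-B(\xi,[v_1,v_2])$, exactly matching the formula for $\{\theta_1,\theta_2\}_2(\xi)$ in the statement.

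Finally, for closure as a Poisson subalgebra, I note that $-B(\xi,[v_1,v_2])$ depends on $(g,X)$ only through $\xi = X-\varepsilon W$, so it factors as $\pi_\mathfrak{m}^*$ of a polynomial on $\mathfrak{m}-\varepsilon W$; and this polynomial is manifestly $\mathrm{Ad}(A)$-invariant because $\mathrm{Ad}(A)W = W$ and the gradients $\nabla\theta_i$ transform equivariantly. By the injectivity of $\pi_\mathfrak{m}^*$ (Lemma~\ref{lem:poly}), the resulting polynomial is uniquely determined and coincides with $\{\theta_1,\theta_2\}_2$, so $\mathfrak{F}_2^{\mathrm{poly}} = \pi_\mathfrak{m}^* S(\mathfrak{m}-\varepsilon W)^A$ is a Poisson subalgebra of $\mathcal{O}(T^*M)$. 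The main obstacle lies entirely in Part~2: one has to track the $\mathfrak{m}$- and $\mathfrak{a}$-projections introduced by the parametrization $g_*|_X(v,-\tfrac{1}{2}[v,X]+w)$ of tangent vectors and verify that the magnetic twist term $\varepsilon B(W,[v_1,v_2])$ combines with the canonical pairings to shift $X$ to $\xi = X-\varepsilon W$ inside the final bracket, which is the precise geometric manifestation of the KKS correction to the slice Poisson structure.
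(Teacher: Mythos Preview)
Your proposal is correct and follows essentially the same route as the paper: the first identity via Lemma~\ref{lem:F1polypbrack} extended by the Leibniz rule, the second by inserting the Hamiltonian vector fields $X_{\pi_\mathfrak{m}^*\theta_i}$ (Remark~\ref{rem:F1poly}(ii)) into the coordinate formula~\eqref{eq:magform} for $\omega_\varepsilon$, and closure by observing that the output depends only on $\xi$ and is $A$-invariant. The paper's proof is considerably terser---it jumps directly to the line $\omega_\varepsilon(X_{\pi_\mathfrak{m}^*\theta_1},X_{\pi_\mathfrak{m}^*\theta_2})=-B(X,[v_1,v_2])+\varepsilon B(W,[v_1,v_2])=-B(\xi,[v_1,v_2])$ without spelling out the cancellations you anticipate---but the underlying mechanism is identical; your explicit appeal to the injectivity of $\pi_\mathfrak{m}^*$ from Lemma~\ref{lem:poly} is a small addition the paper leaves implicit.
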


\begin{proof}
Fix $(g,X)\in T^* M$ and put $\xi:=X-\varepsilon W$. For $i=1,2$, write $v_i:=(\nabla\theta_i(\xi))_{\mathfrak{m}}$ and let $X_{\pi_\mathfrak{m}^*\theta_i}$ be as in \eqref{eq:ainv}. Using the $G$-invariant expression of $\omega_\varepsilon$ in \eqref{eq:magform}, we obtain \begin{align*}
\{\pi_\mathfrak{m}^*\theta_1,\pi_\mathfrak{m}^*\theta_2\}_\varepsilon(g,X) = & \, \omega_\varepsilon \bigl(X_{\pi_\mathfrak{m}^*\theta_1},X_{\pi_\mathfrak{m}^*\theta_2}\bigr) \\
= & \,-B(X,[v_1,v_2])+\varepsilon B(W,[v_1,v_2]) \\
= &\,-B(X-\varepsilon W,[v_1,v_2]) =-B \left(\xi,\bigl[v_1,v_2\bigr]\right),
\end{align*}  which depends only on $\xi$ and is $A$-invariant because each $\theta_i$ is $A$-invariant. Hence, it lies in $\pi_\mathfrak{m}^* S(\mathfrak{m}-\varepsilon W)^A$. This proves the formula and closure.
\end{proof}



We now have a look at the Poisson commutative relations between $\mathfrak{F}_1^{\mathrm{poly}}$ and $\mathfrak{F}_2^{\mathrm{poly}}$. As all generators $\pi_\mathfrak{m}^*\theta \in \mathfrak{F}_2^{\mathrm{poly}}$ are left $G$-invariant, we deduce the following lemma. 

\begin{lemma}
\label{lem:interzero}
  For $f\in S(\mathfrak{g})$ and $\theta\in S(\mathfrak{m}-\varepsilon W)^A$, we have \begin{align*}
\{ P^*f,\ \pi_{\mathfrak{m}}^*\theta \}_\varepsilon=0 .
\end{align*}  That is, $\left\{\mathfrak{F}_1^{\mathrm{poly}},\mathfrak{F}_2^{\mathrm{poly}}\right\}_\varepsilon = 0$.
\end{lemma}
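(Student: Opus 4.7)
The plan is to reduce the claim to the linear generators of $\mathfrak{F}_1^{\mathrm{poly}}$ and exploit the moment-map property of $P$ together with the left $G$-invariance of $\pi_\mathfrak{m}$ proven in Lemma \ref{lem:property}(iii). Since $S(\mathfrak{g})$ is generated as a commutative algebra by its linear part $\mathfrak{g}^*$, and the Poisson bracket obeys the Leibniz rule in each slot, it suffices to establish $\{P_\eta,\pi_\mathfrak{m}^*\theta\}_\varepsilon=0$ for every $\eta\in\mathfrak{g}$ and every $\theta\in S(\mathfrak{m}-\varepsilon W)^A$, where $P_\eta=\langle P,\eta\rangle$.

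First I would invoke the identification $X_{P_\eta}=\widehat{\eta}$ obtained inside the proof of Lemma \ref{lem:F1polypbrack}, where $\widehat{\eta}$ denotes the fundamental vector field generated by $\eta$ under the left $G$-action on $T^*M$; this is just a restatement of the moment-map convention $\iota_{\widehat{\eta}}\omega_\varepsilon=dP_\eta$ recorded in \eqref{eq:momcoven}. Next, since $\pi_\mathfrak{m}(h\cdot(g,X))=\pi_\mathfrak{m}((g,X))$ for all $h\in G$, the pullback $\pi_\mathfrak{m}^*\theta$ is constant along every left $G$-orbit in $T^*M$. Differentiating this invariance along the one-parameter subgroup $\{\exp(t\eta)\}$ at $t=0$ yields
\[
\widehat{\eta}(\pi_\mathfrak{m}^*\theta)(g,X)=\left.\frac{d}{dt}\right\vert_{t=0}(\pi_\mathfrak{m}^*\theta)(\exp(t\eta)g,X)=0,
\]
and hence $\{P_\eta,\pi_\mathfrak{m}^*\theta\}_\varepsilon=X_{P_\eta}(\pi_\mathfrak{m}^*\theta)=\widehat{\eta}(\pi_\mathfrak{m}^*\theta)=0$.

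To upgrade this from linear $f$ to arbitrary $f\in S(\mathfrak{g})$, I would use a straightforward induction on polynomial degree via the Leibniz rule: writing $P^*(f_1 f_2)=(P^*f_1)(P^*f_2)$ and expanding
\[
\{P^*(f_1 f_2),\pi_\mathfrak{m}^*\theta\}_\varepsilon = (P^*f_1)\{P^*f_2,\pi_\mathfrak{m}^*\theta\}_\varepsilon+\{P^*f_1,\pi_\mathfrak{m}^*\theta\}_\varepsilon(P^*f_2),
\]
both summands vanish by the inductive hypothesis. Taking $\mathbb{R}$-linear combinations completes the reduction to the whole of $\mathfrak{F}_1^{\mathrm{poly}}$, and replacing $\pi_\mathfrak{m}^*\theta$ by an arbitrary polynomial in such generators (again via Leibniz) gives $\{\mathfrak{F}_1^{\mathrm{poly}},\mathfrak{F}_2^{\mathrm{poly}}\}_\varepsilon=0$.

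There is no substantive obstacle here: the statement is essentially a formal consequence of two structural facts already established, namely that the components of $P$ generate the fundamental vector fields of the left $G$-action, and that $\pi_\mathfrak{m}^*\theta$ lies in the algebra of left $G$-invariant functions on $T^*M$. The only bookkeeping to watch is the sign of the fundamental-vector-field convention, which must match the one used in the derivation of \eqref{eq:coovectfie}; once consistent, the argument above goes through unchanged regardless of whether $W$ is regular or irregular, because invariance of $\pi_\mathfrak{m}$ under left multiplication is independent of the isotropy type of $W$.
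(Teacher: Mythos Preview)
Your proof is correct and follows essentially the same route as the paper: reduce to linear generators $P_\eta$, identify $X_{P_\eta}$ with the fundamental vector field $\widehat{\eta}$ via the moment-map convention \eqref{eq:momcoven}, invoke the left $G$-invariance of $\pi_\mathfrak{m}^*\theta$ from Lemma~\ref{lem:property}(iii) to conclude the linear case, and then extend by the Leibniz rule. The paper's version differs only cosmetically, writing out the polynomial decomposition $P^*f=F(P_{u_1},\ldots,P_{u_k})$ explicitly and phrasing the vanishing via the Cartan formula $\mathcal{L}_{\widehat{u}}(\pi_\mathfrak{m}^*\theta)=0$.
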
 

\begin{proof}
 Let $\langle \cdot,\cdot \rangle$ be the canonical pairing $\mathfrak{g}^* \times \mathfrak{g} \rightarrow \mathbb{R}$. For any $u \in \mathfrak{g}$, define the linear functional on $\mathfrak{g}^*$ by $u\mapsto \ell_u$, with $\ell_u(\eta):=\langle \eta,u\rangle $. Since $S(\mathfrak{g}) \cong \mathbb{R}[\mathfrak{g}^*]$, for any $f\in S(\mathfrak{g})$, we have \begin{align*}
f(\eta) = F\big(\ell_{u_1}(\eta),\ldots,\ell_{u_k}(\eta)\big),\qquad \eta \in\mathfrak{g}^*.
\end{align*} For $u\in\mathfrak{g}$, from \eqref{eq:momcoord}, \begin{align*}
P_u:=\ell_u\circ P=\langle P, u\rangle \in C^\infty(T^*M).
\end{align*} Moreover, since $P:T^*M\to \mathfrak{g}^*$ is polynomial in the fiber variables under the left trivialization, each $P_u$ is a polynomial function on $T^*M$, i.e. $P_u\in \mathfrak{F}_1^{\mathrm{poly}}=P^*S(\mathfrak{g})$.
 The pullback of $P$ gives \begin{align}
P^*f = f\circ P = F\big(P_{u_1},\ldots,P_{u_k}\big).
\label{eq:comp}
\end{align} 

We first show that the linear components in \eqref{eq:comp} satisfy the argument. By Lemma \ref{lem:property}, $\pi_{\mathfrak{m}}^*\theta$ is $G$-invariant. That is, $(\pi_{\mathfrak{m}}^*\theta)(L_a\cdot (g,X))=\theta(X)=(\pi_{\mathfrak{m}}^*\theta)((g,X))$, where $L_a$ denotes the left-translation and $(g,X) = [g,X]$. Hence $\pi_\mathfrak{m}^*\theta ( \exp(tu)\cdot (g,X)) = \pi_\mathfrak{m}^* \theta(g,X)$ for all $ t \in \mathbb{R}$. Note that here we use the conventions $\iota_{X_f} \omega_\varepsilon = df $ and $\iota_{\hat{u}} \omega_{\varepsilon} = d\langle P,u \rangle$. 
Hence, $X_{P_u} = \widehat{u}$. Therefore, for any $u\in\mathfrak{g}$, by the definition of the Lie derivative, \begin{align}
\{P_u,\pi_{\mathfrak{m}}^*\theta\}_\varepsilon =\omega_\varepsilon(X_{P_u},X_{\pi_{\mathfrak{m}}^*\theta}) =d(\pi_{\mathfrak{m}}^*\theta)(\hat{u}) =\mathcal{L}_{\hat{u}} (\pi_{\mathfrak{m}}^*\theta) 
= 0. \label{eq:lin}
\end{align} Here, in \eqref{eq:lin}, we use the Cartan formula $\mathcal{L}_{\hat{u}} = \iota_{\hat{u}}d +  d \iota_{\hat{u}}$ and the fact that $\pi_\mathfrak{m}^*\theta$ is $G$-invariant. We now focus on the higher degree components.  Note that the Poisson bracket $\{\cdot,\cdot\}_\varepsilon$ is bilinear and has a derivation in each slot. That is,
\begin{align*}
\{fg,h\}_\varepsilon=f\{g,h\}_\varepsilon+g\{f,h\}_\varepsilon,\qquad \{f,g+h\}_\varepsilon=\{f,g\}_{\varepsilon}+\{f,h\}_\varepsilon.
\end{align*}  Moreover, from the discussion in the previous paragraph, $S(\mathfrak{g})$ is generated by the linear functionals $\ell_u$ in sums and products. Since $\{P_{u_i},\pi_{\mathfrak{m}}^*\theta\}_\varepsilon=0$ for all $i$, an induction on the total degree of $F$ by using Leibniz and linearity, yields \begin{align*}
\{\,F(P_{u_1},\ldots,P_{u_k}),\ \pi_{\mathfrak{m}}^*\theta\,\}_\varepsilon=0.
\end{align*} Hence, we conclude  \begin{align*}
\{\,P^*f,\ \pi_{\mathfrak{m}}^*\theta\,\}_\varepsilon=0\qquad\text{ for all }\,f\in S(\mathfrak{g}),\ \theta\in S(\mathfrak{m}-\varepsilon W)^A 
\end{align*} as required.
\end{proof}
\begin{remark}
\label{rem:JY-pairing}
By the definition of $\mathfrak{F}_1^{\mathrm{poly}}$ and $\mathfrak{F}_2^{\mathrm{poly}}$, let $P_u = \ell_u \circ P$ and $f_\theta =  \pi_\mathfrak{m}^*\theta$. Then, from \cite{MR2141306}, the Hamiltonian vector fields are $ X_{P_u}(g,X)$ in \eqref{eq:coovectfie} and \begin{align}
  X_{f_\theta}(g,X)  =   g_*\vert_X \left((\nabla \theta(\xi))_\mathfrak{m}, - \frac{1}{2} [(\nabla \theta(\xi))_\mathfrak{m},X]_\mathfrak{m} - \varepsilon [W,(\nabla \theta(\xi))_\mathfrak{m}]\right).
\end{align} A direct calculation shows that $\{P_u,f_\theta\}_\varepsilon = \omega_\varepsilon(X_{P_u},X_{f_\theta}) = 0$. 
\end{remark}

In Lemma \ref{lem:lax}, we see that $F_2^{\mathrm{poly}}$ is also the set of first integrals as the curve $\pi_{\mathfrak{m}}(t)=X(t)-\varepsilon W$ evolves by $\dot\pi_{\mathfrak{m}}=[\pi_{\mathfrak{m}},\,\Omega(t)]$ with $\Omega(t)\equiv\varepsilon W \in \mathfrak{g}$. Hence, $\pi_{\mathfrak{m}}(t)$ remains on its $A$-orbit. Therefore, every $A$-invariant polynomial $\theta\in S(\mathfrak{m}-\varepsilon W)^A$ is constant along the flow and $\{\pi_{\mathfrak{m}}^*\theta,\mathcal{H}\}_\varepsilon=0$ by Proposition \ref{prop:rightintegrals}. In conclusion, with all the discussions above, we now present two families of the first integrals in the phase space $T^*M$ with magnetic twists. This can be concluded into the following theorem:
\begin{theorem} 
\label{thm:two-families} 
In $\big(T^*M,\omega_{\varepsilon}\big)$ with $\mathcal{H}(g,X)=\frac{1}{2} B(X,X)$, the two polynomial families \begin{align*}
\mathfrak{F}_1^{\mathrm{poly}}=P^*S(\mathfrak{g}), \qquad \mathfrak{F}_2^{\mathrm{poly}}=\pi_{\mathfrak{m}}^*\big(S(\mathfrak{m}-\varepsilon W)^A\big)
\end{align*} consist of first integrals: \begin{align*}
\left\{\mathfrak{F}_1^{\mathrm{poly}},\mathcal{H}\right\}_\varepsilon = 0, \qquad \left\{\mathfrak{F}_2^{\mathrm{poly}},\mathcal{H}\right\}_\varepsilon = 0.
\end{align*} 
\end{theorem}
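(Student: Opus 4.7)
The plan is to assemble the theorem by consolidating the pieces already proved in Propositions \ref{prop:left-integrals} and \ref{prop:right-integrals}, and then extending the resulting conservation statements from the generating sets to the full subalgebras via the bilinearity and Leibniz property of the twisted Poisson bracket $\{\cdot,\cdot\}_\varepsilon$. Concretely, by Lemma \ref{lem:poly} every element of $\mathfrak{F}_1^{\mathrm{poly}}$ is of the form $P^*f$ with $f\in S(\mathfrak{g})$, and every element of $\mathfrak{F}_2^{\mathrm{poly}}$ is of the form $\pi_\mathfrak{m}^*\theta$ with $\theta\in S(\mathfrak{m}-\varepsilon W)^A$, so the theorem reduces to verifying $\{P^*f,\mathcal{H}\}_\varepsilon=0$ and $\{\pi_\mathfrak{m}^*\theta,\mathcal{H}\}_\varepsilon=0$, both of which have been established.

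For the first family, I would invoke Proposition \ref{prop:left-integrals} directly. The underlying reason is that the left $G$-action on $T^*M$ is Hamiltonian with moment map $P$, so the fundamental vector field $\widehat{u}$ equals $X_{P_u}$. Since $\mathcal{H}(g,X)=\tfrac{1}{2}B(X,X)$ is manifestly left $G$-invariant, $\mathcal{L}_{\widehat{u}}\mathcal{H}=0$ for every $u\in\mathfrak{g}$, which gives $\{P_u,\mathcal{H}\}_\varepsilon=0$. Writing a general $f\in S(\mathfrak{g})$ as a polynomial in the linear functionals $\ell_u$ and using the derivation property in each slot propagates the identity $\{P^*f,\mathcal{H}\}_\varepsilon=0$ to all of $\mathfrak{F}_1^{\mathrm{poly}}$.

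For the second family, I would invoke Proposition \ref{prop:right-integrals}, whose proof rests on the Lax form derived in Lemma \ref{lem:lax}: along any integral curve of $X_\mathcal{H}$, the slice-valued curve $\pi_\mathfrak{m}(t)=X(t)-\varepsilon W$ satisfies $\dot\pi_\mathfrak{m}=[\pi_\mathfrak{m},\varepsilon W]$ with $\varepsilon W\in\mathfrak{a}$. This is the decisive step, since it forces $\pi_\mathfrak{m}(t)$ to stay on a single $\mathrm{Ad}(A)$-orbit, and hence every $\mathrm{Ad}(A)$-invariant polynomial on the affine slice $\mathfrak{m}-\varepsilon W$ is constant along the flow. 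Differentiating $\theta(\pi_\mathfrak{m}(t))$ at $t=0$ and using $B$-invariance exactly matches the $A$-invariance identity $B([\nabla\theta(\xi),\xi],\eta)=0$ for $\eta=\varepsilon W\in\mathfrak{a}$, yielding $\{\pi_\mathfrak{m}^*\theta,\mathcal{H}\}_\varepsilon=0$. Again, Leibniz and bilinearity upgrade this from generators to the whole algebra $\mathfrak{F}_2^{\mathrm{poly}}$, which is a Poisson subalgebra by Proposition \ref{prop:slicepoisson}.

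Given how much work has already been loaded into Lemmas \ref{lem:lax}, \ref{lem:poly} and Propositions \ref{prop:left-integrals}, \ref{prop:right-integrals}, the theorem itself is essentially a bookkeeping statement and carries no serious obstacle. If one wishes to pinpoint the single conceptual input, it is the identification of the Lax term as $\Omega\equiv\varepsilon W$ together with the hypothesis $\mathrm{Ad}(A)W=W$, which is what makes $\varepsilon W$ genuinely belong to $\mathfrak{a}$ and thereby makes the whole family $S(\mathfrak{m}-\varepsilon W)^A$ (rather than some smaller subset) conserved. Once this point is in place, both conservation statements are immediate, and the theorem follows.
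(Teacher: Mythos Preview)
Your proposal is correct and matches the paper's approach exactly: the paper states Theorem \ref{thm:two-families} as a summary (``In conclusion, with all the discussions above\ldots'') without a separate proof, since Propositions \ref{prop:left-integrals} and \ref{prop:right-integrals} already establish $\{P^*f,\mathcal{H}\}_\varepsilon=0$ for every $f\in S(\mathfrak{g})$ and $\{\pi_\mathfrak{m}^*\theta,\mathcal{H}\}_\varepsilon=0$ for every $\theta\in S(\mathfrak{m}-\varepsilon W)^A$. Your Leibniz extension step is even slightly redundant, as both propositions are already stated for arbitrary $f$ and $\theta$ rather than just linear generators.
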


\begin{proof}
    By the argument from Proposition \ref{prop:leftintegrals}, Proposition \ref{prop:rightintegrals}, Lemma \ref{lem:poly} and Lemma \ref{lem:interzero}, the proof follows automatically. 
\end{proof}

\begin{remark} 
When $W$ is irregular, $P^*$ may have a non-trivial kernel as its image lies in the cone $\mathrm{Ad}(G)(\mathfrak{m}-\varepsilon W)$. 
In contrast, although $\pi_{\mathfrak{m}}$ need not be a well-defined map on $T^*M$, the representative-level map $\widetilde{\pi}_{\mathfrak{m}}(g,X)=X-\varepsilon W$ is surjective onto $\mathfrak{m}-\varepsilon W$. Consequently, the descended pullback $\pi_{\mathfrak{m}}^*:S(\mathfrak{m}-\varepsilon W)^A\cong S(\mathfrak{m})^A\to C^{\infty}(T^*M)$ is injective: if $\pi_{\mathfrak{m}}^*\theta=0$, then $0=q^*(\pi_{\mathfrak{m}}^*\theta)=\theta\circ\widetilde{\pi}_{\mathfrak{m}}$, and surjectivity of $\widetilde{\pi}_{\mathfrak{m}}$ forces $\theta=0$. Thus, $\mathfrak{F}_2^{\mathrm{poly}}$ provides a block of conserved quantities independent of the regularity of $W$.
\end{remark}

\subsection{Poisson projection chains and superintegrabilities on \texorpdfstring{$T^*(G/A)$}{T*(G/A)}}
\label{subsec:Poisuper}

In this Subsection \ref{subsec:Poisuper}, we provide the full superintegrability on $T^*(G/A)$ by verifying the Poisson projection chains starting from $T^*(G/A)$. We first state the regularity of the subset on which we are working. Let $\mathfrak{g}_{\mathrm{reg}} \subset \mathfrak{g}$ denote the set of regular elements. That is, the elements whose centraliser has a dimension $\mathrm{rank} \, \mathfrak{g}$. Since $\mathfrak{g}_{\mathrm{reg}}$ is Zariski open and dense, its complement defines the zero locus of a nonzero polynomial. Let $f:\mathfrak{g}\to\mathbb{R}$ be the polynomial whose nonvanishing locus is the set of regular elements $\mathfrak{g}_{\mathrm{reg}}=\{\xi\in\mathfrak{g}:f(\xi)\neq0\}$. For fixed $W \in \mathfrak{g}$ and $\varepsilon \in \mathbb{R} \setminus \{0\}$,  define the open dense subsets \begin{align}
\mathfrak{m}_{\mathrm{reg}}^W:= \left\{X\in\mathfrak{m}: \xi =  X-\varepsilon W\in \mathfrak{g}_{\mathrm{reg}}\right\},\qquad U:=  \left\{(g,X)\in T^*(G/A): X\in\mathfrak{m}_{\mathrm{reg}}^W\right\}. \label{eq:regularity}
\end{align} Thus, $U\subset T^*M$ is Zariski open and dense. In the complement of $U$, 
In the complement of $U$, the rank of $dP$ and the Jacobian rank of a chosen set of descended invariant generators $\pi_\mathfrak{m}^*\theta_i$ can drop. Hence, all the geometric statements in the section below require regularity.

\begin{remark}  
The phase space is the symplectic manifold $\big(T^*M,\omega_\varepsilon\big)$.  The set $U$ is an open dense submanifold when constant rank or independence of differentials is required. Working on $U$ does not affect the integrability: $U$ is a symplectic submanifold, and the restrictions of $\mathfrak{F}_1^{\mathrm{poly}},\mathfrak{F}_2^{\mathrm{poly}} $ to $U$ have the same functional relations as in $T^*M$. 
\end{remark}

 We now show that working on $U$ does not change any geometric or algebraic properties of $\mathfrak{F}_1^{\mathrm{poly}}$ and $\mathfrak{F}_2^{\mathrm{poly}} $.

\begin{lemma} 
\label{lem:regu}
Let $Z_P:=\mathrm{Ad}(G)(\mathfrak{m}-\varepsilon W)$ and $(Z_P)_{\mathrm{reg}}:= Z_P \cap \mathfrak{g}_{\mathrm{reg}} = \mathrm{Ad}(G)\big(\mathfrak{m}_{\mathrm{reg}}^W\big)$. Then $(Z_P)_{\mathrm{reg}}$ is the Zariski open dense in $Z_P$ and \begin{align*}
\bigl\{f\vert_{Z_P}:f\in\mathbb{R}[\mathfrak{g}^*]\bigr\}=\left\{f\vert_{(Z_P)_{\mathrm{reg}}}:f\in\mathbb{R}[\mathfrak{g}^*]\right\}.
\end{align*} 
\end{lemma}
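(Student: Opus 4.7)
The plan is to split the argument into two independent parts: (a) establishing that $(Z_P)_{\mathrm{reg}}$ is Zariski open and dense in $Z_P$, and (b) translating the density into the claimed equality of restriction algebras. Both parts rely only on the Zariski topology inherited from $\mathfrak{g}$ and on the irreducibility of the defining morphism.

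For part (a), I would first observe that openness is automatic: by \eqref{eq:regularity}, $\mathfrak{g}_{\mathrm{reg}}$ is the complement of the zero locus of the polynomial $f$ on $\mathfrak{g}$, so $(Z_P)_{\mathrm{reg}}=Z_P\cap\mathfrak{g}_{\mathrm{reg}}$ is Zariski open in $Z_P$ with the induced topology. For density, the key step is to realise $Z_P$ as the image of the polynomial morphism
$$\Phi: G\times\mathfrak{m}\longrightarrow \mathfrak{g},\qquad (g,X)\longmapsto \mathrm{Ad}(g)(X-\varepsilon W),$$
and to argue that this image is irreducible, since $G$ is connected and simply connected, $\mathfrak{m}$ is a vector space, and therefore $G\times \mathfrak{m}$ has an integral coordinate ring. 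Non-emptiness of $(Z_P)_{\mathrm{reg}}$ is guaranteed by the standing regularity assumption recorded in the footnote of Lemma~\ref{lem:property}; in the regular case one can simply take $X=0$ to see that $-\varepsilon W$ itself lies in $(Z_P)_{\mathrm{reg}}$, while in the irregular case the assumption exhibits at least one $X\in\mathfrak{m}$ with $X-\varepsilon W$ regular. Any non-empty Zariski open subset of an irreducible variety is automatically Zariski dense, which yields the required density.

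For part (b), I would translate the stated set equality into equality of vanishing ideals $I(Z_P)=I\bigl((Z_P)_{\mathrm{reg}}\bigr)$ inside $\mathbb{R}[\mathfrak{g}^*]$. The inclusion $I(Z_P)\subseteq I((Z_P)_{\mathrm{reg}})$ is trivial. Conversely, if $f\in\mathbb{R}[\mathfrak{g}^*]$ vanishes on $(Z_P)_{\mathrm{reg}}$, then its zero locus $V(f)\subset\mathfrak{g}$ is Zariski closed, so $V(f)\cap Z_P$ is a Zariski-closed subset of $Z_P$ containing the dense subset $(Z_P)_{\mathrm{reg}}$, and must therefore coincide with $Z_P$. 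Hence $f|_{Z_P}=0$, giving the reverse inclusion. The two restriction maps from $\mathbb{R}[\mathfrak{g}^*]$ therefore have the same kernel, and the induced quotient algebras are canonically identified via restriction, which is the content of the lemma.

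The main obstacle I anticipate is establishing irreducibility of $Z_P$ in the \emph{real} algebraic category, because connectedness of $G\times\mathfrak{m}$ does not automatically imply that the coordinate ring $\mathbb{R}[G\times\mathfrak{m}]$ is an integral domain. I would handle this either directly, using that for the compact, simply connected, semisimple $G$ considered in the paper the ring $\mathbb{R}[G]$ is known to be an integral domain, or, failing that, by passing to the complexification $\Phi^{\mathbb{C}}:G^{\mathbb{C}}\times\mathfrak{m}^{\mathbb{C}}\to\mathfrak{g}^{\mathbb{C}}$ where irreducibility of the image is standard, and then pulling the resulting prime vanishing ideal back to $\mathbb{R}[\mathfrak{g}^*]$ via real points.
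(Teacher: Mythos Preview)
Your proposal is correct, but it takes a somewhat different route to density than the paper. The paper does not argue via irreducibility of $Z_P$; instead it first observes that $\mathfrak{m}_{\mathrm{reg}}^W$ is Zariski open dense in the vector space $\mathfrak{m}$ (where irreducibility is trivial), hence $G\times\mathfrak{m}_{\mathrm{reg}}^W$ is dense in $G\times\mathfrak{m}$, and then pushes this density forward through the continuous morphism $\phi(g,X)=\mathrm{Ad}(g)(X-\varepsilon W)$ to conclude $\overline{(Z_P)_{\mathrm{reg}}}=\overline{Z_P}$. This bypasses entirely the real-algebraic irreducibility issue you flag as the main obstacle: no appeal to $\mathbb{R}[G]$ being a domain or to complexification is needed. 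Your approach, by contrast, establishes irreducibility of $Z_P$ directly and then invokes ``non-empty open in irreducible is dense''; this is heavier on the density side but has the advantage that your part (b) makes the equality of restriction algebras explicit via the vanishing-ideal argument, whereas the paper leaves that step implicit in the closure equality.
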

\begin{proof}
    Since $\mathfrak{g}_{\mathrm{reg}}$ is Zariski open dense, $\mathfrak{m}_{\mathrm{reg}}^W$ is Zariski open dense in $\mathfrak{m}$. Define the map $\phi: G \times \mathfrak{m} \xrightarrow{\ \rho\ } (G \times \mathfrak{m})/T \xrightarrow{\ P \ }  \mathfrak{g}$ by $\phi(g,X) = \mathrm{Ad}(g)(X - \varepsilon W)$ for all $g \in G$. By construction, $\phi$ is a regular morphism of affine varieties. Hence, the maps $\rho$, $P$, and $\phi$ are continuous.  Since $\phi$ is regular and $(G \times \mathfrak{m}_W^{\mathrm{reg}})/A$ is Zariski open dense in $T^*M$, we have $ \phi\left((G \times \mathfrak{m}_W^{\mathrm{reg}})/A\right) = \phi(T^*M) = Z_P$. 
    Hence, $\overline{(Z_P)_{\mathrm{reg}}} = Z_P$. 
\end{proof}
\begin{remark}
 On the regular stratum, \begin{align}
    U =\{(g,X)\in G\times_{A}\mathfrak{m}: \xi=X-\varepsilon W\ \text{is regular in }\mathfrak{g}\} \subset T^*M. \label{eq:regularst}
\end{align} Lemma \ref{lem:regu} states that $\mathrm{Ad}(G)(\mathfrak{m}_{\mathrm{reg}}^W)$ is Zariski dense in $\mathrm{Ad}(G)(\mathfrak{m} - \varepsilon W)$. All algebraic constructions for finding the finitely generated algebras $\mathfrak{F}_1^{\mathrm{poly}}, \mathfrak{F}_2^{\mathrm{poly}}$, and their intersection and union, are unchanged by replacing $\mathfrak{m} - \varepsilon W$ with $\mathfrak{m}_{\mathrm{reg}}^W$. 
\end{remark}

 Define the shift restriction by \begin{align}
     \mathrm{Res}_W: S(\mathfrak{g})^G \longrightarrow S(\mathfrak{m})^A  \text{ given by } \mathrm{Res}_W(C) (X) = C(X - \varepsilon W)
 \end{align}  for any $X \in \mathfrak{m}$. We have the following commutative diagram:   \[\begin{tikzcd}
	{S(\mathfrak{g})^G } & {S(\mathfrak{m})^A} \\
	{\mathcal{O}(T^*M)} & {\mathcal{O}(T^*M)}
	\arrow["{\mathrm{Res}_W}", from=1-1, to=1-2]
	\arrow["{P^*}"', from=1-1, to=2-1]
	\arrow["{\pi_\mathfrak{m}^*}", from=1-2, to=2-2]
	\arrow[from=2-1, to=2-2]
    \arrow[from=1-1, to=2-2, phantom, "\circlearrowright" {anchor=center, scale=1.5, rotate=90}]
\end{tikzcd}\] 
We first show that the restriction map is well-defined in the following lemma: \begin{lemma}
    \label{lem:welldefineres}
Let $\mathrm{Res}_W$ be the restriction mapping defined above. Then $\mathrm{Res}_W$ is a well-defined $\mathbb{R}$-algebra homomorphism. In particular, $\mathrm{Im Res}_W $ is a subalgebra of $S(\mathfrak{m})^A$.
\end{lemma}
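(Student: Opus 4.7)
The plan is to establish, in order, that $\mathrm{Res}_W(C)$ is a polynomial on $\mathfrak{m}$, that it is $A$-invariant, and that $\mathrm{Res}_W$ preserves the $\mathbb{R}$-algebra operations; the subalgebra statement for $\mathrm{Im}(\mathrm{Res}_W)$ will then be automatic.

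First I would record that the affine translation $\tau_{-\varepsilon W}\colon \mathfrak{m}\to \mathfrak{g}$, $X\mapsto X-\varepsilon W$, is a degree-one polynomial map between finite-dimensional real vector spaces (compare Lemma \ref{lem:property}(ii)). Hence for every $C\in S(\mathfrak{g})$ the composition $C\circ\tau_{-\varepsilon W}$ is still a polynomial function of $X\in\mathfrak{m}$, so $\mathrm{Res}_W(C)\in S(\mathfrak{m})$. Next I would check $A$-invariance, which is the only geometric input and the only place the defining hypothesis on $W$ is used. Since by construction $A=\exp(\ker\,\mathrm{ad}(W))$, we have $\mathrm{Ad}(a)W=W$ for every $a\in A$; combined with the $G$-invariance of $C$, for $X\in\mathfrak{m}$ and $a\in A$ one obtains
\begin{align*}
\mathrm{Res}_W(C)(\mathrm{Ad}(a)X)
= C(\mathrm{Ad}(a)X-\varepsilon W)
= C(\mathrm{Ad}(a)(X-\varepsilon W))
= C(X-\varepsilon W)
= \mathrm{Res}_W(C)(X).
\end{align*}
Here $\mathrm{Ad}(a)X$ again lies in $\mathfrak{m}$ because the reductive splitting $\mathfrak{g}=\mathfrak{a}\oplus\mathfrak{m}$ is $\mathrm{Ad}(A)$-stable, so the equality is genuinely an identity of functions on $\mathfrak{m}$.

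Finally I would verify additivity and multiplicativity. Both follow at once from the pointwise definition of the algebra operations on $S(\mathfrak{g})$ and $S(\mathfrak{m})$: substituting the fixed affine expression $X-\varepsilon W$ into $C_1+C_2$ or $C_1C_2$ commutes with sums and products, so $\mathrm{Res}_W(C_1+C_2)=\mathrm{Res}_W(C_1)+\mathrm{Res}_W(C_2)$ and $\mathrm{Res}_W(C_1C_2)=\mathrm{Res}_W(C_1)\mathrm{Res}_W(C_2)$. The image $\mathrm{Im}(\mathrm{Res}_W)$ is then automatically an $\mathbb{R}$-subalgebra of $S(\mathfrak{m})^A$, being the image of an $\mathbb{R}$-algebra homomorphism. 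There is no substantive obstacle: the full content of the lemma is the identity $\mathrm{Ad}(a)W=W$ for $a\in A$, which is built into the definition of $A$; everything else is a formal consequence of substitution.
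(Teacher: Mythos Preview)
Your proposal is correct and follows essentially the same approach as the paper's proof: both check that $\mathrm{Res}_W(C)$ lands in $S(\mathfrak{m})$, verify $A$-invariance via the identity $\mathrm{Ad}(a)W=W$ and $G$-invariance of $C$, and then confirm the algebra-homomorphism axioms directly. Your version is slightly more explicit in noting that $\mathrm{Ad}(a)X\in\mathfrak{m}$ by $\mathrm{Ad}(A)$-stability of the reductive splitting, which the paper uses implicitly.
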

\begin{proof}
    By Lemma \ref{lem:property}, $S(\mathfrak{m})^A \cong S(\mathfrak{m} - \varepsilon W)^A$. For any $C \in S(\mathfrak{g})^G$, the composition $C \circ \tau_{\varepsilon W} \in S(\mathfrak{m})$. Hence, $\mathrm{Res}_W(C) \in S(\mathfrak{m})$. Note that $\mathrm{Res}_W(C)  $ is also $A$-invariant. Indeed, for any non-zero $a \in A$, by the definition of the Poisson centralizer, \begin{align*}
    \mathrm{Res}_W(C)(\mathrm{Ad}(a)X) = C(\mathrm{Ad}(a)X - \varepsilon W) = C(\mathrm{Ad}(a)(X -\varepsilon W)) = C(X - \varepsilon W) = \mathrm{Res}_W(C)( X).
\end{align*} Hence, $  \mathrm{Res}_W(C) \in S(\mathfrak{m})^A$.

We now show that $ \mathrm{Res}_W(C)$ is a $\mathbb{R}$-algebra isomorphism. This can be verified as follows: For any $C,C' \in S(\mathfrak{g})^G$ and $\lambda \in \mathbb{R}$, we have \begin{align*}
     \mathrm{Res}_W(CC') = & \, C C' \circ \tau_{\varepsilon W} = \left(C \circ \tau_{\varepsilon W} \right) \left( C' \circ \tau_{\varepsilon W} \right) =\mathrm{Res}_W(C)\mathrm{Res}_W(C')  \\
       \mathrm{Res}_W(C+\lambda C') = & \,(C+\lambda C') \circ \tau_{\varepsilon W} = \left(C \circ \tau_{\varepsilon W}  + \lambda C' \circ \tau_{\varepsilon W} \right)  = \mathrm{Res}_W(C) + \lambda \mathrm{Res}_W(C').
\end{align*}

From these two paragraphs above, $\mathrm{Im Res}_W$ is a subalgebra of $S(\mathfrak{m})^A$.
\end{proof}
 \begin{remark}
\label{rem:ResW}
Fix $W \in \mathfrak{g}$ and $\varepsilon\in\mathbb{R}/\{0\}$. By Lemma \ref{lem:property}, we define the affine translation \begin{align*}
\tau_{-\varepsilon W} : \mathfrak{m} \longrightarrow \mathfrak{m}-\varepsilon W,\quad X\longmapsto X-\varepsilon W .
\end{align*} For $C\in S(\mathfrak{g})^G$, Lemma \ref{lem:welldefineres} shows that $\mathrm{Im Res}_W$ is a subalgebra and   $  \mathrm{Res}_W(C)(X):=C(X-\varepsilon W)=C\big(\tau_{-\varepsilon W}(X)\big)$.

By Lemma \ref{lem:property} (ii), the pullback along $\tau_{-\varepsilon W}$ induces an isomorphism of invariant algebras via its pullback as follows:
\begin{align*}
\tau_{-\varepsilon W}^* : S(\mathfrak{m}-\varepsilon W)^A  \xrightarrow{\ \cong\ } S(\mathfrak{m})^A,\quad (\tau_{-\varepsilon W}^*\theta)(X)=\theta(X-\varepsilon W).
\end{align*} With this notation $\mathrm{Res}_W = \tau_{-\varepsilon W}^*\circ \big(C\mapsto C\vert_{\mathfrak{m}-\varepsilon W}\big)$,  we may regard $\mathrm{Res}_W$ as a map $S(\mathfrak{g})^G\to S(\mathfrak{m}-\varepsilon W)^A$. The choice of $S(\mathfrak{m})^A$ as the codomain is simply the identification of the affine slice $\mathfrak{m}-\varepsilon W$ with the linear space $\mathfrak{m}$ via $\tau_{-\varepsilon W}$.

 If $\xi=X-\varepsilon W\in\mathfrak{m}-\varepsilon W$, then $\mathrm{Res}_W(C)(X)=C(\xi)$. Therefore, whenever expressions such as $\mathrm{Res}_W(C)(\xi)$ appear later, they are to be understood through the identification $\mathfrak{m}\xrightarrow{ \tau_{-\varepsilon W}\ } \mathfrak{m}-\varepsilon W$.
\end{remark}

From the construction in Subsection \ref{subsec:geomi}, we see that $\mathfrak{g}//G$ is the spectrum of the Poisson center $S(\mathfrak{g})^G$. Choosing homogeneous generators $C_1,\ldots,C_r \in S(\mathfrak{g})^G$, we define a canonical quotient map \begin{align}
    \chi: \mathfrak{g} \longrightarrow \mathfrak{g}//G \cong \mathbb{A}^r, \  \text{ } \zeta \longmapsto (C_1(\zeta),\ldots,C_r(\zeta)).  \label{eq:quotient}
\end{align} By the definition of the point in $\mathfrak{g}//A$, defined in \eqref{eq:pointsinquo}, for a non-zero regular value $\zeta \in \mathfrak{g}$, the fiber $\chi^{-1} \left(\chi(\zeta)\right)$ exactly defines the adjoint orbit $\mathcal{O}_\zeta$ and $\dim \mathcal{O}_\zeta = n -r $. That is, for a generic value $(a_1,\ldots,a_r) \in \mathrm{Im}\, \chi_{Z_P}$, the fiber \begin{align*}
    \left(\chi\vert_{Z_P}\right)^{-1}(a_1,\ldots,a_r)
\end{align*} is a dimension $n-r$ adjoint orbit. We first deduce the rank of the subalgebra $\mathrm{Im Res}_W$ from the following proposition. 
\begin{proposition}
\label{pro:dimension}
 Let $\mathrm{Res}_W$ be the same as defined above. Suppose that $C_1, \ldots,C_r$ are algebraically independent generators of $S(\mathfrak{g})^G$ such that $\chi: \mathfrak{g} \rightarrow \mathbb{A}^r$ is given by $\chi(\zeta) = \left(C_1(\zeta),\ldots,C_r(\zeta)\right)$. Here, $\mathbb{A}^r$ is a $r$-dimensional affine space. Define $s: = \mathrm{trdeg} \, \left(\mathrm{Im Res}_W\right)$ and suppose that $C$ restricts to an algebraically independent set on $\mathfrak{m} -\varepsilon W$. Then, for any $X \in \mathfrak{m}$ such that $\xi = X - \varepsilon W$, we have \begin{align*}
     \mathrm{rank} \, d(\mathrm{Res}_W)_X = \dim \left(\mathfrak{g}_\xi\right)_\mathfrak{m} = s,
 \end{align*} where $(\cdot)_\mathfrak{m}$ is the projection on the $\mathfrak{m}$-component.
\end{proposition}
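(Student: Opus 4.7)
The plan is to unpack $d(\mathrm{Res}_W)_X$ pointwise through the Killing form, identify its image with the $\mathfrak{m}$-projection of the centraliser $\mathfrak{g}_\xi$ using Kostant's theorem on gradients of Casimirs, and then match the resulting dimension with the transcendence degree $s$ via a Jacobian-rank argument on the regular locus $\mathfrak{m}_{\mathrm{reg}}^W$.

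First I would unfold the differential explicitly. For each generator $C_i\in S(\mathfrak{g})^G$ and each tangent vector $Y\in T_X\mathfrak{m}\cong \mathfrak{m}$, the chain rule combined with the translation $\tau_{-\varepsilon W}$ gives
$$d\bigl(\mathrm{Res}_W(C_i)\bigr)_X(Y) \;=\; dC_i(\xi)(Y) \;=\; B\bigl(\nabla C_i(\xi),Y\bigr) \;=\; B\bigl((\nabla C_i(\xi))_{\mathfrak{m}},Y\bigr),$$
since the $\mathfrak{a}$-component of $\nabla C_i(\xi)$ is $B$-orthogonal to $Y\in\mathfrak{m}$ under the reductive decomposition $\mathfrak{g}=\mathfrak{a}\oplus\mathfrak{m}$. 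Under the identification $\mathfrak{m}^*\cong\mathfrak{m}$, this places
$$\mathrm{Im}\,d(\mathrm{Res}_W)_X \;=\; \mathrm{span}\bigl\{(\nabla C_i(\xi))_{\mathfrak{m}}:1\le i\le r\bigr\}\;\subset\;\mathfrak{m}.$$

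Next I would invoke Kostant's classical theorem: since $\mathfrak{g}$ is semisimple and $C_1,\dots,C_r$ is a system of algebraically independent homogeneous generators of $S(\mathfrak{g})^G$, the gradients $\{\nabla C_i(\xi)\}_{i=1}^r$ form a basis of $\mathfrak{g}_\xi$ for every regular $\xi\in\mathfrak{g}_{\mathrm{reg}}$. The linear projection $(\cdot)_{\mathfrak{m}}:\mathfrak{g}\to\mathfrak{m}$ commutes with spans, so
$$\mathrm{span}\bigl\{(\nabla C_i(\xi))_{\mathfrak{m}}\bigr\} \;=\; \bigl(\mathrm{span}\{\nabla C_i(\xi)\}\bigr)_{\mathfrak{m}} \;=\; (\mathfrak{g}_\xi)_{\mathfrak{m}},$$
which gives the first equality $\mathrm{rank}\,d(\mathrm{Res}_W)_X=\dim(\mathfrak{g}_\xi)_{\mathfrak{m}}$. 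For the second equality, consider the polynomial map $\Phi:\mathfrak{m}\to\mathbb{A}^r$, $\Phi(X)=\bigl(\mathrm{Res}_W(C_1)(X),\dots,\mathrm{Res}_W(C_r)(X)\bigr)$. The Zariski closure of its image is $\mathrm{Spec}(\mathrm{Im}\,\mathrm{Res}_W)$ and has dimension $s$. By the Jacobian criterion for transcendence degree, $\mathrm{rank}\,d\Phi_X$ attains the value $s$ on a Zariski open dense subset of $\mathfrak{m}$; the algebraic independence hypothesis on $\{C_i\vert_{\mathfrak{m}-\varepsilon W}\}$ guarantees that this maximum is already achieved throughout $\mathfrak{m}_{\mathrm{reg}}^W$, so $\dim(\mathfrak{g}_\xi)_{\mathfrak{m}}=s$ for all such $\xi$.

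The main obstacle is the passage from the pointwise identification with $(\mathfrak{g}_\xi)_{\mathfrak{m}}$ to the algebraic invariant $s$: one must rule out rank-drop of $d\Phi$ inside the regular locus, since a priori $\dim(\mathfrak{g}_\xi\cap\mathfrak{a})$ could vary as $\xi$ moves within $\mathfrak{m}_{\mathrm{reg}}^W$. The algebraic independence of the restricted Casimirs forces $\mathfrak{g}_\xi\cap\mathfrak{a}$ to attain its minimal dimension throughout $\mathfrak{m}_{\mathrm{reg}}^W$, and this minimality is exactly what makes $\dim(\mathfrak{g}_\xi)_{\mathfrak{m}}$ constant and equal to $s$ on the regular stratum.
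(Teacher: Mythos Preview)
Your proposal is correct and follows essentially the same route as the paper: compute $d(\mathrm{Res}_W(C_i))_X(Y)=B\bigl((\nabla C_i(\xi))_\mathfrak{m},Y\bigr)$, invoke Kostant's result that $\{\nabla C_i(\xi)\}$ spans $\mathfrak{g}_\xi$ at regular points, and conclude that the Jacobian row space is $(\mathfrak{g}_\xi)_\mathfrak{m}$. The only cosmetic difference is that the paper closes by computing the kernel as $\mathfrak{m}\cap[\mathfrak{g},\xi]$ and applying rank--nullity, whereas you read off the rank directly from the row space; you are also more explicit than the paper about the Jacobian-criterion argument linking the generic rank to $s=\mathrm{trdeg}\,(\mathrm{Im}\,\mathrm{Res}_W)$.
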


\begin{remark}
\label{rem:dResW}
Strictly speaking, $\mathrm{Res}_W:S(\mathfrak{g})^G\to S(\mathfrak{m})^A$ is a homomorphism of coordinate rings. Hence, it induces a morphism of affine varieties in the opposite direction
\begin{align*}
    r_W: \mathfrak{m}//A=\mathrm{Spec}\bigl(S(\mathfrak{m})^A\bigr)\longrightarrow \mathfrak{g}//G=\mathrm{Spec}\bigl(S(\mathfrak{g})^G\bigr),\qquad r_W^*=\mathrm{Res}_W.
\end{align*}
When we write $d(\mathrm{Res}_W)_X$ in Proposition \ref{pro:dimension}, we mean the differential $d(r_W)_X$ at the point $X$. After choosing generators $C_1,\ldots,C_r$ of $S(\mathfrak{g})^G$, this differential is computed by the Jacobian of the polynomial map
\begin{align*}
    \varphi: \mathfrak{m}\to\mathbb{A}^r,\qquad \varphi(X)=\bigl(\mathrm{Res}_W(C_1)(X),\ldots,\mathrm{Res}_W(C_r)(X)\bigr),
\end{align*}
so that $\mathrm{rank}\,d(\mathrm{Res}_W)_X:=\mathrm{rank}\,d\varphi_X$.
\end{remark}

 \begin{proof}
 Fix $B$ as a $G$-invariant Killing form on $G$. Recall that $B\vert_\mathfrak{m}$ is also non-degenerate such that $T_X^*\mathfrak{m} \cong \mathfrak{m}$.    We prove this argument by computing the Jacobian of $\mathrm{Res}_W$. We first show that the gradients of invariant polynomials lie in the centraliser. For any $\zeta,Y \in \mathfrak{g}$ and $G$-invariant $C \in S(\mathfrak{g})^G$, using the Ad-invariant property, we deduce  \begin{align}
         0 = \left.\dfrac{d}{dt}\right\vert_{t = 0}C(\mathrm{Ad}(\exp(-tY))\zeta) = dC(\zeta) [Y,\zeta]  = B(\nabla C(\zeta),[Y,\zeta]) = B([\nabla C(\zeta),\zeta],Y). \label{eq:diffc}
     \end{align} Therefore, we observe that $dC_i(\zeta)$ annihilates $T_\zeta(G\cdot\zeta)$. In addition, the last equality in \eqref{eq:diffc} implies that $[\nabla C(\zeta),\zeta] = 0$ and $\nabla C(\zeta) \in \mathfrak{g}_\zeta$. Hence, via $B$, the differential $dC_i(\zeta)$ corresponds to the gradient $\nabla C_i(\zeta)$ defined by $dC(\zeta) (w) =  B(\nabla C(\zeta),w)$ for all $w \in \mathfrak{g}$. Without loss of generality, take $\zeta = \xi = X - \varepsilon W$. By definition of the adjoint action, \begin{align}
         T_\xi(G\cdot \xi) = [\mathfrak{g},\xi] := \{[Y,\xi]: Y \in \mathfrak{g}\} . \label{eq:tanorbitspace}
     \end{align}  Note that the annihilator of the tangent orbit space \eqref{eq:tanorbitspace} is given by $\widehat{[\mathfrak{g},\xi]} : = \{\lambda \in \mathfrak{g}^*: \text{ } \lambda([\mathfrak{g},\xi]) = 0\}$. For a $\lambda_{\nabla C(\xi)} \in \widehat{[\mathfrak{g},\xi]}$, let $\lambda_{\nabla C(\xi)}([\mathfrak{g},\xi]) = B(\nabla C(\xi),[\mathfrak{g},\xi])  $. Using \eqref{eq:diffc} again, we obtain $\mathfrak{g}_\xi \cong \widehat{[\mathfrak{g},\xi]}$. Hence, the set of gradients $\{\nabla C_i(\xi)\}_i$ forms the basis of $\mathfrak{g}_\xi$.

Now, consider the map \begin{align*}
    \varphi: \mathfrak{m}  \longrightarrow \mathbb{A}^r, \quad X \longmapsto \varphi(X) : =  (\mathrm{Res}_W(C_1)(X),\ldots,\mathrm{Res}_W(C_r)(X)) = \left((C_1)\vert_\mathfrak{m},\ldots,(C_r)\vert_\mathfrak{m}\right)
\end{align*} To find the rank of $\mathrm{Res}_W$, it is sufficient to compute the rank of $d \varphi_X$. For any $\delta X = u  \in \mathfrak{m}$, the differential of the restriction map in  \eqref{eq:diffc} is given as follows: \begin{align}
   d (\mathrm{Res}_W(C_i))_X(u) =  d C_i(\xi)(u) = B(\nabla C_i(\xi),u) = B(\left(\nabla C_i(\xi)\right)_\mathfrak{m},u). \label{eq:differenres}
\end{align} Thus, under the identification of $T_X^*\mathfrak{m}$ with $\mathfrak{m}$, the covector $d\varphi_i\vert_\xi$ is the vector $(\nabla C_i(\xi))_{\mathfrak{m}}\in\mathfrak{m}$. Equivalently, in any $B$-orthonormal basis of $\mathfrak{m}$, the $i$-th row of the Jacobian matrix $J_F(\xi)\in M_{r\times\dim\mathfrak{m}}(\mathbb{R})$ is the coordinate row of $(\nabla C_i(\xi))_{\mathfrak{m}}$. 

Since each $C_i$ is $G$-invariant, $\nabla C_i(\xi)\in\mathfrak{g}_\xi:=\{Y\in\mathfrak{g}:[Y,\xi]=0\}$, and for regular $\xi$, the $\nabla C_i(\xi)$ spans $\mathfrak{g}_\xi$ for all $i$. Hence, \begin{align*}
\mathrm{span}\big\{(\nabla C_i(\xi))_{\mathfrak{m}}\big\}_{i=1}^r=(\mathfrak{g}_\xi)_{\mathfrak{m}} \subset\mathfrak{m}.
\end{align*} Using the $B$-invariance and \eqref{eq:differenres}, for any $X \in \mathfrak{m}$, \begin{align}
\ker d\varphi_X &=\big\{u \in\mathfrak{m}:B\big((\nabla C_i(\xi))_{\mathfrak{m}},\,u \big)=0\ \text{ for all } i\big\} \label{eq:kervarph}\\
\nonumber
&=\big\{u \in\mathfrak{m}: B(Y,u)=0  \text{ for all } Y \in(\mathfrak{g}_\xi)_{\mathfrak{m}}\big\}\\
\nonumber
&=\mathfrak{m}\cap \big((\mathfrak{g}_\xi)^\perp\big) =\mathfrak{m}\cap[\mathfrak{g},\xi].
\end{align} Here, in the last line of \eqref{eq:kervarph}, we use the orthogonal decomposition $\mathfrak{g}=\mathfrak{g}_\xi\oplus[\mathfrak{g},\xi]$ with respect to $B$, which is standard for regular $\xi$ as $B\big([\mathfrak{g},\xi],\mathfrak{g}_\xi\big)=0$ and $\dim\mathfrak{g}_\xi+\dim[\mathfrak{g},\xi]=\dim\mathfrak{g}$. Hence, inside $\mathfrak{m}$, $((\mathfrak{g}_\xi)_\mathfrak{m})^\perp = \mathfrak{m} \cap [\mathfrak{g},\xi]$.  Therefore, by the rank-nullity theorem, \begin{align*}
\mathrm{rank}\, d\varphi_X =\dim\mathfrak{m}-\dim\big(\mathfrak{m}\cap[\mathfrak{g},\xi]\big) =\dim\big((\mathfrak{g}_\xi)_{\mathfrak{m}}\big)
\end{align*} as required.
 \end{proof}

 \begin{remark}
 \label{rem:dimension}
(i) Define $Z_P : = \mathrm{Ad}(G)(\mathfrak{m} - \varepsilon W)$. Choose any finitely-generating set $\{u_1,\ldots,u_s\}\subset\mathrm{Im}(\mathrm{Res}_W)$ and form the compressed map \begin{align*}
\varphi^{(s)}:Z_{P}\longrightarrow\mathbb{A}^s,\qquad \varphi^{(s)}(\xi):=\bigl(u_1(\xi),\ldots,u_s(\xi)\bigr).
\end{align*} Then, generically, \begin{align*}
\mathrm{rank}\,d\varphi^{(s)}=s,\qquad \ker d\varphi=\ker d\varphi^{(s)} 
\end{align*} due to the algebraic dependence on the generators in $\mathrm{Im}\,\varphi $. 


(ii) For arbitrary $X\in\mathfrak{m}$ with $\xi:=X-\varepsilon W$, we have  \begin{align*}
\mathrm{rank}\,d(\mathrm{Res}_W)_X \leq \dim\big((\mathfrak{g}_\xi)_{\mathfrak{m}}\big).
\end{align*} Indeed, the $i$-th row of the Jacobian of $(\mathrm{Res}_W(C_1),\ldots,\mathrm{Res}_W(C_r))$ at $X$ is $(\nabla C_i(\xi))_{\mathfrak{m}}\in(\mathfrak{g}_\xi)_{\mathfrak{m}}$ as $C_i$ is $G$-invariant and hence $\nabla C_i(\xi)\in\mathfrak{g}_\xi$. Therefore, the row space is contained in $(\mathfrak{g}_\xi)_{\mathfrak{m}}$, giving the inequality. On the regular locus $\xi\in\mathfrak{g}_{\mathrm{reg}}$, the rows span $(\mathfrak{g}_\xi)_{\mathfrak{m}}$. Thus, as presented in the proof above, the equality holds there. The locus where the rank is less than $s$ is Zariski closed (the vanishing of the $s\times s$ minors of the Jacobian) and is disjoint from a Zariski open dense subset of the regular slice.
 \end{remark}

 \begin{corollary}
    \label{coro:a=t} Let $A = T $ such that $\mathfrak{g} = \mathfrak{t} \oplus \mathfrak{m}$. Then $\mathrm{Res}_W: S(\mathfrak{g})^G \rightarrow S(\mathfrak{m})^T$ is injective. In particular, $\mathrm{Res}_W \left(S(\mathfrak{g})^G\right) \subset S(\mathfrak{m})^T$.
\end{corollary}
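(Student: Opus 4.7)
The plan is to prove injectivity by showing that a system of algebraically independent generators of $S(\mathfrak{g})^G$ maps to algebraically independent elements in $S(\mathfrak{m})^T$; the inclusion $\mathrm{Res}_W(S(\mathfrak{g})^G)\subset S(\mathfrak{m})^T$ is just the $A=T$ case of Lemma \ref{lem:welldefineres}, so only injectivity requires work.

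By Chevalley's restriction theorem \eqref{eq:crt}, I would fix algebraically independent homogeneous generators $C_1,\ldots,C_r\in S(\mathfrak{g})^G$ with $r=\mathrm{rank}\,\mathfrak{g}=\dim\mathfrak{t}$, so that $S(\mathfrak{g})^G=\mathbb{R}[C_1,\ldots,C_r]$. Then $\mathrm{Res}_W$ is injective if and only if the images $\mathrm{Res}_W(C_1),\ldots,\mathrm{Res}_W(C_r)$ are algebraically independent in $S(\mathfrak{m})^T$, and by the Jacobian criterion this is equivalent to the polynomial map $\varphi:\mathfrak{m}\to \mathbb{A}^r$, $X\mapsto(C_1(X-\varepsilon W),\ldots,C_r(X-\varepsilon W))$, having generic rank $r$. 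Invoking Proposition \ref{pro:dimension}, at any $X$ with $\xi=X-\varepsilon W\in\mathfrak{g}_{\mathrm{reg}}$ one has $\mathrm{rank}\,d\varphi_X=\dim(\mathfrak{g}_\xi)_{\mathfrak{m}}$, so injectivity is reduced to exhibiting a single $X\in\mathfrak{m}$ with $\xi$ regular and $\dim(\mathfrak{g}_\xi)_{\mathfrak{m}}=r$, or equivalently $\mathfrak{g}_\xi\cap\mathfrak{t}=0$.

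The main obstacle lies precisely in this transversality step: controlling how a generic Cartan $\mathfrak{g}_\xi$ meets the fixed Cartan $\mathfrak{t}$. Because $A=T$, Proposition \ref{prop:characterizationirrarr} forces $W$ to be regular, and since $W\in\mathfrak{a}=\mathfrak{t}$, we may take $W\in\mathfrak{t}$. Using the root decomposition $X=\sum_{\alpha\in\Phi}X_\alpha$ in $\mathfrak{m}^{\mathbb{C}}=\bigoplus_{\alpha\in\Phi}\mathfrak{g}_\alpha$, for any $H\in\mathfrak{t}$ we have $[H,W]=0$, whence
\[\mathfrak{g}_\xi\cap\mathfrak{t}=\{H\in\mathfrak{t}:[H,X]=0\}=\bigcap_{\alpha\in\Phi:\,X_\alpha\neq 0}\ker\alpha.\]
Since $\mathfrak{g}$ is semisimple, the root system $\Phi$ spans $\mathfrak{t}^*$, so any $X$ with all root components $X_\alpha$ nonzero yields $\mathfrak{g}_\xi\cap\mathfrak{t}=0$. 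Such $X$ form a Zariski open dense subset of $\mathfrak{m}$, and the regularity condition on $\xi$ defines another Zariski open subset (nonempty because $-\varepsilon W$ is regular and regularity is an open condition on $\mathfrak{g}$). Their intersection is nonempty, and any $X$ in it realizes $\mathrm{rank}\,d\varphi_X=r$; this yields the algebraic independence of the $\mathrm{Res}_W(C_i)$ and hence the injectivity of $\mathrm{Res}_W$.
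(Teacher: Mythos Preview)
Your proof is correct and follows the same strategy as the paper: reduce injectivity to the algebraic independence of $\mathrm{Res}_W(C_1),\ldots,\mathrm{Res}_W(C_r)$, and verify this via the Jacobian rank formula of Proposition~\ref{pro:dimension}. In fact, your argument is more complete than the paper's: the paper simply asserts that Proposition~\ref{pro:dimension} yields algebraic independence when $A=T$ and $W$ is regular, whereas you explicitly supply the missing step by computing $\mathfrak{g}_\xi\cap\mathfrak{t}=\bigcap_{\alpha:X_\alpha\neq 0}\ker\alpha$ and observing that this vanishes once all root components of $X$ are nonzero, so that $\dim(\mathfrak{g}_\xi)_\mathfrak{m}=r$ at a generic point.
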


\begin{proof}
  By Chevalley's theorem, $S(\mathfrak{g})^G \cong \mathbb{A}^r$. Hence, fix such a finite set $\{C_1,\dots,C_r\}\subset S(\mathfrak{g})^G$ such that $S(\mathfrak{g})^G=\mathbb{R}[C_1,\dots,C_r]$. By assumption, $A=T$ and $W$ is regular. The restricted elements $$\mathrm{Res}_W(C_1), \dots, \mathrm{Res}_W(C_r) \in S(\mathfrak{m})^T$$ are algebraically independent.

Let $C\in S(\mathfrak{g})^G$ satisfy $\mathrm{Res}_W(C)=0$. Since $S(\mathfrak{g})^G = \mathbb{R}[C_1,\dots,C_r]$, there exists a polynomial $F\in \mathbb{R}[C_1,\dots,C_r]$ such that $C=F(C_1,\dots,C_r)$. Applying $\mathrm{Res}_W$ and Lemma \ref{lem:welldefineres}, we obtain \begin{align}
   0=\mathrm{Res}_W(C) =\mathrm{Res}_W \bigl(F(C_1,\dots,C_r)\bigr) =F\bigl(\mathrm{Res}_W(C_1),\dots,\mathrm{Res}_W(C_r)\bigr) \in S(\mathfrak{m})^T. 
\end{align} Since $\mathrm{Res}_W(C_1),\dots,\mathrm{Res}_W(C_r)$ are algebraically independent, the only polynomial $F$ with $$F(\mathrm{Res}_W(C_1),\dots,\mathrm{Res}_W(C_r))=0$$ is the zero polynomial. Hence, $F=0$, so $C=0$. Therefore, $\ker(\mathrm{Res}_W)=\{0\}$ and $\mathrm{Res}_W$ is injective.
\end{proof}
\begin{remark}
  Let us illustrate an example for Corollary \ref{coro:a=t}. Let $\mathfrak{g} = \mathfrak{su}(2)$, and $\mathfrak{t} \cong \mathbb{R}$ be Cartan subalgebra. We further assume that $(x,y,z)$ is the coordinate for $\mathfrak{su}^*(2) $ such that $ x$ is the coordinate function on $\mathfrak{t}^*$ and $(y,z)$ is the coordinate functions on $\mathfrak{m}^*$. Then the Casimir $C$ is defined by $x^2 + y^2 + z^2$, and $S(\mathfrak{su}(2))^{\mathrm{SU}(2)}$ and $S(\mathfrak{m})^T = \mathbb{R}[y^2 + z^2]$.  After the shifting, the restriction map is then given by \begin{align*}
        \mathrm{Res}_W (\Omega) (y,z) = \Omega((0,y,z) - \varepsilon W ) = x^2 + y^2 + \varepsilon^2||W||^2 \in S(\mathfrak{m} - \varepsilon W)^T \cong S(\mathfrak{m})^T,
    \end{align*} which is clearly injective.
\end{remark}

We now determine the rank of $\mathfrak{F}_1^{\mathrm{poly}}$ and $\mathfrak{F}_2^{\mathrm{poly}}$, which helps us prove the dimensional characterization of superintegrable systems.

\begin{lemma}
\label{lem:rankacc}
 Let  $ \mathfrak{F}_1^{\mathrm{poly}} $ and let $ \mathfrak{F}_2^{\mathrm{poly}}$ be the same as defined above, and let $r = \mathrm{rank} \,G$. Let $\mathrm{Res}_W: S(\mathfrak{g})^G \rightarrow S(\mathfrak{m}-\varepsilon W)^A$ be the restriction given by $\mathrm{Res}_W(C)(X) = C(X - \varepsilon W)$. Then\begin{align*}
  \mathrm{Spec}\, \mathfrak{F}_1^{\mathrm{poly}}=\mathrm{Ad}(G)(\mathfrak{m}-\varepsilon W),\quad \mathrm{Spec}\,\mathfrak{F}_2^{\mathrm{poly}}=(\mathfrak{m}-\varepsilon W)//A \cong  \mathfrak{m}//A .  
\end{align*} In particular, \begin{align*}
    \dim \mathrm{Spec} \,\mathfrak{F}_1^{\mathrm{poly}} = \left\{\begin{matrix}
        n & \text{ if } A = T \\
        n- r +  s & \text{ if } T \subsetneq A
    \end{matrix}\right. \text{ and } \dim \mathrm{Spec}\, \mathfrak{F}_2^{\mathrm{poly}} = \dim \mathfrak{m} - \dim A +  \dim A_y,
\end{align*} where $\dim A_y$ denotes the dimension of a stabiliser of $A$ acting on $\mathfrak{m}$.
\end{lemma}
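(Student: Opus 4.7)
The plan is to deduce the two spectra directly from algebraic facts already established in this section and then compute the dimensions by orbit-counting combined with Proposition~\ref{pro:dimension}. The identification $\mathrm{Spec}\,\mathfrak{F}_1^{\mathrm{poly}}=\overline{\mathrm{Ad}(G)(\mathfrak{m}-\varepsilon W)}$ follows at once from Remark~\ref{rem:ideal}(i), which exhibits $\mathfrak{F}_1^{\mathrm{poly}}=\mathrm{Im}(P^*)$ as the coordinate ring of this closure. For the second spectrum, I would combine the injectivity of $\pi_{\mathfrak{m}}^*$ from Lemma~\ref{lem:poly} with the translation isomorphism $\tau_{-\varepsilon W}^*:S(\mathfrak{m}-\varepsilon W)^A\xrightarrow{\cong}S(\mathfrak{m})^A$ of Lemma~\ref{lem:property}(ii), giving $\mathfrak{F}_2^{\mathrm{poly}}\cong S(\mathfrak{m})^A$ and hence $\mathrm{Spec}\,\mathfrak{F}_2^{\mathrm{poly}}\cong\mathfrak{m}//A$.

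For $\dim\mathrm{Spec}\,\mathfrak{F}_1^{\mathrm{poly}}$, I would study the action map $\phi:G\times(\mathfrak{m}-\varepsilon W)\to\mathfrak{g}$ whose image is $\mathrm{Ad}(G)(\mathfrak{m}-\varepsilon W)$. Restricting to the regular locus $U$ of \eqref{eq:regularst}, the point $\xi=X-\varepsilon W$ is regular semisimple, so its orbit has dimension $n-r$. Since $\chi$ is $G$-invariant and separates regular orbits, the closure $\overline{\chi(\mathfrak{m}-\varepsilon W)}$ parametrises the orbits meeting the slice, and it has dimension equal to $\mathrm{trdeg}\,\mathrm{Im}\,\mathrm{Res}_W=s$, because $\mathrm{Im}\,\mathrm{Res}_W$ is generated by the restricted Casimirs $C_i\vert_{\mathfrak{m}-\varepsilon W}$. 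The fiber-dimension formula applied to $\chi$ then yields $\dim\overline{\mathrm{Ad}(G)(\mathfrak{m}-\varepsilon W)}=(n-r)+s$, covering the case $T\subsetneq A$.

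To specialise to $A=T$, I would show $s=r$. Here $W$ is regular by Proposition~\ref{prop:characterizationirrarr}(a), and I would verify that $\mathfrak{g}_\xi\cap\mathfrak{t}=\{0\}$ for a generic $X\in\mathfrak{m}$. Using the root-space decomposition $X=\sum_{\alpha\in\Phi}c_\alpha E_\alpha$ and the fact that $\mathfrak{t}$ commutes with $W\in\mathfrak{t}$, the condition $[H,\xi]=0$ for $H\in\mathfrak{t}$ reduces to $\alpha(H)c_\alpha=0$ for every $\alpha\in\Phi$; since the roots span $\mathfrak{t}^*$ for semisimple $\mathfrak{g}$, the generic condition $c_\alpha\neq 0$ forces $H=0$. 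Consequently, the projection $\mathfrak{g}_\xi\to\mathfrak{m}$ is injective, $\dim(\mathfrak{g}_\xi)_{\mathfrak{m}}=r$, and Proposition~\ref{pro:dimension} gives $s=r$, producing a total dimension of $n$.

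Finally, $\dim\mathrm{Spec}\,\mathfrak{F}_2^{\mathrm{poly}}=\dim\mathfrak{m}//A$ is read off from the GIT dimension formula \eqref{eq:dimgitquo} applied to the linear $A$-action on $\mathfrak{m}$: $\dim\mathfrak{m}//A=\dim\mathfrak{m}-d_A^{\mathfrak{m}}$, where $d_A^{\mathfrak{m}}=\max_{y\in\mathfrak{m}}\dim(\mathrm{Ad}(A)\cdot y)=\dim A-\dim A_y$ at a generic $y$, giving the stated value. The main obstacle will be the identification $s=\dim\overline{\chi(\mathfrak{m}-\varepsilon W)}$ together with the generic triviality $\mathfrak{g}_\xi\cap\mathfrak{t}=\{0\}$ in the $A=T$ case; both points pivot on Proposition~\ref{pro:dimension} and the standard orbit-dimension formula for regular semisimple elements.
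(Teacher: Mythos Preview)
Your proposal is correct and follows essentially the same route as the paper: identify the spectra via Remark~\ref{rem:ideal}(i) and Lemma~\ref{lem:poly}, then compute $\dim Z_P$ by restricting the Chevalley map $\chi$ to the slice, identifying $\dim\overline{\chi(\mathfrak{m}-\varepsilon W)}=s$ through $\mathrm{Im}\,\mathrm{Res}_W$, and adding the generic regular-orbit dimension $n-r$ via the fiber-dimension theorem. The only variation is in the $A=T$ case, where you give a direct root-space argument for $\mathfrak{g}_\xi\cap\mathfrak{t}=\{0\}$ (hence $s=r$ by Proposition~\ref{pro:dimension}), while the paper instead invokes Corollary~\ref{coro:a=t}; since that corollary itself rests on Proposition~\ref{pro:dimension}, this is a cosmetic difference rather than a genuinely different method.
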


\begin{proof}
 Let $ Z_P  = \mathrm{Ad}(G)(\mathfrak{m} -\varepsilon W) = \bigcup_{X \in \mathfrak{m}} \mathrm{Ad}(G)(X - \varepsilon W)$. In particular, $Z_P$ is $G$-invariant. That is, $\mathrm{Ad}(G)Z_P = Z_P$. We first look at the case where $ A \neq T$. By the definition of $P$, $ P^* : S(\mathfrak{g}) \twoheadrightarrow \mathfrak{F}_1^{\mathrm{poly}}$ is a ring homomorphism.  By Lemma \ref{lem:poly} and its remark, the pullback identifies \begin{align*}
      \mathfrak{F}_1^{\mathrm{poly}} \cong S(\mathfrak{g})\big/\mathcal{I}(Z_P).
  \end{align*} Hence, $\mathrm{Spec}\, \mathfrak{F}_1^{\mathrm{poly}}  = Z_P$\footnote{Formally, we should write $\mathrm{Spec} \, \mathfrak{F}_1^{\mathrm{poly}} = \overline{Z_P}$ as a Zariski closed subset. However, recall that from Remark \ref{rem:ideal}, we will omit the Zariski closure.}, and therefore, $\mathrm{trdeg} \, \mathfrak{F}_1^{\mathrm{poly}}=\dim Z_P$. It remains to compute $\dim Z_P$. By definition, $Z_P$ consists of the adjoint orbit $\mathrm{Ad}(G)(X -\varepsilon W)$ through the affine subspace $\mathfrak{m} - \varepsilon W$. Let $\chi:\mathfrak{g}\to\mathfrak{g}//G\cong \mathbb{A}^r$ be the canonical quotient defined by \eqref{eq:quotient}. Restrict $\chi$ to $Z_P$ and set $\Gamma:=\overline{\chi(Z_P)}\subset\mathbb{A}^r$.  Since $C_j\circ P=(\mathrm{Res}_W C_j)\circ\pi_{\mathfrak{m}}$ on $T^*M$, with  $\pi_{\mathfrak{m}}(g,X)=X-\varepsilon W$, the coordinate ring is identified as \begin{align*}
    \mathbb{R}[\Gamma]\cong \mathrm{Im}(\mathrm{Res}_W)\subset S(\mathfrak{m}-\varepsilon W)^A
\end{align*}  as $S(\mathfrak{g})^G=\mathbb{R}[C_1,\dots,C_r]$ and we restrict the generators to $\mathfrak{m} - \varepsilon W$. Hence, \begin{align*}
    \dim\Gamma = \mathrm{trdeg}\,\mathbb{R}[\Gamma] = \mathrm{trdeg}\,\mathrm{Im}(\mathrm{Res}_W) =  s.
\end{align*} By Proposition \ref{pro:dimension}, on the regular locus $U$, we also have  \begin{align*}
   s = \mathrm{rank} \, d (\mathrm{Res}_W)(X) = \dim (\mathfrak{g}_\xi)_\mathfrak{m}  ,
\end{align*} so $s$ is the generic differential rank of $\mathrm{Res}_W$.

On the Zariski open dense regular locus $\mathfrak{g}_{\mathrm{reg}}\subset\mathfrak{g}$, the Chevalley map $\chi$ is smooth, with all fibers of dimension $n-r$, and $\mathfrak{g}_{\mathrm{reg}}\cap Z_P\neq\emptyset$ by Lemma \ref{lem:regu}. Thus, for any $\xi\in Z_P\cap\mathfrak{g}_{\mathrm{reg}}$, the fiber $\chi^{-1}(\chi(\xi))=\mathrm{Ad}(G)\cdot\xi$ has dimension $n-r$ and lies in $Z_P$ since $Z_P$ is $\mathrm{Ad}(G)$-invariant. Therefore, the generic fiber of $\chi\vert_{Z_P}:Z_P\to\Gamma$ has dimension $n-r$. By the dimension theorem for morphisms \cite[Chapter II]{HartshorneAG}, we have \begin{equation}
\dim Z_P =  \dim\Gamma\ +\bigl(\text{generic fiber dimension of }\chi\vert_{Z_P}\bigr)\ =\ s+(n-r).\label{eq:dimZP}
\end{equation} It follows that $\dim Z_P=n-(r-s)$. Since $\mathrm{Spec}\, \mathfrak{F}_1^{\mathrm{poly}}=Z_P$, \begin{align*}
\mathrm{trdeg}\,\mathfrak{F}_1^{\mathrm{poly}} = \dim Z_P = n-(r-s),
\end{align*} as claimed.  

On the other hand, if $A = T$ with $W$ is regular, from the construction above, we find that the $G$-invariant generators $C_1,\ldots,C_r$ are in one-to-one correspondence with the restricted algebraically independent functions in $S(\mathfrak{m} -\varepsilon W)^T$. That is, $\mathrm{Res}_W(C_i)$ are algebraically independent for all $i = 1,\ldots,r$. Hence, $\chi(\mathfrak{m} -\varepsilon W) $ is Zariski dense in $\mathbb{A}^r$. In this case, by the definition of $\chi$, we have \begin{align*}
    \Gamma = \overline{\chi(Z_P)} = \mathbb{A}^r \text{ and } s = \mathrm{trdeg}\, \mathrm{Im \, Res}_W = r. 
\end{align*} Hence, \eqref{eq:dimZP} becomes $\dim Z_P= r +  (n-r) = n$. Since $\mathfrak{g}$ is $n$-dimensional,  any subset of dimension $n$ must be Zariski dense. Hence, $Z_P$ is Zariski dense in $\mathfrak{g}$. It follows that \begin{align*}
    \dim \mathrm{Spec}\, \mathfrak{F}_1^{\mathrm{poly}} = \dim Z_P = n \text{ if } A = T.
\end{align*}

We now focus on the spectrum of $\mathfrak{F}_2^{\mathrm{poly}}$ and its rank. The image of the map $P$ is $\mathrm{Ad}(G)(\mathfrak{m}-\varepsilon W)$, and thus its coordinate ring is equal to the quotient of $S(\mathfrak{g})$ by the vanishing ideal of that cone. Since $\pi_\mathfrak{m}^*$ is injective, we have an isomorphism of $\mathfrak{F}_2^{\mathrm{poly}}$ with the invariant ring $S(\mathfrak{m})^A$. Thus, $\mathrm{rank}\, \mathfrak{F}_2^{\mathrm{poly}} = \dim \mathrm{Spec} \,( \mathfrak{m}//A)$. \end{proof}
 
\begin{remark}
    (i) When $W$ is regular and $A=T$, as we presented in the proof, the image of the restriction $\mathrm{Res}$ (as a subalgebra) is of rank ($s=r$), hence $\mathrm{trdeg} \,\mathfrak{F}_1^{\mathrm{poly}}=n$, recovering the familiar regular case.

(ii) In the irregular case $s<r$, $Z_P \subset \mathfrak{g}$ with codimension $r-s$ is entirely accounted for by the invariant equations $C_j=\text{const}$, whose restrictions $\mathrm{Res}(C_j)$ are algebraically dependent or constant on the slice. No extra non-invariant equations contribute to the codimension, as $\chi(Z_P)=\Gamma$ and the generic fiber of $\chi\vert_{Z_P}$ already have the maximal orbit dimension $n-r$.

(iii) Later in Section \ref{sec:examples}, we will see an example with $G=\mathrm{SU}(3)$ and $A=\mathrm{S}(\mathrm{U}(1)\times \mathrm{U}(2))$. In this case, we have $n=8$, $r=2$, $s=1$. Hence, $\mathrm{trdeg}\,\mathfrak{F}_1^{\mathrm{poly}}=8-(2-1)=7$. This matches the explicit coordinate computations and the adjoint-quotient dimension count in Lemma \ref{lem:rankacc}.
\end{remark}

 In what follows, we focus on constructing the intersection and union of two algebras $\mathfrak{F}_1^{\mathrm{poly}}$ and $\mathfrak{F}_2^{\mathrm{poly}}$. 
 We will determine on the finitely-generated algebra consisting of generators from the set $F_1^{\mathrm{poly}} \cap  F_2^{\mathrm{poly}}$. Let $F,A, B$ be a commutative $\mathbb{R}$-algebra. Suppose further that $A,B$ are subalgebras of $F$ via fixed embeddings $  A \xhookrightarrow{\iota_A} F    \xhookleftarrow{\iota_B} B$. Then the intersection is the following algebra: \begin{align*}
    A \cap B = \left\{ f \in F: f \in A \text{ and } f \in B \right\},
\end{align*} which is canonically isomorphic to the fiber product as follows \begin{align}
    A \cap B \cong \left\{(a,b) \in A \times B : \iota_A (a) = \iota_B(b)\right\} : = A \times_F B. \label{eq:intersection}
\end{align}  This is summarised into the following lemma.

\begin{lemma} 
\label{lem:interfiber}
Let $\mathcal{R}$ be a commutative $\mathbb{R}$-algebra and let $\iota_A:A\hookrightarrow F$, $\iota_B:B\hookrightarrow F$ be injective $\mathbb{R}$-algebra homomorphisms. Identify $A,B$ with their images in $\mathcal{R}$. Then $A\cap B  \cong  A\times_F B  $ as $\mathbb{R}$-algebras, via $ \Phi:A\cap B\to A\times_F B$, $\,\Phi(f)=(f,f)$, with inverse $(a,b)\mapsto \iota_A(a)=\iota_B(b)\in F$.
\end{lemma}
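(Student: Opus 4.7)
The plan is a direct verification that the stated map $\Phi$ and its candidate inverse are mutually inverse $R$-algebra homomorphisms; the whole argument reduces to unwinding the fiber product definition once the identifications $A\cong\iota_A(A)$ and $B\cong\iota_B(B)$ are made precise.

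First, I would make $\Phi$ unambiguous without the identification abuse. For $f\in A\cap B$, viewed as a subset of $F$, injectivity of $\iota_A$ and $\iota_B$ yields unique $a\in A$ and $b\in B$ with $\iota_A(a)=f=\iota_B(b)$; define $\Phi(f):=(a,b)\in A\times B$. Since $\iota_A(a)=\iota_B(b)$ holds by construction, this pair lies in the fiber product $A\times_F B=\{(a,b)\in A\times B:\iota_A(a)=\iota_B(b)\}$. Under the identifications, this matches the formula $\Phi(f)=(f,f)$. The check that $\Phi$ preserves the $R$-algebra structure is routine: sums, products, scalars, and the unit in $A\times_F B$ are inherited componentwise from $A\times B$, and the $R$-algebra property of $\iota_A,\iota_B$ implies that if $(a,b),(a',b')$ are the unique preimages of $f,f'$, then $(a+a',b+b')$, $(aa',bb')$, and $(r\cdot a,r\cdot b)$ for $r\in R$ are the unique preimages of $f+f'$, $ff'$, and $r\cdot f$, respectively.

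Next, I would define $\Psi:A\times_F B\to A\cap B$ by $\Psi(a,b):=\iota_A(a)=\iota_B(b)$; this element lies in $\iota_A(A)\cap\iota_B(B)$ by the defining equation of the fiber product, hence in $A\cap B$ under the identification. Again $\Psi$ is an $R$-algebra map since the operations on the fiber product are componentwise and $\iota_A,\iota_B$ are algebra homomorphisms. Finally, the compositions $\Psi\circ\Phi$ and $\Phi\circ\Psi$ are both the identity: for $f\in A\cap B$, $\Psi(\Phi(f))=\iota_A(a)=f$ by definition of $a$; and for $(a,b)\in A\times_F B$, writing $f=\iota_A(a)=\iota_B(b)$, the uniqueness of preimages (again by injectivity of $\iota_A,\iota_B$) forces $\Phi(f)=(a,b)$.

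There is no genuine obstacle in this lemma: the only point requiring care is keeping the two copies of the identification $A\hookrightarrow F\hookleftarrow B$ straight, which is precisely what forces the use of the fiber product rather than a naive intersection inside $A\times B$. Once $\Phi$ is formulated via the unique preimages guaranteed by injectivity, both the homomorphism property and the bijectivity are immediate from the universal property of $A\times_F B$ in the category of commutative $R$-algebras.
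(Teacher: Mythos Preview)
Your proof is correct and follows essentially the same approach as the paper: both verify directly that $\Phi(f)=(f,f)$ lands in the fiber product, that $(a,b)\mapsto \iota_A(a)=\iota_B(b)$ lands in $A\cap B$, and that these are mutually inverse algebra homomorphisms. Your version is simply more careful about unwinding the identifications via unique preimages, whereas the paper condenses this to a couple of sentences.
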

\begin{proof}
If $f \in A\cap B$, then $(f,f)\in A\times_F B$ since the two images in $\mathcal{R}$ coincide. Conversely, for $(a,b)\in A\times_F B$ we have $\iota_A(a)=\iota_B(b)\in F$, so this common element lies in $A\cap B$. Both maps are algebra homomorphisms and inverses of each other.
\end{proof}

Now, in our case, set $  \mathcal{O}(T^*M)$, $A = \mathfrak{F}_1^{\mathrm{poly}} \subset \mathcal{O}(T^*M)$ and $ B = \mathfrak{F}_2^{\mathrm{poly}} \subset \mathcal{O}(T^*M)$, and let $\mathfrak{F}_1^{\mathrm{poly}} \cap \mathfrak{F}_2^{\mathrm{poly}}   $ be the intersection algebra. Using Lemma \ref{lem:interfiber}, we have the following formal definition of intersection algebra \begin{align}
\nonumber
    \mathfrak{F}_1^{\mathrm{poly}} \cap  \mathfrak{F}_2^{\mathrm{poly}} := & \, \left\{ f\in \mathcal{O}(T^*M): h \in S(\mathfrak{g})^{G}, \ \theta \in S(\mathfrak{m})^A \text{ such that } f = h \circ P = \theta \circ \pi_\mathfrak{m} \right\} \\
    \cong & \, \mathfrak{F}_1^{\mathrm{poly}} \times_{\mathcal{O}(T^*M)} \mathfrak{F}_2^{\mathrm{poly}}. \label{eq:algfirinte}
\end{align} 

We now focus on the identification of the intersection algebra. Starting with the following lemma:

\begin{lemma}
\label{lem:induce} \cite{MR1920389}
    Let $G$ be a compact Lie group, and let $W,V,T$ be the $G$-modules. Then the short exact sequence of $G$-modules \begin{align*}
        0 \longrightarrow W \xrightarrow{\,\iota\,} T \xrightarrow{\,q\,} V \longrightarrow 0
    \end{align*} induce the following $G$-invariants short exact sequence \begin{align*}
                0 \longrightarrow W^G \xrightarrow{\,\iota\,} T^G \xrightarrow{\,q\,} V^G \longrightarrow 0
    \end{align*}
\end{lemma}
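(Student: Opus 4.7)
The plan is to establish exactness of the invariant sequence position by position. First, I would observe that taking $G$-invariants is a left-exact functor, which gives injectivity of $\iota\vert_{W^G}:W^G\hookrightarrow T^G$ for free (since $\iota$ is already injective on $T$), and gives exactness at $T^G$ by a short diagram chase: if $t\in T^G$ satisfies $q(t)=0$, then exactness of the original sequence produces a unique $w\in W$ with $\iota(w)=t$; by $G$-equivariance of $\iota$, the element $g\cdot w$ also maps to $g\cdot t=t$, so uniqueness forces $g\cdot w=w$ for all $g\in G$, i.e.\ $w\in W^G$.

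The only non-formal step is surjectivity of $q\vert_{T^G}:T^G\twoheadrightarrow V^G$, and this is where compactness of $G$ enters through the Reynolds averaging operator. Let $d\mu$ denote the normalized Haar measure on $G$, and define
\begin{align*}
R_G:T\longrightarrow T,\qquad R_G(t):=\int_G g\cdot t\,d\mu(g).
\end{align*}
Continuity of the $G$-action together with compactness guarantees that this integral converges and defines a $G$-equivariant $\mathbb{R}$-linear projection $T\twoheadrightarrow T^G$ with $R_G\vert_{T^G}=\mathrm{id}$. Given any $v\in V^G$, surjectivity of $q:T\to V$ yields some lift $t\in T$ with $q(t)=v$. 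Since $q$ is $G$-equivariant and $\mathbb{R}$-linear, it commutes with the averaging integral, so
\begin{align*}
q\bigl(R_G(t)\bigr)=\int_G q(g\cdot t)\,d\mu(g)=\int_G g\cdot q(t)\,d\mu(g)=\int_G g\cdot v\,d\mu(g)=v,
\end{align*}
where the last equality uses $v\in V^G$ and $\int_G d\mu=1$. Thus $R_G(t)\in T^G$ is the desired invariant preimage.

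Combining the three exactness statements yields the short exact sequence of $G$-invariants. The only potential obstacle is a technical one rather than a conceptual one: ensuring that $R_G$ is well-defined and preserves whatever additional structure the modules $W,T,V$ carry in the applications. For the present paper the modules appearing are finite-dimensional real $G$-modules (or direct limits thereof, such as symmetric algebras), on which averaging over the compact group $G$ is automatically continuous, $\mathbb{R}$-linear, and commutes with all $G$-equivariant morphisms; no further care is required. The proof therefore reduces to invoking the Reynolds operator, which is the standard mechanism by which the invariants functor becomes exact in the compact (or reductive) setting.
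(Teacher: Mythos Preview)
Your proof is correct and is the standard Reynolds-operator argument. The paper does not supply its own proof of this lemma; it simply cites it from the literature (Brion, Broer), and the averaging argument you give is exactly the mechanism those references use to show that taking invariants under a compact (or linearly reductive) group is an exact functor.
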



\begin{proposition}
\label{prop:inter} Let $\mathfrak{F}_1^{\mathrm{poly}},\mathfrak{F}_2^{\mathrm{poly}}$ be as defined above, and let $Z_P = \mathrm{Ad}(G)(\mathfrak{m} - \varepsilon W) \subset \mathfrak{g}$. Let $\mathrm{Res}_W:S(\mathfrak{g})^G \rightarrow S(\mathfrak{m}-\varepsilon W)^A$ denote the restriction mapping. Then \begin{align}
    R_0:=\mathfrak{F}_1^{\mathrm{poly}} \cap \mathfrak{F}_2^{\mathrm{poly}}= P^* \left(\mathrm{Im \,Res_W}\right) = \left\{ h\circ P :\ h\in S(\mathfrak{g})^G, h\vert_{Z_P}\text{ well-defined} \right\}.
\end{align} Equivalently, $R_0=\{ f\in\mathcal{O}(T^*M):   \text{ there exist } C\in S(\mathfrak{g})^G,\ \theta\in S(\mathfrak{m}-\varepsilon W)^A \text{ such that }\ f=C \circ P= \pi_{\mathfrak{m}}^* \theta \}$.
\end{proposition}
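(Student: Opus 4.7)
The plan is to prove the set equality by a double inclusion argument, isolating $G$-invariance as the decisive property and then using Lemma \ref{lem:induce} to lift $G$-invariants from $Z_P$ to all of $\mathfrak{g}$.

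For the easy direction, I would show $P^*(\mathrm{Im}\,\mathrm{Res}_W) \subseteq R_0$ by a direct calculation. Given any $C \in S(\mathfrak{g})^G$, the $G$-invariance of $C$ together with $P(g,X)=\mathrm{Ad}(g)(X-\varepsilon W)$ yields
\begin{align*}
(P^*C)(g,X) = C\big(\mathrm{Ad}(g)(X-\varepsilon W)\big) = C(X-\varepsilon W) = \mathrm{Res}_W(C)(X) = \pi_\mathfrak{m}^*\big(\mathrm{Res}_W(C)\big)(g,X).
\end{align*}
Since $\mathrm{Res}_W(C) \in S(\mathfrak{m}-\varepsilon W)^A$ by Lemma \ref{lem:welldefineres}, the function $P^*C$ lies in $\mathfrak{F}_1^{\mathrm{poly}} \cap \mathfrak{F}_2^{\mathrm{poly}} = R_0$.

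For the reverse inclusion, I would take $f \in R_0$ and write $f = h\circ P = \theta\circ \pi_\mathfrak{m}$ for some $h \in S(\mathfrak{g})$ and $\theta \in S(\mathfrak{m}-\varepsilon W)^A$. The key observation is that $\pi_\mathfrak{m}$ is left $G$-invariant by Lemma \ref{lem:property}, so $\theta \circ \pi_\mathfrak{m}$ is a left $G$-invariant function on $T^*M$. Therefore $h\circ P$ is $G$-invariant. Using the $G$-equivariance of $P$, this forces $h$ to be $\mathrm{Ad}(G)$-invariant on the image $Z_P = \mathrm{Ad}(G)(\mathfrak{m}-\varepsilon W)$, so $h\vert_{Z_P}$ represents a class in $(S(\mathfrak{g})/\mathcal{I}(Z_P))^G \cong \mathbb{R}[Z_P]^G$.

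The main obstacle is to lift this restricted invariant back to a genuine element of $S(\mathfrak{g})^G$. This is where Lemma \ref{lem:induce} becomes essential: applying it to the short exact sequence of $G$-modules
\begin{align*}
0 \longrightarrow \mathcal{I}(Z_P) \longrightarrow S(\mathfrak{g}) \longrightarrow \mathbb{R}[Z_P] \longrightarrow 0
\end{align*}
(where $\mathcal{I}(Z_P)$ is $G$-stable because $Z_P$ is), one obtains surjectivity of $S(\mathfrak{g})^G \twoheadrightarrow \mathbb{R}[Z_P]^G$. Compactness of $G$ (hence complete reducibility) is exactly what makes this step work, and this is the only nontrivial hypothesis of the proposition. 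The lift produces $C \in S(\mathfrak{g})^G$ with $C\vert_{Z_P} = h\vert_{Z_P}$, and since $P$ factors through $Z_P$ we get $C\circ P = h\circ P = f$. Combining with the easy direction, $f = \pi_\mathfrak{m}^*\mathrm{Res}_W(C)$, and the injectivity of $\pi_\mathfrak{m}^*$ (Lemma \ref{lem:poly}) against $f = \theta\circ\pi_\mathfrak{m}$ forces $\theta = \mathrm{Res}_W(C) \in \mathrm{Im}\,\mathrm{Res}_W$. This proves $R_0 \subseteq P^*(\mathrm{Im}\,\mathrm{Res}_W)$ and simultaneously delivers the explicit description of $R_0$ in terms of pairs $(C,\theta)$ with $\theta = \mathrm{Res}_W(C)$.
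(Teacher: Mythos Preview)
Your proposal is correct and follows essentially the same route as the paper's proof: both inclusions are argued in the same way, with the key step being the lift of $h\vert_{Z_P}\in\mathbb{R}[Z_P]^G$ to a global invariant via the exactness of $G$-invariants (Lemma \ref{lem:induce}) applied to the short exact sequence $0\to\mathcal{I}(Z_P)\to S(\mathfrak{g})\to\mathbb{R}[Z_P]\to 0$. The only cosmetic difference is that the paper reduces to finite-dimensional graded pieces $S_d(\mathfrak{g})$ before invoking the lemma and then reassembles, whereas you apply it directly to the full sequence; since $S(\mathfrak{g})$ is a locally finite $G$-module, your shortcut is justified.
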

\begin{proof}

Let $\varphi\in \mathfrak{F}_1^{\mathrm{poly}}\cap \mathfrak{F}_2^{\mathrm{poly}}$. By Definition \ref{def:maps}, there are $h\in S(\mathfrak{g})$ and $\theta\in S(\mathfrak{m}-\varepsilon W)^A$ with \begin{align}
\varphi=P^*h=\pi_{\mathfrak{m}}^*\theta. \label{eq:same}
\end{align} Evaluating at $(g,X)$ and letting $\xi=X-\varepsilon W$, using \eqref{eq:same}, we deduce  $h\bigl(\mathrm{Ad}(g)\,\xi\bigr)=\theta(\xi)$ for all $g \in G$. Hence, $h$ is constant on each $\mathrm{Ad}(G)$-orbit contained in $Z_P$, i.e., $h\vert_{Z_P}\in S(Z_P)^G$.  Consider the short exact sequence of $G$-modules
\begin{align}
0\longrightarrow \mathcal{I}(Z_P)\xrightarrow{\ \iota \ } S(\mathfrak{g})\xrightarrow{\ q\ } S(Z_P) \cong S(\mathfrak{g})/\mathcal{I} (Z_P)\longrightarrow 0, \label{eq:seq}
\end{align} where $\mathcal{I}(Z_P)$ is the ideal of $Z_P$ and $S(Z_P)\cong S(\mathfrak{g})/\mathcal{I}(Z_P)$. Note that $S(\mathfrak{g}) = \bigoplus_{j \geq 0} S^j(\mathfrak{g})$ with each $S^j(\mathfrak{g})$ is finite-dimensional. Moreover, since $G$ is compact and semisimple, all morphisms $\iota$ and $q$ are morphisms of graded $G$-modules. For a homogeneous polynomial $h \in S(\mathfrak{g})$, fixed $d: = \deg h \in \mathbb{N}_0$, we define \begin{align*}
    S_d(\mathfrak{g}) := \bigoplus_{k \leq d} S^k(\mathfrak{g}), \quad S_d(Z_p) := q \left(S_d(\mathfrak{g})\right), \quad \mathcal{I}_d(Z_p) := \mathcal{I}(Z_p) \cap S_d(\mathfrak{g}),
\end{align*} where $S^k(\mathfrak{g})$ consists of degree $k$ homogeneous polynomials.  Then, using \eqref{eq:seq}, we deduce a short exact sequence of graded pieces of $G$-algebra $S(\mathfrak{g})$ as follows: \begin{align}
0\longrightarrow \mathcal{I}_d(Z_P)\xrightarrow{\ \iota \ } S_d(\mathfrak{g})\xrightarrow{\ q\ } S_d(Z_P)  \longrightarrow 0 .\label{eq:finseq}
\end{align} Since $G$ is compact, by Lemma \ref{lem:induce}, the invariant of a finite-dimensional $G$-module will also induce a short exact sequence given by \begin{align}
    0\longrightarrow \mathcal{I}_d(Z_P)^G \xrightarrow{\ \iota \ } S_d(\mathfrak{g})^G \xrightarrow{\ q\ } S_d(Z_P)^G  \longrightarrow 0 \label{eq:finiteG}
\end{align} with $q:S_d(\mathfrak{g})^G \rightarrow S_d(Z_P)^G $ being surjective. 

We now use the degree filtration to extend the surjectivity of $q$ from the finite-dimensional truncations $S_d(\mathfrak{g})^G\to S_d(Z_P)^G$ to the full graded algebra map $q:S(\mathfrak{g})^G\to S(Z_P)^G$. Let $h' = h\vert_{Z_P} \in S(Z_P)^G$ for some $h \in S(\mathfrak{g})$. For fixed $d$ with $h' \in S_d(Z_P)^G$, since $q$ in \eqref{eq:finiteG} is surjective, there exists a $C_d \in S_d(\mathfrak{g})^G$ such that $q(C_d) = h'$. That is, $C_d\vert_{Z_P} = h'$. Using the inclusion $S_d(\mathfrak{g})^G \subset S(\mathfrak{g})^G$ and $S_d(Z_P)^G \subset S(Z_P)^G$, the identity $C_d\vert_{Z_P} = h'$ still holds in $S(Z_P)$. Then, taking $C = C_d \in S(\mathfrak{g})^G$, we have $q(C) = h' \in S(Z_P)^G$. Hence, $q:S(\mathfrak{g})^G\to S(Z_P)^G$ is surjective, and Lemma \ref{lem:induce} can be applied to the whole algebra. Therefore, the induced sequence \begin{align}
0\longrightarrow \mathcal{I}(Z_P)^G\xrightarrow{\,\iota\,} S(\mathfrak{g})^G \xrightarrow{\,q\, }S(Z_P)^G\longrightarrow 0 \label{eq:induceseq}
\end{align} is exact.  In particular, the restriction map $q$ is surjective, so there exists $C \in S(\mathfrak{g})^G$ with $C\vert_{Z_P}=h\vert_{Z_P} $.  Since $P(T^*M)=\mathrm{Ad}(G)(\mathfrak{m}- \varepsilon W)$, for all $(g,X) \in T^*M$, we have \begin{align*}
    C(P(g,X)) = C(\mathrm{Ad}(g)\xi) = h(\mathrm{Ad}(g)\xi) = \theta(\xi).
\end{align*} It follows that \begin{align*}
P^*C = P^*h = \varphi.
\end{align*} In particular, evaluating at $g =e$ yields $C(\xi)= h(\xi) = \theta(\xi)$ for all $\xi\in\mathfrak{m} - \varepsilon W$. We thus obtain $C \in S(\mathfrak{g})^G$ such that $C(\zeta) = h(\zeta)$ for all $\zeta \in Z_P$. In particular, $\mathrm{Res}_W(C) =\mathrm{Res}_W(h) $ on $\mathfrak{m} -\varepsilon W$. Hence, $\mathrm{Res}_W(C) = \theta$ and \begin{align*}
\varphi = P^*C= \pi_{\mathfrak{m}}^*\mathrm{Res}_W(C) \in P^*(\mathrm{Im}\mathrm{Res}_W).
\end{align*} Thus, $\mathfrak{F}_1^{\mathrm{poly}}\cap \mathfrak{F}_2^{\mathrm{poly}}\subseteq P^*(\mathrm{Im}\mathrm{Res}_W)$.

For the reverse inclusion, let $C \in S(\mathfrak{g})^G$ and put $\theta:=\mathrm{Res}(C)\in S(\mathfrak{m} -\varepsilon W)^A$. Then, for all $(g,X)$ with $\xi=X-\varepsilon W$, the pullback of these generators through the moment map $P$ is given by \begin{align}
P^*C(g,X) = C\bigl(\mathrm{Ad}(g)\,\xi\bigr)=C(\xi)=\theta(\xi)=\pi_{\mathfrak{m}}^*\theta(g,X), \label{eq:equalff}
\end{align}  so $P^*C\in \mathfrak{F}_1^{\mathrm{poly}}\cap \mathfrak{F}_2^{\mathrm{poly}}$, and therefore $P^*(\mathrm{Im}\mathrm{Res}_W)\subseteq \mathfrak{F}_1^{\mathrm{poly}}\cap \mathfrak{F}_2^{\mathrm{poly}}$. The equality \eqref{eq:equalff} implies that $\theta$ is the restriction of $C$ in the subspace $\mathfrak{m}$. Therefore, we have explicitly constructed a polynomial $C\in S(\mathfrak{g})^G$ such that $\theta=\mathrm{Res}(C)$, and thus $\theta\in\mathrm{Im}\mathrm{Res}_W$. Consequently, the original polynomial $\varphi$ is equal to $P^*(C) = P^*(\theta)$, which shows that all the elements $\varphi \in R_0$ are in the image of the pullback of the restriction map. Thus, we have shown inclusion $R_0 \subseteq P^*(\mathrm{Im} \mathrm{Res}_W)$.
\end{proof}
\begin{remark}
    From the proof of Proposition \ref{prop:inter}, the induced short exact sequence \eqref{eq:induceseq} shows that $$S(\mathfrak{g})^G/\mathcal{I}(Z_P)^G \cong S(Z_P)^G.$$ This also implies that $R_0 = S(Z_P)^G$.
\end{remark}
 \begin{corollary}\label{lem:domainstorsionfree}
Let $R_0,\mathfrak{F}_1^{\mathrm{poly}} $ and $\mathfrak{F}_2^{\mathrm{poly}}$ be the same as defined above.  Then $\mathfrak{F}_1^{\mathrm{poly}}$, $\mathfrak{F}_2^{\mathrm{poly}}$, and $R_0$ are finitely generated integral domains. Moreover, the inclusions \begin{align*}
\iota_1: R_0\hookrightarrow \mathfrak{F}_1^{\mathrm{poly}},\qquad \iota_2: R_0\hookrightarrow \mathfrak{F}_2^{\mathrm{poly}}
\end{align*} make each factor a torsion free $R_0$-module.
\end{corollary}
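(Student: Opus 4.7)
The plan is to deduce the entire statement from the single structural observation that the ambient algebra $\mathcal{O}(T^*M)=(\mathbb{R}[G]\otimes S(\mathfrak{m}))^A$ is itself an integral domain. To establish this, I would note that $G$ is connected semisimple and thus irreducible as a real algebraic group, so $\mathbb{R}[G]$ is a domain; since $\mathfrak{m}$ is a vector space, $\mathbb{R}[G]\otimes S(\mathfrak{m})$ is the coordinate ring of the irreducible variety $G\times\mathfrak{m}$ and hence a domain. The subring of $A$-invariants $\mathcal{O}(T^*M)$ sits inside this domain (the $A$-action is by ring automorphisms), and a subring of a domain is a domain. Because $\mathfrak{F}_1^{\mathrm{poly}}$, $\mathfrak{F}_2^{\mathrm{poly}}$, and $R_0$ are all, by construction, subrings of $\mathcal{O}(T^*M)$ realised as images of the ring homomorphisms $P^*$ and $\pi_{\mathfrak{m}}^*$, they inherit the integral domain property immediately.

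For finite generation I would treat the three algebras separately. First, $\mathfrak{F}_1^{\mathrm{poly}}=P^*S(\mathfrak{g})$ is the homomorphic image of the polynomial ring $S(\mathfrak{g})\cong \mathbb{R}[x_1,\dots,x_n]$ with $n=\dim\mathfrak{g}$, so it is generated by the $n$ linear moment coordinates $P_i=P^*x_i$ of \eqref{eq:momcoord}. Second, for $\mathfrak{F}_2^{\mathrm{poly}}=\pi_{\mathfrak{m}}^*\bigl(S(\mathfrak{m}-\varepsilon W)^A\bigr)$, I invoke Lemma \ref{lem:property}(ii) to identify $S(\mathfrak{m}-\varepsilon W)^A\cong S(\mathfrak{m})^A$, and then apply the Hilbert--Nagata theorem: since $A$ is a closed subgroup of the compact group $G$ it is compact and hence linearly reductive, so $S(\mathfrak{m})^A$ is finitely generated, and finite generation descends along $\pi_{\mathfrak{m}}^*$. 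Third, by Proposition \ref{prop:inter} I have the explicit description $R_0=P^*(\mathrm{Im}\,\mathrm{Res}_W)$; by Chevalley's Restriction Theorem \eqref{eq:crt}, $S(\mathfrak{g})^G=\mathbb{R}[C_1,\dots,C_r]$ is a polynomial ring, so $\mathrm{Im}\,\mathrm{Res}_W$ is generated by the finite set $\{\mathrm{Res}_W(C_i)\}_{i=1}^r$, and pulling back along $P^*$ preserves this.

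The torsion-free statement will then be essentially automatic. Because $\mathfrak{F}_1^{\mathrm{poly}}$ and $\mathfrak{F}_2^{\mathrm{poly}}$ are integral domains containing $R_0$ as a subring via the inclusions $\iota_1,\iota_2$, any relation $r\cdot f=0$ with $0\neq r\in R_0$ and $f\in\mathfrak{F}_i^{\mathrm{poly}}$ is an equation in the integral domain $\mathfrak{F}_i^{\mathrm{poly}}$, which forces $f=0$; hence each $\mathfrak{F}_i^{\mathrm{poly}}$ has no $R_0$-torsion.

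Honestly, there is no serious obstacle: the corollary is a bookkeeping consequence of Lemma \ref{lem:poly}, Lemma \ref{lem:property}, Proposition \ref{prop:inter}, together with Chevalley and Hilbert--Nagata. If any step deserves care, it is the precise assertion that $\mathcal{O}(T^*M)$ is a domain in the real-algebraic setting; I would, if pressed, either complexify and appeal to irreducibility of $G_{\mathbb{C}}\times\mathfrak{m}_{\mathbb{C}}$ (whose $A$-invariant subring embeds into its field of fractions), or argue directly from the fact that $\mathbb{R}[G]$ is a domain for connected $G$ and that passing to $A$-invariants only restricts, rather than enlarges, the ring.
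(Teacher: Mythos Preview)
Your proof is correct and slightly cleaner than the paper's on the domain claim. The paper argues case by case: for $\mathfrak{F}_1^{\mathrm{poly}}$ it shows that $Z_P=\mathrm{Ad}(G)(\mathfrak{m}-\varepsilon W)$ is the image of the irreducible variety $G\times(\mathfrak{m}-\varepsilon W)$ under a morphism, hence $\mathcal{I}(Z_P)$ is prime and $\mathfrak{F}_1^{\mathrm{poly}}\cong\mathbb{R}[Z_P]$ is a domain; for $\mathfrak{F}_2^{\mathrm{poly}}$ it uses injectivity of $\pi_{\mathfrak{m}}^*$ to embed into $S(\mathfrak{m})$; and for $R_0$ it records the identification $R_0\cong\mathbb{R}[Z_P]^G$ as a subring of $\mathbb{R}[Z_P]$. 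Your approach dispatches all three at once via the single observation that $\mathcal{O}(T^*M)$ is a domain and every subring inherits this. The torsion-free argument is identical in both. What the paper's route buys is the explicit identifications $\mathfrak{F}_1^{\mathrm{poly}}\cong\mathbb{R}[Z_P]$ and $R_0\cong\mathbb{R}[Z_P]^G$, which it invokes later (see Remark~\ref{re:poisscenter}); what your route buys is uniformity and an explicit treatment of finite generation, which the paper's proof leaves implicit.
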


\begin{proof}
Write $Z_P =P(T^*M) = \mathrm{Ad}(G)(\mathfrak{m}-\varepsilon W) \subset \mathfrak{g}$. By Lemma \ref{lem:property}, we have a morphism \begin{align*}
\Phi :\ G \times(\mathfrak{m}-\varepsilon W) \longrightarrow \mathfrak{g},\quad (g,\xi)\longmapsto \mathrm{Ad}(g)\xi
\end{align*} that has an irreducible image.\footnote{Here irreducible is in the Zariski sense: $G\times(\mathfrak{m}-\varepsilon W)$ is irreducible, and the image of an irreducible variety under a morphism is irreducible (equivalently, its Zariski closure is irreducible).} Therefore, its ideal $\mathcal{I}(Z_P)\subset S(\mathfrak{g})$ is prime, and $\mathbb{R}[Z_P]:=S(\mathfrak{g})/\mathcal{I}(Z_P)$ is a domain. By Lemma \ref{lem:poly}, we have $\mathfrak{F}_1^{\mathrm{poly}}=P^*S(\mathfrak{g}) \cong \mathbb{R}[Z_P]$. Therefore, $\mathfrak{F}_1^{\mathrm{poly}}$ is a domain. On the other hand, the representative-level map $\widetilde{\pi}_{\mathfrak{m}}:G\times\mathfrak{m}\to\mathfrak{m}-\varepsilon W$ is surjective. Therefore, if $\pi_{\mathfrak{m}}^*\theta=0$ for $\theta\in S(\mathfrak{m}-\varepsilon W)^A$, then $0=q^*(\pi_{\mathfrak{m}}^*\theta)=\theta\circ\widetilde{\pi}_{\mathfrak{m}}$, and surjectivity gives $\theta=0$. Thus $\pi_{\mathfrak{m}}^*$ is injective on $S(\mathfrak{m}-\varepsilon W)^A$. By Lemma \ref{lem:poly} again, $\mathfrak{F}_2^{\mathrm{poly}}=\pi_{\mathfrak{m}}^*S(\mathfrak{m} -\varepsilon W)^A\cong S(\mathfrak{m})^A$, a subring of the domain $S(\mathfrak{m})$. Hence, $\mathfrak{F}_2^{\mathrm{poly}}$ is a domain.

By Proposition \ref{prop:inter}, we have \begin{align*}
R_0=\mathfrak{F}_1^{\mathrm{poly}}\cap\mathfrak{F}_2^{\mathrm{poly}} =P^* \big(\mathrm{Im}\mathrm{Res}\big) \cong \mathbb{R}[Z_P]^G.
\end{align*} Therefore, $R_0$ is a subring of the domain $\mathbb{R}[Z_P]^G$ and hence is itself a domain.

Finally, by the definition of torsion free: If $0\neq r\in R_0$ and $x\in \mathfrak{F}_i^{\mathrm{poly}}$ satisfy $rx=0$ in $\mathfrak{F}_i^{\mathrm{poly}}$, then since $\mathfrak{F}_i^{\mathrm{poly}}$ is a domain, we must have $x=0$. Thus, each $\mathfrak{F}_i^{\mathrm{poly}}$ is torsion free over $R_0$.
\end{proof}

 \begin{remark}
 \label{re:poisscenter}
By Proposition \ref{prop:inter} and Corollary \ref{lem:domainstorsionfree}, we have an identification  $ R_0      \cong \mathbb{R}[Z]^G$, i.e., $R_0$ consists of the restrictions to $Z_P=\mathrm{Ad}(G)(\mathfrak{m}-\varepsilon W)$ of the global $G$-invariants on $\mathfrak{g}^*$.\footnote{In particular, $R_0$ is a quotient of $S(\mathfrak{g})^G$ by the ideal of invariant polynomials vanishing on $Z$. It is not literally a subring of $S(\mathfrak{g})^G$.} With the block Poisson bracket on  \begin{align*}
    \widehat{\mathcal{A}} :=  \mathfrak{F}_1^{\mathrm{poly}} \otimes_{R_0} \mathfrak{F}_2^{\mathrm{poly}},  
\end{align*} every $r\in R_0$ Poisson commutes with both blocks. Thus,   $R_0  \subset \mathcal{Z}( \widehat{\mathcal{A}} )$.  Another side of inclusion $\mathcal{Z}( \widehat{\mathcal{A}} ) \subset R_0$ and equality  \begin{align*}
   \mathcal{Z}( \widehat{\mathcal{A}} ) =  R_0 
\end{align*} (on a Zariski open dense subset) is non-trivial and is proved later in Theorem \ref{thm:dimea}. 
\end{remark}

\begin{corollary}
\label{cor:A-torsion free}
The algebra $\mathcal{A}=\textbf{Alg} \left\langle \mathfrak{F}_1^{\mathrm{poly}}\cup \mathfrak{F}_2^{\mathrm{poly}} \right\rangle\subset \mathcal{O}(T^*M)$ is an integral domain. In particular, $\mathcal{A}$ is $R_0$ torsion free.
\end{corollary}

\begin{proof}
Let $L=\mathrm{Frac}(\mathfrak{F}_1^{\mathrm{poly}})$ and $M = \mathrm{Frac}(\mathfrak{F}_2^{\mathrm{poly}})$ inside $ \mathrm{Frac}\left(\mathcal{O}(T^*M)\right)$, and let $LM = \mathrm{Frac} \left(\textbf{Alg} \langle L \cup M \rangle \right)$ be their compositum\footnote{We denote by $LM$ the compositum of $L$ and $M$ as the smallest subfield containing both $L$ and $M$. That is, $LM := \bigcap \{K \subset \mathrm{Frac}(\mathcal{O}(T^*M)): \, K \text{ is a subfield and } L,M \subset K\}$.}. See, for instance, \cite[Chapter 5]{LangAlgebra} for more detailed discussion of the compositum of two fields. Since $\mathfrak{F}_1^{\mathrm{poly}}\subset L$ and $\mathfrak{F}_2^{\mathrm{poly}}\subset M$, the subalgebra of $\mathrm{Frac}\left(\mathcal{O}(T^*M)\right)$ generated by $\mathfrak{F}_1^{\mathrm{poly}}$ and $\mathfrak{F}_2^{\mathrm{poly}}$ satisfies \begin{align*}
\mathcal{A}=\textbf{Alg}\langle \mathfrak{F}_1^{\mathrm{poly}}\cup \mathfrak{F}_2^{\mathrm{poly}}\rangle \subset LM\,.
\end{align*} Since $LM$ is a field, any subring of $LM$ is an integral domain. Hence, $\mathcal{A}$ is an integral domain.

By Corollary \ref{lem:domainstorsionfree}, the inclusions $R_0\hookrightarrow \mathfrak{F}_i^{\mathrm{poly}}$ are injective for $i = 1,2$. Therefore, the induced map $R_0\hookrightarrow \mathcal{A}$ is injective, so we may regard $R_0$ as a subring of the domain $\mathcal{A}$. If $0 \neq r \in R_0$ and $a \in \mathcal{A}$ satisfy $ra = 0$ in $\mathcal{A}$, then $r\neq 0$ in $\mathcal{A}$ and, since $\mathcal{A}$ has no zero divisors, it follows that $a = 0$. Thus, $\mathcal{A}$ is $R_0$ torsion free.
\end{proof}

\medskip

Let $\mathcal{A} = \textbf{Alg} \left\langle F_1^{\mathrm{poly}} \cup F_2^{\mathrm{poly}} \right \rangle$. In what follows, we will provide a more detailed description of the union algebra $\mathcal{A}$.   

We begin by analyzing the fiber tensor algebra \begin{align*}
\widehat{\mathcal{A}}:=\mathfrak{F}_1^{\mathrm{poly}}\otimes_{R_0}\mathfrak{F}_2^{\mathrm{poly}}, \qquad  \mu:\widehat{\mathcal{A}}\longrightarrow \mathcal{O}(T^*M),
\end{align*} obtained via the canonical multiplication map. From Appendix \ref{appA}, we deduce that $\widehat{\mathcal{A}}$ is a Poisson algebra with a Poisson bracket $\{\cdot,\cdot\}_{\widehat{\mathcal{A}}}$.


Let \begin{align} \mu:\widehat{\mathcal{A}}\longrightarrow \mathcal{O}(T^*M),\qquad \mu\Bigl(\sum_i f_i\otimes\phi_i\Bigr)=\sum_i f_i\cdot\phi_i   \label{eq:defmu} \end{align} be an algebra homomorphism, where the product on the right is the pointwise product in $\mathcal{O}(T^*M)$. Here we consider $\mathfrak{F}_1^{\mathrm{poly}},\mathfrak{F}_2^{\mathrm{poly}}\subset \mathcal{O}(T^*M)$. We can further show that $\mu$ is a Poisson homomorphism. Although $\ker\mu$ could initially be nonzero, the arguments in Appendix \ref{appB} below will show that, in our setting, $\ker\mu=0$. 

\begin{proposition}
    The map $\mu$ is a homomorphism of Poisson algebras \begin{align*}
   \mu(g g')=\mu(g)\mu(g'),\qquad \mu\bigl(\{g,g'\}_{\widehat{\mathcal{A}}}\bigr)=\{\mu(g),\mu(g')\}_\varepsilon \quad\text{for all }g : = f \otimes \phi,g': = f' \otimes \phi' \in\widehat{\mathcal{A}}. 
\end{align*} In particular, its kernel $\mathcal{I} := \ker \mu$ is a Poisson ideal of $\left(\widehat{\mathcal{A}},\{\cdot ,\cdot \}_{\widehat{\mathcal{A}}}\right)$.
\end{proposition}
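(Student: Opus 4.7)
The plan is to prove the proposition in three ordered steps: first check that $\mu$ descends from the free tensor product $\mathfrak{F}_1^{\mathrm{poly}}\otimes \mathfrak{F}_2^{\mathrm{poly}}$ through the ideal $J$ and is multiplicative; second, verify the Poisson identity by a direct Leibniz computation that exploits the vanishing of the mixed bracket; and third, deduce that $\ker\mu$ is a Poisson ideal as a formal consequence. Since both multiplicativity and Poisson-preservation are bilinear in their arguments, it suffices to check them on simple tensors $f\otimes\phi$ with $f\in\mathfrak{F}_1^{\mathrm{poly}}$ and $\phi\in\mathfrak{F}_2^{\mathrm{poly}}$, and extend by $\mathbb{R}$-linearity and the universal property of $\otimes_{R_0}$.

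For well-definedness and multiplicativity: on generators of $J$, for $r\in R_0$, we have $\mu((fr)\otimes\phi)=fr\phi=f\cdot(r\phi)=\mu(f\otimes(r\phi))$ by commutativity of the pointwise product in $\mathcal{O}(T^*M)$, so $J\subset\ker\mu$ and $\mu$ descends to $\widehat{\mathcal{A}}$. For multiplicativity,
\begin{align*}
\mu\bigl((f\otimes\phi)(f'\otimes\phi')\bigr)=\mu\bigl((ff')\otimes(\phi\phi')\bigr)=ff'\phi\phi'=(f\phi)(f'\phi')=\mu(f\otimes\phi)\,\mu(f'\otimes\phi'),
\end{align*}
using commutativity of $\mathcal{O}(T^*M)$. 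Hence $\mu$ is an $\mathbb{R}$-algebra homomorphism.

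The Poisson identity is the core computation. Expanding $\{\mu(f\otimes\phi),\mu(f'\otimes\phi')\}_\varepsilon=\{f\phi,f'\phi'\}_\varepsilon$ twice with the Leibniz rule for $\{\cdot,\cdot\}_\varepsilon$ gives
\begin{align*}
\{f\phi,f'\phi'\}_\varepsilon = \{f,f'\}_\varepsilon\,\phi\phi' + f'\{f,\phi'\}_\varepsilon\,\phi + f\,\{\phi,f'\}_\varepsilon\,\phi' + ff'\,\{\phi,\phi'\}_\varepsilon.
\end{align*}
By Lemma \ref{lem:interzero}, the mixed bracket $\{\mathfrak{F}_1^{\mathrm{poly}},\mathfrak{F}_2^{\mathrm{poly}}\}_\varepsilon$ vanishes identically, so the two middle terms drop out. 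Using Proposition \ref{prop:slicepoisson} to identify $\{f,f'\}_\varepsilon\in\mathfrak{F}_1^{\mathrm{poly}}$ and $\{\phi,\phi'\}_\varepsilon\in\mathfrak{F}_2^{\mathrm{poly}}$, the remaining sum equals
\begin{align*}
\mu\bigl(\{f,f'\}_\varepsilon\otimes(\phi\phi')\bigr)+\mu\bigl((ff')\otimes\{\phi,\phi'\}_\varepsilon\bigr)=\mu\bigl(\{f\otimes\phi,f'\otimes\phi'\}_{\widehat{\mathcal{A}}}\bigr)
\end{align*}
by Definition \ref{def:algetensorp}. Extending bilinearly via the Leibniz rule, the identity $\mu(\{g,g'\}_{\widehat{\mathcal{A}}})=\{\mu(g),\mu(g')\}_\varepsilon$ holds on all of $\widehat{\mathcal{A}}$.

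Finally, $\mathcal{I}:=\ker\mu$ is an ordinary ideal because $\mu$ is an algebra homomorphism. To see it is a Poisson ideal, let $g\in\mathcal{I}$ and $g'\in\widehat{\mathcal{A}}$; then $\mu(\{g,g'\}_{\widehat{\mathcal{A}}})=\{\mu(g),\mu(g')\}_\varepsilon=\{0,\mu(g')\}_\varepsilon=0$, so $\{g,g'\}_{\widehat{\mathcal{A}}}\in\mathcal{I}$, which closes the proof. I do not anticipate a major obstacle here: the entire argument is a bookkeeping exercise, with the only genuine input being the mixed-bracket vanishing from Lemma \ref{lem:interzero} and the closure of each block under $\{\cdot,\cdot\}_\varepsilon$ from Proposition \ref{prop:slicepoisson}. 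The nontrivial question of whether $\mu$ is actually injective, alluded to in the surrounding text, is deferred to the subsequent Jacobian argument and is not part of this proposition.
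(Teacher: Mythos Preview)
Your proof is correct and follows essentially the same route as the paper: expand $\{f\phi,f'\phi'\}_\varepsilon$ by Leibniz, kill the mixed terms via Lemma \ref{lem:interzero}, and identify the remainder with $\mu(\{g,g'\}_{\widehat{\mathcal{A}}})$, then deduce the Poisson ideal property formally. You add the explicit check that $\mu$ descends through $J$ and invoke Proposition \ref{prop:slicepoisson} for block closure, which the paper leaves implicit, but otherwise the arguments coincide.
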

 
 \begin{proof}
 We first show the algebra homomorphism. For any $g = f \otimes \phi,g' = f' \otimes \phi' \in \widehat{\mathcal{A}}$, by the definition of \eqref{eq:defmu},  \begin{align*}
     \mu(gg') = \mu(f f' \otimes \phi \phi') = (ff')(\phi\phi') = \mu(g) \mu(g').
 \end{align*}  On the other hand, we show that the Poisson property holds. In the twisted Poisson bracket, \begin{align}
     \{\mu(g),\mu(g')\}_\varepsilon = \{f\phi,f'\phi'\}_\varepsilon = \{f,f'\}_\varepsilon \phi\phi' +  \{f,\phi'\}_\varepsilon \phi f' +  \{\phi,f'\}_\varepsilon f \phi' +  ff' \{\phi,\phi'\}_\varepsilon   \label{eq:twistedmu} 
 \end{align}  It suffices to check for simple tensors. By Lemma \ref{lem:interzero}, we have $\{\mathfrak{F}_1^{\mathrm{poly}},\mathfrak{F}_2^{\mathrm{poly}}\}_\varepsilon=0$. Then the identity \eqref{eq:twistedmu} becomes $ \{\mu(g),\mu(g')\}_\varepsilon =  \{f,f'\}_\varepsilon \phi\phi'  +  ff' \{\phi,\phi'\}_\varepsilon $. On the other hand, using the Leibniz rule for $\{\cdot ,\cdot \}_\varepsilon$ and restricted to $\{\cdot,\cdot\}\vert_{\widehat{\mathcal{A}}}$, we have \begin{align*}
    \mu\left(\{g, g'\}_{\widehat{\mathcal{A}}}\right) =\{f,f'\}_\varepsilon\,\phi\phi' +  ff'\,\{\phi,\phi'\}_\varepsilon =\{f\phi,\ f'\phi'\}_\varepsilon.
\end{align*} We then show that $\mathcal{I}$ is an ideal by using the definition of the Poisson ideal. Indeed, if $a\in\ker\mu$ and $b\in\widehat{\mathcal{A}}$, then
$\mu(\{a,b\}_{\widehat{\mathcal{A}}})=\{\mu(a),\mu(b)\}_\varepsilon=0$. 
 \end{proof} 
 
\begin{remark}
\label{rem:poisclosed}
  Let $\mathcal{A}:=\mu(\widehat{\mathcal{A}})\subset \mathcal{O}(T^*M)$. Define $\{\cdot ,\cdot \}_{\mathcal{A}}$ as the unique Poisson bracket on $\mathcal{A}$ such that the inclusion \begin{align*}
    \iota: (\mathcal{A},\{\cdot ,\cdot \}_{\mathcal{A}})\hookrightarrow (\mathcal{O}(T^*M),\{\cdot ,\cdot \}_\varepsilon)
\end{align*} is a Poisson algebra embedding. Equivalently, \begin{align*}
  (\mathcal{A},\{\cdot ,\cdot \}_{\mathcal{A}})\cong\bigl(\widehat{\mathcal{A}}/\ker\mu,\ \overline{\{\cdot ,\cdot \}}_{\widehat{\mathcal{A}}}\bigr).  
\end{align*} In particular, for any $h,h'\in S(\mathfrak{g})$ and $\theta,\theta'\in S(\mathfrak{m} - \varepsilon W)^A$, together with Lemma \ref{lem:poly} and Lemma \ref{lem:interzero}, we deduce that the Poisson bracket on $\mathcal{A}$ is closed in the following way:  \begin{align}
\nonumber
    \{P^*h, P^*h'\}_{\mathcal{A}}&=P^*\left(\{h,h'\}_1\right) ,\\[2mm]
\{\pi_{\mathfrak{m}}^*\theta, \pi_{\mathfrak{m}}^*\theta'\}_{\mathcal{A}}&=\ \pi_{\mathfrak{m}}^*\Bigl( \{\theta,\theta'\}_2 \Bigr) ,\\[2mm]
\nonumber
\{P^*h, \pi_{\mathfrak{m}}^*\theta\}_{\mathcal{A}}&=0.
\end{align}   Here $\{\cdot,\cdot\}_1$ defines the canonical Lie-Poisson bracket, and $\{\cdot,\cdot\}_2$ defines the Poisson structure on $S(\mathfrak{m} - \varepsilon W)^A$.
\end{remark}

To complete the argument and form a superintegrable chain, we aim to prove that $\ker \mu = \{0\}$, i.e., that the natural map $\mathcal{A}\to\widehat{\mathcal{A}}$ is injective and hence $\mathcal{A}=\widehat{\mathcal{A}}$.

A key point in what follows is that $\widehat{\mathcal{A}}$ is obtained from $\mathfrak{F}_1^{\mathrm{poly}}$ by tensoring over the Poisson center $R_0$ (for instance, $\widehat{\mathcal{A}} = \mathfrak{F}_1^{\mathrm{poly}}\otimes_{R_0}\mathfrak{F}_2^{\mathrm{poly}}$). Thus, to control $\ker\mu$, we need good functorial properties of the operation $-\otimes_{R_0}\mathfrak{F}_1^{\mathrm{poly}}$: in particular, we want tensoring over $R_0$ not to destroy injectivity.

This is why we now pause to understand $\mathfrak{F}_1^{\mathrm{poly}}$ not just abstractly as the coordinate ring of $Z_P$, but as an algebra over $R_0\cong \mathbb{R}[Z_P]^G$. Concretely, we will identify $\mathfrak{F}_1^{\mathrm{poly}}$ as the pullback (i.e. base change) of the adjoint quotient map. This description will be used later, and it has the immediate consequence that $\mathfrak{F}_1^{\mathrm{poly}}$ is flat\footnote{Let $R$ be a commutative ring and $M$ an $R$-module. Recall that $M$ is \emph{flat over $R$} if the functor $-\otimes_R M$ is exact, i.e. it sends every short exact sequence of $R$-modules to a short exact sequence (equivalently, it preserves injections).  See e.g. \cite[Chapter 2, Section 2]{AtiyahMacdonald}.} (indeed, free) over $R_0$.

We emphasize this flatness as it ensures that tensoring over $R_0$ preserves injections: if $M\hookrightarrow N$ is an injection of $R_0$-modules, then $M\otimes_{R_0}\mathfrak{F}_1^{\mathrm{poly}}\hookrightarrow N\otimes_{R_0}\mathfrak{F}_1^{\mathrm{poly}}$ is still injective. In particular, tensoring over $R_0$ does not introduce new relations, and if $R_0$ is a domain, then $\mathfrak{F}_1^{\mathrm{poly}}$ is $R_0$ torsion free.

\begin{proposition}
\label{prop:F1basechange}
Let $C:= S(\mathfrak{g})^G$, let $\chi:\mathfrak{g}\to \mathfrak{g}//G=\mathrm{Spec}\,C$ be the adjoint quotient, and let $\rho:C\twoheadrightarrow R_0$ be the quotient map induced by the restriction of invariants to $Z_P$. Let $\Gamma:= \mathrm{Spec}\,R_0$. 
Then there is a canonical $R_0$-algebra isomorphism
\begin{align*}
\mathfrak{F}_1^{\mathrm{poly}}\cong S(\mathfrak{g})\otimes_C R_0.
\end{align*}
In particular, $\mathfrak{F}_1^{\mathrm{poly}}$ is a free, hence flat, $R_0$-module.
\end{proposition}

\begin{proof}
By Remark \ref{rem:F1poly}, we have $\mathfrak{F}_1^{\mathrm{poly}}\cong \mathbb{R}[Z_P]$. Let $J:= \ker \rho \subset C$, so that $\rho$ induces an isomorphism $R_0\cong C/J$ and hence $\Gamma=\mathrm{Spec} \, R_0\cong \mathrm{Spec} (C/J)$ is the closed subvariety of $\mathfrak{g}//G= \mathrm{Spec}  C$.

Since all varieties here are affine, the fiber product is computed on coordinate rings by tensor product: \begin{align*}
\mathfrak{g}\times_{\mathfrak{g}//G}\Gamma &=\mathrm{Spec} \, S(\mathfrak{g})\times_{\mathrm{Spec} \, C}\mathrm{Spec} \, R_0  =\mathrm{Spec} \big(S(\mathfrak{g})\otimes_C R_0\big).
\end{align*} Using $R_0\cong C/J$, we obtain \begin{align*}
S(\mathfrak{g})\otimes_C R_0\cong S(\mathfrak{g})\otimes_C (C/J)\cong S(\mathfrak{g})/J S(\mathfrak{g}),
\end{align*} where the last isomorphism is the standard identification $A\otimes_C (C/J)\cong A/JA$ for any $C$-algebra $A$. Therefore, \begin{align*}
\mathfrak{g}\times_{\mathfrak{g}//G}\Gamma = \mathrm{Spec}\big(S(\mathfrak{g})\otimes_C R_0\big) = \mathrm{Spec}\big(S(\mathfrak{g})/J S(\mathfrak{g})\big).
\end{align*} Here $J S(\mathfrak{g})$ denotes the extension of the ideal $J\subset C=S(\mathfrak{g})^G$ to $S(\mathfrak{g})$. That is, \begin{align*}
    J S(\mathfrak{g}) := S(\mathfrak{g})\cdot J = \Big\{\sum_{i=1}^n f_i j_i : n\ge 1,\ f_i\in S(\mathfrak{g}),\ j_i\in J\Big\}.
\end{align*} We claim that the underlying closed set of this fiber product is exactly $Z_P$. The inclusion $Z_P\subset \chi^{-1}(\Gamma)$ is immediate. Conversely, let $x\in \chi^{-1}(\Gamma)$. Then $\chi(x)=\chi(\xi)$ for some $\xi\in \mathfrak{m}-\varepsilon W$. Because $G$ is compact, every adjoint orbit in $\mathfrak{g}$ is closed, and Section \ref{subsec:geomi} shows that the quotient map $\chi$ separates closed $G$-orbits. Hence $x\in \mathrm{Ad}(G)\cdot \xi\subset Z_P$. Therefore, $\chi^{-1}(\Gamma)=Z_P$.

By Kostant's theorem, the invariant ring $C=S(\mathfrak{g})^G$ is a polynomial algebra, and the adjoint quotient morphism $\chi:\mathfrak{g}\to \mathfrak{g}//G= \mathrm{Spec} \, C$ is flat. Equivalently, $S(\mathfrak{g})$ is a flat $C$-module. Moreover, each fiber $\chi^{-1}(y)$ is geometrically reduced, meaning that for every field extension $k\subset k'$, the base change $\chi^{-1}(y)\times_{\mathrm{Spec} \, k}\mathrm{Spec} \, k'$ is reduced. In particular, flatness together with the geometric reducedness of the fibers implies that base change along any reduced $C$-algebra preserves the reducedness of the total space. Since $R_0$ is reduced by Corollary \ref{lem:domainstorsionfree}, the base-changed fiber product $\mathfrak{g}\times_{\mathfrak{g}//G}\Gamma$ is reduced. Consequently, its coordinate ring is the coordinate ring of the reduced closed subset $Z_P$, so \begin{align*}
\mathbb{R}[Z_P]=\mathcal{O}\big(\mathfrak{g}\times_{\mathfrak{g}//G}\Gamma\big)=S(\mathfrak{g})\otimes_{S(\mathfrak{g})^G}R_0.
\end{align*}
Using $\mathfrak{F}_1^{\mathrm{poly}}\cong \mathbb{R}[Z_P]$ again, we obtain the asserted isomorphism. Since $S(\mathfrak{g})$ is free over $S(\mathfrak{g})^G$, base change along $S(\mathfrak{g})^G\to R_0$ shows that $\mathfrak{F}_1^{\mathrm{poly}}$ is free, hence flat, over $R_0$.
\end{proof}

\begin{corollary}
\label{cor:widehatAtofree}
The tensor product $\widehat{\mathcal{A}}:=\mathfrak{F}_1^{\mathrm{poly}}\otimes_{R_0}\mathfrak{F}_2^{\mathrm{poly}}$ is $R_0$ torsion free.
\end{corollary}

\begin{proof}
By Proposition \ref{prop:F1basechange}, the algebra $\mathfrak{F}_1^{\mathrm{poly}}$ is flat over $R_0$. By Corollary \ref{lem:domainstorsionfree}, the algebra $\mathfrak{F}_2^{\mathrm{poly}}$ is torsion free over $R_0$. Let $0\neq r\in R_0$. Then multiplication by $r$ gives an injective map
\begin{align*}
0\longrightarrow \mathfrak{F}_2^{\mathrm{poly}}\xrightarrow{\cdot r}\mathfrak{F}_2^{\mathrm{poly}}.
\end{align*}
Tensoring with the flat $R_0$-module $\mathfrak{F}_1^{\mathrm{poly}}$ preserves injectivity, so we obtain
\begin{align*}
0\longrightarrow \mathfrak{F}_1^{\mathrm{poly}}\otimes_{R_0}\mathfrak{F}_2^{\mathrm{poly}}\xrightarrow{\cdot r}\mathfrak{F}_1^{\mathrm{poly}}\otimes_{R_0}\mathfrak{F}_2^{\mathrm{poly}}.
\end{align*}
Thus multiplication by every nonzero $r\in R_0$ is injective on $\widehat{\mathcal{A}}$, which is exactly the $R_0$ torsion freeness of $\widehat{\mathcal{A}}$.
\end{proof}

 By Proposition \ref{prop:algdisjoint}, the canonical field-level multiplication map is injective:
\begin{align*} L\otimes_K M\longrightarrow \mathrm{Frac}\,\mathcal{O}(T^*M),\qquad \ell\otimes m\longmapsto \ell m . \end{align*} Hence, Corollary \ref{cor:norela} gives \begin{align*}
S^{-1}\ker\mu=0 .
\end{align*} 
By Corollary \ref{cor:norela} (applied with $R=R_0$, $A=\mathfrak{F}_1^{\mathrm{poly}}$, $B=\mathfrak{F}_2^{\mathrm{poly}}$, and $C=\mathcal{O}(T^*M)$), the injectivity of the field-level map $L\otimes_K M\to \mathrm{Frac}\,\mathcal{O}(T^*M)$ implies that $S^{-1}(\ker\mu)=0$. 
On the other hand, Proposition \ref{prop:F1basechange} and Corollary \ref{cor:widehatAtofree} show that $\widehat{\mathcal{A}}  =\mathfrak{F}_1^{\mathrm{poly}} \otimes_{R_0}\mathfrak{F}_2^{\mathrm{poly}}$ is $R_0$ torsion free. Hence, the only $R_0$ torsion submodule of $\widehat{\mathcal{A}}$ is $0$, so necessarily $\ker\mu=0$. Consequently,
\begin{align*}
\mathcal{A}\cong\widehat{\mathcal{A}}=\mathfrak{F}_1^{\mathrm{poly}}\otimes_{R_0}\mathfrak{F}_2^{\mathrm{poly}}.
\end{align*}
In particular, the multiplication map $\mu:\widehat{\mathcal{A}}\to \mathcal{O}(T^*M)$ is injective. By Definition \ref{def:maps}(i), we have $\mathcal{O}(T^*M)=\bigl(\mathbb{R}[G]\otimes S(\mathfrak{m})\bigr)^A\cong \mathbb{R}[G\times\mathfrak{m}]^A$. Since $G\times\mathfrak{m}$ is an irreducible affine variety, its coordinate ring $\mathbb{R}[G\times\mathfrak{m}]$ is an integral domain (see \cite[Chapter 1, Proposition 3.1]{hartshorne2013algebraic}). Taking $A$-invariants yields a subring, so $\mathcal{O}(T^*M)$ is again an integral domain (See, for instance,  \cite[Section 1]{AtiyahMacdonald}). Therefore, $\widehat{\mathcal{A}}$ is an integral domain as a subring of the integral domain $\mathcal{O}(T^*M)$. In particular, $\widehat{\mathcal{A}}$ is reduced. Hence, $\sqrt{\widehat{\mathcal{A}}}=\widehat{\mathcal{A}}$, and thus $\delta_{\mathrm{alg}}= \dim \ker \mu =0$ by Definition \ref{def:algetensorp}.
Moreover, Proposition \ref{prop:fieldtoring} together with Proposition \ref{prop:algdisjoint} and Corollary \ref{cor:widehatAtofree} gives \begin{align*}
\mathrm{trdeg}\,\mathcal{A}=\mathrm{trdeg}\,\mathfrak{F}_1^{\mathrm{poly}}+\mathrm{trdeg}\,\mathfrak{F}_2^{\mathrm{poly}}-\mathrm{trdeg}\,R_0.
\end{align*} In particular, $\delta_{\mathrm{alg}}= 0$. Together with all constructions above, we conclude the following tensor product statement. See Appendix \ref{appB} for the field-intersection and torsion freeness proof. Together with all constructions above, we conclude the following corrected form of the tensor product statement. 

\begin{theorem}
\label{thm:tensorgeneric}
 Let $G$ be a compact, simply connected semisimple Lie group, and let $A=\exp(\ker\mathrm{ad}(W))$ be defined as above for a fixed $W \in \mathfrak{g}$. 
Then \begin{align*} \mathrm{trdeg}\,\mathcal{A}=\mathrm{trdeg}\,\mathfrak{F}_1^{\mathrm{poly}}+\mathrm{trdeg}\,\mathfrak{F}_2^{\mathrm{poly}}-\mathrm{trdeg}\,R_0 . \end{align*}
Equivalently, $\delta_{\mathrm{alg}}=0$, 
and\begin{align*}
\mathcal{A}\cong \mathfrak{F}_1^{\mathrm{poly}}\otimes_{R_0}\mathfrak{F}_2^{\mathrm{poly}} . 
\end{align*}
\end{theorem}

With all the information above, we conclude the spectrum of the union of two Poisson polynomial algebras $\mathfrak{F}_1^{\mathrm{poly}}$ and $\mathfrak{F}_2^{\mathrm{poly}}$ in the following corollary. 
\begin{corollary}
\label{cor:union}
Let $\mathrm{Res}_W: S(\mathfrak{g})^G \rightarrow S(\mathfrak{m})^A $. Then there is an isomorphism of affine varieties $$ \mathrm{Spec}\,\mathcal{A}=\mathrm{Spec}(\mathfrak{F}_1^{\mathrm{poly}}\otimes_{R_0}\mathfrak{F}_2^{\mathrm{poly}})=\mathrm{Ad}(G)(\mathfrak{m}-\varepsilon W)\times_{\mathrm{Im}\mathrm{Res}_W} \left(\mathfrak{m} - \varepsilon W\right)//A , $$ with dimension $\dim \mathrm{Spec} \,\mathcal{A}=n-r+\rho_A$, where $n=\dim\mathfrak{g}$, $r=\mathrm{rank}(G)$, $s=\mathrm{trdeg} \, \mathrm{Im}\mathrm{Res}_W$, and $\rho_A=\mathrm{trdeg} \, S(\mathfrak{m})^A  = \mathrm{trdeg} \, S(\mathfrak{m} -\varepsilon W)^A$.
\end{corollary}

By Remark \ref{re:poisscenter}, we already have $R_0\subset \mathcal{Z}(\mathcal{A})$. The next theorem updates that argument to $\mathcal{Z}(\mathcal{A})=R_0$ on $U$. Here $U$ is the regular stratum in $T^*M$ defined in \eqref{eq:regularst}. In particular, Corollary \ref{re:poisscenter} below verifies the factor centrality used in Lemma \ref{lem:factor-central}. 
 Starting with the presentation of $R_0$. Let $J_1,\ldots,J_s\in \mathrm{Im}\mathrm{Res}_W$ be algebraically independent such that $s=\mathrm{trdeg} R_0$ with $R_0:=P^{*}(\mathrm{Im}\mathrm{Res}_W)$. (See the discussion of 
$\mathrm{Res}_W$ and Theorem \ref{thm:dimea}.)  These $J_i$ generate a polynomial subalgebra $\mathbb{R}[J_1,\ldots,J_s]\subset R_0$. In general, 
$R_0$ may be a finitely generated algebra over this subalgebra.  Equivalently, there exist additional generators $Y_1,\ldots,Y_m$ and relations $I$ such that \begin{align*}
    R_0 \;\cong\; \mathbb{R}[J_1,\ldots,J_s,Y_1,\ldots,Y_m]\big/ I.
\end{align*} in the regular torus case (e.g., $A=T$ and $W$ regular), we take $m=0$ and $I=0$ such that  $R_0=\mathbb{R}[J_1,\ldots,J_s]$. Define the  integral map  \begin{align*}
    \pi_1  :  T^*M \longrightarrow \mathrm{Spec} \, \mathcal{A} \quad\text{ by }\quad \pi_1^*(\widehat{f}) = f\quad \text{ for any }f \in \mathcal{A},
\end{align*}  where $\widehat{f}$ is the coordinate function of $\mathrm{Spec} \, \mathcal{A}$ corresponding to $f$. Equivalently, if we choose a generating list, \begin{align}
    \Phi =\bigl(\underbrace{P^*h_1,\ldots,P^*h_{n-r}}_{\text{generators in $P^*S(\mathfrak{g})$}};\;\underbrace{P^*C_{i_1},\ldots,P^*C_{i_s}}_{\text{generators in $R_0$}};\;\underbrace{\pi_\mathfrak{m}^*\theta_1,\ldots,\pi_\mathfrak{m}^*\theta_{\rho_A}}_{\text{generators in $\pi_\mathfrak{m}^*(S(\mathfrak{m})^A)$}}\bigr). \label{eq:generatorsmp}
\end{align}  Then $\pi_1=(\Phi_1,\ldots,\Phi_{n -r + s+ \rho_A})$. On the other hand, the inclusion $\iota: R_0 \hookrightarrow \mathcal{A}$ induces \begin{align*}
    \pi_2 :\mathrm{Spec} \, \mathcal{A} \longrightarrow \mathrm{Spec} \, R_0, \qquad \pi_2^*=\iota .
\end{align*} We define the composition maps by:  \begin{align*}
    q:= \pi_2 \circ \pi_1 :T^*M \longrightarrow \mathrm{Spec} \, R_0 .
\end{align*} On the regular locus $U$ defined in \eqref{eq:regularst}, the differential $d\pi_1$ has a constant rank, and $\ker dq$ is a smooth distribution.

\begin{remark}
Let us remark that, at this stage, we only use the inclusion $R_0\hookrightarrow\mathcal{A}$ to define \begin{align*}
   q:T^*M \xrightarrow{\ \pi_1\ } \mathrm{Spec}\,\mathcal{A}
\xrightarrow{\ \pi_2\ } \mathrm{Spec}\, R_0, 
\end{align*}  where $\pi_2$ is the morphism induced by $R_0\hookrightarrow\mathcal{A}$. We do not yet use any information about the Poisson center $\mathcal{B}=Z(\mathcal{A})$. In Theorem \ref{thm:dimea}, we will show that $\mathcal{B}=R_0$ on the regular locus, so that afterward, $\mathrm{Spec} R_0 = \mathrm{Spec}\mathcal{B}$ and $q$ coincide with the central projection in the sense of Definition \ref{def:superge}.
\end{remark}

 \begin{theorem}
 \label{thm:dimea}
        Let $G$ be a simply connected, compact and semisimple Lie group of dimension $n$ and rank $r$. Fix a closed subgroup $A\subset G$ with Lie algebra $\mathfrak{a}$ and decompose $\mathfrak{g}=\mathfrak{a}\oplus\mathfrak{m}$ orthogonally with respect to the Killing form $B$. Let $U$ be the regular stratum in $T^*M$ defined in \eqref{eq:regularst}. Fix $W \in \mathfrak{g}$ and put $\xi:=X-\varepsilon W\in\mathfrak{m} -\varepsilon W\subset\mathfrak{g}$.  Let \begin{align*}
    \mathrm{Res}_W: S(\mathfrak{g})^G\to S(\mathfrak{m})^A,\quad \mathrm{Res}_W(C)(X):=C(X-\varepsilon W), \qquad R_0:=P^*\big(\mathrm{Im}\mathrm{Res}_W\big),
\end{align*} and put $s =\mathrm{trdeg}(\mathrm{Im}\mathrm{Res}_W)$. Then, on $U$, \begin{align*} 
\mathcal{B} =R_0=P^*\big(\mathrm{Im}\mathrm{Res}_W\big)\quad\text{and}\quad \mathrm{rank}\, R_0=s.
\end{align*}  Moreover, $$\dim\mathrm{Spec} \,\mathcal{A}=(n-r)+\rho_{A}=2\dim\mathfrak{m}-s,\qquad \rho_{A}:=\mathrm{trdeg} \, S(\mathfrak{m})^A. $$ In particular, we have the following:

(i) if $A=T$ and $W$ is regular, then $s=r$, $\rho_{T}=\dim\mathfrak{m}-r$, $\dim\mathrm{Spec} \,\mathcal{A}=2\dim\mathfrak{m}-r$, and $\mathrm{rank}\, \mathcal{B}=r$;

(ii) if $A\neq T$ with $T\subsetneq A$ and $W$ is irregular while $\xi$ is regular, then $s< r$, $\rho_{A}=n-2\dim A+r-s$, $\dim\mathrm{Spec} \,\mathcal{A}=2\dim\mathfrak{m}-s$, and $\mathrm{rank}\, \mathcal{B}=s$.    
 \end{theorem}

 \begin{proof}
    (1) \textbf{Inclusion $R_0\subset \mathcal{B}$.}   For $C\in S(\mathfrak{g})^G$, by the definition of the Poisson center, $P^*C$ Poisson commutes both with $\mathfrak{F}_1^{\mathrm{poly}}$ and, as shown in Lemma \ref{lem:interzero}, with $\mathfrak{F}_2^{\mathrm{poly}}$. Thus, $P^*C\in \mathcal{B}$. If $\mathrm{Res}_W(C)=0$, then $P^*C\equiv 0$ on $\mathfrak{m}-\varepsilon W$. If $\mathrm{Res}_W(C)\neq0$, then for all $(g,X) \in T^*M$, we deduce the following: \begin{align*}
        P^*C(g,X)=C(\mathrm{Ad}(g)\xi)=C(\xi)=\mathrm{Res}_W(C)(X)=\pi_{\mathfrak{m}}^*\big(\mathrm{Res}_W(C)\big)(g,X)\in R_0.
    \end{align*} Hence $R_0\subset \mathcal{B}$.

 (2) \textbf{Rank verification.}
Fix $(g,X)\in U$ and put $\xi:=X-\varepsilon W$. Choose $Z_1,\ldots,Z_{n-r}\in\mathfrak{g}$ such that $\{[Z_i,\xi]\}_{i=1}^{n-r}$ is a basis of $[\mathfrak{g},\xi]=T_\xi(\mathrm{Ad}(G)\cdot\xi)$, and define the linear functions $h_i(Y):=B(Z_i,Y)$ such that $\nabla h_i\equiv Z_i$ is constant. Choose a transcendence basis $\theta_1,\ldots,\theta_{\rho_A}$ of $S(\mathfrak{m})^A$ and basic $G$-invariants $C_{i_1},\ldots,C_{i_s}\in S(\mathfrak{g})^G$ with \begin{align*}
\theta_\ell=\mathrm{Res}_W(C_{i_\ell})\qquad \ell=1,\ldots,s ,
\end{align*} such that $ \theta_1,\ldots,\theta_s $ are algebraically independent. Recall that \begin{align}
\Phi=\bigl(P^*h_1,\ldots,P^*h_{n-r}\,;\, P^*C_{i_1},\ldots,P^*C_{i_s}\,;\, \pi_{\mathfrak{m}}^*\theta_1,\ldots,\pi_{\mathfrak{m}}^*\theta_{\rho_A}\bigr). \label{eq:global}
\end{align} To conclude the proof, we compute the Jacobian of $\Phi$.

We now write down the Hamiltonian vector fields with respect to $\omega_\varepsilon$. For $f_\theta:=\pi_{\mathfrak{m}}^*\theta$ with $\theta\in S(\mathfrak{m} -\varepsilon W)^A$ (so $f_\theta([g,X])=\theta(\pi_\mathfrak{m}(\xi))$), solving  $df_\theta=\omega_\varepsilon(\cdot,X_{f_\theta})$ gives \cite{MR2141306} \begin{align}
X_{f_\theta}(g,X) = g_*\vert_X\left( (\nabla\theta(\xi))_{\mathfrak{m}} ,  -\frac{1}{2}[(\nabla\theta(\xi))_{\mathfrak{m}},X]_{\mathfrak{m}}-\varepsilon[W,(\nabla\theta(\xi))_{\mathfrak{m}}]  \right). \label{eq:vectorf2}
\end{align} For $I_h:= h \circ P$ with $h\in S(\mathfrak{g})$, put $\zeta:=\nabla h(\mathrm{Ad}(g)\xi)$. Recall that from Lemma \ref{lem:F1polypbrack}, for a linear functional $P_h(Y) = \langle \eta,Y\rangle$ with $\eta \in \mathfrak{g}^*$ and $Y \in \mathfrak{g}$, we obtain \begin{align}
    d (h \circ P)_{(g,X)}\left(v,w - \frac{1}{2}[v,X]\right) =  B\left(\zeta_\mathfrak{m},w -\frac{1}{2}[v,X]\right) - B([\zeta,w],v) \label{eq:differefirst}
\end{align} Then solving $dI_h = \omega_\varepsilon(\cdot, X_{I_h})$ yields the following: \begin{align}
X_{I_h}(g,X) = g_*\vert_X\left(- (\mathrm{Ad}(g^{-1})\zeta)_{\mathfrak{m}}\ ,\ [\mathrm{Ad}(g^{-1})\zeta,X]_{\mathfrak{m}}-\frac{1}{2}[(\mathrm{Ad}(g^{-1})\zeta)_{\mathfrak{m}},X]_{\mathfrak{m}}\right). \label{eq:vectorf1}
\end{align} In particular, if $h$ is $G$-invariant, then $\zeta=\mathrm{Ad}(g)\nabla h(\xi)$. Therefore, $\zeta = \nabla h(\xi)$, and the first component is $(\nabla h(\xi))_{\mathfrak{m}}$.  Restricting to base variations $(v,0)$ tangent to $G \cdot \xi$, by the chain rule and \eqref{eq:differefirst}, using $\nabla h_i\equiv Z_i$ with $i = 1,\ldots,n-r$, we deduce that \begin{align*}
d(P^*h_i)_{(g,X)}(v,0) = B\big(\nabla h_i(\mathrm{Ad}(g)\xi),\,\mathrm{Ad}(g)[v,\xi]\big) = -B\big([Z_i,\xi],\,v\big) \text{ for all } v \in \mathfrak{m}.
\end{align*} Thus, if $\sum_i a_i\,d(P^*h_i)_{(g,X)}(v,0) = 0$ for all $v \in \mathfrak{m}$ and $a_i \in \mathbb{R}$, then $\left(\sum_i a_i[Z_i,\xi]\right)_\mathfrak{m} = 0$, and by the choice of $Z_i$, we get $a_i=0$ for all $i$. Therefore, \begin{align}
\mathrm{rank}\bigl(d(P^*h_1),\ldots,d(P^*h_{n-r})\bigr) = n-r.
\end{align}  (i.e. the differentials are linearly independent, which means the Jacobian has full rank.) For the fiber part, letting $\Theta=(\theta_1,\ldots,\theta_{\rho_A})$, we have for $(0,\delta X)\in \mathfrak{m}\oplus\mathfrak{m}$, \begin{align*}
d(\pi_{\mathfrak{m}}^*\theta_j)_{(g,X)}(0,\delta X) = \left.d\theta_j\right\vert_X(\delta X).
\end{align*} Since $\theta_1,\ldots,\theta_{\rho_A}$ are algebraically independent polynomials on $\mathfrak{m}$, the Jacobian criterion  implies that, on a Zariski open subset of $U$, \begin{align}
\mathrm{rank}\bigl(d(\pi_{\mathfrak{m}}^*\theta_1),\ldots,d(\pi_{\mathfrak{m}}^*\theta_{\rho_A})\bigr)=\rho_A . \label{eq:rankid}
\end{align} Combining the orbit and fiber parts yields \begin{align*}
\mathrm{rank}\,d\Phi = (n-r)+\rho_A \quad\text{on a Zariski open subset of }U.
\end{align*} The only tautological relations among the components of $\Phi$ are the $s$ identities \begin{align*}
P^*C_{i_\ell} - \pi_{\mathfrak{m}}^*\theta_\ell = 0,\qquad \ell =1, \ldots,s .
\end{align*} Therefore, $\dim\ker(d\Phi)=s$.

(3) \textbf{Inclusion $\mathcal{B}\subset R_0$ and $\mathrm{rank} \, \mathcal{B}=s$.}  
From step (2), on the Zariski open regular set $U$, computing the rank of the Jacobian matrix gives $\mathrm{rank} \, d\Phi=(n-r)+\rho_{A}$, and the only relations among the components are $P^*C_{i_\ell}-\pi_\mathfrak{m}^*\theta_{\ell}=0$ for all $\ell=1,\dots,s$. In particular, from \eqref{eq:vectorf1} and \eqref{eq:vectorf2}, the Hamiltonian vector fields generated by the two polynomial blocks \begin{align*}
    \mathcal{D}:=\mathrm{span}\,\underbrace{\{X_{h_j \circ P}\}}_{\text{$n-r$ elements}} + \, \mathrm{span}\,\underbrace{\{X_{\pi^*_\mathfrak{m} \theta_k}\}}_{\text{$\rho_A$ elements}} 
\end{align*}
are tangent to the fibers of $q:=\pi_2\circ\pi_1:T^*M\to\mathrm{Spec} \, R_0$. Since $\mathrm{rank}\,\mathcal{D}=(n-r)+\rho_A$ on $U$ and $\dim\ker(dq)=\dim(T^*M)-\dim\mathrm{Spec}\,R_0=2\dim\mathfrak{m}-s=(n-r)+\rho_A$ on the same locus, we have \begin{align*}
    \mathcal{D}=\ker(dq)\qquad\text{on }U.
\end{align*}

Now, if $f\in \mathcal{B}$, then $\{f,g\}_\varepsilon=0$ for each component $g$ of $\Phi$, so $df(X_g)=\omega_\varepsilon(X_f,X_g)=0$ for all $X_g\in\mathcal{D}=\ker(dq)$. Hence $f$ is constant along the $q$-fibers on $U$. Thus, $f\vert_U$ is constant along the $\mathcal{D}$-leaves and therefore factors through $\mathrm{Spec} \, R_0$. Equivalently, there exists $\widetilde{F}\in R_0$ such that \begin{align*}
    f\vert_U=\widetilde{F}\big(J_1,\dots,J_s\big).
\end{align*} Since $U$ is Zariski open dense and all functions involved are regular, the identity extends to all  $T^*M$. Hence, there do not exist any other algebraic relations among the components of $\Phi$ beyond $P^*C_{i_\ell} = \pi_\mathfrak{m}^* \theta_\ell$ and $f\in R_0$. This proves $\mathcal{B} \subset R_0$.  

Since $R_0\subset \mathcal{B}$ is from Proposition \ref{prop:inter}, we obtain $\mathcal{B}=R_0$ in $U$. The algebraic independence of $\{J_1,\ldots,J_s\}$ on $U$ implies $\mathrm{rank} \, \mathcal{B} =\mathrm{trdeg} \,\widetilde{F}(J_1,\ldots,J_s)=s$.

(4) \textbf{Dimension.} Let $\rho_{A}:=\mathrm{trdeg} \, S(\mathfrak{m})^A$. The Jacobian calculation shows that, modulo the $s$ relations $u_{\ell}=0$, the differentials of the generators in $\Phi$ are independent of $U$. Hence,  $ \dim\mathrm{Spec} \,\mathcal{A}=\mathrm{trdeg}\mathcal{A}=(n-r)+\rho_{A}$.  For generic $\xi\in(\mathfrak{m}-\varepsilon W)\cap\mathfrak{g}_{\mathrm{reg}}$, $\mathfrak{g}_{\xi}$ is of dimension $r$ as $\xi$ is regular, and \begin{align*}
    \mathfrak{g}_{\xi}=(\mathfrak{a}\cap\mathfrak{g}_{\xi})\oplus(\mathfrak{g}_{\xi})_{\mathfrak{m}},\qquad \dim(\mathfrak{g}_{\xi})_{\mathfrak{m}}=s \text{ by Proposition } \ref{pro:dimension},
\end{align*} so $\dim(\mathfrak{a}\cap\mathfrak{g}_{\xi})=r-s$. Hence $\dim(A \cdot \xi)=\dim\mathfrak{a}-(r-s)$, and by the rank formula \begin{align*}
    \rho_{A}=\dim\mathfrak{m}-\dim(A \cdot \xi) =(n-\dim A)-\bigl(\dim A-(r-s)\bigr) =n-2\dim A+r-s.
\end{align*} Therefore, $$\dim\mathrm{Spec} \,\mathcal{A}=(n-r)+\rho_{A}=2(n-\dim A)-s=2\dim\mathfrak{m}-s.$$ The two stated specializations follow by substituting $\dim A=r$, $s=r$ when $A=T$ and $W$ are regular, and by allowing $s\leq r$ when $A\neq T$ and $W$ are irregular while $\xi$ remains regular.  
 \end{proof}


As a consequence of Theorem \eqref{thm:dimea}, we will directly show that $R_0$ is indeed a Poisson center in the following corollary through direct computation. 
 
\begin{corollary}
\label{coro:check}
    For every $r\in R_0$, we have \begin{align*}
\{r,f\}_1=0\quad \text{ for all } f\in \mathfrak{F}_1^{\mathrm{poly}},\qquad \{r,\phi\}_2=0\quad \text{ for all } \phi\in \mathfrak{F}_2^{\mathrm{poly}}.
\end{align*} Equivalently, $\iota_i(R_0)\subset \mathcal{Z}(\mathfrak{F}_i^{\mathrm{poly}})$ for $i=1,2$.
\end{corollary}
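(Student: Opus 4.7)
The plan is to assemble just two previously established facts: Proposition \ref{prop:inter} (which identifies $R_0$ concretely) and Lemma \ref{lem:interzero} (the mixed-bracket vanishing). Nothing else is really needed. Proposition \ref{prop:inter} supplies, for any $r\in R_0$, a single invariant $C\in S(\mathfrak{g})^G$ with the dual representation
\begin{align*}
r=P^*C\in \mathfrak{F}_1^{\mathrm{poly}},\qquad r=\pi_{\mathfrak{m}}^*\bigl(\mathrm{Res}_W(C)\bigr)\in \mathfrak{F}_2^{\mathrm{poly}} .
\end{align*}
So every element of $R_0$ literally lies in both Poisson subalgebras simultaneously, which is the crucial structural input.

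From this, the two claims reduce to a one-line application of Lemma \ref{lem:interzero}. For the first claim, fix $f\in \mathfrak{F}_1^{\mathrm{poly}}$. Viewing $r$ as an element of $\mathfrak{F}_2^{\mathrm{poly}}$ and $f$ as an element of $\mathfrak{F}_1^{\mathrm{poly}}$, Lemma \ref{lem:interzero} gives $\{r,f\}_\varepsilon=0$. By Proposition \ref{prop:slicepoisson}, $\mathfrak{F}_1^{\mathrm{poly}}$ is a Poisson subalgebra of $\mathcal{O}(T^*M)$, so its bracket $\{\cdot,\cdot\}_1$ is simply the restriction of $\{\cdot,\cdot\}_\varepsilon$; hence $\{r,f\}_1=0$. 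The same argument with the roles of $\mathfrak{F}_1^{\mathrm{poly}}$ and $\mathfrak{F}_2^{\mathrm{poly}}$ swapped, using now $r\in \mathfrak{F}_1^{\mathrm{poly}}$ and $\phi\in \mathfrak{F}_2^{\mathrm{poly}}$, yields $\{r,\phi\}_2=0$. Therefore $\iota_i(R_0)\subset \mathcal{Z}(\mathfrak{F}_i^{\mathrm{poly}})$ for $i=1,2$.

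As a consistency check I can spell out the direct algebraic version. For the first claim, writing $f=P^*h$ and using that $P^*$ is a Poisson homomorphism (Proposition \ref{prop:slicepoisson}),
\begin{align*}
\{r,f\}_1 \;=\; \{P^*C,P^*h\}_\varepsilon \;=\; P^*\bigl(\{C,h\}\bigr) \;=\; 0,
\end{align*}
since $C\in S(\mathfrak{g})^G$ lies in the Poisson center of $S(\mathfrak{g})$ by Kostant. The matching direct verification on the $\mathfrak{F}_2^{\mathrm{poly}}$-side amounts to showing that $\mathrm{Res}_W(C)$ Poisson-commutes with every $A$-invariant polynomial on the affine slice $\mathfrak{m}-\varepsilon W$, which is essentially an argument-shift statement; it is the only potentially delicate point, but it is automatically discharged by the Lemma \ref{lem:interzero} route above. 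So no step is expected to be a real obstacle: the proof is essentially a bookkeeping exercise assembling Proposition \ref{prop:inter}, Proposition \ref{prop:slicepoisson}, and Lemma \ref{lem:interzero}.
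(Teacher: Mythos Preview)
Your proof is correct and follows essentially the same approach as the paper's: both use the dual representation of an element of $R_0$ (as simultaneously in $\mathfrak{F}_1^{\mathrm{poly}}$ and $\mathfrak{F}_2^{\mathrm{poly}}$) and then apply Lemma~\ref{lem:interzero} by switching to the ``other'' representation when bracketing with each block. The only cosmetic difference is that you invoke Proposition~\ref{prop:inter} to write $r=P^*C$ with $C\in S(\mathfrak{g})^G$, whereas the paper just uses the definition of the intersection (so $h\in S(\mathfrak{g})$ not necessarily $G$-invariant), but this distinction is irrelevant to the argument.
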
 
\begin{proof}
Fix $\ell \in R_0$. By the definition of the intersection, there exist $h\in S(\mathfrak{g})$ and $\theta\in S(\mathfrak{m} - \varepsilon W)^A$ such that, as functions on $T^*M$,\begin{align*}
\ell=h\circ P=\theta\bigl(X-\varepsilon W\bigr).
\end{align*} Let $f = h'\circ P\in \mathfrak{F}_1^{\mathrm{poly}}$ for some $h'\in S(\mathfrak{g})$. Using the expression of $\ell$ coming from the second block and Lemma \ref{lem:interzero}, \begin{align*}
\{\ell,f\}_1=\bigl\{\theta(X-\varepsilon W),\,h'\circ P\bigr\}_\varepsilon=0.
\end{align*}
Similarly, let $\phi=\theta'(X-\varepsilon W)\in \mathfrak{F}_2^{\mathrm{poly}}$ for some $\theta'\in S(\mathfrak{m})^A$. The same reason shows that \begin{align*}
\{\ell,\phi\}_2=\bigl\{h\circ P,\,\theta'(X-\varepsilon W)\bigr\}_\varepsilon=0.
\end{align*} Thus, $\ell$ Poisson commutes with each factor.
\end{proof}

Since $(\mathcal{A},\{\cdot,\cdot\}_\mathcal{A})$ is a finitely generated Poisson algebra, $\mathrm{Spec}\,\mathcal{A}$ is a Poisson variety. Then the evaluation map $\pi_1: (T^*M,\omega_\varepsilon) \rightarrow \left(\mathrm{Spec} \, \mathcal{A},\{\cdot,\cdot\}_{\mathcal{A}}\right)$ is Poisson. Moreover, another canonical projection $\pi_2: \left(\mathrm{Spec} \, \mathcal{A},\{\cdot,\cdot\}_{\mathcal{A}}\right) \rightarrow \left(\mathrm{Spec}\, R_0,\{\cdot,\cdot\}_0\right)$ is also Poisson, as $R_0$ is a Poisson center. Together with everything above, by Definition \ref{def:superge}, we deduce the following superintegrable systems. 

\begin{theorem}
\label{thm:superin}
Let $G$ be a compact, simply-connected, semisimple Lie group, and let $A=\exp(\ker\mathrm{ad}(W))\supseteq T$. Suppose that $(T^*M,\omega_\varepsilon)$ is the twisted symplectic manifold. On the Zariski open dense regular locus $U \subset T^*M$, we then have the following arguments:

(i) if $W$ is irregular and $T \subsetneq A$. Then the system of the Poisson chain \begin{align}
    T^*M\xrightarrow{\pi_1}\mathrm{Ad}(G)(\mathfrak{m} - \varepsilon W)\times_{R_0}(\mathfrak{m} - \varepsilon W)//A \xrightarrow{\pi_2}\mathrm{Spec} R_0  \label{eq:irr}
\end{align} is superintegrable.

(ii) If $W$ is regular, then $R_0  = P^*\left(S(\mathfrak{g})^G\right)$ and \begin{align}
    T^*M\xrightarrow{\pi_1}  \mathfrak{g}  \times_{\mathfrak{g}//G} (\mathfrak{m} - \varepsilon W)//T \xrightarrow{\pi_2}\mathfrak{g}//G  \label{eq:re}
\end{align} is superintegrable.
\end{theorem}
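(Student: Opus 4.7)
The plan is to verify the three axioms of Definition \ref{def:superge} (after restricting to the regular locus as explained in Remark \ref{rem:regularpoisub}) for the chain
\begin{equation*}
T^*M\xrightarrow{\pi_1}\mathrm{Spec}\,\mathcal{A}\xrightarrow{\pi_2}\mathrm{Spec}\,R_0,
\end{equation*}
where by Corollary \ref{cor:union} the middle variety is exactly $\mathrm{Ad}(G)(\mathfrak{m}-\varepsilon W)\times_{\mathrm{Im}\mathrm{Res}_W}(\mathfrak{m}-\varepsilon W)//A$. The map $\pi_1$ is the evaluation map whose pullback is the inclusion $\mathcal{A}\hookrightarrow \mathcal{O}(T^*M)$, and $\pi_2$ is induced by the inclusion $R_0\hookrightarrow \mathcal{A}$. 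Both pullbacks are Poisson algebra homomorphisms by Proposition \ref{prop:slicepoisson} and Lemma \ref{lem:interzero}, so $\pi_1$ and $\pi_2$ are Poisson morphisms.

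The second step is a dimension count using Theorem \ref{thm:dimea}. Writing $2n := \dim T^*M = 2\dim\mathfrak{m}$ and $k := s = \mathrm{trdeg}\,\mathrm{Im}\mathrm{Res}_W$, Theorem \ref{thm:dimea} gives $\dim\mathrm{Spec}\,\mathcal{A}=2\dim\mathfrak{m}-s=2n-k$ and $\dim\mathrm{Spec}\,R_0=s=k$. Hence
\begin{equation*}
\dim\mathrm{Spec}\,\mathcal{A}+\dim\mathrm{Spec}\,R_0=(2n-k)+k=2n,
\end{equation*}
which is condition (iii). The generic fiber of $\pi_1$ has dimension $2n-(2n-k)=k$, and the generic fiber of $\pi_2$ has dimension $(2n-k)-k=2n-2k$, matching condition (i) of Definition \ref{def:superge}.

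The third step is the submersion and connectedness claim on $U$. For this I will invoke the Jacobian computation carried out in step (2) of the proof of Theorem \ref{thm:dimea}: on the regular stratum $U\subset T^*M$ of \eqref{eq:regularst}, the differential of the generating list $\Phi$ of \eqref{eq:global} has constant rank $(n-r)+\rho_A=\dim\mathrm{Spec}\,\mathcal{A}$, so $d\pi_1$ is surjective and $\pi_1$ is a Poisson submersion there. The same Jacobian computation, restricted to the subsystem $(J_1,\dots,J_s)$ of central generators of $R_0$, shows that $d(\pi_2\circ\pi_1)$ has constant rank $s$, hence $\pi_2$ is a Poisson submersion. Connectedness of generic fibers follows from the $G$-equivariance of $P$ and the connectedness of $G$: the fiber of $\pi_1$ over a regular value is an intersection of a single connected $G$-orbit in $\mathfrak{g}$ with a fiber of $\pi_\mathfrak{m}$ (modulo $A$), which is connected. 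For $\pi_2$, centrality of $R_0$ (Theorem \ref{thm:dimea}) guarantees that every function in $R_0$ is Casimir on $\mathrm{Spec}\,\mathcal{A}$, so all Hamiltonian vector fields are tangent to the $\pi_2$-fibers; hence every fiber is a union of symplectic leaves, which is axiom (ii) of Definition \ref{def:superge}.

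For part (ii) of the theorem, when $W$ is regular one has $A=T$, $s=r$, and Corollary \ref{coro:a=t} tells us that $\mathrm{Res}_W$ is injective on $S(\mathfrak{g})^G$. Consequently $R_0=P^*\,\mathrm{Im}\mathrm{Res}_W\cong S(\mathfrak{g})^G$ so $\mathrm{Spec}\,R_0=\mathfrak{g}//G$, and by the same injectivity $\mathrm{Ad}(G)(\mathfrak{m}-\varepsilon W)$ is Zariski dense in $\mathfrak{g}$ (cf.\ Lemma \ref{lem:rankacc}), so $\mathrm{Spec}\,\mathcal{A}=\mathfrak{g}\times_{\mathfrak{g}//G}(\mathfrak{m}-\varepsilon W)//T$; the chain \eqref{eq:re} then follows from the argument in \eqref{eq:irr} by substitution, with the dimensional identity specializing to $2\dim\mathfrak{m}-r+r=2n$. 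The main obstacle I anticipate is the rigorous identification of the symplectic leaves inside the singular Poisson variety $\mathrm{Spec}\,\mathcal{A}$: although $R_0\subset\mathcal{Z}(\mathcal{A})$ gives the union-of-leaves statement abstractly, verifying constant rank of the associated Poisson tensor requires restricting to a Zariski open dense smooth locus and tracing through the fiber-product structure, very much in the spirit of the leaf analysis done for $\mathfrak{g}//T$ in Theorem \ref{thm:leaves}.
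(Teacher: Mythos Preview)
Your proposal is correct and follows essentially the same route as the paper: the theorem is deduced directly from the accumulated results, chiefly Corollary \ref{cor:union} (identification of $\mathrm{Spec}\,\mathcal{A}$), Theorem \ref{thm:dimea} (the dimension identity $\dim\mathrm{Spec}\,\mathcal{A}+\dim\mathrm{Spec}\,R_0=2\dim\mathfrak{m}$ and $\mathcal{B}=R_0$), and the Poisson-morphism facts from Proposition \ref{prop:slicepoisson} and Lemma \ref{lem:interzero}. The paper does not give a separate proof block at all; it simply remarks that $\pi_1,\pi_2$ are Poisson and invokes Definition \ref{def:superge} (via the algebraic Definition \ref{def:superalge} and Remark \ref{rem:regularpoisub}), so your more explicit checklist of the submersion and union-of-leaves axioms is, if anything, slightly more thorough than what the paper records.

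One small caution: your connectedness argument for the fibers of $\pi_1$ is imprecise (the fiber lives in $T^*M$, not in $\mathfrak{g}$), and the paper does not address connectedness explicitly either, effectively appealing to the algebraic form of superintegrability in Definition \ref{def:superalge} where this hypothesis is absorbed into the regular-locus restriction. You may wish to do the same rather than attempt a direct geometric argument.
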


\begin{remark} 
Theorem \ref{thm:superin} can be viewed as a Poisson projection chain approach for the two natural families of integrals arising from a Hamiltonian $G$-action: the pullback of functions on $\mathfrak{g}^*$ along the magnetic moment map and the $G$-invariant functions. In our homogeneous magnetic setting, the second family admits a canonical algebraic model via $S(\mathfrak{m}-\varepsilon W)^A$, and the common central part $R_0$ controls the reduction to the second stage of the Poisson chain.
\end{remark}

 \section{Superintegrable systems on the reductive homogeneous spaces for \texorpdfstring{$G = \mathrm{SU}(3)$}{G = SU(3)}}
 \label{sec:examples}

With the Poisson projection chain construction discussed in Section \ref{sec:superconstruction}, we will implement and verify these results in Section \ref{sec:examples} on the cotangent bundle of the reductive homogeneous space $\mathrm{SU}(3)/A$. We will work out explicit generators, the twisted Poisson bracket, and the rank account. Throughout Section \ref{sec:examples}, let $G = \mathrm{SU}(3)$. Recall that, from the classification of adjoint orbits of compact Lie groups \cite{MR1920389} and $M = G/A$, they are classified into regular and irregular cases based on the position of the stabiliser in the root diagram. 
Consider the complexification of $\mathfrak{su}(3)$. The simple roots are given by $\alpha_1 = \varepsilon_1 - \varepsilon_2$ and $\alpha_2 = \varepsilon_2 - \varepsilon_3$.  Writing a diagonal element as $H = \mathrm{diag}(c_1,c_2,c_3)$ with $c_1+c_2+c_3 = 0$, we have
\begin{align*}
\alpha_1(H) = c_1 - c_2,\qquad \alpha_2(H) = c_2- c_3,\qquad (\alpha_1+\alpha_2)(H)= c_1- c_3.
\end{align*} By Proposition \ref{prop:characterizationirrarr}, $H$ is regular if all three diagonal entries are pairwise distinct.

Section \ref{sec:examples} is divided into two parts. In Subsection \ref{subsec:regular}, we begin by choosing a fixed $W \in \mathfrak{t}$ as a regular value. That is, the reductive homogeneous space is the full flag manifold $\mathrm{SU}(3)/T$, and we use Chevalley coordinates to describe the $T$-invariant polynomials and their Poisson algebra. In this way, superintegrability and its action-angle coordinates are established. Then, in Subsection \ref{subsec:irregular}, with an irregular $W$, we will follow the same route to demonstrate superintegrability through the Poisson projection chains. In this case, rather than using the Chevalley basis, we implement the Gell-Mann basis to proceed with the computation. Both examples illustrate how the general moment map construction produces explicit first integrals and magnetic geodesic flows on homogeneous symplectic manifolds. 

\subsection{Superintegrable systems on \texorpdfstring{$\mathrm{SU}(3)/T$}{SU(3)/T}}
\label{subsec:regular}

In Subsection \ref{subsec:regular}, we consider the full flag manifold of $\mathbb{C}^3$. That is, $  \mathrm{SU}(3)/T  $ with the maximal torus $T  \subset \mathrm{SU}(3)$. The Weyl group is $\mathcal{W} \cong S_3$, and $\mathfrak{su}(3)$ admits a Cartan decomposition\footnote{It is also a reductive decomposition, so the construction in Section \ref{sec:superconstruction} applies here.} $\mathfrak{t} \oplus \mathfrak{m}$, where $\mathfrak{t} = \mathrm{Lie}(T)$ and \begin{align*}
\mathfrak{m} = \bigoplus_{k =1}^3\bigl(\mathbb{R}\,E_{\alpha_k}\oplus\mathbb{R}\,E_{-\alpha_k}\bigr),\qquad \alpha_3 =\alpha_1 + \alpha_2.
\end{align*}  To compute explicit brackets, choose a Chevalley basis ${H_i, E_{\alpha}, E_{-\alpha}}$ for $\mathfrak{su} (3)$ with $i=1,2$ as the simple roots $\alpha_i$, and $\alpha_3$ as the highest root. In the dual space $\mathfrak{su}^*(3)$, we have coordinate functions $p_i$ and $x_{\alpha}, x_{-\alpha}$ corresponding to $H_i$ and $E_{\pm \alpha_i}$ that generate the symmetric algebra $S(\mathfrak{su}(3))$.  The Lie-Poisson brackets are determined by: For simple roots $i,j=1,2$, \begin{align*}
    \{h_i,h_j\}=0 , \text{ }  \{h_i,x_{\alpha_j}\} = a_{ij}x_{\alpha_j}  \text{ and }  \{h_i,x_{-\alpha_j}\} = -a_{ij}x_{-\alpha_j}, \text{ } \{x_{\alpha_i},x_{-\alpha_j}\} = \delta_{ij}h_i.
\end{align*}   Here, $(a_{ij})$ is the Cartan matrix with $a_{ii}=2$, $a_{12}=a_{21}=-1$ for $A_2$. Let \begin{align*}
    H_1 =\mathrm{i} \begin{pmatrix} 1& & \\
    &-1& \\ 
    & & 0\end{pmatrix},\qquad  H_2= \frac{\mathrm{i}}{\sqrt3} \begin{pmatrix} 1& & \\
&1& \\
& &-2
\end{pmatrix}
\end{align*} span the Cartan subalgebra $\mathfrak{t} \subset \mathfrak{g}$, where $\mathrm{i} = \sqrt{-1}$. We also denote the root vector by $E_{\pm\alpha_1} = E_{12},E_{21}$, $E_{\pm\alpha_2}= E_{23},E_{32}$, and $E_{\pm\alpha_3} = E_{13},E_{31}$. Here $\alpha_1,\alpha_2,\alpha_3 \in \Phi^+$. Then for any $X \in \mathfrak{g}$, we have \begin{align*}
X= h_1 H_1+ h_2H_2 + \sum_{k = 1}^3\bigl(x_kE_{\alpha_k} + y_kE_{-\alpha_k}\bigr) \in\mathfrak{g}, \qquad  z_k = x_k + \mathrm{i}\,y_k  , \text{ } k=1,2,3,
\end{align*} where $(h_1, h_2,x_1,x_2, x_3, y_1, y_2, y_3)$ are the real coordinates, and $z_k$ is the complex coordinate corresponding to the root space $E_{\pm \alpha_k}$.  We now fix a regular element $W=\frac{1}{2}(H_1 + H_2) \in \mathfrak{t}$. Fix the $G$-invariant inner product $B$ in $\mathfrak{g}$ such that  $B(E_{\alpha},E_{-\alpha}) =1$, $B(\mathfrak{t},\mathfrak{m})=0$, and $B$ is $\mathrm{Ad}$-invariant. Throughout Subsection \ref{subsec:regular}, we use $B(X,Y)=-\frac{1}{2}\mathrm{tr}(XY)$. 
 From the discussion in Section \ref{sec:superconstruction},  $ T^*(\mathrm{SU}(3)/T)\cong \mathrm{SU}(3)\times_T\mathfrak{m}^*$  with the magnetic form $\omega_\varepsilon=\omega_{\mathrm{can}}+\varepsilon\,\pi^*\omega_{\mathrm{KKS}}$, where the base form $\omega_{\mathrm{KKS}}$ at $[e] := eT\in M=\mathrm{SU}(3)/T $ is $\omega_{\mathrm{KKS}}([v_1],[v_2])=B(W,[v_1,v_2])$ for any $[v_1],[v_2] \in T_{[e]}M$.
The magnetic moment map is as follows:  \begin{align*}
P(g,X)=\mathrm{Ad}(g)\bigl(X-\varepsilon W\bigr)\in\mathfrak{g}^*,\qquad P_j:=\langle P,\xi_j\rangle
\end{align*} for a fixed $B$-orthonormal basis $\{\xi_j\}_{j=1}^8$ of $\mathfrak{g}$, where $\xi_j$ is the relabelling of the real coordinates of $\mathfrak{su}(3)$ given above. 

\subsubsection{Constructing superintegrable systems on $T^*(\mathrm{SU}(3)/T)$}

Let us first consider the linear coordinate of $\mathfrak{su}^*(3)$ in complex coordinates.
 
\begin{lemma} 
\label{lem:root-coord}
Let $\mathfrak{g}_\mathbb{C}=\mathfrak{sl}_3(\mathbb{C})$ with the Cartan subalgebra $\mathfrak{t}_\mathbb{C}$ and its root space decomposition $\mathfrak{g}_\mathbb{C} = \mathfrak{t}_\mathbb{C}  \oplus \bigoplus_{\alpha\in\Delta}\mathfrak{g}_\alpha$ with $  \dim \mathfrak{g}_{\pm\alpha}=1$. Choose root vectors $E_{\pm\alpha_k}\in\mathfrak{g}_{\pm\alpha_k}$ for the three positive roots $\alpha_1,\alpha_2,\alpha_3=\alpha_1+\alpha_2$, and the Killing form $B$ such that
\begin{align}
B(E_{\alpha_k},E_{-\alpha_\ell})=\delta_{k\ell},\qquad B(\mathfrak{t}_\mathbb{C},\mathfrak{g}_{\pm\alpha})=0,\qquad B(\mathfrak{g}_\alpha,\mathfrak{g}_\beta)=0 \text{ if }\alpha+\beta\neq 0. \label{eq:normal} 
\end{align} Take $E_{\pm \alpha_k}$ in $\mathfrak{sl}(3,\mathbb{C})$ such that the normalization in \eqref{eq:normal} holds. Let $\mathfrak{g}=\mathfrak{su}(3)$ be the compact real form with the conjugation $E_{-\alpha_k}=\overline{E_{\alpha_k}}$. Then every $\zeta\in\mathfrak{g}$ admits a unique expansion \begin{align*}
\zeta = H +  \sum_{k=1}^3\Big(z_k\,E_{\alpha_k}+\overline{z_k}\,E_{-\alpha_k}\Big), \qquad H\in\mathfrak{t},\ z_k\in\mathbb{C},
\end{align*} and the coefficients are: \begin{align}
  z_k:=  z_k(\zeta) = B(\zeta,E_{-\alpha_k}),\qquad \overline{z_k} : = \overline{z_k}(\zeta)= B(\zeta,E_{\alpha_k}) \quad  k=1,2,3. \label{eq:complexcoe}
\end{align}
\end{lemma}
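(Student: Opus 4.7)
The plan is to establish the expansion and the coefficient formulas by descending from the complex root space decomposition of $\mathfrak{sl}_3(\mathbb{C})$ to its compact real form $\mathfrak{su}(3)$, then extracting coefficients via the non-degenerate pairing given by $B$.

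First, I would recall the standard decomposition $\mathfrak{g}_\mathbb{C}=\mathfrak{t}_\mathbb{C}\oplus\bigoplus_{\alpha\in\Phi}\mathfrak{g}_\alpha$ with each $\dim_\mathbb{C}\mathfrak{g}_{\pm\alpha_k}=1$, so that every $\zeta\in\mathfrak{g}_\mathbb{C}$ admits a unique expansion $\zeta=H+\sum_{k=1}^3\bigl(c_k E_{\alpha_k}+d_k E_{-\alpha_k}\bigr)$ with $H\in\mathfrak{t}_\mathbb{C}$ and $c_k,d_k\in\mathbb{C}$. Next, using the fact that $\mathfrak{su}(3)$ is the fixed-point set of the compact conjugation $\sigma$ determined by $\sigma(H)=H$ for $H\in\mathfrak{t}$ and $\sigma(E_{\alpha_k})=E_{-\alpha_k}$ (this is precisely the normalization $E_{-\alpha_k}=\overline{E_{\alpha_k}}$ assumed in the statement), the condition $\zeta=\sigma(\zeta)$ forces $d_k=\overline{c_k}$ for every $k$. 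Setting $z_k:=c_k$, this yields the stated expansion.

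To verify the coefficient formulas, I would simply pair both sides with $E_{-\alpha_k}$ and $E_{\alpha_k}$ using the bilinearity and the orthogonality relations \eqref{eq:normal}. Since $B(\mathfrak{t}_\mathbb{C},\mathfrak{g}_{\pm\alpha})=0$ and $B(\mathfrak{g}_{\alpha_k},\mathfrak{g}_{\alpha_\ell})=0$ whenever $\alpha_k+\alpha_\ell\neq 0$, the only surviving contribution in $B(\zeta,E_{-\alpha_k})$ comes from the term $c_k E_{\alpha_k}$, giving $B(\zeta,E_{-\alpha_k})=c_k\,B(E_{\alpha_k},E_{-\alpha_k})=c_k=z_k$. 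An identical calculation with $E_{\alpha_k}$ gives $B(\zeta,E_{\alpha_k})=d_k=\overline{z_k}$. Uniqueness of the decomposition follows from the non-degeneracy of $B$ on $\mathfrak{g}_{\mathbb{C}}$ together with the direct sum structure of the root space decomposition.

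There is no real obstacle here; the argument is essentially bookkeeping once the compact conjugation is correctly identified with the normalization $E_{-\alpha_k}=\overline{E_{\alpha_k}}$. The only subtle point worth emphasizing is that the condition $B(E_{\alpha_k},E_{-\alpha_\ell})=\delta_{k\ell}$ together with $B(\mathfrak{g}_\alpha,\mathfrak{g}_\beta)=0$ for $\alpha+\beta\neq 0$ is what makes the pairing $(c_k,d_k)\mapsto B(\cdot,E_{\mp\alpha_k})$ diagonal, and thus allows reading off the coefficients $z_k,\overline{z_k}$ directly as claimed in \eqref{eq:complexcoe}.
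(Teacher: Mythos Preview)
Your proposal is correct and follows essentially the same approach as the paper: both use the orthogonality relations $B(\mathfrak{t}_\mathbb{C},\mathfrak{g}_{\pm\alpha})=0$ and $B(\mathfrak{g}_\alpha,\mathfrak{g}_\beta)=0$ for $\alpha+\beta\neq 0$, together with the normalization $B(E_{\alpha_k},E_{-\alpha_\ell})=\delta_{k\ell}$, to extract the coefficients by pairing $\zeta$ with $E_{\mp\alpha_k}$. Your version is slightly more explicit in deriving the form of the expansion via the compact conjugation $\sigma$, whereas the paper simply takes the expansion as given and verifies only the coefficient identities.
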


\begin{proof}
By root decomposition and orthogonality, $B(\mathfrak{t}_\mathbb{C},\mathfrak{g}_{\pm\alpha})=0$ and $B(\mathfrak{g}_{\alpha},\mathfrak{g}_{\beta})=0$ hold unless $\beta=-\alpha$. Using the normalization $B(E_{\alpha_k},E_{-\alpha_\ell})=\delta_{k\ell}$, we compute \begin{align*}
B(\zeta,E_{-\alpha_j}) = B\Big(H+\sum_{k=1}^3 \big(z_k E_{\alpha_k}+\overline{z_k} E_{-\alpha_k}\big),\,E_{-\alpha_j}\Big) = \sum_{k=1}^3 z_k\,B(E_{\alpha_k},E_{-\alpha_j}) = z_j.
\end{align*} The formula for $\overline{z_j}$ is analogous.
\end{proof}

\begin{remark}
    Let $\sigma \in \mathrm{Aut}(\mathfrak{g}_\mathbb{C})$ with $\sigma(E_\alpha) = -E_{-\alpha}$ be a $B$-invariant conjugation such that $\sigma^2 = \mathrm{id}$. Then \begin{align*}
        \overline{B(\zeta,E_\alpha)} = B(\sigma(\zeta),\sigma(E_\alpha)) = B(\zeta, -E_{-\alpha}) = -B(\zeta,E_{-\alpha}).
    \end{align*} Hence $B(\zeta,E_{-\alpha}) = -  \overline{B(\zeta,E_\alpha)}$.
\end{remark}

We first provide the generators in the algebras $\mathfrak{F}_1^{\mathrm{poly}}$ and $\mathfrak{F}_2^{\mathrm{poly}}$. 
By Lemma \ref{lem:rankacc}, we have $\mathrm{rank} \, \mathfrak{F}_1^{\mathrm{poly}}= \dim \mathfrak{su}(3) = 8$. Hence, we can select eight basis elements, resulting in eight coordinate functions denoted by $ P_k=P_{\xi_k}$, where $ k=1,\ldots,8  $. In particular, the linearity of the generators ensures algebraic independence. Consequently, we define the Poisson polynomial algebra as $$ \mathfrak{F}_1^{\mathrm{poly}} =\mathbb{R}[P_1,\ldots,P_8], \qquad\text{with}\ \deg P_k=1. $$  Summarized in the following lemma.
\begin{lemma}
\label{lem:firf1}
    We have $\mathfrak{F}_1^{\mathrm{poly}} = P^*(S(\mathfrak{su}(3)) = \mathbb{R}[P_1,\ldots,P_8]$ with the commutator relations $\{P_i,P_j\}_1 = c_{ij}^kP_k$.
\end{lemma}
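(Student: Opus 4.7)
The plan is to assemble the statement from three ingredients already established in the paper: the abstract identification of $\mathfrak{F}_1^{\mathrm{poly}}$ as a pullback algebra, the explicit choice of a linear basis of $\mathfrak{su}(3)$, and the Lie--Poisson bracket formula for moment map components. First I would invoke Lemma \ref{lem:poly} in the special case $G=\mathrm{SU}(3)$, $A=T$: this gives at once $\mathfrak{F}_1^{\mathrm{poly}}=P^{*}S(\mathfrak{su}(3))$ as subalgebras of $\mathcal{O}(T^*M)$, and tells us $P^{*}$ is an algebra homomorphism out of $S(\mathfrak{su}(3))$ whose image is exactly $\mathfrak{F}_1^{\mathrm{poly}}$.

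Next I would identify the generators. Fix the $B$-orthonormal basis $\{\xi_1,\dots,\xi_8\}$ of $\mathfrak{su}(3)$ chosen above and let $\ell_{\xi_k}\in(\mathfrak{g}^{*})^{*}\cong \mathfrak{g}$ be the corresponding linear coordinate functionals on $\mathfrak{g}^{*}$. Under the canonical identification $S(\mathfrak{g})\cong \mathbb{R}[\ell_{\xi_1},\dots,\ell_{\xi_8}]=\mathbb{R}[x_1,\dots,x_8]$ recalled in Subsection \ref{subsec:poicentra}, the $\ell_{\xi_k}$ freely generate $S(\mathfrak{su}(3))$ as a commutative $\mathbb{R}$-algebra. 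Pulling back through $P^{*}$ and using $P_k:=\langle P,\xi_k\rangle=\ell_{\xi_k}\circ P$, I obtain $P^{*}S(\mathfrak{su}(3))=\mathbb{R}[P_1,\dots,P_8]$ as a commutative algebra. To see that no nontrivial relations are introduced, I would apply Lemma \ref{lem:rankacc} in the regular torus case $A=T$, which yields $\mathrm{trdeg}\,\mathfrak{F}_1^{\mathrm{poly}}=\dim\mathfrak{su}(3)=8$; since the eight generators $P_1,\dots,P_8$ already span a subalgebra of transcendence degree at most $8$ and the whole image has transcendence degree exactly $8$, the $P_k$ must be algebraically independent, hence $\mathbb{R}[P_1,\dots,P_8]$ is honestly the polynomial ring in eight indeterminates.

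Finally, for the Poisson brackets, I would specialise Lemma \ref{lem:F1polypbrack} to $\eta=\xi_i$ and $\eta'=\xi_j$. Expanding $[\xi_i,\xi_j]=\sum_k c_{ij}^{k}\xi_k$ in the chosen basis and using bilinearity of the bracket together with $P_{[\xi_i,\xi_j]}=\sum_k c_{ij}^{k}P_{\xi_k}=\sum_k c_{ij}^{k}P_k$, I get
\[
\{P_i,P_j\}_{1}=\{P_{\xi_i},P_{\xi_j}\}_\varepsilon=P_{[\xi_i,\xi_j]}=\sum_{k=1}^{8} c_{ij}^{k}P_k,
\]
as asserted. By the Leibniz rule and bilinearity, this determines the bracket on all of $\mathbb{R}[P_1,\dots,P_8]$ and makes the identification $P^{*}:(S(\mathfrak{su}(3)),\{\cdot,\cdot\}_{\mathrm{LP}})\to(\mathfrak{F}_1^{\mathrm{poly}},\{\cdot,\cdot\}_\varepsilon)$ a Poisson algebra isomorphism.

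There is no real obstacle here; the statement is essentially a bookkeeping consequence of the structural results in Section \ref{sec:superconstruction}. The only mildly delicate point is the algebraic independence of $P_1,\dots,P_8$, which however is immediate once one notes that the transcendence degree count in Lemma \ref{lem:rankacc} forces the eight linear pullbacks to form a transcendence basis of $\mathfrak{F}_1^{\mathrm{poly}}$.
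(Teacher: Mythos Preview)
Your proposal is correct and follows essentially the same route as the paper. The paper does not give a formal proof of this lemma; it simply summarizes the preceding paragraph, which invokes Lemma~\ref{lem:rankacc} for $\mathrm{rank}\,\mathfrak{F}_1^{\mathrm{poly}}=8$, remarks that ``the linearity of the generators ensures algebraic independence,'' and relies on Lemma~\ref{lem:F1polypbrack} for the bracket relations --- exactly the three ingredients you assemble, with your transcendence-degree argument making the independence step more explicit.
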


Let us now focus on the structure of the algebra $\mathfrak{F}_2^{\mathrm{poly}}$, specifically considering the root coordinates $z_k$. By Lemma \ref{lem:root-coord}, the complex root coordinate functions can be expressed in the following way: \begin{align}
    z_k(g,X)= \langle P(g,X),E_{\alpha_k}\rangle    =\exp(-\mathrm{i}\langle\alpha_k,\phi(g)\rangle)\,(x_k+iy_k),\quad k=1,2,3, \label{eq:coordinate}
\end{align}  where $\phi(g)\in\mathfrak{t}$ satisfies $g=\exp(\phi(g)) \in T $, and $\mathrm{i} = \sqrt{-1}$ is a complex number. We refer the reader to \cite{campoamor2023algebraic} the Cartan centralizer for the complexified case $\mathfrak{sl}(n,\mathbb{C})$, where the Racah algebra is embedded in the Cartan centralizer of type $A_n$ \cite{MR3884869}. For the general classification of the explicit expression of the Cartan commutant for $B_n,C_n$ and $D_n$, see \cite{campoamor2024superintegrable,abc}.  
 \begin{lemma}
 \label{alg1}
The Poisson algebra $S(\mathfrak{m})^T$ of the $T$-invariant polynomials with a Poisson structure $\{\cdot,\cdot\}_2$ induced by $\{\cdot,\cdot\}_\varepsilon$ is minimally generated by the five polynomials as follows:  \begin{align*}
    u_k=|z_k|^2\,, \text{ } k=1,2,3 ,\qquad v=\Re(z_1z_2\bar z_3),\qquad w=\mathrm{Im}(z_1z_2\bar z_3),
\end{align*} subject to the single cubic relation $ u_1u_2u_3=v^2+w^2 $.  Then   \begin{align*}
    \mathfrak{F}_2^{\mathrm{poly}} =  \pi_\mathfrak{m}^* \left(\mathbb{R}[u_1,u_2,u_3,v,w]\Big/ \langle u_1u_2u_3-v^2-w^2\rangle\right), \qquad \deg u_k=2,\,\deg v=\deg w=3.
\end{align*}  The Poisson brackets among these generators are  \begin{align}
 \nonumber
     &\{u_1,u_2\}_2=2v,\ \{u_2,u_3\}=2v,\ \ \ \{u_3,u_1\}_2=2v,\\
     \nonumber
&\{u_1,v\}_2=u_1(u_3-u_2)-c_1w,\quad    \{u_2,v\}_2=u_2(u_1-u_3)-c_2w,\\
&\{u_3,v\}_2=u_3(u_2-u_1)-c_3w,\quad   \{u_k,w\}_2=c_k v,\\
  \nonumber
&\{v,w\}_2=-\frac{1}{2}\big(c_3u_1u_2-c_2u_1u_3-c_1u_2u_3\big),
 \end{align}  where $ 1\leq  k \leq 3$ and $c_k = \varepsilon B(W,H_{\alpha_k})$. 
\end{lemma}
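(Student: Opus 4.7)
My plan is to reduce both the algebraic presentation and the bracket table to root-system bookkeeping in $\mathfrak{sl}(3,\mathbb{C})$. For the generators and the relation, I would first diagonalize the $T$-action on $\mathfrak{m}$ using the complex coordinates $z_k=B(\xi,E_{-\alpha_k})$ of Lemma \ref{lem:root-coord}: a generic $t\in T$ acts by $z_k\mapsto e^{-\mathrm{i}\langle\alpha_k,\log t\rangle}z_k$, so a monomial $\prod_k z_k^{a_k}\bar z_k^{b_k}$ is $T$-invariant iff $\sum_k(a_k-b_k)\alpha_k=0$ in $\mathfrak{t}^*$. Since $\alpha_3=\alpha_1+\alpha_2$ and $\{\alpha_1,\alpha_2\}$ is a basis of $\mathfrak{t}^*$, this reduces to the two integer conditions $a_1-b_1=b_3-a_3$ and $a_2-b_2=b_3-a_3$, cutting out a pointed semigroup $\Lambda\subset\mathbb{Z}_{\ge 0}^6$ of rank four. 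A direct Hilbert-basis computation for $\Lambda$ yields exactly the five indecomposable elements $z_k\bar z_k$ for $k=1,2,3$, together with $z_1z_2\bar z_3$ and $\bar z_1\bar z_2z_3$.

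Passing to real and imaginary parts of $z_1z_2\bar z_3$, I would identify these with the five real generators $u_1,u_2,u_3,v,w$ of the statement. The identity $(z_1z_2\bar z_3)(\bar z_1\bar z_2z_3)=|z_1|^2|z_2|^2|z_3|^2$ produces the cubic relation $v^2+w^2=u_1u_2u_3$, and a standard semigroup-presentation argument shows that this is the unique minimal syzygy, giving $S(\mathfrak{m})^T\cong\mathbb{R}[u_1,u_2,u_3,v,w]/\langle u_1u_2u_3-v^2-w^2\rangle$. Minimality of the five generators follows from a degree count: the $u_k$ are of degree two with no positive-degree invariants below them, while $v$ and $w$ are of degree three and the only lower positive-degree invariants are the $u_k$. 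The rank count $\mathrm{trdeg}\,S(\mathfrak{m})^T=5-1=4=\dim\mathfrak{m}-\dim T$ then matches Lemma \ref{lem:rankacc} for $A=T$ with generically trivial stabiliser on $\mathfrak{m}_{\mathrm{reg}}$.

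For the bracket table, I would feed the five generators into the slice formula of Proposition \ref{prop:slicepoisson},
\[
\{\theta_1,\theta_2\}_2(\xi)=-B\bigl(\xi,[(\nabla\theta_1(\xi))_{\mathfrak{m}},(\nabla\theta_2(\xi))_{\mathfrak{m}}]\bigr),\qquad \xi=X-\varepsilon W.
\]
Since $\nabla z_k=E_{-\alpha_k}$ and $\nabla\bar z_k=E_{\alpha_k}$, the required gradients are $\nabla u_k=z_kE_{\alpha_k}+\bar z_kE_{-\alpha_k}$ together with polarisations of $z_1z_2\bar z_3$ for $v$ and $w$, all lying in $\mathfrak{m}$. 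The brackets then reduce to three pieces of root data: the Cartan brackets $[E_{\alpha_i},E_{-\alpha_j}]=\delta_{ij}H_{\alpha_i}$; the nilpotent brackets $[E_{\pm\alpha_i},E_{\pm\alpha_j}]=N_{ij}^{\pm\pm}E_{\pm(\alpha_i+\alpha_j)}$ whenever $\pm(\alpha_i+\alpha_j)$ is a root and zero otherwise; and the magnetic pairings $B(\xi,H_{\alpha_k})=-\varepsilon B(W,H_{\alpha_k})=-c_k$, arising because $B(X,\mathfrak{t})=0$. Rewriting each resulting monomial in the basis $(u_k,v,w)$ should reproduce the stated table entry by entry. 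Note that the constants $c_k$ only enter through the Cartan contributions $[E_{\alpha_k},E_{-\alpha_k}]=H_{\alpha_k}$ paired against the magnetic term $-\varepsilon W$, which is exactly what their pattern of occurrence in the table predicts.

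The hard part will be sign bookkeeping. The signs of the Chevalley constants $N_{ij}^{\pm\pm}$ depend on the choice of compact-form conjugation and must be reconciled with the normalisation $B(E_{\alpha},E_{-\alpha})=1$ fixed above, so that, for instance, all three brackets $\{u_i,u_j\}_2$ with $i\neq j$ come out to $+2v$ rather than $\pm 2v$ with mixed signs. Once one bracket such as $\{u_1,u_2\}_2=2v$ is calibrated against the conventions of the subsection, the remaining entries follow mechanically by antisymmetry and the Leibniz rule. Closure of the bracket inside $\mathbb{R}[u_1,u_2,u_3,v,w]/\langle u_1u_2u_3-v^2-w^2\rangle$ is automatic since $\{S(\mathfrak{m})^T,S(\mathfrak{m})^T\}_2\subset S(\mathfrak{m})^T$ by Proposition \ref{prop:slicepoisson}; this also provides a consistency check at every step without ever having to impose the cubic relation by hand.
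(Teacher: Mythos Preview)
Your proposal is correct and follows essentially the same route as the paper's proof: both determine the generators by analysing the weight condition $\sum_k(a_k-b_k)\alpha_k=0$ on monomials (the paper solves the resulting linear system directly rather than invoking a Hilbert-basis computation, but the content is identical), derive the syzygy from $|z_1z_2\bar z_3|^2=u_1u_2u_3$, and then compute the bracket table via the slice formula of Proposition~\ref{prop:slicepoisson} using $\nabla z_k=E_{-\alpha_k}$, $\nabla\bar z_k=E_{\alpha_k}$ and the Chevalley relations. The paper carries out the explicit bracket calculations for $\{u_1,u_2\}_2$, $\{u_1,F\}_2$, $\{u_1,\bar F\}_2$ (with $F=z_1z_2\bar z_3$) and $\{v,w\}_2$ in full, which is exactly the ``sign bookkeeping'' you correctly flag as the remaining work.
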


\begin{proof}
We first compute the $T$-invariant generators in $S(\mathfrak{m})$. By Lemma \ref{lem:root-coord}, we deduce  \begin{align}
    u_k:= |z_k|^2=x_k^2+y_k^2\quad \text{ with } k=1,2,3,\qquad v+\mathrm{i}w:=z_1\,z_2\,\overline{z_3} . \label{eq:tinvari}
\end{align} We first show that the polynomials defined in \eqref{eq:tinvari} are $T$-invariant. For the degree two generators, it is obvious for $u_k$ as $\alpha_k +  (-\alpha_k) = 0$. Moreover, for $z_1 z_2 \overline{z_3}$, we use the root identity $\alpha_1+\alpha_2-\alpha_3=0$ since the total weight is zero. Hence, it is $T$-invariant. We then take its real and imaginary parts $v=\Re(\cdot)$, $w=\mathrm{Im}(\cdot)$. A general $T$-invariant monomial is $z_1^{a_1}z_2^{a_2}z_3^{a_3}
\overline{z_1}^{b_1}\overline{z_2}^{b_2}\overline{z_3}^{b_3}$ with weight $(a_1-b_1)\alpha_1+(a_2-b_2)\alpha_2+(a_3-b_3)\alpha_3$. That is, by the definition of the Cartan commutant \begin{align*}
  (a_1-b_1)\alpha_1+(a_2-b_2)\alpha_2+(a_3-b_3)(\alpha_1+\alpha_2)=0,  
\end{align*} which is equivalent to the linear system $ (a_1-b_1)+(a_3-b_3)=0$ and  $ (a_2-b_2)+(a_3-b_3)=0$. Hence, $a_1-b_1=-(a_3-b_3)$ and $a_2-b_2=-(a_3-b_3)$. Writing $d:=a_3-b_3$, we can decompose any invariant monomial as a product of the powers of $|z_k|^2=u_k$ (which accounts for the symmetric part) and $(z_1z_2\overline{z_3})^d$ (which accounts for the weight $d$), together with its complex conjugate, if needed, to keep the coefficients real. Thus, the invariant ring is generated by $u_1,u_2,u_3$ and $z_1z_2\overline{z_3}$ (hence, by $u_1,u_2,u_3,v,w$).  In addition, a routine computation shows that there is an algebraic relation between these five invariants: \begin{align}
 u_1\,u_2\,u_3=|z_1|^2\,|z_2|^2\,|z_3|^2 =|z_1 z_2 \overline{z_3}|^2=v^2+w^2 . \label{eq:plrelation}
\end{align}   

   We then calculate the Poisson bracket relations from all the $5$ generators $\{u_1,u_2,u_3,w,v\}$. By Proposition \ref{prop:slicepoisson}, the Poisson bracket defined on $S(\mathfrak{m})^T$ is given by $\{\theta_1,\theta_2\}_2(\xi) :=-B \left(\xi,\bigl[(\nabla\theta_1(\xi))_{\mathfrak{m}},(\nabla\theta_2(\xi))_{\mathfrak{m}}\bigr]\right)$, where $(\cdot)_\mathfrak{m}$ is projection to $\mathfrak{m}$-component. In other words, we need to evaluate the gradient of the coordinate functions. By Lemma \ref{lem:root-coord}, the linear functional of coordinates is given by $z_k(\zeta)=B(\zeta,E_{-\alpha_k})$, and $\overline{z_k}(\zeta)=B(\zeta,E_{\alpha_k})$ are linear functionals. A direct calculation shows that  $\nabla z_k=E_{-\alpha_k}$ and $\nabla \overline{z_k}=E_{\alpha_k}$. Therefore, \begin{align}
\nabla u_k=\nabla(z_k\overline{z_k})=\overline{z_k}\,E_{\alpha_k}+z_k\,E_{-\alpha_k}.\tag{A}
\end{align} Writing $F:=z_1z_2\overline{z_3}$ such that $v=\frac{1}{2}(F+\overline{F})$, $w=\frac{1}{2\mathrm{i}}(F-\overline{F})$, \begin{align}
\nabla F=z_2\overline{z_3}\,E_{-\alpha_1}+z_1\overline{z_3}\,E_{-\alpha_2}+z_1z_2\,E_{\alpha_3},\qquad \nabla\overline{F}=\overline{z_2}z_3\,E_{\alpha_1}+\overline{z_1}z_3\,E_{\alpha_2}+\overline{z_1}\,\overline{z_2}\,E_{-\alpha_3}.\tag{B}
\end{align} From (A) and (B), we first deduce $(\theta)_\mathfrak{m} =\theta$ for any $\theta \in S(\mathfrak{m})^T$.

We now compute the Poisson bracket $\{u_1,u_2\}_2$.  Using (A) and (B), \begin{align*}
[\nabla u_1,\nabla u_2] = z_1z_2\,[E_{-\alpha_1},E_{-\alpha_2}]+\overline{z_1}\,\overline{z_2}\,[E_{\alpha_1},E_{\alpha_2}] =  \overline{z_1}\,\overline{z_2}\,E_{\alpha_3}-\,z_1z_2\,E_{-\alpha_3}.
\end{align*} Therefore, by Lemma \ref{lem:root-coord} and its remark, we have $B(\zeta,E_{\alpha_3}) = -z_3$ and $B(\zeta,E_{-\alpha_3}) = \overline{z_3}$, and hence \begin{align*}
\{u_1,u_2\}_2 =-B\big(\zeta,\,[\nabla u_1,\nabla u_2]\big) = -\Big( \overline{z_1}\,\overline{z_2}\,B(\zeta,E_{\alpha_3}) -z_1z_2\,B(\zeta,E_{-\alpha_3})\Big) = \overline{z_1}\,\overline{z_2}\,z_3 +z_1z_2\,\overline{z_3} = 2\,\Re(z_1z_2\overline{z_3}) = 2\,v.
\end{align*}

 Now, we focus on the Poisson bracket $\{u_1,F\}_2$. Similar to what we did above, a direct computation shows that \begin{align*}
[\nabla u_1,\nabla F] &= \left[\,\overline{z_1}E_{\alpha_1}+z_1E_{-\alpha_1},\ z_2\overline{z_3}E_{-\alpha_1}+z_1\overline{z_3}E_{-\alpha_2}+z_1z_2E_{\alpha_3}\right]\\
&= \overline{z_1}\,z_1z_2\,[E_{\alpha_1},E_{\alpha_3}]+\overline{z_1}\,z_1\overline{z_3}\,[E_{\alpha_1},E_{-\alpha_2}] + \overline{z_1}\,z_2\overline{z_3}\,[E_{\alpha_1},E_{-\alpha_1}]\\
&\quad +  z_1\,z_1\overline{z_3}\,[E_{-\alpha_1},E_{-\alpha_2}] +  z_1\,z_1z_2\,[E_{-\alpha_1},E_{\alpha_3}] +  z_1\,z_2\overline{z_3}\,[E_{-\alpha_1},E_{-\alpha_1}]\\
&=  \underbrace{z_2\overline{z_3}\,\overline{z_1}}_{\;=\ \overline{z_1}z_2\overline{z_3}}\,H_{\alpha_1} - \underbrace{z_1\overline{z_1}\,\overline{z_3}}_{\;=\ |z_1|^2\overline{z_3}}\,E_{-\alpha_3} + \underbrace{z_1\overline{z_1}\,z_2}_{\;=\ |z_1|^2 z_2}\,E_{\alpha_2}.
\end{align*}
Then the twisted Poisson bracket $\{u_1,F\}_2$ is calculated as follows: \begin{align*}
\{u_1,F\}_2 =-\,B(\zeta,[\nabla u_1,\nabla F]) = -\Big(\overline{z_1}z_2\overline{z_3}\,B(\zeta,H_{\alpha_1}) - |z_1|^2\overline{z_3}\,B(\zeta,E_{-\alpha_3}) +  |z_1|^2 z_2\,B(\zeta,E_{\alpha_2})\Big).
\end{align*} Using $B(\zeta,H_{\alpha_1})=B(X,H_{\alpha_1})-\varepsilon\,c_1$ and $B(\zeta,E_{-\alpha_3})=z_3,\ B(\zeta,E_{\alpha_2})=\overline{z_2}$, we obtain \begin{align}
\{u_1,F\}_2 = -\,\overline{z_1}z_2\overline{z_3}\, \left(B(X,H_{\alpha_1})-\varepsilon c_1 \right) -|z_1|^2\big( |z_2|^2-|z_3|^2 \big). \label{eq:a1}
\end{align}

 Finally, we compute the twisted Poisson bracket $\{u_1,\overline{F}\}_2$. A similar calculation from (A) and (B) gives \begin{align*}
[\nabla u_1,\nabla \overline{F}] = -\,z_1\overline{z_2}z_3\,H_{\alpha_1}+|z_1|^2 z_3\,E_{\alpha_3}-|z_1|^2\overline{z_2}\,E_{-\alpha_2}.
\end{align*} Hence, \begin{align}
\{u_1,\overline{F}\}_2 = z_1\overline{z_2}z_3\,\left(B(X,H_{\alpha_1})-\varepsilon c_1\right) -|z_1|^2\big(\,|z_3|^2-|z_2|^2\,\big). \label{eq:a2}
\end{align}

To obtain the corresponding Poisson bracket, we then need to transform from $F,\overline{F}$ to $v,w$.  Since $v=\frac{1}{2}(F+\overline{F})$ and $w=\frac{1}{2\mathrm{i}}(F-\overline{F})$, \begin{align*}
\{u_1,v\}_2 =\frac{1}{2}\big(\{u_1,F\}_2+\{u_1,\overline{F}\}_2\big),\qquad \{u_1,w\}_2 =\frac{1}{2\mathrm{i}}\big(\{u_1,F\}_2-\{u_1,\overline{F}\}_2\big).
\end{align*} Adding \eqref{eq:a1} and \eqref{eq:a2} together cancels the terms $|z_1|^2$ and yields \begin{align*}
\{u_1,v\}_2 = -\,\Re\Big(\overline{z_1}z_2\overline{z_3}-z_1\overline{z_2}z_3\Big)\left(B(X,H_{\alpha_1})-\varepsilon c_1\right) = -\,\varepsilon c_1\,\Im(z_1z_2\overline{z_3}) +|z_1|^2\big(|z_3|^2-|z_2|^2\big).
\end{align*} Recognizing $w=\Im(z_1z_2\overline{z_3})$ and $u_k=|z_k|^2$, we have \begin{align}
 \{u_1,v\}_2=u_1(u_3-u_2)-c_1\,w . \label{eq:b}
\end{align} Similarly, \begin{align}
\{u_1,w\}_2 =\frac{1}{2\mathrm{i}}\big(\{u_1,F\}_2-\{u_1,\overline{F}\}_2\big) =  \frac{c_1}{2}\big(\overline{z_1}z_2\overline{z_3}+z_1\overline{z_2}z_3\big) =  c_1 v . \label{eq:c}
\end{align} A cyclic permutation of $(1,2,3)$ yields $\{u_i,u\}_2$ and $\{u_i,w\}_2 $. Finally, \begin{align*}
\{v,w\}_2 =\frac{1}{4\mathrm{i}}\big(\{F+\overline{F},F-\overline{F}\}_2\big) =\frac{1}{2\mathrm{i}}\,\{F,\overline{F}\}_2- B \big(\zeta,\,[\nabla F,\nabla\overline{F}]\big).
\end{align*} Here, we deduce that $B\big(\zeta,\,[\nabla F,\nabla\overline{F}]\big) = \frac{\mathrm{i}}{2}\,\Big(c_3 u_1u_2-c_2 u_1u_3-c_1 u_2u_3\Big) $ such that \begin{align}
 \{v,w\}_2=-\frac{1}{2}\Big(c_3 u_1u_2-c_2 u_1u_3-c_1 u_2u_3\Big).
\end{align} This completes the proof.
\end{proof}
\begin{remark}  \label{rem:magneticcoup}
(i) Throughout Lemma \ref{alg1}, we use normalization  \begin{align}
    c_k :=\varepsilon\,B \left(W,H_{\alpha_k}\right)\qquad k=1,2,3,
\end{align} where $H_{\alpha_k}$ are the coroots (with $\alpha_3=\alpha_1+\alpha_2$) and $B$ is the invariant bilinear form. With this choice, the Poisson twisted brackets appearing in \eqref{eq:a1}, \eqref{eq:a2}-\eqref{eq:c} are written with the symbols $c_k$ and without an extra factor of $\varepsilon$. 
If one prefers to keep the magnetic term explicit, set instead \begin{align*}
    \widehat{c}_k :=B \left(W,H_{\alpha_k}\right),
\end{align*} in which case the same relations become \begin{align*}
\{u_i,v\}_2 &=u_i(u_k-u_j)-\varepsilon\,\widehat{c}_i\,w,\\
\{u_i,w\}_2 &=\varepsilon\,\widehat{c}_i\,v,\\
\{v,w\}_2 &=-\frac{1}{2} \left(\varepsilon\,\widehat{c}_3 u_1u_2-\varepsilon\,\widehat{c}_2 u_1u_3-\varepsilon\,\widehat{c}_1 u_2u_3\right),
\end{align*} and similarly $\{u_i,u_j\}_2=2v$. Thus, Lemma \ref{alg1} is consistent with either convention. Hence, we adopt $c_k=\varepsilon B(W,H_{\alpha_k})$ precisely to avoid an additional factor $\varepsilon$. 
 
  (ii)   From the above discussion, we also deduce that the generators in the coordinate ring $S(\mathfrak{m}-\varepsilon W)^T$ describe the coordinates on $ \mathfrak{m}//T= \mathrm{Spec}\,S(\mathfrak{m})^T$. Note that the affine quotient $\mathfrak{m}//T$ can be identified with the $5$-dimensional real hypersurface \begin{align}
    \mathfrak{m}//T:= \{(u_1,u_2,u_3,v,w)\in\mathbb{R}^{5}: u_1u_2u_3 = v^2 + w^2,  u_i \geq 0 \}. \label{eq:afb}
\end{align} It contains a singularity in the locus $\{v = w  = 0\}$ and $\{u_iu_ju_k = 0\}$.   The map $\kappa:\mathfrak{m}\to X$ sending $$X\mapsto(u_1(X),u_2(X),u_3(X),v(X),w(X))$$ is the canonical quotient map on invariants. Note that for any $X_1,X_2 \in \mathfrak{m}$, $\pi(X_1) = \pi(X_2)$ implies that there exists a $t \in T$ such that $X_2 = \mathrm{Ad}(t^{-1})X_1$. The $T$-invariant map $\kappa\circ\widetilde{\pi}_{\mathfrak{m}}:G\times\mathfrak{m}\to\mathfrak{m}//T$ descends to a well-defined quotient-valued map $T^*(\mathrm{SU}(3)/T)\to\mathfrak{m}//T$, and its coordinate functions are exactly the generators in $S(\mathfrak{m})^T$. Geometrically, we could define $r_k : = |Z_k| = \sqrt{u_k}$ and $\phi : = \mathrm{arg} \left(\frac{u_1 u_2}{u_3}\right)$. Then \begin{align*}
    v = r_1 r_2 r_3 \cos \phi \text{ and } w   = r_1 r_2 r_3 \sin \phi.
\end{align*} Hence, the local coordinate on $\mathfrak{m}//T$ is given by $(r_1,r_2,r_3,\phi)$.
\end{remark}

The intersection consists of those elements of the $P^*(\mathfrak{su}(3))$ and $T$-invariant generators after restriction to $\mathfrak{m}$. By Chevalley’s theorem, the $G$-invariant polynomials in $\mathfrak{su}(3)$ are generated by the quadratic and cubic Casimirs  \begin{align}
    C_2(Y)=\sum_{i=1}^8Y_i^2,\qquad  C_3(Y)=\sum_{i,j,k}^8d_{i,j,k}Y_iY_jY_k, \label{eq:Casimir}
\end{align}   and their restrictions $ R_2 : = \mathrm{Res}_W (C_2)  =u_1+u_2+u_3,$ and $ R_3 : = \mathrm{Res}_W (C_3)=2v $ are clearly non-zero.  Hence, \begin{align*}
R_0 = \mathfrak{F}_1^{\mathrm{poly}}\cap \mathfrak{F}_2^{\mathrm{poly}} =P^* \bigl(S(Z_P)^G\bigr) =\pi_\mathfrak{m}^* \bigl(\mathbb{R}[R_2,R_3]\bigr) =\mathbb{R}[J_2,J_3],    
\end{align*}  where $J_2:=P^*C_2=\pi_\mathfrak{m}^*R_2$ and $J_3:=P^*C_3=\pi_\mathfrak{m}^*R_3$. We now focus on the joint of two algebras. From the discussion in Section \ref{sec:superconstruction}, it is sufficient to study the algebra $\mathfrak{F}_1^{\mathrm{poly}} \otimes_{R_0} \mathfrak{F}_2^{\mathrm{poly}}$. 

\begin{proposition}
\label{prop:uniona}
  Let $\mathfrak{F}_1^{\mathrm{poly}}$ and $\mathfrak{F}_2^{\mathrm{poly}}$ be the same as defined in Lemma \ref{lem:firf1} and Lemma \ref{alg1}, respectively. Then $\mathcal{A} = \mathfrak{F}_1^{\mathrm{poly}} \otimes_{R_0} \mathfrak{F}_2^{\mathrm{poly}}$ is given by \begin{align}
    \mathcal{A} =  \frac{\mathbb{R} \bigl[P_1,\ldots,P_8,u_1,u_2,u_3,v,w\bigr]}{\left\langle u_1u_2u_3-v^2-w^2,C_2 - \left(u_1 +  u_2 +  u_3\right) , C_3 - 2 v\right\rangle}     . \label{eq:A1}
\end{align}   In particular, $(\mathcal{A},\{\cdot,\cdot\}_\mathcal{A})$ is a Poisson algebra with the following commutator relations: \begin{align}
\nonumber
\{P_k,P_j\}_{\mathcal{A}} &= c_{kj}^i\,P_i,\\
\nonumber
\{P_k,u_k\}_{\mathcal{A}} &= \{P_k,v\}_{\mathcal{A}}=\{P_k,w\}_{\mathcal{A}}=0,\\
\nonumber
\{u_i,u_j\}_{\mathcal{A}} &= 2\,v  , \label{eq:commure}\\
\{u_i,v\}_{\mathcal{A}} &= u_i\,(u_k-u_j)-c_i\,w ,\\
\nonumber
\{u_i,w\}_{\mathcal{A}} &= c_i\,v  ,\\
\nonumber
\{v,w\}_{\mathcal{A}} &= -\frac{1}{2}\bigl(c_3 u_1u_2-c_2 u_1u_3-c_1 u_2u_3\bigr),
\end{align} where $ 1\leq i,j,k \leq 3$ and $c_i = \varepsilon B(W,H_{\alpha_i})$.
\end{proposition}
\begin{proof}
  Due to the property of left $G$-equivariance, it suffices to compute at $g=e$. Write $\xi=X-\varepsilon W$, and let $P_Y:=\langle P,Y \rangle$ be a coordinate functional such that $P_i=P_{\xi_i}$.  By Lemma \ref{lem:F1polypbrack}, we deduce that \begin{equation}
X_{P_Y}(e,X)= \left. \dfrac{d}{dt}\right\vert_{t =0} \left(e,X + t (Y)_\mathfrak{m} + tW -\frac{t}{2}[(Y)_\mathfrak{m},X]\right)  =\Big(-(Y)_{\mathfrak{m}}\,,\,\frac{1}{2}[(Y)_{\mathfrak{m}},X]\,-\,[Y,X]_{\mathfrak{m}}\Big). \label{eq:XYfield}
\end{equation}
 
Now, to conclude \eqref{eq:commure}, it is sufficient to show that $\{\mathfrak{F}_1^{\mathrm{poly}},\mathfrak{F}_2^{\mathrm{poly}}\}_\varepsilon = 0$. For any $\theta(\xi) \in \mathfrak{F}_2^{\mathrm{poly}}$, by the definition of the twisted Poisson bracket, $\{P_Y,\theta(\xi)\}_\varepsilon=d(\theta(\xi))\big(X_{P_Y}\big)$. Together with the Hamiltonian vector field \eqref{eq:XYfield}, the differential of $\theta$ in the coordinate $\left(v, w - \frac{1}{2}[v,X]\right)$ is given by
\begin{align*}
\{P_Y,\theta(\xi)\}_\varepsilon = & \,B\left(\nabla\theta(\xi),\,\Big(\frac{1}{2}[(Y)_{\mathfrak{m}},X]-[Y,X]_{\mathfrak{m}}\Big) -\frac{1}{2}\big[-(Y)_{\mathfrak{m}},X\big]\right) \\
= & \, B\big(\nabla\theta(\xi),\, -[Y,X]_{\mathfrak{m}}\big).
\end{align*} The decomposition of $Y=(Y)_{\mathfrak{t}}+(Y)_{\mathfrak{m}}$ shows $[Y,X]_{\mathfrak{m}}=[(Y)_{\mathfrak{t}},X]$ as $[(Y)_{\mathfrak{m}},X]\in\mathfrak{t}\oplus(\text{sum of root lines})$ has no $\mathfrak{m}$-component along the $T$-invariant gradient. Hence,
\begin{equation}\label{eq:carduction}
\{P_Y,\theta(\xi)\}_\varepsilon=B\big(\nabla\theta(\xi),\, -[(Y)_{\mathfrak{t}},X]\big).
\end{equation}

Now, for any $\theta \in S(\mathfrak{m})^A$, compute \eqref{eq:carduction} using the $A_2$ root relations. First, for $u_k(\xi)=|z_k|^2$, as presented in the proof of Lemma \ref{alg1}, we have $\nabla u_k(\xi)=z_k\,E_{\alpha_k}+\overline{z_k}\,E_{-\alpha_k}$. For any $H\in\mathfrak{t}$, we have \begin{align*}
[H,X]=\sum_{i=1}^3\alpha_i(H)\,z_i\,E_{\alpha_i}-\sum_{i=1}^3\alpha_i(H)\,\overline{z_i}\,E_{-\alpha_i}.
\end{align*} Hence, using $B(E_{\alpha_i},E_{-\alpha_j})=\delta_{ij}$ and $B(E_{\pm\alpha_i},E_{\pm\alpha_j})=0$,
\begin{align*}
B\big(\nabla u_k(\xi),[H,X]\big)
= -\,\alpha_k(H)\,|z_k|^2+\alpha_k(H)\,|z_k|^2=0.
\end{align*}
Substituting $H=(Y)_{\mathfrak{t}}$ into \eqref{eq:carduction} gives $\{P_Y,u_k\}_\varepsilon=0$.

Next, let $F:=z_1z_2\overline{z_3}$ and note that $v=\mathrm{Re} F$, $w=\mathrm{Im} F$. The gradient with respect to $B$ is \begin{align*}
\nabla F(\xi)=z_2\overline{z_3}\,E_{-\alpha_1}+z_1\overline{z_3}\,E_{-\alpha_2}+z_1z_2\,E_{\alpha_3}.
\end{align*} Therefore, \begin{align*} 
B\big(\nabla F(\xi),[H,X]\big) &=\ \alpha_1(H)\,z_2\overline{z_3}\,z_1+\alpha_2(H)\,z_1\overline{z_3}\,z_2-\alpha_3(H)\,z_1z_2\overline{z_3}\\
&=\ \big(\alpha_1(H)+\alpha_2(H)-\alpha_3(H)\big)\,z_1z_2\overline{z_3}=0, 
\end{align*} since $\alpha_3=\alpha_1+\alpha_2$. Taking real and imaginary parts yields  $B\big(\nabla v(\xi),[H,X]\big)=B\big(\nabla w(\xi),[H,X]\big)=0$.  With $H=(Y)_{\mathfrak{t}}$, \eqref{eq:carduction} gives $\{P_Y,v\}_\varepsilon=\{P_Y,w\}_\varepsilon=0$. Since $P_{\xi_i}=P_i$, the three identities read \begin{align*}
\{P_j,u_k\}_\varepsilon=\{P_j,v\}_\varepsilon=\{P_j,w\}_\varepsilon=0\qquad j=1,\dots,8; \text{ }  k=1,2,3.
\end{align*} Therefore, \eqref{eq:commure} is verified.
 \end{proof}

We now construct the superintegrable system defined in $T^*(\mathrm{SU}(3)/T)$.  Define the map of affine real varieties 
\begin{align} 
\nonumber
\pi_1 : T^*(\mathrm{SU}(3)/T)&\longrightarrow\mathrm{Spec} \,\mathcal{A},\\
(g,X)\quad&\longmapsto\,
\bigl(P_1(g,X),\ldots ,P_8(g,X),u_1(g,X),u_2(g,X),u_3(g,X),v(g,X),w(g,X)\bigr).
\end{align} Here, by Corollary \eqref{cor:union}, $\mathrm{Spec} \, \mathcal{A} = \mathfrak{su}(3) \times_{\mathfrak{su}(3)//\mathrm{SU}(3)} (\mathfrak{m} - \varepsilon W)//T$. By construction, every coordinate of $\pi_1$ lies in the Poisson commutative subalgebra $\mathcal{A} \subset \mathcal{O}(T^*M)$. Hence $\pi_1$ is a Poisson morphism.  Generic fibers of $\pi_1$ have dimensions $\dim T^*M-\mathrm{trdeg} \, \mathcal{A}=12-10=2$ and are therefore isotropic.   The canonical inclusion $R_0\hookrightarrow \mathcal{A}$ induces an affine variety morphism \begin{align*}
\pi_2:\mathrm{Spec} \,\mathcal{A}\longrightarrow\mathrm{Spec} \, R_0
          \,\cong\,\mathbb{R}^2_{(J_2,J_3)}, 
  \end{align*} given by \begin{align}
     \pi_2(P_1,\ldots ,P_8,u_1,u_2,u_3,v,w)    =  \bigl(J_2,J_3\bigr)    =\left(\sum_{k=1}^8P_k^2,\,2v\right). 
  \end{align} The composite \begin{align*}
      T^*(\mathrm{SU}(3)/T)\,\xrightarrow{\,\pi_1\,}\,\mathrm{Spec} \,\mathcal{A}
        \,\xrightarrow{\,\pi_2\,}\,\mathrm{Spec} \, R_0
  \end{align*}  exhibits the following structure.
  
\begin{proposition}
\label{prop:supersu3t}
Write $\xi=X-\varepsilon W$.   Let $  (h_1,h_2,x_k,y_k)_{k=1}^3\in\mathbb{R}^8 $ be the real coordinates given before, and let $$ \pi_1\colon T^*(\mathrm{SU}(3)/T)\longrightarrow \mathrm{Spec} \,\mathcal{A},\qquad \pi_2\colon\mathrm{Spec} \,\mathcal{A}\longrightarrow\mathrm{Spec} \, R_0 $$ be a chain of Poisson submersions, where $$ \mathcal{A}=\mathfrak{F}_1^{\mathrm{poly}} \otimes_{R_0} \mathfrak{F}_2^{\mathrm{poly}},\qquad R_0=\mathfrak{F}_1^{\mathrm{poly}}\cap \mathfrak{F}_2^{\mathrm{poly}}=\mathbb{R}[J_2,J_3] $$ and $\mathfrak{F}_1^{\mathrm{poly}},\mathfrak{F}_2^{\mathrm{poly}}$ are the polynomial families.  Then \begin{align}
      T^*(\mathrm{SU}(3)/T) \xrightarrow{\,\pi_1\,} \mathrm{Spec} \,\mathcal{A}     \xrightarrow{\,\pi_2\,} \mathrm{Spec} \, R_0
  \end{align} is a superintegrable system.
\end{proposition}

\begin{proof}
Using Lemma \ref{lem:firf1} and Lemma \ref{alg1}, we see that $\mathfrak{F}_1^{\mathrm{poly}}$ and $\mathfrak{F}_2^{\mathrm{poly}}$ are indeed Poisson polynomial algebra closed under the twisted Poisson bracket $\{\cdot,\cdot\}_\varepsilon$. By Proposition \ref{prop:uniona}, $\mathcal{A}=\mathfrak{F}_1^{\mathrm{poly}}\otimes_{R_0}\mathfrak{F}_2^{\mathrm{poly}}$ is a Poisson algebra with Poisson brackets as follows:
\begin{align*}
\{f\otimes \theta,f'\otimes \theta'\}_{\mathcal{A}}=\{f,f'\}_\varepsilon\otimes \theta\theta' +  ff'\otimes\{\theta,\theta'\}_\varepsilon,\qquad f,f'\in\mathfrak{F}_1^{\mathrm{poly}},\,\theta,\theta'\in\mathfrak{F}_2^{\mathrm{poly}},
\end{align*} and extended by bilinearity. By construction, each coordinate of the map \begin{align*}
\pi_1:T^*M\to\mathrm{Spec}\,\mathcal{A},\qquad(g,X)\mapsto(P_1,\ldots,P_8,u_1,u_2,u_3,v,w),
\end{align*} belongs to $\mathcal{A}$, and the pullback of the coordinate functions on $\mathrm{Spec}\,\mathcal{A}$ reproduces the corresponding elements of $\mathcal{A}\subset \mathcal{O}(T^*M)$. The defining rule for $\{\cdot,\cdot\}_{\mathcal{A}}$ therefore gives\begin{align*}
\{f,g\}_{\mathcal{A}}\circ\pi_1=\{f\circ\pi_1,g\circ\pi_1\}_\varepsilon \quad\text{for all}\,f,g\in \mathcal{O}(\mathrm{Spec}\,\mathcal{A}) \subset C^\infty(\mathrm{Spec}\, \mathcal{A}),
\end{align*}so $\pi_1$ is a Poisson map.

We now show that $d\pi_1$ is surjective via direct computation. In other words, we show that $\mathrm{rank} \, d \pi_1 (g,X) = 10$ on a Zariski open dense subset of $T^*(\mathrm{SU}(3)/T)$. For each $k$, the differentiation of the coordinate functions is given by
\begin{align}
 dz_k(v,\delta X)= \langle [\mathrm{Ad}(g^{-1})v,\mathrm{Ad}(g^{-1})\xi]+\mathrm{Ad}(g^{-1})\delta X\,,\,E_{\alpha_k}\rangle , \label{eq:diffmp}
\end{align}
with $du_k=2\Re(\bar z_k\,dz_k)$, and similarly for $dv,dw$. Recall that $v$ and $\delta X$ represent the vertical and horizontal components of the tangent vector of $T^*M$. 
In the open regular set of $T^*(\mathrm{SU}(3)/T)$ with $\xi$ being regular and $z_1z_2\neq 0$, the map $v\mapsto [v,\xi]$ maps $\mathfrak{g}/\mathfrak{t}$ isomorphically onto $[\mathfrak{g},\xi]$, and the directions $[\mathfrak{g},\xi]$ together with $\mathfrak{m}$ generate $\mathfrak{g}$. That is, $\mathfrak{g} = [\mathfrak{g},\xi] \oplus \mathfrak{m}$. Hence, the differentials $\{dP_k\}_{k=1}^8$ span an $8$-dimensional subspace of covectors. In the same locus, the two differentials \begin{align}
    du_1 = 2 x_1 dx_1 + 2 y_1 dy_1, \quad du_2 = 2 x_2 dx_2 + 2 y_2 dy_2  \label{eq:uvdif}
\end{align} are linearly independent with $u_1=x_1^2+y_1^2$ and $u_2=x_2^2+y_2^2$ in the root coordinates, and they are not contained in the span of $\{dP_k\}_{k=1}^8$. 

To see this more clearly, we now introduce real coordinates $(\kappa_j,x_k,y_k)$ on $T^*(\mathrm{SU}(3)/T)$ and express the differential $d \pi_1$ computed above by terms of its Jacobian matrix. Here, $(\kappa_1,\ldots,\kappa_6)$ are real coordinates on $\mathrm{SU}(3)/T$. By the definition of the Jacobian matrix, we have \begin{align}
    J_{\pi_1}(g,X)= \begin{pmatrix}
\dfrac{\partial P_k}{\partial \kappa_j}(g,X) & \dfrac{\partial P_k}{\partial x_k}(g,X)&\dfrac{\partial P_k}{\partial y_k}(g,X) \\[6pt]
\dfrac{\partial u_1}{\partial \kappa_j}(g,X) & \dfrac{\partial u_1}{\partial x_k}(g,X)&\dfrac{\partial u_1}{\partial y_k}(g,X) \\[6pt]
\dfrac{\partial u_2}{\partial \kappa_j}(g,X) & \dfrac{\partial u_2}{\partial x_k}(g,X)&\dfrac{\partial u_2}{\partial y_k}(g,X) 
\end{pmatrix}_{10\times12}. \label{eq:Jacobian}
\end{align} 

Explicit calculations yield the following:

(1) Recall that the differentiation of $P$ is given in \eqref{eq:coordinate}.  For the coordinates of the map of $P_k$, we have the following: \begin{align*}
    \frac{\partial P_k}{\partial x_k}(g,X)=\left\langle\mathrm{Ad}(g) E_{\alpha_k},\xi_i\right\rangle,\quad \frac{\partial P_k}{\partial y_k}(g,X)=\left\langle\mathrm{Ad}(g) E_{-\alpha_k},\xi_i\right\rangle.
\end{align*}  Since $\{\mathrm{Ad}(g)(E_{\pm\alpha_k})\}_{k = 1}^3$ forms an orthonormal basis of the root space for the generic $g$, the $8\times6$ sub-block $\left(\dfrac{\partial P_k}{\partial x_k},\dfrac{\partial P_k}{\partial y_k}\right)$ has full rank $6$.

(2) The derivatives of $P_k$ with respect to the group parameters $\kappa_j$ are computed as \begin{align*}
    \dfrac{\partial P_k}{\partial \kappa_j}(g,X)= \left\langle[E_{\kappa_j},\mathrm{Ad}(g)(X-\varepsilon W)),\xi_i\right\rangle .
\end{align*} 

At generic points, the adjoint orbit of $X-\varepsilon W$ has maximal dimension, which ensures at least two independent directions from these group variations.

(3) From \eqref{eq:uvdif}, the differentials of $u_1,u_2$ are defined in a way that shows no dependence on the group parameters $\kappa_j$. Thus, these two rows are clearly linearly independent, giving us full rank $2$ to $J_{\pi_1}$. Due to the generality assumption, the coordinates satisfy $x_i,y_i\neq 0$.

Combining each step, the Jacobian $J_{\pi_1}(g,X)$ in \eqref{eq:Jacobian} is reduced to the following block form \begin{align*}
    J_{\pi_1}(g,X)=
\begin{pmatrix}
\dfrac{\partial P_k}{\partial \kappa_j}(g,X)&\frac{\partial P_k}{\partial(x_k,y_k)}(g,X)\\[6pt]
\textbf{0}_{2\times6}&\mathrm{diag}(2x_1,2y_1,2x_2,2y_2,0,0)
\end{pmatrix}_{10\times12}.
\end{align*} Let us analyze $J_{\pi_1}$ block by block. Since the $8\times6$ block $\partial P_k/\partial(x_k,y_k)$ has a rank $6$ from computation (1), and from computation (3), the $2\times6$ block for $(u_1,u_2)$ is a rank $2$ matrix, we have $6+2=8$. Additionally, the group derivatives $\partial P_k/\partial\kappa_j$ provide at least two independent columns, due to the maximal dimension of the adjoint orbits at generic points. Thus, the total rank is explicitly computed from the calculations above. That is, $
\mathrm{rank}\, J_{\pi_1}(g,X)=10 $. By standard differential geometry, this confirms that the differential $d\pi_1(g,X)$ is surjective onto the tangent space of $\mathrm{Spec}\,\mathcal{A}\cong\mathbb{R}^{10}$. Hence, on a dense open Zariski subset, $\mathrm{rank} \, J_{\pi_1}(g,X) = 10$, and $d\pi_1$ is surjective. By the continuity of rank along the manifold, $\pi_1$ is a Poisson submersion on an open dense set.

The inclusion $R_0\hookrightarrow\mathcal{A}$ endows $R_0=\mathbb{R}[J_2,J_3]$ with the induced (trivial) Poisson bracket, since $J_2,J_3$ are Casimirs. The coordinate map
\begin{align*}
\pi_2:\mathrm{Spec}\,\mathcal{A}\longrightarrow\mathrm{Spec}\,R_0,\qquad
(p_1,\ldots,p_8,u_1,u_2,u_3,v,w)\longmapsto\Bigl(\sum_{k=1}^8 P_k^2,\,2v\Bigr),
\end{align*}
is therefore a Poisson map whose differential has constant rank $2$. Hence $\pi_2$ is a Poisson submersion.

Finally, $\mathcal{A}$ is generated by the $13$ variables, subject to the three independent relations $u_1u_2u_3=v^2+w^2$, $J_2=\sum_k P_k^2$, and $J_3=2v$, so $\mathrm{trdeg}\,\mathcal{A}=10$. Since $\dim T^*M=12$, the generic fiber of $\pi_1$ has dimension $12-10=2$. By Definition \ref{def:superge}, the chain $T^*M\xrightarrow{\pi_1}\mathrm{Spec}\,\mathcal{A}\xrightarrow{\pi_2}\mathrm{Spec}\,R_0$ is superintegrable, with $\pi_1$ and $\pi_2$ being Poisson, and such fibers are coisotropic in $(T^*M,\omega_\varepsilon)$.
\end{proof}

\subsection{Superintegrable systems on \texorpdfstring{$\mathrm{SU}(3)/A$}{SU(3)/A}.}
\label{subsec:irregular}
 In this Subsection \ref{subsec:irregular}, we again take $\mathrm{SU}(3)$, and consider the set of traceless $3 \times 3$ anti-Hermitian complex matrices as its Lie algebra, denoted by $\mathfrak{su}(3)$. We use the Gell-Mann (GM) basis $\{\lambda_i\}_{i=1}^8$ for $\mathfrak{su}(3)$ with structure constants $[ \lambda_i,\lambda_j]=2\mathrm{i}\sum_{k = 1}^8c_{ij}^k\lambda_k$ as it simplifies component calculations in the irregular case. The explicit matrix representations of these basis elements are \begin{align*}
     \lambda_1 = & \, \begin{pmatrix}
         0 & 1 & 0\\
        1 & 0 & 0 \\
      0 & 0 & 0  \\
     \end{pmatrix}, \text{ }  \lambda_2 =     \begin{pmatrix}
         0 & -\mathrm{i} & 0\\
        \mathrm{i} & 0 & 0 \\
      0 & 0 & 0  \\
     \end{pmatrix},\text{ } \lambda_3 =     \begin{pmatrix}
         1 & 0 & 0\\
        0 & -1 & 0 \\
      0 & 0 & 0  \\
     \end{pmatrix}\text{ } \lambda_4 =     \begin{pmatrix}
         0 & 0 & 1\\
        0 & 0 & 0 \\
      1 & 0 & 0  \\
     \end{pmatrix} ;\\
     \lambda_5 = & \, \begin{pmatrix}
         0 & 0 & -\mathrm{i}\\
        0 & 0 & 0 \\
      \mathrm{i} & 0 & 0  \\
     \end{pmatrix}, \text{ }  \lambda_6 =     \begin{pmatrix}
         0 & 0 & 0\\
       0 & 0 & 1 \\
      0 & 1 & 0  \\
     \end{pmatrix},\text{ } \lambda_7 =     \begin{pmatrix}
        0 & 0 & 0\\
        0 & 0 & -\mathrm{i} \\
      0 & \mathrm{i} & 0  \\
     \end{pmatrix}\text{ } \lambda_8 =    \frac{1}{\sqrt{3}} \begin{pmatrix}
         1 & 0 & 0\\
        0 & 1 & 0 \\
      0 & 0 & -2 \\
     \end{pmatrix} .
 \end{align*} For further details, including the explicit form of the GM basis $\lambda_i$ and its transformation from the Chevalley basis, see, for example, \cite[Chapter II]{Cahn1984}. Now, let \begin{align}
 \nonumber
   H_1 &= \mathrm{i} \lambda_3,  \qquad   H_2 = - \frac{\mathrm{i}}{2}\left( \lambda_3 -  \sqrt{3}   \lambda_8 \right);\\
   \nonumber
    E_{12} = &\, \frac{1}{2}(\lambda_1 +  \mathrm{i}\lambda_2), \text{ } E_{21} = \frac{1}{2}(\lambda_1 - \mathrm{i}\lambda_2); \\
    E_{13} = &\, \frac{1}{2}(\lambda_4 +  \mathrm{i}\lambda_5), \text{ } E_{31} = \frac{1}{2}(\lambda_4 - \mathrm{i}\lambda_5); \label{eq:GMbas} \\
    \nonumber
    E_{23} = &\, \frac{1}{2}(\lambda_6 +  \mathrm{i}\lambda_7), \text{ } E_{32} = \frac{1}{2}(\lambda_6 - \mathrm{i}\lambda_7). 
\end{align}  
Throughout this subsection, $\mathrm{i} =\sqrt{-1}$ is a complex number.  We note that this basis transformation is computationally convenient. One can simply check the commutator relations, for example, by \begin{align*}
    [E_{12},E_{21}]  = H_1,  \qquad  [E_{23},E_{32}] = H_2,  \qquad  [E_{13},E_{31}] = H_1 + H_2.
\end{align*}  
Under the basis \eqref{eq:GMbas}, we may write $$H = c_1 \lambda_3 + c_2 \lambda_8 =\mathrm{diag} \left(c_1 + \frac{c_2}{\sqrt{3}}, -c_1 + \frac{c_2}{\sqrt{3}}, -\frac{2c_2}{\sqrt{3}}\right)$$ for some $c_1,c_2 \in \mathbb{R}$. By the definition of regularity, there exists a root $\alpha$ such that $\alpha(H) = 0$. A direct computation shows that the following element  \begin{align}
    W =\frac{3}{2} \lambda_3+\frac{\sqrt3}{2} \lambda_8 =\mathrm{diag}(2,-1,-1)\label{eq:irrw}
\end{align} is irregular, such that the centraliser is given by  \begin{align*}
    \mathfrak{a}:=\ker\mathrm{ad}W     =\mathrm{span} \{\lambda_1,\lambda_2,\lambda_3,\lambda_8\}.
\end{align*} 
Moreover, $T \subsetneq A$. The orthogonal complement, with respect to the Killing form $B$, is  $ \mathfrak{m}:= \mathrm{span}\{\lambda_4,\lambda_5,\lambda_6,\lambda_7\}  $, where $$  B(X,Y) := - \frac{1}{2}\mathrm{tr}(XY).$$   
With the irregular element in \eqref{eq:irrw}, we now construct the stabiliser. By definition, $\mathrm{SU}(3) = \{A \in M_3(\mathbb{C}):\, \det A = 1 \}$, which represents the group action on $\mathbb{C}^3$.   Work with the defining representation of $G=\mathrm{SU}(3)$ in $\mathbb{C}^3$ and fix $W=\mathrm{diag}(2,-1,-1)$. Since $W$ is normal, we have $\mathbb{C}^3 \cong E_2 \oplus E_{-1}$, where $E_2$ is a one-dimensional $2$-eigenspace and $E_{-1}$ is a two dimensional $-1$-eigenspace. The corresponding matrix relative to this decomposition yields $1\times 1$ and $2\times 2$ block matrices. The centralizer \begin{align*}
A=C_G(W)=\{a\in \mathrm{SU}(3):\, aW=Wa\}
\end{align*} consists of those unitary matrices that preserve each eigenspace of $W$. Hence, a simple computation shows that $A$ is block-diagonal on this basis: \begin{align}
a=\begin{pmatrix} s & 0\\[2pt] 0 & U\end{pmatrix}\qquad\text{with }s\in \mathrm{U}(1),\ U\in \mathrm{U}(2). \label{eq:generatora}
\end{align} The condition $\det a=1$ forces $s\cdot \det(U)=1$, i.e., $a=\mathrm{diag}\bigl(\det(U)^{-1},\,U\bigr)\in \mathrm{SU}(3)$. Hence, in the matrix form,  \begin{align*}
    A = C_{\mathrm{SU}(3)}(W) = \left\{\mathrm{diag}\left(\det(U)^{-1},\,U\right): \, U \in \mathrm{U}(2)\right\} \cong \mathrm{S}(\mathrm{U}(2) \times \mathrm{U}(1)).
\end{align*}  In Lie algebra setting, we therefore have a reductive decomposition $\mathfrak{su}(3)=\mathfrak{s}(\mathfrak{u}(2)\oplus\mathfrak{u}(1))\oplus \mathfrak{m}$, where \begin{align}
    \mathfrak{m}=\left\{\begin{pmatrix}
    0&0&u_1\\
    0&0&u_2\\
    -\overline{u}_1&-\overline{u}_2&0\end{pmatrix}: u_1,u_2\in\mathbb{C}\right\}. \label{eq:matrix}
\end{align} Thus, in what follows, we identify $\mathfrak{m}$ with $\mathbb{C}^2$ by associating $(u_1,u_2) \in \mathbb{C}^2$ with the traceless matrix in \eqref{eq:matrix} with $u_1 = x_4  - \mathrm{i} x_5$ and $u_2 = x_6 - \mathrm{i} x_7$.

We now build the twisted symplectic structure on it. Recall that we have the following affine slice  $  \mathfrak{m}-\varepsilon W  =\{X-\varepsilon W:X\in\mathfrak{m}\} $, which can therefore be written as \begin{align*}
\xi  =x_4\lambda_4+x_5\lambda_5+x_6\lambda_6+x_7\lambda_7-\varepsilon W, \qquad x_4,\ldots ,x_7\in\mathbb{R} .
\end{align*} Under the choices of matrix representation given in \eqref{eq:matrix}, this can be explicitly written as a 3$\times$3 traceless skew-Hermitian matrix as follows: \begin{align}
  \xi:=  X-\varepsilon W= \begin{pmatrix}
-2\varepsilon &0& x_4-\mathrm{i}x_5\\[6pt]
0 &\varepsilon & x_6-\mathrm{i}x_7\\[6pt]
x_4+\mathrm{i}x_5 & x_6+\mathrm{i}x_7 & \varepsilon
\end{pmatrix}\in\mathfrak{su}(3). \label{eq:ximatrixsu(3)}
\end{align}

 Near the identity, choose a section \begin{align*}
    \sigma: U\subset\mathfrak{m}\xrightarrow{\,\exp\,}\exp(U)\subset G   .
\end{align*} Elements of $G/A$ are represented by $m=\sigma(y) = \exp(y)$ with $y=(y_4,\ldots ,y_7) \in \mathbb{R}^{4}$. Identify $\mathfrak{m}^*$ with $\mathfrak{m}$ in the Killing form $B$. A point in the local trivialization of $T^*M \cong (G \times \mathfrak{m})/A $ is therefore given by:  \begin{align}
    (g,X):=\bigl(\sigma(y),\,X\bigr), \qquad X=\sum_{i= 4}^{7}x_i\lambda_i\in\mathfrak{m}.
\end{align}    The canonical Liouville 1-form is $\varsigma=B( X,\,\sigma^{-1}d\sigma)$, where $\sigma^{-1}d\sigma$ is the pullback along $\sigma$ of the left Maurer-Cartan form on $\mathrm{SU}(3)$. See \cite[Proof of Proposition 2.12]{Davis2012HomogeneousSymplecticGalilei} for more detailed on this construction. Adding the magnetic 2-form $\varepsilon\,\pi_1^*\omega_{\mathrm{KKS}}$ with $W =\frac{3}{2}\lambda_3+\frac{\sqrt3}{2}\lambda_8$, the resulting symplectic 2-form is \begin{align*}
   \omega_{\varepsilon}  =d\varsigma + \varepsilon\,B( W,\,[\sigma^{-1}d\sigma,\sigma^{-1}d\sigma]) . 
\end{align*} In the chosen coordinates $(x_4,\ldots,x_7,y_4,\ldots,y_7)$, the non-vanishing Poisson brackets are \begin{align*}
    \{x_i,y_j\}_\varepsilon=2\,\delta_{ij}, \qquad \{x_i,x_j\}_\varepsilon=-2\varepsilon\,c_{ij}^k B( W,\lambda_k), \qquad \{y_i,y_j\}_\varepsilon=0  
\end{align*} for any $i,j= 4,\ldots ,7$ and $k = 1,\ldots,8$.

\subsubsection{Building superintegrable systems on $T^*(\mathrm{SU}(3)/A)$}

Recall that the magmatic moment map is  $P(g,X)=\mathrm{Ad}(g)\bigl(X -\varepsilon W\bigr)\in\mathfrak{su}^*(3)$. Evaluating the representative $\xi $ of the orbit $Z_P =\mathrm{Ad}(G)(\mathfrak{m}-\varepsilon W)$, we obtain the eight linear generators as follows: \begin{align*}
P_i:=\ell_i\circ P(g,X) =\ell_i\bigl(x_4\lambda_4+\ldots + x_7\lambda_7-\varepsilon W\bigr), 
\end{align*} where $\ell_i(P) = \langle P,\lambda_i \rangle$. A direct trace calculation yields
\begin{align*}
P_1& \, =P_2=0,  \quad P_3 =\frac{3\varepsilon}{2}, \quad P_4 =-x_4,\\
P_5& \,=-x_5, \text{ }  P_6 =-x_6, \text{ } P_7 =-x_7, \text{ }  P_8 =\frac{\sqrt3\,\varepsilon}{2} .
\end{align*} Thus, the linear coordinates $(x_4,x_5,x_6,x_7)$ are $(-P_4,-P_5,-P_6,-P_7)$. Define the coordinate functions in $\mathfrak{su}(3)$, \begin{align*}
   P_i(Y)=-\frac{1}{2}\mathrm{tr}(Y\lambda_i),\quad i=1,\ldots,8. 
\end{align*} In particular, \begin{align*}
    P (g,X)=\sum_{i=1}^8P_i(g,X)\lambda_i = \begin{pmatrix}
    \frac{P_3}{2} + \frac{P_8}{2 \sqrt{3}} & -\frac{1}{2}(P_1 - \mathrm{i}P_2) & -\frac{1}{2}(P_4 - \mathrm{i}P_5) \\
    -\frac{1}{2}(P_1 + \mathrm{i}P_2) & - \frac{P_3}{2} +  \frac{P_8}{2 \sqrt{3}} & -\frac{1}{2}(P_6 - \mathrm{i}P_7) \\
     -\frac{1}{2}(P_4 + \mathrm{i}P_5) & -\frac{1}{2}(P_6 + \mathrm{i}P_7)  & - \frac{P_8}{2 \sqrt{3}} \\ 
\end{pmatrix},\quad R(g,X)=P_4^2+\ldots+P_7^2.
\end{align*} Thus  \begin{align*}
    Z_P=\{P ((g,X)):(g,X)\in T^*M\},\quad \dim  Z_P =7.
\end{align*} 

In the following example, we will provide the commutator relations of the first integrals in $\mathfrak{F}_1^{\mathrm{poly}}$ under the twisted Poisson bracket. 
 
\begin{example} \label{ex:GellMann}
Let $\{\lambda_i\}_{i=1}^8$ be the GM basis defined above with $[\lambda_i,\lambda_j]=2\mathrm{i}\,c_{ij}^k\,\lambda_k$.  Choose $B(X,Y):=-\frac{1}{2}\,\mathrm{tr}(XY)$.  Then $P_i =\langle P,\lambda_i \rangle$ and \begin{align}
\{P_i,P_j\}_\varepsilon= \langle P,[\lambda_i,\lambda_j] \rangle= \langle P,2\mathrm{i}\,c_{ij}^k\,\lambda_k \rangle = 2\mathrm{i}\,c_{ij}^k\, \langle P,\lambda_k\rangle= 2\mathrm{i}c_{ij}^k\,P_k. \label{eq:linearbracket}
\end{align}  
For instance, taking the structure constant to be $c_{12}^3=1$. Then the twisted Poisson brackets are given by \begin{align*}
\{P_1,P_2\}_\varepsilon= 2\mathrm{i} P_3,\qquad \{P_1,P_3\}_\varepsilon=- 2\mathrm{i} P_2,\qquad \{P_2,P_3\}_\varepsilon= 2\mathrm{i} P_1,
\end{align*} and similarly for the other nonzero structure constants $c_{ij}^k$:
\begin{align*}
c_{14}^7=c_{24}^6=c_{25}^7=c_{34}^5=c_{51}^6=c_{63}^7=\frac{1}{2},\qquad c_{45}^8=c_{67}^8=\frac{\sqrt{3}}{2} 
\end{align*} with antisymmetry in the lower indices giving the rest of the relations. These identities match \eqref{eq:f1postr}.
\end{example}
 
\begin{remark} 
All $\varepsilon$-dependence of the magnetic deformation is absorbed into the shift $P(g,X)=\mathrm{Ad}(g)(X-\varepsilon W)$.  The calculation above uses only the $\mathrm{Ad}$-equivariance of $P$ and the fact that $X_{P_\xi}$ is the fundamental left field. Recall that by Proposition \ref{prop:slicepoisson}, the bracket induced from $P$ is therefore the ordinary Lie-Poisson bracket in $\mathfrak{g}$, without explicit $\varepsilon$ in \eqref{eq:f1bracket}. Thus, we verified again, on $\mathfrak{F}_1^{\mathrm{poly}}$, we have $\{P_i,P_j\}_\varepsilon$ exactly as in the $\varepsilon = 0$ case, while the $\mathfrak{F}_2^{\mathrm{poly}}$ brackets carry the $\varepsilon$ terms.
\end{remark}

Recall that from \eqref{eq:ximatrixsu(3)}, we have the matrix form $\xi=\begin{pmatrix}\varepsilon\,\mathbf{1}_2 & z\\[2pt] 
z^* & -2\varepsilon\end{pmatrix}$. A direct block multiplication gives the following: \begin{align*}
\xi^2=\begin{pmatrix}\varepsilon^2\mathbf{1}_2+zz^* & - \varepsilon z\\[2pt]
- \varepsilon z^* & z^*z+4\varepsilon^2\end{pmatrix}.
\end{align*} Hence, \begin{align*}
\mathrm{tr}(\xi^2)=6\varepsilon^2+2\left(|z_1|^2  + |z_2|^2\right), \text{ and } C_2(\xi)=\frac{1}{2}\,\mathrm{tr}(\xi^2)=3\varepsilon^2+\left(|z_1|^2  + |z_2|^2\right),
\end{align*} where $C_2(\xi)$ is the restricted Casimir from $S(\mathfrak{g})$ into $S(\mathfrak{m}-\varepsilon W)$. Using trace normalizations, the quadratic and cubic Casimir are given by $C_2(\xi) = \frac{1}{2} \mathrm{tr}(\xi^2)$ and $C_3(\xi) = - \mathrm{tr}(\xi^3)$. Recall that from \eqref{eq:Casimir},  these Casimir can be rewritten by $  C_2:=\sum_{i=1}^8P_i^2$ and   $C_3:=\sum_{i,j,k=1}^8d_{ijk}P_iP_jP_k$. Under the real coordinate expression, we further rewrite $C_2(\xi)$ into the following form: \begin{align*}
 C_2(\xi)=x_4^2+x_5^2+x_6^2+x_7^2+3\varepsilon^2.
\end{align*} Define the $A$-invariant quadratic monomial by $R:=x_4^2+x_5^2+x_6^2+x_7^2$. Then $C_2(\xi)=R+3\varepsilon^2$.  A straightforward 
symbolic computation gives\begin{align}
    C_3(\xi)=3\varepsilon \bigl(2\varepsilon^2+x_4^2+x_5^2-2x_6^2-2x_7^2\bigr).  \label{eq:3C}
\end{align} Write $w:=x_4^2+x_5^2 $ and $t:=x_6^2+x_7^2$ such that $R=w+t$. Then \eqref{eq:3C} becomes \begin{align}
C_3(\xi)=3\varepsilon\bigl(2\varepsilon^2+w-2t\bigr)
=3\varepsilon\bigl(2\varepsilon^2 +R - 3t\bigr).
\end{align} Since $w-t$ is not $A$-invariant, the cubic Casimir does not survive the restriction map $\mathrm{Res}_W:S(\mathfrak{g})^G\rightarrow S(\mathfrak{m})^A$. Therefore, by the definition of the restriction mapping, we deduce  \begin{align*}
\mathrm{Im}\mathrm{Res}_W=\mathbb{R}[R]\, \Longrightarrow \, s=1  .
\end{align*}   Substituting the restricted Casimir above, we eliminate the slice coordinates $x_4,\ldots,x_7$ and obtain a single cubic relation among the eight moment coordinates  \begin{align*}
      \Phi(P):= C_3(\xi) - 3\varepsilon \Bigl(2\varepsilon^2 + P_4^2+P_5^2-2P_6^2-2P_7^2\Bigr) = 0.
  \end{align*}  No quadratic relation appears as $C_2(\xi)$ is free on $\mathrm{Ad}(\mathrm{SU}(3))(\mathfrak{m} - \varepsilon W)$. Hence, \begin{align*}
      \mathfrak{F}_1^{\mathrm{poly}}  =P^*S(\mathfrak{g})   \cong\mathbb{R}[P_1,\ldots  ,P_8]\big/ \langle\Phi\rangle \text{ and } \mathrm{rank}\,\mathfrak{F}_1^{\mathrm{poly}}=7 .
  \end{align*} Concluding from this result, we present the following lemma. 
 
 \begin{lemma}
 \label{lem:F1poly2}
Let $G=\mathrm{SU}(3)$, $A\subset G$ be the same as defined above. Let  $P:T^*M\rightarrow\mathfrak{g}^*$ given by $P (g,X)=\mathrm{Ad}(g)\bigl(X-\varepsilon W\bigr)$ be the magnetic moment map for the left action, and for $\xi\in\mathfrak{g}$ set\begin{align*}
P_{\xi}(g,X):=B\bigl(P(g,X),\xi\bigr),\qquad P_k:= P_{\xi_k} 
\end{align*} with $k=1,\dots,8$.   Then\begin{align*}
\mathfrak{F}_1^{\mathrm{poly}}=P^*S(\mathfrak{g})=\mathbb{R}[P_1,\dots,P_8]\big/\left\langle \Phi\right\rangle \quad \text{ with } \deg P_k=1,
\end{align*} and the Poisson commutator relations in coordinates are \begin{equation}
\{P_i,P_j\}_\varepsilon = \sum_{k=1}^8 c_{ij}^k\,P_k,\qquad [\xi_i,\xi_j]=\sum_{k=1}^8 c_{ij}^k\,\xi_k. \label{eq:f1postr}
\end{equation}
\end{lemma}


We now compute the generators of $\mathfrak{F}_2^{\mathrm{poly}}$. In other words, it is equivalent to deducing the generators of $S(\mathfrak{m})^A$. By construction, the adjoint orbit $\mathrm{SU}(3)/\mathrm{S}(\mathrm{U}(2) \times \mathrm{U}(1))$ is identified with a homogeneous space $\mathbb{CP}^2$. Here, we will use the standard tangent space description for the Grassmannians of Hermitian spaces. For the $2$-dimensional eigenspace of $W$, set a line $E \subset \mathbb{C}^2$ such that $[E] \in \mathbb{CP}^2$. In particular, there exists a canonical identification as follows: $T_{[E]}M \cong \mathrm{Hom}(E,E^\perp) \cong E^\perp \otimes E^* \cong \mathbb{C}^2$.  With respect to the same block decomposition, 
every $X\in \mathfrak{m}$ is unique in form of
\begin{align*}
X=\begin{pmatrix} 0 & z^{ *}\\[2pt] z & 0\end{pmatrix}\quad\text{for a unique column }z\in\mathbb{C}^2,
\end{align*} so $\mathfrak{m}$ is naturally identified with $\mathbb{C}^2$ through the off-block columns $\begin{pmatrix}
    0\\
    z
\end{pmatrix}$ and $\begin{pmatrix}
    z^* \\
    0
\end{pmatrix}$, where $z^* = \overline{z}^T$ and $\overline{z}$ is the conjugation of the complex. Hence, under the identification $\mathfrak{m}\cong\mathbb{C}^2$, the induced linear action of $A$ on $\mathfrak{m}$ is \begin{align*}
z\longmapsto \rho(A)z: = \det(U)\,U z\qquad(U\in \mathrm{U}(2)).
\end{align*}   Recall that from \eqref{eq:generatora}, an element $a\in A$ is a block diagonal matrix \begin{align*}
    a=\begin{pmatrix} U & 0\\ 0 & \det(U)^{-1}\end{pmatrix},\qquad U\in \mathrm{U}(2). 
\end{align*} For $a=\mathrm{diag}(\det(U)^{-1},U)\in A$, a direct computation shows that, through conjugation actions, we have  
\begin{align}
\mathrm{Ad}(a)(X)=aXa^{-1} =\begin{pmatrix} 0 & \det(U)^{-1}\,z^{ *} U^{-1}\\[2pt] \det(U)\,U z & 0\end{pmatrix}. \label{eq:Aaction}
\end{align} The adjoint action on the slice is $X-\varepsilon W\mapsto a(X-\varepsilon W)a^{-1}$. The usual matrix multiplication with the block form $$ X-\varepsilon W = \begin{pmatrix} D & z \\ z^* & \varepsilon\end{pmatrix},
\quad D=\begin{pmatrix}-2\varepsilon&0\\0&\varepsilon\end{pmatrix},
$$ gives  \begin{align}
  \mathrm{Ad}(a)(\xi)=a\,\xi\,a^{-1} = \begin{pmatrix}
U & 0\\[2pt]
0&\det(U)^{-1}
\end{pmatrix} \begin{pmatrix}
\varepsilon\,\mathbf{1}_2 & z\\[2pt]
z^* & -2\varepsilon
\end{pmatrix} \begin{pmatrix}
U^{-1} & 0\\[2pt]
0&\det(U)
\end{pmatrix} = \begin{pmatrix}
\varepsilon\,\mathbf{1}_2 & \det(U)\,U z\\[2pt]
\det(U)^{-1} z^* U^{-1} & -2\varepsilon
\end{pmatrix}. \label{eq:adaaction}
\end{align}  
This can be summarised into the following proposition.

\begin{proposition}
\label{prop:Adaction}
Fix $G=\mathrm{SU}(3)$, $\mathfrak{g}=\mathfrak{su}(3)$, the Killing form $B$, and
\begin{align*}
W=\mathrm{diag}(2,-1,-1),\qquad A=C_G(W)\cong  \mathrm{S}(\mathrm{U}(2)\times \mathrm{U}(1)).
\end{align*}
Let $\mathfrak{g}=\mathfrak{a}\oplus\mathfrak{m}$ be the reductive $B$-orthogonal decomposition with $\mathfrak{a}=\ker(\mathrm{ad}_W)$. Identify $\mathfrak{m}$ with $\mathbb{C}^2$ by \begin{align*}
z=(z_1,z_2)^T\longleftrightarrow X(z)=\begin{pmatrix}0 & z^*\\[2pt] z & 0\end{pmatrix}\in\mathfrak{m},
\end{align*} such that for $a=\mathrm{diag}(\det(U)^{-1},U)\in A$ with $U\in \mathrm{U}(2)$, we have
\begin{align*}
\mathrm{Ad}(a)\bigl(X(z)\bigr)=X\bigl(\rho(U)z\bigr),\qquad \rho(U) =\det(U)\,U.
\end{align*}
\end{proposition}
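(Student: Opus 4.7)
The plan is to derive the formula by a direct block-matrix computation, making use of the eigenspace decomposition $\mathbb{C}^3 = E_2 \oplus E_{-1}$ determined by $W=\mathrm{diag}(2,-1,-1)$. Under this splitting, the already-established block-diagonal form of $a \in A$ (cf.\ the paragraph around \eqref{eq:generatora}) and the block off-diagonal form of any $X \in \mathfrak{m}$ (cf.\ \eqref{eq:matrix}) mean that both sides of the desired identity can be read off from a single $3\times 3$ matrix product. The unitarity constraint $|\det(U)|=1$, together with the determinant condition $s\det(U)=1$ coming from $a \in \mathrm{SU}(3)$, is what ties everything together.

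First, I would record the two block forms explicitly: write $a = \mathrm{diag}(\det(U)^{-1},\,U) \in A$ with $U \in \mathrm{U}(2)$, and $X(z) = \begin{pmatrix} 0 & z^{\ast}\\ z & 0\end{pmatrix} \in \mathfrak{m}$ with $z \in \mathbb{C}^2$. Note that $a^{-1} = \mathrm{diag}(\det(U),\,U^{-1})$ since $U$ is unitary. Then I would carry out the product $\mathrm{Ad}(a)(X(z)) = a\,X(z)\,a^{-1}$ by direct $2\times 2$ block multiplication, exactly as in the display \eqref{eq:Aaction} (specialised to $D=0$, $\varepsilon=0$). The lower-left block evaluates to $U \cdot z \cdot \det(U) = \det(U)\,Uz = \rho(U)\,z$, which is the essential identification.

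Next, I would verify that the upper-right block equals $(\rho(U)z)^{\ast}$, so the result retains the off-block form $X(\rho(U)z)$. This block works out to $\det(U)^{-1}\, z^{\ast}\, U^{-1}$. Since $U$ is unitary, $U^{-1} = U^{\ast}$, and since $|\det(U)| = 1$, we have $\det(U)^{-1} = \overline{\det(U)}$. Therefore
\begin{align*}
\det(U)^{-1}\, z^{\ast}\, U^{-1} = \overline{\det(U)}\, z^{\ast}\, U^{\ast} = (\det(U)\,U z)^{\ast} = (\rho(U)z)^{\ast},
\end{align*}
as required. Combining the two block computations yields $\mathrm{Ad}(a)(X(z)) = X(\rho(U)z)$.

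Finally, the map $\rho: U \mapsto \det(U)\cdot U$ is well-defined and a representation of $\mathrm{U}(2)$ on $\mathbb{C}^2$ because determinants multiply and scalars commute with $U$; this makes $\rho$ the isotropy representation of $A$ on $T_{[e]}(G/A) \cong \mathfrak{m} \cong \mathbb{C}^2$. There is no genuine obstacle here beyond organising the block multiplication and invoking $|\det U|=1$ to convert $\det(U)^{-1}$ into $\overline{\det(U)}$; the proposition is essentially a bookkeeping statement that identifies $(\mathfrak{m},\mathrm{Ad}|_A)$ with the twisted standard representation $(\mathbb{C}^2,\rho)$, and it will later serve as the starting point for computing the $A$-invariants $S(\mathfrak{m})^A$ in the next steps.
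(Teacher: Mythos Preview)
Your proposal is correct and takes essentially the same approach as the paper: the paper's justification for this proposition is precisely the block-matrix computation carried out in \eqref{eq:Aaction} just before the statement, and you reproduce that computation with the added explicit verification (via $U^{-1}=U^*$ and $\det(U)^{-1}=\overline{\det(U)}$) that the upper-right block equals $(\rho(U)z)^*$. There is nothing to add.
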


 We now focus on finding the generators of the finitely generated polynomial algebra $S(\mathfrak{m})^A$. For any $F\in S(\mathfrak{m})$, set $f (z,\overline{z}):=F\circ X\in\mathbb{R}[z_1,z_2,\bar z_1,\bar z_2]$. By definition,  for any $a \in A$, we have $(a\cdot F)(X):=F\bigl(\mathrm{Ad}(a^{-1})X\bigr)$. In other words, by Proposition \ref{prop:Adaction}, we deduce \begin{align}
    (a\cdot f)(z,\bar z)=f \bigl(\rho(U)^{-1}z,\ \overline{\rho(U)^{-1}z}\bigr).
\end{align}  We then process the coordinate computation in this example.

\begin{lemma}
\label{lem:liealge}
Let $A=\mathrm{S}(\mathrm{U}(2) \times \mathrm{U}(1))$ act on the off-block column $z=(z_1,z_2)^T\in\mathbb{C}^2$ by the unitary representation \begin{align}
\rho(U)z=\det(U)\,U z\qquad \text{ for any } U\in \mathrm{U}(2).
\end{align} Suppose that $f=f(z,\overline{z}) \in S(\mathfrak{m})^A$. Then, for every $H\in\mathfrak{u}(2)$, we have the linear first-order PDE \begin{align*}
\sum_{j=1}^2 \frac{\partial f}{\partial z_j}\,\bigl((H+\mathrm{tr}H\,\mathbf{1}_2)z\bigr)_j +  \sum_{j=1}^2 \frac{\partial f}{\partial \overline{z}_j}\,\bigl((\overline{H+\mathrm{tr}H\,\mathbf{1}_2})\,\overline{z}\bigr)_j =0,
\end{align*} where $\mathbf{1}_2$ is a $2 \times 2$ identity matrix. Equivalently,  \begin{align*}
B( \partial_{z}f,\;(H+\mathrm{tr}H\,\mathbf{1}_2)z) + B( \partial_{\overline{z}}f,\;(\overline{H+\mathrm{tr}H\,\mathbf{1}_2})\,\overline{z})=0,
\end{align*} where $\partial_z f = \dfrac{\partial f}{\partial z}$ and $\partial_{\overline{z}} f = \dfrac{\partial f}{\partial \overline{z}} $.
\end{lemma}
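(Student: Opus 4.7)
The approach is to pass from the global $A$-invariance of $f$ to an infinitesimal condition by differentiating along one-parameter subgroups of the identity component of $A$. For each $H \in \mathfrak{u}(2)$, I would consider the curve $U(t) := \exp(tH)$ in $\mathrm{U}(2)$, which lifts to the curve $a(t) := \mathrm{diag}(\det(U(t))^{-1}, U(t)) \in A$ with $a(0) = e$. By Proposition \ref{prop:Adaction}, the $A$-invariance of $f$ is equivalent to
\[
f\bigl(\rho(U(t))^{-1} z,\ \overline{\rho(U(t))^{-1} z}\bigr) = f(z, \bar{z}), \qquad t \in \mathbb{R},
\]
where $\rho(U) = \det(U)\,U$. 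The plan is to differentiate this identity at $t = 0$ and rewrite the result as the claimed PDE.

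The central computation is the linearisation of $\rho$ at the identity. Jacobi's formula $\frac{d}{dt}\big|_{t=0}\det(\exp(tH)) = \mathrm{tr}\,H$ combined with the Leibniz rule gives
\[
d\rho(H) \;=\; \frac{d}{dt}\bigg|_{t=0}\bigl[\det(U(t))\,U(t)\bigr] \;=\; (\mathrm{tr}\,H)\,\mathbf{1}_2 + H,
\]
so $\frac{d}{dt}\big|_{t=0}\rho(U(t))^{-1} z = -(H + \mathrm{tr}\,H \cdot \mathbf{1}_2) z$, with the analogous formula for the conjugate factor. Applying the Wirtinger chain rule to the invariance identity at $t = 0$ and clearing the common minus sign then produces the stated first-order PDE directly. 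The equivalent $B$-pairing formulation is immediate, since under the Hermitian identification $\mathfrak{m}^* \cong \mathfrak{m}$ induced by $B$, one has $B(\partial_z f, v) = \sum_j (\partial_{z_j} f)\, v_j$, and similarly in the anti-holomorphic variables.

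The argument is essentially a routine linearisation and contains no conceptual obstacle. The only point requiring care is bookkeeping the contribution of the determinantal twist $\det(U)$ in the representation $\rho$: its differential at the identity produces the extra summand $(\mathrm{tr}\,H)\,\mathbf{1}_2$ that distinguishes this shifted $A$-action from the standard defining representation of $\mathrm{U}(2)$ on $\mathbb{C}^2$. This twist originates from the $\mathrm{SU}(3)$-normalisation $\det(a) = 1$ built into the block form $a = \mathrm{diag}(\det(U)^{-1}, U)$, and it is precisely what encodes the reductive structure of $\mathfrak{s}(\mathfrak{u}(2) \oplus \mathfrak{u}(1))$ in the resulting infinitesimal equations.
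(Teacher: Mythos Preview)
Your proposal is correct and follows essentially the same route as the paper: differentiate the $A$-invariance identity along the one-parameter subgroup $\exp(tH)$, compute $d\rho(H) = H + (\mathrm{tr}\,H)\,\mathbf{1}_2$ via Jacobi's formula, and apply the chain rule. The only cosmetic difference is that the paper differentiates $f(\rho(\exp(tH))z,\ldots)$ rather than $f(\rho(\exp(tH))^{-1}z,\ldots)$, which merely flips the sign you already account for.
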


\begin{proof}
Fix $H\in\mathfrak{u}(2)$ and consider the one-parameter subgroup\begin{align*}
a(t)=\begin{pmatrix}\exp(tH)&0\\[2pt]0&\det(\exp(tH))^{-1}\end{pmatrix}\in A,\qquad\rho\bigl(\exp(tH)\bigr)=\det \bigl(\exp(tH)\bigr)\,\exp(tH).
\end{align*} Since $f$ is $A$-invariant, for all $t$ near $0$,\begin{align*}
f\Bigl(\rho\bigl(\exp(tH)\bigr)z,\;\overline{\rho\bigl(\exp(tH)\bigr)z}\Bigr)=f(z,\overline{z}).
\end{align*}Differentiate at $t=0$ and using the chain rule, we have 
\begin{align}
\nonumber
0 &= \left.\frac{d}{dt} \right\vert_{t=0} f\Bigl(\rho\bigl(\exp(tH)\bigr)z,\overline{\rho\bigl(\exp(tH)\bigr)z}\Bigr)\\
&=\sum_{j=1}^2 \frac{\partial f}{\partial z_j}\,\left.\frac{d}{dt}\right\vert_{t=0}\bigl(\rho\bigl(\exp(tH)\bigr)z\bigr)_j+\sum_{j=1}^2 \frac{\partial f}{\partial \overline{z}_j}\,\left.\frac{d}{dt}\right\vert_{t=0}\,\overline{\bigl(\rho\bigl(\exp(tH)\bigr)z\bigr)_j}. \label{eq:differential}
\end{align} We then compute the derivative of $\rho\bigl(\exp(tH)\bigr)z$ at $t=0$ in \eqref{eq:differential}. By implement the definition of $\rho$, this gives that\begin{align*}
\left.\frac{d}{dt}\right\vert_{t=0}\rho\bigl(\exp(tH)\bigr)= \left.\frac{d}{dt}\right\vert_{t=0}\Bigl(\det(\exp(tH))\,\exp(tH)\Bigr)=\Bigl(\mathrm{tr}H\,\mathbf{1}_2+H\Bigr).
\end{align*}Hence, \begin{align*}
\left.\frac{d}{dt}\right\vert_{t=0}\bigl(\rho\bigl(\exp(tH)\bigr)z\bigr)=(H+\mathrm{tr}H\,\mathbf{1}_2)z,\qquad \left.\frac{d}{dt}\right\vert_{t=0}\overline{\bigl(\rho\bigl(\exp(tH)\bigr)z\bigr)}=(\overline{H+\mathrm{tr}H\,\mathbf{1}_2})\,\overline{z}.
\end{align*}Substituting these into the chain rule identity and using Lemma \ref{lem:liealge} yields precisely the displayed PDE.
\end{proof}

 Now, with the following proposition, we have a geometric derivation on finding the generator of $S(\mathfrak{m})^A$.

\begin{proposition} 
\label{prop:F2bracket}
 Let $R(z):=||z||^2=z^* z=|z_1|^2+|z_2|^2 = x_4^2 + x_5^2 + x_6^2 + x_7^2 $. Then  $S(\mathfrak{m}-\varepsilon W)^A=\mathbb{R}\bigl[R\bigr]$, and consequently
\begin{align*}
\mathfrak{F}_2^{\mathrm{poly}}=\pi_{\mathfrak{m}}^*\bigl(S(\mathfrak{m}-\varepsilon W)^A\bigr)\cong \mathbb{R}[R].
\end{align*}
\end{proposition}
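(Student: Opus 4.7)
The plan is to reduce the statement to a classical invariant-theory fact about $\mathrm{U}(2)$ acting on $\mathbb{C}^2$. First, by part (ii) of Lemma \ref{lem:property}, the translation $\tau_{-\varepsilon W}$ induces an $A$-equivariant isomorphism $S(\mathfrak{m})^A \cong S(\mathfrak{m}-\varepsilon W)^A$, so it suffices to compute $S(\mathfrak{m})^A$. Under the identification $\mathfrak{m}\cong \mathbb{C}^2$ of Proposition \ref{prop:Adaction}, the $A$-action on $\mathfrak{m}$ factors through the map $\rho:\mathrm{U}(2)\to \mathrm{U}(2)$, $U\mapsto \det(U)\,U$, so I would first verify that $\rho$ is surjective and conclude $S(\mathfrak{m})^A = \mathbb{R}[\mathbb{C}^2]^{\mathrm{U}(2)}$.

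For the surjectivity of $\rho$: a direct check shows $\rho$ is a group homomorphism, and for any $V\in \mathrm{U}(2)$ one picks $\lambda\in \mathrm{U}(1)$ with $\lambda^3=\det(V)^{-1}$ and sets $U=\lambda V$; then $\rho(U)=\lambda^3\det(V)\,V=V$. Hence $\rho(\mathrm{U}(2))=\mathrm{U}(2)$, and $A$-invariant polynomials on $\mathfrak{m}$ coincide with $\mathrm{U}(2)$-invariant polynomials on $\mathbb{C}^2$.

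Next I would show $\mathbb{R}[\mathbb{C}^2]^{\mathrm{U}(2)}=\mathbb{R}[R]$. Two routes are available. The geometric one uses the fact that $\mathrm{U}(2)$ acts transitively on every sphere $\{z\in\mathbb{C}^2:\|z\|^2=c\}$, so a $\mathrm{U}(2)$-invariant polynomial $f$ depends only on $\|z\|^2$. Restricting $f$ to the ray $\{(z_1,0):z_1\in\mathbb{R}_{>0}\}$ and using that $\|z\|^2=z_1\bar z_1$ is a non-constant polynomial, one deduces that the resulting function of $\|z\|^2$ is itself a polynomial, giving $f\in\mathbb{R}[R]$. The alternative route uses Lemma \ref{lem:liealge}: for $H=\mathrm{i}\mathbf{1}_2$ (the center of $\mathfrak{u}(2)$) one gets $H+\mathrm{tr}(H)\mathbf{1}_2=3\mathrm{i}\mathbf{1}_2$, forcing equal powers in the holomorphic and anti-holomorphic variables via the bi-homogeneous decomposition $f=\sum_{p,q}f_{p,q}$; then the $\mathfrak{su}(2)$-directions $H\in\mathfrak{su}(2)$ (where $\mathrm{tr}H=0$) give the usual $\mathrm{SU}(2)$-invariance PDE, and $\mathrm{SU}(2)$-irreducibility of $S^p(\mathbb{C}^2)\otimes S^p(\mathbb{C}^2)^*$ isolates the trivial component, spanned by $\|z\|^{2p}$.

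The main obstacle, such as it is, lies in cleanly passing from the continuous statement ``$f$ is a function of $\|z\|^2$ since $\mathrm{U}(2)$ is orbit-transitive'' to the polynomial statement $f\in\mathbb{R}[R]$; the ray-restriction argument settles this without invoking more machinery. Once $\mathbb{R}[\mathbb{C}^2]^{\mathrm{U}(2)}=\mathbb{R}[R]$ is established, pulling back by $\tau_{-\varepsilon W}$ yields $S(\mathfrak{m}-\varepsilon W)^A=\mathbb{R}[R]$ and hence $\mathfrak{F}_2^{\mathrm{poly}}=\pi_{\mathfrak{m}}^*\mathbb{R}[R]\cong \mathbb{R}[R]$ by the injectivity of $\pi_{\mathfrak{m}}^*$ from Lemma \ref{lem:poly}.
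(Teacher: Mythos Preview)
Your proposal is correct and follows essentially the same approach as the paper: reduce to $S(\mathfrak{m})^A$ via Lemma \ref{lem:property}, identify the $A$-action with a $\mathrm{U}(2)$-action on $\mathbb{C}^2$, and use the bihomogeneous decomposition together with Schur's lemma/$\mathrm{SU}(2)$-irreducibility to conclude $\mathbb{R}[z,\bar z]^{\mathrm{U}(2)}=\mathbb{R}[R]$. Your explicit verification that $\rho(U)=\det(U)\,U$ is surjective onto $\mathrm{U}(2)$ is a welcome addition that the paper glosses over, and your alternative orbit-transitivity route is a nice elementary variant, but neither changes the overall strategy.
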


\begin{proof}
By Lemma \ref{lem:property}, $S(\mathfrak{g})^A$ restricting along $\tau_{-\varepsilon W}$ produces an isomorphism $S(\mathfrak{m})^A\cong S(\mathfrak{m}-\varepsilon W)^A$. Thus, it suffices to compute $S(\mathfrak{m})^A$. Under the given identification, $A$ acts unitarily on $\mathbb{C}^2$ through $\rho$ and the real polynomial ring on $S(\mathfrak{m}) \cong \mathbb{R}[z_1,z_2,\overline{z}_1,\overline{z}_2]$. Moreover, recall that from \eqref{eq:adaaction}, conjugation by $a=\mathrm{diag}(U,\det(U)^{-1})\in A$ gives  $\mathrm{Ad}(a)(\xi)=  \begin{pmatrix}
\varepsilon\,\mathbf{1}_2 & \det(U)\,U z\\[2pt]
\det(U)^{-1} z^* U^{-1} & -2\varepsilon
\end{pmatrix}$. The first Fundamental Theorem for $\mathrm{U}(2)$ shows that the generators in $\mathbb{R}[z,\overline{z}]^{\mathrm{U}(2)}$ are generated by the Hermitian pairings $z^* z$. Equivalently, we deduce directly as follows. Since $\mathbb{R}[z_1,\overline{z}_2]$ is a graded algebra, we decompose it into bihomogeneous components: \begin{align*}
\mathbb{R}[z,\overline{z}]=\bigoplus_{p,q\ge 0}\, \mathcal{P}^{p,q},\qquad \text{ where } \mathcal{P}^{p,q}=\mathrm{span}\{\,z^\alpha \overline{z}^\beta:\ |\alpha|=p,\ |\beta|=q\,\}.
\end{align*} As a $\mathrm{U}(2)$-module, $\mathcal{P}^{p,q}\cong \mathrm{Sym}^p(\mathbb{C}^2)\otimes \mathrm{Sym}^q(\overline{\mathbb{C}^2})$.
By Schur’s lemma, the invariant subspace $\bigl(\mathcal{P}^{p,q}\bigr)^{\mathrm{U}(2)}$ is $0$ unless $p=q$, and when $p=q$, it is one-dimensional, spanned by $(z^* z)^p$. Therefore, \begin{align*}
\mathbb{R}[z,\overline{z}]^{\mathrm{U}(2)}=\mathbb{R}[z^* z]=\mathbb{R}[R],
\end{align*} as claimed. Translating back to $\mathfrak{m}-\varepsilon W$ gives $S(\mathfrak{m}-\varepsilon W)^A=\mathbb{R}[R]$.

In particular, as $R = C_2 - 3\varepsilon^2$ on $\mathfrak{m} - \varepsilon W$, hence $R$ is $A$-invariant and generates $S(\mathfrak{m} - \varepsilon W)^A$. Applying $\pi_\mathfrak{m}^*$ yields \begin{align*}
\mathfrak{F}_2^{\mathrm{poly}}=\pi_{\mathfrak{m}}^*\bigl(S(\mathfrak{m}-\varepsilon W)^A\bigr)=\pi_{\mathfrak{m}}^*\left(\mathbb{R}\bigl[R\bigr]\right)
\end{align*} as claimed.
\end{proof}

Using the polynomial ansatz mentioned in \cite{campoamor2024superintegrable,campoamor2026construction}, the generators in $S(\mathfrak{m} - \varepsilon W)^A$ can also be obtained via direct computation.

\begin{remark}
 Choose the basis of $\mathfrak{u}(2)$ given by $H_1=\mathrm{i}\sigma_1$, $H_2=\mathrm{i}\sigma_2$, $H_3=\mathrm{i}\sigma_3$, $H_0=\mathrm{i}\textbf{1}_2$, where $\sigma_k$, $k = 0,1,2,3$, are the Pauli matrices and $\textbf{1}_2$ is a $2 \times 2$ identity matrix. Writing $z_1=x_4+\mathrm{i} x_5$, $z_2=x_6+\mathrm{i} x_7$, the corresponding fundamental vector fields on $\mathbb{R}^4$ with coordinates $(x_4,x_5,x_6,x_7)$ are \begin{align*}
V_1&=-x_7\,\partial_{x_4}+x_6\,\partial_{x_5}-x_5\,\partial_{x_6}+x_4\,\partial_{x_7},
\\
V_2&=x_6\,\partial_{x_4}+x_7\,\partial_{x_5}-x_4\,\partial_{x_6}-x_5\,\partial_{x_7},
\\
V_3&=-x_5\,\partial_{x_4}+x_4\,\partial_{x_5}+x_7\,\partial_{x_6}-x_6\,\partial_{x_7},
\\
V_0&=-3x_5\,\partial_{x_4}+3x_4\,\partial_{x_5}-3x_7\,\partial_{x_6}+3x_6\,\partial_{x_7}.
\end{align*} Recall that finding the $A$-invariant polynomial $f$ is equivalent to solving the linear system of first-order PDEs \begin{align*}
V_k(f)=0\quad \text{ with } k=0,1,2,3.
\end{align*} Hence, $S(\mathfrak{m})^A = \ker \bigcap_{k = 0}^3 V_k$. The quadratic monomial $R=x_4^2+x_5^2+x_6^2+x_7^2=|| z||^2$ satisfies $V_k(R)=0$ for all $k$, and conversely, solving the system in the polynomial ansatz forces $f$ to be a polynomial in $R$. Hence, $S(\mathfrak{m}-\varepsilon W)^A=\mathbb{R}[R]$.
\end{remark}

 As illustrated in Proposition \ref{prop:F2bracket}, in the irregular case, $\mathfrak{F}_2^{\mathrm{poly}}$ consists of only one $A$-invariant generator, in contrast to the $4$ generators in the regular case. Indeed, this aligns with the fact that an adjoint orbit of type $A_2$ has an irregular stabiliser with dimension $4$, which yields a smaller family of first integrals.  


Since $C_2=R+3\varepsilon^2$ and $R$ are $G$-invariant, $\mathbb{R}[R]$ lies inside $\mathfrak{F}_2^{\mathrm{poly}}$.  Hence, $$ R_0:=\mathfrak{F}_1^{\mathrm{poly}}\cap \mathfrak{F}_2^{\mathrm{poly}}    = P^*\mathbb{R}[R] = \pi_\mathfrak{m}^* \mathbb{R}[R].$$ With all the construction above, we see that $ \mathfrak{F}_1^{\mathrm{poly}}\otimes_{R_0}\mathfrak{F}_2^{\mathrm{poly}}  = \mathbb{R}[P_1,\ldots,P_8]/\langle \Phi \rangle \otimes_{\pi_\mathfrak{m}^* \mathbb{R}[R]} \pi_\mathfrak{m}^* \mathbb{R}[R]$ is verified by Theorem \ref{thm:superin}.  All Poisson brackets follow from $$\{P_i,P_j\}_\mathcal{A} =  2c_{ij}^k P_k, \  \{P_i,R\}_\mathcal{A} = 0, \text{ and } \{R,R\}_\mathcal{A} = 0.$$ This further implies that the Poisson commutative base algebra $\mathcal{B} = R_0$. Next, by a direct rank computation, we show that $T^*(\mathrm{SU}(3)/A), \mathcal{A}$ and $R_0$ forms a superintegrable system with Poisson projection chains. 

\begin{proposition}
    Let $G = \mathrm{SU}(3)$ and $A = \mathrm{S}(\mathrm{U}(2) \times \mathrm{U}(1))$ be defined as $\mathfrak{F}_1^{\mathrm{poly}}$, $ \mathfrak{F}_2^{\mathrm{poly}}$, $R_0$, and $\mathcal{A}$ as stated in Lemma \ref{lem:F1poly2} and Proposition \ref{prop:F2bracket}. Then there exist two Poisson projections $\pi_1$ and $\pi_2$ such that the following chain  \begin{align}
  T^*M\xrightarrow{\pi_1}Z_P=\mathrm{Spec} \, \mathcal{A} \xrightarrow{\pi_2}\mathrm{Spec}\, R_0\cong\mathbb{A}^1   \label{eq:superchain2}
\end{align} is superintegrable.   
\end{proposition}
\begin{proof}
 By Lemma \ref{lem:F1poly2} and Proposition \ref{prop:F2bracket}, we deduce $ \mathfrak{F}_1^{\mathrm{poly}} =\mathbb{R}[P_1,\ldots,P_8]\big/ \langle \Phi \rangle, \mathfrak{F}_2^{\mathrm{poly}} = R_0 =  \pi_\mathfrak{m}^*\mathbb{R}[R]$ and $\mathcal{A} = \mathbb{R}[P_1,\ldots,P_8]\big/ \langle \Phi \rangle$ $ \otimes_{\pi_\mathfrak{m}^*\mathbb{R}[R]} \pi_\mathfrak{m}^*\mathbb{R}[R]$. Consequently, $\mathcal{A}=\mathfrak{F}_1^{\mathrm{poly}}\otimes_{R_0}\mathfrak{F}_2^{\mathrm{poly}}\subset \mathcal{O}(T^*M)$ is a Poisson algebra, with the inclusion being a Poisson algebra morphism. It follows that the map $\pi_1$ induced by the evaluation of generators is a Poisson morphism. Moreover, since $\mathcal{B}=R_0=\pi_\mathfrak{m}^*\mathbb{R}[R]$ carries the trivial Poisson bracket, the structure map $\pi_2:Z_P\to\mathrm{Spec} \, R_0\cong\mathbb{A}^1$ is Poisson as well. We now check the dimension of the spectrum of these algebras. It follows that \begin{align*}
     \dim T^*(\mathrm{SU}(3)/A) = 8 = \dim \mathcal{A} +  \dim R_0 .
 \end{align*}   

To conclude that the chain \eqref{eq:superchain2} is superintegrable, it remains to show that $d\pi_1$ and $d\pi_2$ are surjective. By the left-equivariance of $P$, we may compute $d\pi_1$ at $g=e$. Take $\xi=X-\varepsilon W$. Using the differential of the moment map \eqref{eq:diffmoment}, for $(u,v)\in \mathfrak{m} \oplus \mathfrak{m}$, we have
\begin{align*}
dP_{(e,X)}(u,v)=[u,\xi]+v.
\end{align*} Hence, for $1\leq i\leq 7$, \begin{align*}
dP_i(u,v)=B( [u,\xi]+v,\lambda_i) =\underbrace{B( v,\lambda_i)}_{\text{fiber}}+ \underbrace{B( [u,X],\lambda_i)}_{\text{base via }X}- \varepsilon\,\underbrace{B( [u,W],\lambda_i)}_{\text{base via }W}.
\end{align*} Choose algebraic coordinates on $Z_P$ given by $(P_1,\ldots,P_7)$. The corresponding coordinates $(x_4,x_5,x_6,x_7; y_4,y_5,y_6,y_7)$ on $T^*_[e]M$ are given by writing $X=-\sum_{j= 4}^7x_j\lambda_j$ and $u=\sum_{j= 4}^7y_j\lambda_j$, such that $P_j\vert_{g=e}=-x_j$ for all $j$. The reductive commutator relations $[\mathfrak{m},\mathfrak{m}]\subset\mathfrak{a}$ and $[\mathfrak{m},\mathfrak{a}]\subset\mathfrak{m}$ imply that the Jacobian of $\pi_1$ at $(e,X)$ has the block form \begin{align*}
J_{\pi_1}(e,X)= \begin{pmatrix}
 \textbf{0}_{3\times 4} & A(X)_{3\times 4}\\[2pt]
-\mathbf{I}_{4\times 4} & -\varepsilon\,B_{4\times 4}
\end{pmatrix}, 
\end{align*} where $A(X)$ is the $3\times 4$ matrix of the linear map $u\mapsto\big(B( [u,X],\lambda_1),B( [u,X],\lambda_2),B( [u,X],\lambda_3)\big)$. Under the GM basis, we are able to check explicitly that
\begin{align*}
A(X)=
\begin{pmatrix}
 -x_7 &  x_6 & -x_5 & x_4\\[2pt]
 -x_6 & -x_7 & x_4 &  x_5\\[2pt]
 -x_5 & x_4 &  x_7 & -x_6
\end{pmatrix},
\end{align*}
whose $3 \times 3$ column minors are
\begin{align*}
\det A_{(1,2,3)}=x_7\sum_{j = 4}^7x_j^2,\quad \det A_{(1,2,4)}=-x_6\sum_{j = 4}^7x_j^2,\quad \det A_{(1,3,4)}=x_5\sum_{j = 4}^7x_j^2,\quad \det A_{(2,3,4)}=-x_4\sum_{j = 4}^7x_j^2.
\end{align*} Here $A_{(i,j,k)}$ means the matrix formed by taking $i$th, $j$th and $k$th column of it. In particular, $\mathrm{rank} \, A(X)=3$ for every $X\neq 0$. Since the lower-left block of $J_{\pi_1}(e,X)$ is $-\mathbf{I}_{4\times 4}$, elementary row/column operations give
\begin{align*}
\mathrm{rank} \,J_{\pi_1}(e,X)= 4+\mathrm{rank} \,A(X)=7\qquad (X\neq 0).
\end{align*} By left-equivariance, this rank is the same at $(g,X)$ for all $g\in G$. Hence, $d\pi_1$ is surjective on the Zariski open dense subset $\{(g,X)\in T^*M:X\neq 0\}$.

We now focus on the Jacobian of $\pi_2$. Identify $\mathrm{Spec}\, \mathcal{B}\cong\mathbb{A}^1$ with coordinate $R$. By Proposition \ref{prop:F2bracket}, $R = \sum_{j = 4}^7P_j^2$ in $Z_P$, from which $dR=\sum_{j = 4}^7 2P_j\, dP_j$. Therefore, in relation to $(P_1,\ldots,P_7)$, the Jacobian matrix for $\pi_2$ is given by \begin{align*}
J_{\pi_2}=\bigl( 0,0,0,\,2P_4,2P_5,2P_6,2P_7 \bigr),
\end{align*} and $\mathrm{rank} \,J_{\pi_2}=1$, where $(P_4,\ldots,P_7)\neq 0$. At $g=e$, this locus is $\{X\neq 0\}$ since $P_j=-x_j$ for $j= 4,\ldots,7$. By $G$-equivariance, we can translate to all $g\in G$. Thus, $d\pi_2$ is surjective on the same Zariski open dense subset, and as observed above, $\pi_2$ is Poisson because $\mathcal{B}=\mathbb{R}[R]$ carries the trivial Poisson bracket. At the exceptional locus $X=0$, we have $A(0)=0$. Therefore, $\mathrm{rank}\, J_{\pi_1}= 4$ and $\mathrm{rank}\, J_{\pi_2}=0$.
\end{proof}
 
\section{Conclusion}
\label{sec:conclusion}
In this work, we have presented an approach for constructing (super)integrable geodesic flows on a class of symmetric symplectic manifolds using algebraic and geometric methods. In particular, using invariant polynomial functions in Lie algebras and their Poisson-commutative properties, we obtained explicit first integrals for magnetic geodesic flows in $T^*(G/A)$, where $G/A$ is an adjoint orbit. An important observation from our work is the algebro-geometric construction of the integrals of motion, which provides a unifying perspective on various classical approaches (such as the argument shift method and the Gelfand-Cetlin type constructions) within a single Poisson projection chain. The example of the full flag manifold $\mathrm{SU}(3)/T$ demonstrates that our method can produce the set of integrals required for superintegrability in nontrivial cases, including situations with irregular elements in the underlying Lie algebra. All integrals of motion in this example are given by explicit polynomial functions in the phase space, illustrating the effectiveness of our constructive approach.

Note that the approaches developed in Section \ref{sec:superconstruction} can be extended to other symmetric or homogeneous spaces beyond the specific case of a semisimple, simply connected, and compact Lie group. Concretely, let $G$ be a real reductive Lie group and let $A\subset G$ be a closed reductive subgroup such that there is an $A$-invariant splitting $\mathfrak{g}=\mathfrak{a}\oplus\mathfrak{m}$. In $M=G/A$, equip $T^*M$ with the magnetic form $\omega_\varepsilon=\omega_{\mathrm{can}}+\varepsilon\,\pi^*\omega_{\mathrm{KKS}}$, with $W \in \mathfrak{g}$ fixed by $\mathrm{Ad}(A)$. Using the same maps as in Section \ref{sec:superconstruction}, \begin{align*}
   P(g,X)=\mathrm{Ad}(g)(X-\varepsilon W),\qquad \pi_\mathfrak{m}(g,X)=X-\varepsilon W, 
\end{align*} we again obtain the two polynomial blocks as follows
\begin{align*}
\mathfrak{F}_1^{\mathrm{poly}} := P^*S(\mathfrak{g}),\qquad  \mathfrak{F}_2^{\mathrm{poly}}:= \widetilde{\pi}_\mathfrak{m}^*S(\mathfrak{m}-\varepsilon W)^A,\qquad
R_0:=\mathfrak{F}_1^{\mathrm{poly}}\cap \mathfrak{F}_2^{\mathrm{poly}}.
\end{align*} Their mixed Poisson brackets vanish, and in the regular stratum $U\subset T^*M$ (where $\xi=X-\varepsilon W$ is regular), the same field-intersection argument shows  $\mathrm{Frac}(\mathfrak{F}_1^{\mathrm{poly}})\cap \mathrm{Frac}(\mathfrak{F}_2^{\mathrm{poly}})=\mathrm{Frac}(R_0)$. Hence, the fiber tensor product is injected into the composition of the two fraction fields and is reduced,
\begin{align*}
\widehat{\mathcal{A}}:=\mathfrak{F}_1^{\mathrm{poly}}\otimes_{R_0}\mathfrak{F}_2^{\mathrm{poly}}\;\hookrightarrow\; \mathrm{Frac}(\mathfrak{F}_1^{\mathrm{poly}})\,\mathrm{Frac}(\mathfrak{F}_2^{\mathrm{poly}}),\qquad  \mathrm{trdeg} \, \widehat{\mathcal{A}}=\mathrm{trdeg} \, \mathfrak{F}_1^{\mathrm{poly}}+\mathrm{trdeg}\,\mathfrak{F}_2^{\mathrm{poly}}-\mathrm{trdeg} \, R_0.
\end{align*} In particular, the Poisson projection chain persists under such an extension:
\begin{align*}
T^*M \xrightarrow{\;\pi_1\;} \mathrm{Spec}\, \mathcal{A} \xrightarrow{\;\pi_2\;} \mathrm{Spec}R_0,\qquad  \mathcal{A}:=\mathfrak{F}_1^{\mathrm{poly}}\otimes_{R_0}\mathfrak{F}_2^{\mathrm{poly}}.
\end{align*}
 Another direction for future work is to examine quantum analogues of these superintegrable systems, investigating how the classical polynomial integrals correspond to commuting quantum operators (sometimes referred to as missing label operators in representation theory). These connections could enhance our comprehension of how symmetries, integrability, and algebraic geometry interplay in theoretical physics. 
 
 On the other hand, in a more geometric context, we will systematically analyse, for each superintegrable system built from our Poisson projection chains, the symplectic foliation and the pullback behaviour of generic fibers along the canonical maps. Concretely, we will study the chain of Poisson morphisms $T^*M \xrightarrow{\ \pi_1\ } \mathrm{Spec}\,\mathcal{A}\xrightarrow{\ \pi_2\ } \mathrm{Spec}\,R_0$, describe the symplectic leaves in $\mathrm{Spec}\,\mathcal{A}$, and track how generic $\pi_2$-fibers pull back under $\pi_1$ to produce isotropic fibers in $T^*M$. This viewpoint clarifies 
 how leaf dimensions and ranks vary between different choices of $\mathcal{A}$ and along chains of subgroups also provides further geometric structures in the phase space via stratifications and reductions \cite{sjamaar1991stratified}.


\section*{Acknowledgement}

Kai Jiang was supported by the Fundamental Research Funds for the Central Universities. Guorui Ma was supported by the Research Funding at SIMIS. Ian Marquette was supported by the Australian Research Council Future Fellowship FT180100099. Yao-Zhong Zhang was supported by the Australian Research Council Discovery Project DP190101529.   The authors would like to thank Nicolai Reshetikhin, and Zhuo Chen for useful discussions on the geometric construction of the superintegrabilities, and the construction of the homogeneous spaces with He Lyu. 

\section*{Conflict of Interest}
The author declares that there is no conflict of interest.

\section*{Data Availability}
  No data associated in the manuscript.

\appendix
\section{}
\label{appA}

We now establish the Poisson structures on the reduced tensor product $\mathfrak{F}_1^{\mathrm{poly}} \otimes_{R_0} \mathfrak{F}_2^{\mathrm{poly}}$. In other words, we show that $(\widehat{\mathcal{A}},\{\cdot,\cdot\}_{\widehat{\mathcal{A}}})$ is a Poisson algebra. We first consider the tensor product $\mathfrak{F}_1^{\mathrm{poly}} \otimes \mathfrak{F}_2^{\mathrm{poly}}$. Define a bilinear operation  $ \{\cdot,\cdot\}_{\otimes}:  \mathfrak{F}_1^{\mathrm{poly}} \otimes \mathfrak{F}_2^{\mathrm{poly}} \times \mathfrak{F}_1^{\mathrm{poly}} \otimes \mathfrak{F}_2^{\mathrm{poly}} \rightarrow \mathfrak{F}_1^{\mathrm{poly}} \otimes \mathfrak{F}_2^{\mathrm{poly}}$   by \begin{align*}
    \{ h \otimes \theta,h' \otimes \theta'\}_\otimes : = \{h,h'\}_1 \otimes (\theta \theta' ) +  hh' \otimes \{\theta,\theta'\}_2. 
\end{align*} Here $\{\cdot,\cdot\}_i$ is the twisted Poisson bracket $\{\cdot,\cdot\}_\varepsilon$ restricted on $\mathfrak{F}_i^{\mathrm{poly}}$ for all $i =1,2$. 
It is straightforward to show that $\left(\mathfrak{F}_1^{\mathrm{poly}} \otimes \mathfrak{F}_2^{\mathrm{poly}}, \{\cdot,\cdot\}_\otimes\right)$ is a Poisson algebra by checking the Leibniz rule and Jacobi identity. Let $J$ be the two-sided ideal generated by the following relations \begin{align}
    (hr) \otimes \theta - h \otimes (r \theta) , \quad  h \otimes (\theta r) - (hr )\otimes \theta,
\end{align} where $h \in \mathfrak{F}_1^{\mathrm{poly}},\theta \in \mathfrak{F}_2^{\mathrm{poly}} $ and $r \in R_0$.  The fiber tensor product is\begin{align*}
\widehat{\mathcal{A}}:=\mathfrak{F}_1^{\mathrm{poly}}\otimes_{R_0}\mathfrak{F}_2^{\mathrm{poly}}:=(\mathfrak{F}_1^{\mathrm{poly}}\otimes \mathfrak{F}_2^{\mathrm{poly}})/J.
\end{align*} 
\begin{lemma} \label{lem:factor-central}
For every $r\in R_0$, we have $\{r,f\}_1=0$ for all $f\in \mathfrak{F}_1^{\mathrm{poly}}$ and $\{r,\varphi\}_2=0$ for all $\varphi\in \mathfrak{F}_2^{\mathrm{poly}}$. 
\end{lemma}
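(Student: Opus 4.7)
The plan is to exploit the dual characterisation of $R_0$ established in Proposition \ref{prop:inter}: every $r \in R_0$ admits simultaneous presentations $r = P^*C$ with $C \in S(\mathfrak{g})^G$ and $r = \pi_\mathfrak{m}^*\theta$ with $\theta = \mathrm{Res}_W(C) \in S(\mathfrak{m}-\varepsilon W)^A$. Combined with Lemma \ref{lem:interzero}, which asserts that $\{P^*h,\,\pi_\mathfrak{m}^*\theta'\}_\varepsilon = 0$ for all $h \in S(\mathfrak{g})$ and all $\theta' \in S(\mathfrak{m}-\varepsilon W)^A$, these two ingredients deliver the result after a symmetric two-line computation, since by Proposition \ref{prop:slicepoisson} the brackets $\{\cdot,\cdot\}_i$ are just the restrictions of $\{\cdot,\cdot\}_\varepsilon$ to each factor.

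Concretely, I would fix $r \in R_0$ together with both presentations above. For the first assertion, take $f = P^*h \in \mathfrak{F}_1^{\mathrm{poly}}$ and use the \emph{second} presentation of $r$ to compute
\[
\{r, f\}_1 = \{\pi_\mathfrak{m}^*\theta,\, P^*h\}_\varepsilon = 0
\]
by Lemma \ref{lem:interzero}. For the second assertion, take $\varphi = \pi_\mathfrak{m}^*\theta' \in \mathfrak{F}_2^{\mathrm{poly}}$ and use instead the \emph{first} presentation of $r$ to obtain
\[
\{r, \varphi\}_2 = \{P^*C,\, \pi_\mathfrak{m}^*\theta'\}_\varepsilon = 0.
\]
The only subtlety is the switch between the two presentations of $r$: in each case one writes $r$ as belonging to the \emph{other} factor so that the resulting bracket is mixed, and Lemma \ref{lem:interzero} forces it to vanish.

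There is no serious obstacle here: the statement is essentially a restatement of Corollary \ref{coro:check}, isolated so that it can serve directly as the input for the construction that follows. Its role is to guarantee centrality of $R_0$ in each factor algebra, which is precisely the hypothesis needed so that the bilinear operation $\{\cdot,\cdot\}_\otimes$ on $\mathfrak{F}_1^{\mathrm{poly}}\otimes \mathfrak{F}_2^{\mathrm{poly}}$ descends to a well-defined Poisson bracket on the fiber tensor product $\widehat{\mathcal{A}} = \mathfrak{F}_1^{\mathrm{poly}} \otimes_{R_0} \mathfrak{F}_2^{\mathrm{poly}}$ introduced immediately afterwards.
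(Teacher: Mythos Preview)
Your proof is correct and follows essentially the same approach as the paper: use Proposition \ref{prop:inter} to obtain the two presentations $r=P^*C=\pi_\mathfrak{m}^*\theta$, then switch to the presentation coming from the \emph{other} block so that each bracket becomes a mixed bracket, which vanishes by Lemma \ref{lem:interzero}. Your observation that this is a repackaging of Corollary \ref{coro:check} is also accurate.
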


\begin{proof}
By Proposition \ref{prop:inter} and Corollary \ref{lem:domainstorsionfree}, we can write $r=P^*h=\pi_\mathfrak{m}^*\theta$ with $h\in S(\mathfrak{g})^G$, $\theta\in S(\mathfrak{m})^A$. For any $f=P^*h' \in \mathfrak{F}_1^{\mathrm{poly}}$, we then deduce \begin{align*}
    \{r,f\}_1=\{\pi_\mathfrak{m}^*\theta,\;P^*h'\}_\varepsilon=0
\end{align*} by the vanishing of the mixed-block $\{\mathfrak{F}_2^{\mathrm{poly}},\mathfrak{F}_1^{\mathrm{poly}}\}_\varepsilon=0$ stated in Lemma \ref{lem:interzero}. Similarly, we deduce $\{r,\varphi\}_2=0$.
\end{proof}

We now show that the Poisson bracket $\{\cdot,\cdot\}_{\otimes}$ descends into the bilinear operator $\{\cdot,\cdot\}_{\widehat{\mathcal{A}}}$ such that $\left(\widehat{\mathcal{A}},\{\cdot,\cdot\}_{\widehat{\mathcal{A}}}\right)$ forms a Poisson algebra.

\begin{lemma} 
The ideal $J$ is a Poisson ideal of $\left(\mathfrak{F}_1^{\mathrm{poly}}\otimes \mathfrak{F}_2^{\mathrm{poly}},\{\cdot,\cdot\}_\otimes\right)$. Hence $\{\cdot,\cdot\}_\otimes$ descends uniquely to a bilinear bracket
\begin{align*}
\{\cdot,\cdot\}_{\widehat{\mathcal{A}}}:\ \widehat{\mathcal{A}}\times \widehat{\mathcal{A}}\longrightarrow \widehat{\mathcal{A}}.
\end{align*}
\end{lemma}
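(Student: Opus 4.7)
The plan is to show that $J$ is stable under the bracket from the left, i.e.\ $\{J,\ \mathfrak{F}_1^{\mathrm{poly}}\otimes \mathfrak{F}_2^{\mathrm{poly}}\}_\otimes\subseteq J$, after which the universal property of quotienting by a Poisson ideal produces the descended bracket $\{\cdot,\cdot\}_{\widehat{\mathcal{A}}}$ canonically. First I would reduce to a finite check: since $\{\cdot,\cdot\}_\otimes$ is a biderivation by construction and $J$ is a two-sided ideal, it suffices to evaluate brackets of the two families of generators
\[
(hr)\otimes\theta-h\otimes(r\theta),\qquad h\otimes(\theta r)-(hr)\otimes\theta,
\]
with $h\in\mathfrak{F}_1^{\mathrm{poly}},\ \theta\in\mathfrak{F}_2^{\mathrm{poly}},\ r\in R_0$, against elementary tensors $h'\otimes\theta'$.

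Next I would carry out the direct computation using the defining formula for $\{\cdot,\cdot\}_\otimes$ together with the Leibniz rule inside each slot. For the first family one has
\begin{align*}
\{(hr)\otimes\theta,\,h'\otimes\theta'\}_\otimes &= \{hr,h'\}_1\otimes(\theta\theta')+(hr)h'\otimes\{\theta,\theta'\}_2,\\
\{h\otimes(r\theta),\,h'\otimes\theta'\}_\otimes &= \{h,h'\}_1\otimes(r\theta\theta')+hh'\otimes\{r\theta,\theta'\}_2.
\end{align*}
By Lemma \ref{lem:factor-central}, $\{r,h'\}_1=0=\{r,\theta'\}_2$, so the Leibniz rule gives $\{hr,h'\}_1=r\{h,h'\}_1$ and $\{r\theta,\theta'\}_2=r\{\theta,\theta'\}_2$. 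Substituting and subtracting, the difference reduces to
\begin{align*}
&\bigl((r\{h,h'\}_1)\otimes(\theta\theta')-\{h,h'\}_1\otimes(r\theta\theta')\bigr)\\
&\qquad +\bigl((hh'r)\otimes\{\theta,\theta'\}_2-hh'\otimes(r\{\theta,\theta'\}_2)\bigr),
\end{align*}
each summand of which is itself a balancing relation, hence in $J$. The second family of generators is handled by the symmetric computation, and the derivation property propagates the conclusion from the generating relations to all of $J$.

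With $\{J,\mathfrak{F}_1^{\mathrm{poly}}\otimes \mathfrak{F}_2^{\mathrm{poly}}\}_\otimes\subseteq J$ in hand, $J$ is a Poisson ideal, and the universal property yields a unique bilinear bracket $\{\cdot,\cdot\}_{\widehat{\mathcal{A}}}$ making the projection $\mathfrak{F}_1^{\mathrm{poly}}\otimes \mathfrak{F}_2^{\mathrm{poly}}\twoheadrightarrow\widehat{\mathcal{A}}$ a Poisson morphism; anti-symmetry, the Leibniz rule, and the Jacobi identity transfer from $\{\cdot,\cdot\}_\otimes$ without further work. The main technical hinge is the centrality of $R_0$ invoked above: without the vanishings $\{r,h'\}_1=0$ and $\{r,\theta'\}_2=0$ furnished by Lemma \ref{lem:factor-central}, the computation would leave behind cross terms $h\{r,h'\}_1\otimes(\theta\theta')$ and $hh'\otimes\theta\{r,\theta'\}_2$ that do not lie in $J$ a priori. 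Thus the entire argument pivots on the Poisson-centrality of the intersection algebra $R_0$, which itself rests on the mixed-bracket vanishing $\{\mathfrak{F}_1^{\mathrm{poly}},\mathfrak{F}_2^{\mathrm{poly}}\}_\varepsilon=0$ established in Lemma \ref{lem:interzero}.
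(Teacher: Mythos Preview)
Your proposal is correct and follows essentially the same approach as the paper: both reduce to bracketing a generator $(hr)\otimes\theta - h\otimes(r\theta)$ of $J$ against an elementary tensor, expand via the Leibniz rule, invoke the factor-wise centrality of $R_0$ from Lemma \ref{lem:factor-central} to kill the cross terms, and observe that what remains is a sum of balancing relations in $J$. Your reduction step (using the biderivation property to pass from generators to all of $J$) is stated more explicitly than in the paper, but the substantive computation is identical.
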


\begin{proof}
Let $x=(fr)\otimes\varphi - f\otimes(r\varphi)$ and $y=f_1\otimes\varphi_1$.  Using bilinearity, Leibniz, and the factor-wise centrality of $r$ proved above, \begin{align}
\nonumber
\{(fr) \otimes \phi,\ y\}_\otimes &=\{fr,f'\}_1 \otimes \phi\phi'+(fr f') \otimes \{\phi,\phi'\}_2\\
&=r\,\{f,f'\}_1 \otimes \phi\phi'+ff'r \otimes \{\phi,\phi'\}_2, \label{eq:xa}\\
\{f \otimes (r\phi),\ y\}_\otimes
&=\{f,f'\}_1 \otimes r\phi\phi'+ff' \otimes r\,\{\phi,\phi'\}_2. \label{eq:xb}
\end{align}  Subtracting gives the sum of generators of $J$. Then using \eqref{eq:xa} and \eqref{eq:xb}, we have \begin{align*}
    \{x,y\}_\otimes = & \, \{r,f_1\}_1\otimes\varphi\varphi_1 - ff_1\otimes \varphi\{r,\varphi_1\}_2+ \big(r\{f,f_1\}_1\otimes\varphi\varphi_1 - \{f,f_1\}_1\otimes r\varphi\varphi_1\big) \\
& \, +  \big(fr f_1\otimes\{\varphi,\varphi_1\}_2 - ff_1\otimes r\{\varphi,\varphi_1\}_2\big).
\end{align*} By Lemma \ref{lem:factor-central}, the first two terms vanish, and the remaining two lie in $J$. Hence, $J$ is a Poisson ideal and the bracket descends.  
\end{proof}

We then examine the Poisson bracket on $\widehat{\mathcal{A}}$ and block rules.

\begin{proposition} 
The descended operation $\{\cdot,\cdot\}_{\widehat{\mathcal{A}}}$ makes $\widehat{\mathcal{A}}$ a Poisson algebra. It is characterised by
\begin{align}
\nonumber
&\{f\otimes_{R_0} 1,\ f'\otimes_{R_0} 1\}_{\widehat{\mathcal{A}}} =\{f,f'\}_1\otimes_{R_0} 1,\\
&\{1\otimes_{R_0} \phi,\ 1\otimes_{R_0}\phi'\}_{\widehat{\mathcal{A}}} =1\otimes_{R_0}\{\phi,\phi'\}_2, \label{eq:commutatorAre}\\
\nonumber
&\{f\otimes_{R_0} 1,\ 1\otimes_{R_0}\phi\}_{\widehat{\mathcal{A}}}=0,
\end{align}
and the Leibniz rule.  Then $\left(\widehat{\mathcal{A}},\{\cdot ,\cdot \}_{\widehat{\mathcal{A}}}\right)$ is a Poisson algebra.
\end{proposition}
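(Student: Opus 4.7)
The plan is to verify the Poisson algebra axioms on the tensor algebra $\mathfrak{F}_1^{\mathrm{poly}}\otimes \mathfrak{F}_2^{\mathrm{poly}}$ equipped with $\{\cdot,\cdot\}_\otimes$, to derive the three characterising formulas as specialisations of the defining expression, and then to descend everything to $\widehat{\mathcal{A}}$ via the preceding lemma, which already ensures $J$ is a Poisson ideal and the bracket descends uniquely.

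First I would record that $\{\cdot,\cdot\}_\otimes$ is bilinear and antisymmetric by direct inspection of its definition, using the antisymmetry of $\{\cdot,\cdot\}_1$ and $\{\cdot,\cdot\}_2$ together with the commutativity of multiplication in each factor. The three relations in \eqref{eq:commutatorAre} would then emerge as immediate specialisations of $\{h\otimes\theta,h'\otimes\theta'\}_\otimes=\{h,h'\}_1\otimes\theta\theta'+hh'\otimes\{\theta,\theta'\}_2$: substituting $\theta=\theta'=1$ kills the second summand because $\{1,1\}_2=0$; substituting $h=h'=1$ kills the first; and the mixed pair $(f\otimes 1,\,1\otimes\phi)$ zeroes both summands simultaneously since $\{h,1\}_1=\{1,\theta\}_2=0$. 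Because the quotient map is an algebra homomorphism, these identities pass to $\widehat{\mathcal{A}}$ unchanged.

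Next I would check the Leibniz rule on pure tensors by direct expansion. Writing out $\{h\otimes\theta,\,(h'\otimes\theta')(h''\otimes\theta'')\}_\otimes=\{h\otimes\theta,\,h'h''\otimes\theta'\theta''\}_\otimes$ and applying the Leibniz rules for $\{\cdot,\cdot\}_1$ and $\{\cdot,\cdot\}_2$ factor-by-factor reproduces exactly $(h'\otimes\theta')\{h\otimes\theta,h''\otimes\theta''\}_\otimes+\{h\otimes\theta,h'\otimes\theta'\}_\otimes(h''\otimes\theta'')$. Bilinearity extends the identity to the whole tensor algebra, and since $J$ is a Poisson ideal, Leibniz descends to $\widehat{\mathcal{A}}$.

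The main step will be the Jacobi identity. By bilinearity and the Leibniz rule just established, it suffices to verify the cyclic sum $\{a,\{b,c\}\}+\{b,\{c,a\}\}+\{c,\{a,b\}\}=0$ on triples of generators drawn from the two pure types $f\otimes_{R_0}1$ and $1\otimes_{R_0}\phi$. Pure triples of the first type reduce to $\bigl(\{f_1,\{f_2,f_3\}_1\}_1+\mathrm{cyclic}\bigr)\otimes_{R_0}1$, which vanishes by Jacobi for $\{\cdot,\cdot\}_1$; pure triples of the second type are symmetric. In every mixed configuration, each inner bracket lies in a single pure factor by the generator formulas, and the outer bracket then pairs that element with a generator of the opposite pure type, producing a mixed bracket that vanishes by the third line of \eqref{eq:commutatorAre}. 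The main obstacle here is really only the bookkeeping of the four mixed cases, but once the vanishing $\{f\otimes_{R_0}1,\,1\otimes_{R_0}\phi\}_{\widehat{\mathcal{A}}}=0$ is invoked uniformly, every cyclic sum collapses to zero, completing the verification that $\bigl(\widehat{\mathcal{A}},\{\cdot,\cdot\}_{\widehat{\mathcal{A}}}\bigr)$ is a Poisson algebra with the stated characterisation.
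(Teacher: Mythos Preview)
Your proposal is correct and follows essentially the same approach as the paper's proof, which simply asserts that bilinearity, antisymmetry, and Leibniz descend from the Poisson ideal property, that Jacobi holds ``by direct computational check,'' and that the relations follow from the definition of $\{\cdot,\cdot\}_\otimes$. You have supplied the details the paper omits; in particular, your reduction of the Jacobi identity to generators of the two pure types (using that the Jacobiator is a derivation in each slot once Leibniz holds) together with the mixed-bracket vanishing is exactly the ``direct computational check'' the paper invokes without writing out.
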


\begin{proof}
Bilinearity, antisymmetry, and the Leibniz rule descend from the tensor product as $J$ is a two-sided Poisson ideal. The Jacobi identity holds by direct computational check. The relations in \eqref{eq:commutatorAre} follow directly from the definition of $\{\cdot,\cdot\}_\otimes$.
\end{proof}
 
The constructions above inspire the following definition.

\begin{definition}
\label{def:algetensorp}
  Consider the algebraic tensor product $$
\widehat{\mathcal{A}}:=\mathfrak{F}_1^{\mathrm{poly}}\otimes_{R_0}\mathfrak{F}_2^{\mathrm{poly}} = \mathfrak{F}_1^{\mathrm{poly}}\otimes \mathfrak{F}_2^{\mathrm{poly}}\big/J.
$$ The Poisson bracket $\{\cdot ,\cdot \}_{\widehat{\mathcal{A}}}:\widehat{\mathcal{A}}\times\widehat{\mathcal{A}}\to\widehat{\mathcal{A}}$ on $\widehat{\mathcal{A}}$ is given by \begin{align*}
    \{f\otimes\phi,\ f'\otimes\phi'\}_{\widehat{\mathcal{A}}} :=\{f,f'\}_\varepsilon\otimes(\phi\phi')+(ff')\otimes\{\phi,\phi'\}_\varepsilon, \quad f,f'\in \mathfrak{F}_1^{\mathrm{poly}},\ \phi,\phi'\in \mathfrak{F}_2^{\mathrm{poly}}.
\end{align*}   
\end{definition}

\section{ } 
\label{appB}
 We now review some basic terminology from the theory of commutative algebras as a generalization of the Poisson polynomial algebras studied above. Let $R$ be a finitely generated integral domain over $\mathbb{R}$. Consider two other finitely generated integral domains $A$ and $B$ that contain $R$ as a subring. We have inclusions of $k$-algebra homomorphisms $R \hookrightarrow A$ and $R \hookrightarrow B$. We are interested in the properties of the tensor product of these algebras over $R  = A \cap B$: \begin{align*}
    A \otimes_R B\,.
\end{align*}  We will focus on the algebraic structures on this $R$-algebra $A \otimes_R B$. 

 \begin{remark}
Suppose that $A=\textbf{Alg}\langle a\rangle$ and $B=\textbf{Alg}\langle b\rangle$ are finite sets $a\subset A$ and $b\subset B$, then \begin{align*}
   \textbf{Alg}\langle A\cup B\rangle=\textbf{Alg}\langle a\cup b\rangle . 
\end{align*} 
Indeed, $\textbf{Alg}\langle a\cup b\rangle$ is a subalgebra containing $A$ and $B$. Hence, it contains $\textbf{Alg}\langle A\cup B\rangle$. Conversely, $\textbf{Alg}\langle A\cup B\rangle$ contains $a\cup b$ and therefore also contains $\textbf{Alg}\langle a\cup b\rangle$.

On the other hand, the intersection $A\cap B$ is a set-theoretic intersection within the fixed algebra $F$ and is not determined by the intersections of chosen generating sets. In general, we only have $\textbf{Alg}\langle a\cap b\rangle\subseteq A\cap B$, and equality may not hold.
Moreover, the notation $A\cap B$ is meaningful only after fixing the embeddings of $A$ and $B$ into a common ambient algebra. See Lemma \ref{lem:interfiber}. For abstract $R$-algebras $A$ and $B$, the expression $A\cap B$ is not defined.
 \end{remark}  
 
\begin{definition} 
\label{def:algedis} \cite[Chapter VIII, Section 3]{LangAlgebra}
Let $A,B$ be finitely-generated integral domains containing a common integral subdomain $R$ with fraction field $K_R$. 
Let  \begin{align*}
    \{X_1,\ldots,X_n\}\subset\mathrm{Frac}(A),\quad \{Y_1,\ldots,Y_m\}\subset\mathrm{Frac}(B)
\end{align*} be the transcendence bases over $K_R$. 
We say that $A$ and $B$ are \textit{algebraically disjoint} (\textit{linearly disjoint}) over $R$ if and only if the canonical map \begin{align*}
\iota:   \mathrm{Frac}(A)\otimes_{K_R} \mathrm{Frac}(B) \hookrightarrow \mathrm{Frac}(A)\mathrm{Frac}(B) = \mathrm{Frac}(AB)
\end{align*} is injective, where $\mathrm{Frac}(A)\otimes_{K_R} \mathrm{Frac}(B)$ forms a domain and $ \mathrm{Frac}(A) \mathrm{Frac}(B)$ is the smallest subfield containing both $\mathrm{Frac}(A)$ and $\mathrm{Frac}(B)$. Therefore,  \begin{align*}
\mathrm{Frac}(\mathrm{Frac}(A)\otimes_{K_R} \mathrm{Frac}(B))=\mathrm{Frac}(A)\mathrm{Frac}(B),
\end{align*} and the transcendence bases $\{X_i\}$ for $\mathrm{Frac}(A)/K_R$ and $\{Y_j\}$ for $\mathrm{Frac}(B)/K_R$ are algebraically independent within $\mathrm{Frac}(A)\mathrm{Frac}(B)$.
\end{definition}

\begin{definition}
\label{def:redu} \cite[Chapter 1]{AtiyahMacdonald}
    For any commutative algebra $S$, the nilradical of $S$ is defined as the set of nilpotent elements \begin{align*}
   \sqrt{0}_S := \{\,f \in S : f^n = 0 \text{ for some integer } n \ge 1\,\}\,. 
\end{align*} We say $S$ is \textit{reduced} if it does not have nonzero nilpotent elements. Equivalently, $S$ is reduced if and only if $\sqrt{0}_S = \{0\}$. In a reduced algebra, the equality $f^n=0$ always implies $f=0$. 
\end{definition} 
  
We aim to show that, under appropriate hypotheses, the tensor product of two integral domains does not introduce any new algebraic polynomial relations between elements of $A$ and $B$ beyond those already coming from $R$. 
Equivalently, we will see that $A \otimes_R B$ is a reduced $R$-algebra whose transcendence degree is exactly the sum of the transcendence degrees of $A$ and $B$ minus that of $R$.

We now introduce a proposition that extends the integral domain $\mathcal{A}$ into its rational field, such that we can compute the transcendence degree at the field level.


\begin{proposition}
\label{prop:fieldtoring}
      Let $R \subseteq A$ and $R \subseteq B$ be inclusions of finitely generated integral domains over $ R$. Let $K_R = \mathrm{Frac}(R),K_A = \mathrm{Frac}(A) $ and $K_B = \mathrm{Frac}(B) $. Suppose that $A$ and $B$ are algebraically disjoint over $R$. Let $S : = R\setminus\{0\}$ be the localization of $R$. Then $S^{-1}(A \otimes_R B)$ is a domain, and \begin{align*}
          \mathrm{trdeg}_{K_R} \mathrm{Frac} \, S^{-1}(A \otimes_R B) = \mathrm{trdeg}_{K_R} K_A +  \mathrm{trdeg}_{K_R} K_B ,
      \end{align*}   the algebra of the canonical tensor product $   \mathcal{A}   :=  A\otimes_R B $ does not introduce new algebraic relations. 
      In particular, $\mathcal{A}$ has a transcendence degree  \begin{align*}
   \mathrm{trdeg}_R \, \mathcal{A} = \mathrm{trdeg}_R \, A +  \mathrm{trdeg}_R \, B - \mathrm{trdeg}_R R\,,  
 \end{align*}  and $\mathcal{A}$ is a reduced $R$-algebra.
\end{proposition}

 \begin{proof}
     Since $A$ and $B$ are integral domains containing $R$, we consider their fraction fields $K_A = \mathrm{Frac}(A)$, $K_B = \mathrm{Frac}(B)$, and $K_R = \mathrm{Frac}(R)$. In other words, we have the field extensions $K_R \subseteq K_A$ and $K_R \subseteq K_B$. Under the assumption that $A$ and $B$ are linearly disadjoint over $R$, by Definition \ref{def:algedis}, if ${X_1,\ldots,X_n}$ is a transcendence basis for $K_A$ over $K_R$ and ${Y_1,\ldots,Y_m}$ is a transcendence basis for $K_B$ over $K_R$, then \begin{align}
         K_A \otimes_{K_R} K_B \hookrightarrow K_A K_B \,,\label{eq:fraciso}
     \end{align} is injective, where $K_AK_B$ is a transcendence extension of $K_R(X_1,\ldots,X_n,\, Y_1,\ldots,Y_n)$. Here $n = \mathrm{trdeg} \, A $ and $m = \mathrm{trdeg} \, B$. Then there is a natural $R$-algebra homomorphism $\mathcal{A} \to K_A\otimes_{K_R}K_B$, which factors through the localization at $S =R\setminus\{0\}$: \begin{align}
     \nonumber
S^{-1} \left(\mathcal{A}\right) &\cong (A\otimes_R K_R)\otimes_{K_R}(B\otimes_R K_R) \\
&\cong K_A\otimes_{K_R}K_B \xhookrightarrow{ \ \varrho \ } \text{(a field algebraic over $K_R(X_1,\ldots,X_n,Y_1,\ldots,Y_m)$).} \label{eq:field}
\end{align} Here $S^{-1}(A) = A \otimes_R K_R $ and $S^{-1}(B) = B \otimes_R K_R $. In the composition chain \eqref{eq:field}, the isomorphisms are given by the localization and associativity of the tensor. The inclusions  $(A\otimes_R K)\hookrightarrow L$ and $(B\otimes_R K)\hookrightarrow M$ are $K$-linear injections, and $L$ and $M$ are linearly  disjoint over $K$ by assumption, so the third arrow $\varrho$ is injective. Hence $S^{-1}(\mathcal{A})$ embeds into a field and, in particular, is an integral domain.  Therefore, through the embedding $\varrho$, the elements of $\mathcal{A}$ can be regarded as rational functions in the independent transcendental $X_i$ and $Y_j$ over $K_R$. In particular, any potential algebraic relation among elements of $\mathcal{A}$ would produce a polynomial equation in $X_i$ and $Y_j$ of over $K_R$ that vanishes in $K_R(X_i, Y_j)$.

Suppose, by contradiction, that there exist new algebraic relations in $\mathcal{A}$. Without loss of generality, assume that there is a nonzero polynomial \begin{align*}
    f(t_{1,1}, \ldots, t_{1,n}, \, t_{2,1}, \ldots, t_{2,m}) \in R[x_{1,1},\ldots,x_{1,n}, \, x_{2,1},\ldots,x_{2,m}],
\end{align*}  where the indeterminates $x_{1,i}$ and $x_{2,j}$ are such that if we substitute $x_{1,i} = X_i \in K_A$  and $x_{2,j} = Y_j \in K_B$, the polynomial $f$ vanishes in $\mathcal{A}$. 
In other words, we assume $ f(X_1,\ldots,X_n,\, Y_1,\ldots,Y_m) = 0  $ to be an identity that holds in the ring $\mathcal{A}$. By applying the embedding of $\mathcal{A}$ into the field $K_R(X_1,\ldots,X_n,Y_1,\ldots,Y_m)$, the same polynomial relation also holds in the field $K_R(X_i, Y_j)$. However, since $X_1,\ldots,X_n,Y_1,\ldots,Y_m$ are algebraically independent over $K_R$, a nonzero polynomial cannot vanish on them identically in the field of rational functions. The only polynomial in $R[x_{1,i}, x_{2,j}]$ that vanishes for such a substitution into $K_R(X_i,Y_j)$ is the zero polynomial. Hence, no new algebraic polynomial relations can appear in $\mathcal{A}$ beyond those that are already present in $R$. 


As a consequence, we can determine the transcendence degree on the generic fiber $\mathcal{A}$ over $R$. The natural maps $A\to S^{-1}(\mathcal{A})$, $a\mapsto a\otimes 1$, $B\to S^{-1}(\mathcal{A})$, and $b\mapsto 1\otimes b$ induce inclusions
\begin{align*}
K_A,\;K_B \subseteq \mathrm{Frac} \left(S^{-1}(\mathcal{A})\right).
\end{align*}
Hence, $\mathrm{Frac} \left(S^{-1}(\mathcal{A})\right)$ is the composite of $K_A$ and $K_B$ over $K_R$, which is algebraic over $K_R(X_1,\ldots,X_n,Y_1,\ldots,Y_m)$. Since ${X_1,\ldots,X_n,Y_1,\ldots,Y_m}$ forms a transcendence basis for the fraction field of $\mathcal{A}$ over $K_R$, we have \begin{align*}
    \mathrm{trdeg}_{K_R} \mathrm{Frac}(\mathcal{A}) = n+m\,,
\end{align*} where $n = \mathrm{trdeg}_{K_R} \mathrm{Frac}(A)$ and $m = \mathrm{trdeg}_{K_R} \mathrm{Frac}(B)$. But $\mathrm{trdeg}_{K_R} \mathrm{Frac}(A) = \mathrm{trdeg}_R A$ as $K_R$ is the fraction field of $R$. The argument holds for the transcendence degree of $B$. Also note $\mathrm{trdeg}_R R = \mathrm{trdeg}_{K_R} K_R = 0$ by definition. Therefore, \begin{align*}
    \mathrm{trdeg}_R \,\mathcal{A}  = \mathrm{trdeg}_R \, A +  \mathrm{trdeg}_R \, B \,,
\end{align*} since here $\mathrm{trdeg}_R R = 0$. And more generally, if $R$ already has some transcendence degree over $\mathbb{R}$, that would be subtracted as stated. This shows that the transcendence degree of $\mathcal{A} $ is exactly the sum of those of $A$ and $B$ over $R$, indicating no reduction in freedom (i.e., no new algebraic constraints).

Finally, we deduce the nilradical of $\mathcal{A}$. By embedding \eqref{eq:field}, $S^{-1}(\mathcal{A})\hookrightarrow K_A K_B$. A field contains no nonzero nilpotent elements as the only solution to $\alpha^N=0$ in a field is $\alpha=0$. Therefore, if an element $z \in \mathcal{A}$ is nilpotent (i.e., there exists a $N$ such that $z^N=0$), its image in $K_AK_B$ would also satisfy equation $(	\text{image of } z)^N = 0$ in a field. This forces the image of $z$ to be $0$ in the field and hence $z=0$ in $\mathcal{A}$. This shows that $\mathcal{A}$ does not have nonzero nilpotents. In other words, the nilradical $\sqrt{0}$ of $\mathcal{A}$ is zero, so $\mathcal{A}$, by Definition \ref{def:redu}, is reduced.  

In conclusion, $\mathcal{A}$ introduces no new algebraic relations and is reduced, as claimed.
 \end{proof}
 \begin{remark}
     \label{re:36}
 Let $R\subset A,B$ and $S$ be the same as defined above. Using the canonical chain defined in \eqref{eq:field}, we have the composition of injectives from the integral domain $S^{-1}(\mathcal{A})  \hookrightarrow K_A\otimes_{K_R} K_B    \xrightarrow{ \ \varrho \  } K_A K_B $. Therefore, any polynomial relation that holds in $\mathcal{A}$ must also hold in $K_A\otimes_{K_R} K_B$. Thus, checking for \textit{new relations} reduces to the generic fiber over $K_R$.  If $\varrho$ is injective, then $S^{-1}(\mathcal{A})$ embeds into the domain $K_A K_B$, so the localization is reduced. 
 \end{remark}

 \begin{corollary} \label{cor:norela}
Let $R\subset A,B\subset C$ be $R$-subalgebras which are finitely generated integral domains. Let \begin{align*}
    \mu:A\otimes_R B\longrightarrow C,\quad a\otimes b\longmapsto ab
\end{align*} be the multiplication map. Then  $\mathrm{Im}(\mu)=\mathrm{\textbf{Alg}} \langle A\cup B\rangle$. Moreover, if the induced map \begin{align*}
  K_A\otimes_K K_B\longrightarrow LM 
\end{align*}  is injective, then $  S^{-1}(\ker\mu)=\{0\}$. 
In particular, if $A\otimes_R B$ is $R$-torsion free, then $\ker\mu=\{0\}$,  and hence \begin{align*}
    \textbf{Alg} \langle A\cup B\rangle \cong A\otimes_R B .
\end{align*} \end{corollary}

  As discussed above, we now apply the construction to our polynomial Poisson algebras $\mathfrak{F}_i^{\mathrm{poly}}$ for all $i = 1,2$. Based on Proposition \ref{prop:fieldtoring}, after the field extension, we need to show that 
$\mathfrak{F}_1^{\mathrm{poly}}$ and $\mathfrak{F}_2^{\mathrm{poly}}$ are algebraically disjoint over $R_0$. 
Let \begin{align*}
L:=\mathrm{Frac}\,\mathfrak{F}_1^{\mathrm{poly}},\qquad M:=\mathrm{Frac}\,\mathfrak{F}_2^{\mathrm{poly}}\qquad K:= \mathrm{Frac}\,R_0.
\end{align*} If $L$ and $M$ are algebraically disjoint over $K$, then the natural map $L\otimes_K M \longrightarrow LM$ is injective. By \cite[Ch. VIII, \S3-\S4]{LangAlgebra}, we need to show that the transcendence basis of $M$ and $L$ are free over $K$, the intersection is clear and the extension $L/K$ is regular. Consequently, $\ker \mu$ remains zero,  so $\delta_{\mathrm{alg}}=0$, and hence\begin{align*}
\mathrm{rank}_{R_0}\widehat{\mathcal{A}}=\mathrm{rank}_{R_0} \left(\mathfrak{F}_1^{\mathrm{poly}}\otimes_{R_0}\mathfrak{F}_2^{\mathrm{poly}}\right).
\end{align*} Starting with the following definition:  \begin{definition}
 The Poisson polynomial algebras $\mathfrak{F}_1^{\mathrm{poly}},\mathfrak{F}_2^{\mathrm{poly}}$ and $R_0$ are integral domains. Then we define $\widehat{\mathcal{A}} = \mathfrak{F}_1^{\mathrm{poly}}\otimes_{R_0}\mathfrak{F}_2^{\mathrm{poly}}$ and $\mathcal{A} =  \textbf{Alg} \left\langle F_1^{\mathrm{poly}} \cup F_2^{\mathrm{poly}} \right \rangle= \sqrt{\widehat{\mathcal{A}}}$. The rank of $\mathcal{A}$ is given by \begin{align*}
\mathrm{rank}\, \mathcal{A} = \mathrm{rank}\,  \widehat{\mathcal{A}}  - \delta_{\mathrm{alg}},
 \end{align*} where $\delta_{\mathrm{alg}} \in \mathbb{R}^+$.
\end{definition}

\begin{remark}
\label{re:35}
 (i)  Since $\mathcal{O}(T^*M)$ is reduced, every subalgebra of $\mathcal{O}(T^*M)$ is reduced. In particular,  $ 
\mathcal{A}=\mathrm{im}\,\mu  $ is reduced. Thus, $\mathcal{A}$ is already reduced, and no further reduction is required. Hence, we have $\mathcal{A} = \sqrt{\widehat{\mathcal{A}}}$.

(ii) For any finite $R$-algebra $B$, we have $\mathrm{rank}_RB=\mathrm{rank}_R\sqrt{B}$. Indeed, a localization at the generic point eliminates nilpotent components in $B$: \begin{align*}
K\otimes_RB  \cong K\otimes_R\bigl(B/\mathrm{nil}(B)\bigr)  \cong K\otimes_R\sqrt{B},
\end{align*} where $\mathrm{nil}(B)$ is the nilpotent component in $B$. Hence, $\mathrm{rank}_R\widehat{\mathcal{A}}=\mathrm{rank}_R\sqrt{\widehat{\mathcal{A}}}$.

(iii) By Definition \ref{def:algetensorp}, $\widehat{\mathcal{A}}=\bigl(\mathfrak{F}_1^{\mathrm{poly}}\otimes_{R_0}\mathfrak{F}_2^{\mathrm{poly}}\bigr) /\ker\mu$, so $\ker\mu=0$ gives an isomorphism $\widehat{\mathcal{A}}\ \xrightarrow{\cong }\ \mathcal{A}$. Hence $\delta_{\mathrm{alg}} = \dim \ker \mu$.

\end{remark}

We now show $\mathcal{A} \cong \widehat{\mathcal{A}}$ by checking the condition imposed by Proposition \ref{prop:fieldtoring} one by one. As we mentioned before, we show the algebraically disjoint property by checking the regular extension and clear intersection. 

\begin{lemma} \label{lem:invalgclose}
Let $G$ be a simply connected, semisimple, and compact Lie group, and let $G$ act on $ Z_P = \mathrm{Ad}(G)(\mathfrak{m} - \varepsilon W)$. Define $L  = \mathbb{R}( Z_P) = \mathrm{Frac}\left(\mathfrak{F}_1^{\mathrm{poly}}\right)$ and $K  = \mathbb{R}( Z_P)^G$. Then $K$ is algebraically closed in $L$. Equivalently, the extension $L/K$ is regular.
\end{lemma}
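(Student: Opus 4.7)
The plan is to split the regularity of $L/K$ into two pieces: separability, and algebraic closedness of $K$ in $L$. Because $\mathbb{R}$ has characteristic zero, separability is automatic, so the whole content of the lemma reduces to showing that $K$ is algebraically closed in $L$. I will approach this by the standard dichotomy: an algebraic element produces a finite $G$-orbit, and a connected group acting on a field cannot have nontrivial finite orbits.

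First, I would pick $f \in L$ algebraic over $K$ with minimal polynomial $P(X) \in K[X]$ of degree $n$. Since $K = L^G$, the coefficients of $P$ are fixed pointwise by $G$, so applying $g \in G$ to the identity $P(f)=0$ yields $P(g\cdot f)=0$. Hence $G\cdot f$ is contained in the finite root set of $P$, which means the stabilizer $G_f=\{g\in G: g\cdot f = f\}$ has finite index in $G$ since $G/G_f \cong G\cdot f$.

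Next I would show that $G_f$ is a closed subgroup of $G$. Write $f = p/q$ with $p,q \in \mathbb{R}[Z_P]$ and $q\neq 0$; then $g\cdot f = f$ in $L$ is equivalent to $(g\cdot p)\,q = p\,(g\cdot q)$ in the domain $\mathbb{R}[Z_P]$. The key technical point is local finiteness of the regular $G$-action on $\mathbb{R}[Z_P]$: the $G$-orbit of any polynomial is contained in the image of the filtration by degree, which is finite-dimensional, so there exist finite-dimensional $G$-stable subspaces $V_p,V_q\subset \mathbb{R}[Z_P]$ with $p\in V_p$, $q\in V_q$. The orbit map $\Phi:G\to V_p\oplus V_q$, $g\mapsto (g\cdot p,\,g\cdot q)$, is continuous, and $G_f$ is the preimage under $\Phi$ of the closed set $\{(a,b):aq-pb=0\}\subset V_p\oplus V_q$. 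Hence $G_f$ is closed in $G$.

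Finally, a closed subgroup of finite index in a topological group is automatically open, because its complement is a finite union of closed cosets. So $G_f$ is open and closed in $G$. By hypothesis $G$ is simply connected and in particular connected, so the only nonempty open-closed subset is $G$ itself; hence $G_f=G$ and $f\in L^G = K$. This proves $K$ is algebraically closed in $L$, and combined with the automatic separability gives that $L/K$ is regular. The main obstacle I anticipate is the closedness step: one must invoke local finiteness of the $G$-action on the coordinate ring (which in turn rests on the filtration of $\mathbb{R}[Z_P]$ by degree being $G$-stable, so that $p$ and $q$ sit in finite-dimensional subrepresentations) and handle the denominator via the cross-multiplication identity in the domain $\mathbb{R}[Z_P]$; the remaining steps are essentially formal consequences of connectedness.
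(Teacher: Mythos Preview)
Your proof is correct and shares the same skeleton as the paper's: the orbit $G\cdot f$ lies in the finite root set of the minimal polynomial, and connectedness of $G$ forces the orbit to be a single point. The execution of the connectedness step differs. You argue structurally: the stabilizer $G_f$ is closed (via local finiteness of the $G$-action on $\mathbb{R}[Z_P]$ and the cross-multiplication condition $(g\cdot p)q = p(g\cdot q)$), hence open by finite index, hence all of $G$. The paper instead argues pointwise: for each $z\in Z_P$ in the domain of $f$, the evaluation $g\mapsto (g\cdot f)(z)=f(g^{-1}\cdot z)$ is a continuous map from the connected group $G$ into the finite set $\{\beta_1(z),\dots,\beta_m(z)\}\subset\mathbb{R}$, hence constant; varying $z$ over a dense open set gives $g\cdot f=f$ in $L$. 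The paper's route is lighter on representation-theoretic input (it only needs smoothness of the $G$-action on $Z_P$, not local finiteness on the coordinate ring), while your route is more algebraic and transfers verbatim to the setting of connected algebraic groups acting rationally, which is exactly the generalisation the paper notes in the remark following the lemma.
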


\begin{proof}
Let $\alpha\in L$ be algebraic over $K$, and let $p(t)\in K[t]$ be its monic minimal polynomial. Since $K$ is $G$-invariant, the coefficients of $p(t)$ is $G$-invariant. Hence, for any $g\in G$, the element $g\cdot \alpha$ satisfies the same polynomial $p(t)$ because the coefficients lie in $K=L^G$. That is, \begin{align*}
    p(g\cdot \alpha) = g \cdot p(\alpha) = g \cdot 0 = 0.
\end{align*} Therefore, the orbit $G\cdot\alpha = \{g \cdot \alpha: g \in G\}$ is contained in the finite set of roots of $p(t)  \in K[t]$.  Let $\mathcal{R} = \{\beta_1, \dots, \beta_m\} \subset L$ denote this set. To prove $\alpha \in K$, it suffices to show that $\alpha$ is $G$-invariant. In other words, the map $G \to \mathbb{R}$ given by $g \mapsto g \cdot \alpha$ is constant. 
Fix a point $z \in Z_P$. We define the evaluation map $\Psi_z: G \to \mathbb{R}$ by \begin{align*}
    \Psi_z(g) = (g \cdot \alpha)(z) = \alpha(g^{-1} \cdot z).
\end{align*}  Since $G$ acts smoothly on $Z_P$, and $\alpha$ is regular on the orbit of $z$, the map $\Psi_z$ is continuous. The image $\Psi_z(G)$ is contained in the finite set of real values $\{\beta_1(z), \dots, \beta_m(z)\} \subset \mathbb{R}$. 
The only connected subsets of a finite set are singletons. Thus, $\Psi_z$ is constant, implying $(g \cdot \alpha)(z) = \alpha(z)$ for all $g \in G$. Since $z$ was chosen arbitrarily from a dense open set $U$, we conclude that $g \cdot \alpha = \alpha$ are elements of the function field $L$. Therefore, $\alpha \in L^G = K$.
\end{proof}

\begin{remark}
In a more general setting, let $G$ be a connected affine algebraic group, and let $G$ act rationally on an irreducible $\mathbb{R}$-variety $X$. Define $L  = \mathbb{R}(X)$ and $K  = \mathbb{R}(X)^G$. Then $L/K$ is regular.   
\end{remark}

We now show that in our setting, $L \cap M = K$.

\begin{lemma} \label{lem:orbitconst}
Let $\phi: = a/b\in \mathrm{Frac} \, S(\mathfrak{g})$ and $\psi:= c/d\in \mathrm{Frac} \,S(\mathfrak{m})^A$ be defined with $a,b \in S(\mathfrak{g})$ and $c,d \in S(\mathfrak{m})^A$. Suppose  $\phi(\mathrm{Ad}(G)u)=\psi(u)$  holds on a non-empty Zariski open subset $$U   =\left\{(g,u)\in T^*M : b(\mathrm{Ad}(g)u)\neq 0,\ d(u)\neq 0\right\}$$ of  $T^*M$. Then $\phi$ is $\mathrm{Ad}(G)$-invariant as a rational function on $\overline{\mathrm{Ad}(G)\mathfrak{m}}\subset\mathfrak{g}$, hence $ \phi\in \mathbb{R}(\overline{\mathrm{Ad}(G)\mathfrak{m}})^G =\mathrm{Frac}\big(\mathbb{R}[\overline{\mathrm{Ad}(G)\mathfrak{m}}]^G\big) $.
\end{lemma}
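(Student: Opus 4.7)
The plan is to exploit the asymmetry in the identity $\phi(\mathrm{Ad}(g)u)=\psi(u)$: the right-hand side is independent of $g$, so varying $g$ while fixing $u$ forces $\phi$ to be constant along each adjoint orbit meeting $\mathfrak{m}$. Zariski density then promotes this orbitwise constancy to full $G$-invariance of $\phi$ as a rational function on $\overline{\mathrm{Ad}(G)\mathfrak{m}}$, and the final equality $\mathbb{R}(\overline{\mathrm{Ad}(G)\mathfrak{m}})^G=\mathrm{Frac}(\mathbb{R}[\overline{\mathrm{Ad}(G)\mathfrak{m}}]^G)$ will follow from reductivity of $G$ via the Reynolds operator coming from Haar averaging.

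For the first part, I would pick any $(g_0,u_0)\in U$ and set $\xi_0:=\mathrm{Ad}(g_0)u_0\in\mathrm{Ad}(G)\mathfrak{m}$. For each $h\in G$ satisfying $(hg_0,u_0)\in U$, the hypothesis gives $\phi(\mathrm{Ad}(h)\xi_0)=\phi(\mathrm{Ad}(hg_0)u_0)=\psi(u_0)=\phi(\mathrm{Ad}(g_0)u_0)=\phi(\xi_0)$. The set of admissible $h$ is the Zariski open subset of $G$ cut out by $b(\mathrm{Ad}(hg_0)u_0)\neq 0$, which is non-empty (it contains $e$) and therefore dense because $G$ is irreducible. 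As $(g_0,u_0)$ varies over $U$, the point $\xi_0$ traces out a Zariski dense subset of the irreducible variety $\overline{\mathrm{Ad}(G)\mathfrak{m}}$, obtained as the closure of the image of the polynomial map $G\times\mathfrak{m}\to\mathfrak{g}$, $(g,u)\mapsto\mathrm{Ad}(g)u$. Consequently, the rational function $\Phi(h,\xi):=\phi(\mathrm{Ad}(h)\xi)-\phi(\xi)$ on $G\times\overline{\mathrm{Ad}(G)\mathfrak{m}}$ vanishes on a Zariski dense subset and hence vanishes identically, which is precisely the $G$-invariance of $\phi$ on $\overline{\mathrm{Ad}(G)\mathfrak{m}}$.

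To identify $\phi$ with an element of $\mathrm{Frac}(\mathbb{R}[\overline{\mathrm{Ad}(G)\mathfrak{m}}]^G)$, I would write $\phi=a/b$ on $\overline{\mathrm{Ad}(G)\mathfrak{m}}$ and consider the $G$-stable denominator ideal $J:=\{h\in\mathbb{R}[\overline{\mathrm{Ad}(G)\mathfrak{m}}]:h\phi\in\mathbb{R}[\overline{\mathrm{Ad}(G)\mathfrak{m}}]\}$, which is nonzero because it contains $b$. Since $G$ is compact, Haar integration yields a Reynolds operator $\mathcal{R}:\mathbb{R}[\overline{\mathrm{Ad}(G)\mathfrak{m}}]\twoheadrightarrow\mathbb{R}[\overline{\mathrm{Ad}(G)\mathfrak{m}}]^G$ that fixes invariants, preserves $J$, and does not annihilate the whole of $J$ (using that $J$ is a nonzero $G$-submodule of a rational $G$-module). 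Applying $\mathcal{R}$ to a suitable element of $J$ produces a nonzero invariant $b'\in J\cap\mathbb{R}[\overline{\mathrm{Ad}(G)\mathfrak{m}}]^G$; then $a':=b'\phi$ is also $G$-invariant by the first half of the argument, giving $\phi=a'/b'\in\mathrm{Frac}(\mathbb{R}[\overline{\mathrm{Ad}(G)\mathfrak{m}}]^G)$.

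The main technical subtlety I anticipate is justifying that $\phi$ restricts to a genuine rational function on $\overline{\mathrm{Ad}(G)\mathfrak{m}}$, i.e.\ that its denominator $b$ is not identically zero on this subvariety. This is exactly the content of the hypothesis $U\neq\emptyset$, which furnishes a point $(g,u)$ with $b(\mathrm{Ad}(g)u)\neq 0$, so $b$ cannot vanish on all of $\mathrm{Ad}(G)\mathfrak{m}$. The remaining steps—that the pullback loci $\{b(\mathrm{Ad}(g)u)\neq 0\}$ and $\{d(u)\neq 0\}$ are Zariski open in $G\times\mathfrak{m}$, that $G$ and $\overline{\mathrm{Ad}(G)\mathfrak{m}}$ are irreducible, and that a rational function vanishing on a dense subset of an irreducible variety vanishes identically—are standard, and the reductive-quotient fact $\mathbb{R}(Y)^G=\mathrm{Frac}(\mathbb{R}[Y]^G)$ for an affine $G$-variety $Y$ with compact $G$ is folklore once the Reynolds operator is in place.
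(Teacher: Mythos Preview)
Your proof is correct and takes a genuinely different route from the paper. The paper argues infinitesimally: it differentiates $\phi(\mathrm{Ad}(g\exp(tY))u)=\psi(u)$ at $t=0$, rewrites the result via the invariant form $B$ as $[\nabla\phi(y),y]=0$ for $y=\mathrm{Ad}(g)u$, and reads this as the vanishing of the derivative of $\phi$ along adjoint orbits. You instead argue directly from Zariski density: since $\psi(u)$ is independent of $g$, left-multiplying $g$ by $h$ shows $\phi$ is constant on a dense piece of each orbit, and irreducibility of $G\times\overline{\mathrm{Ad}(G)\mathfrak{m}}$ globalizes this to the vanishing of the rational function $\Phi(h,\xi)=\phi(\mathrm{Ad}(h)\xi)-\phi(\xi)$. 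Your route is more elementary---no gradients, no $B$, no differentiation of rational functions---while the paper's route keeps the orbit-tangency picture explicit and matches the Hamiltonian-vector-field computations used elsewhere. You also sketch the identity $\mathbb{R}(Y)^G=\mathrm{Frac}(\mathbb{R}[Y]^G)$, which the paper simply cites as a GIT fact. One caution on that step: the parenthetical ``$J$ is a nonzero $G$-submodule of a rational $G$-module'' does not by itself force $\mathcal{R}(J)\neq 0$ (think of $J=(x)$ under a scaling $\mathbb{G}_m$-action on the line); the compactness you invoke is what gives the honest fix, namely $b':=\mathcal{R}(b^2)=\int_G(g\cdot b)^2\,dg$ is strictly positive wherever $b$ does not vanish, hence nonzero, and lies in $J$ because $b'\phi=\int_G(g\cdot b)(g\cdot a)\,dg=\mathcal{R}(ab)$ is regular.
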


\begin{proof}
To show that $\phi$ is $\mathrm{Ad}(G)$-invariant, it is sufficient to show that for a generic $y$, we have $d\phi_y([X,y]) = 0$ for any $X \in \mathfrak{g}$. Fix $(g,u) \in U$, define the one-parameter flow of $g$ by $g(t)=g \exp(tY)$ for all $t \in \mathbb{R}$. Clearly, $g = g(0)$. As $U$ is open and contains $(g,u)$, the continuity of $t \mapsto (g(t),u)$ implies that there exists a $\varepsilon > 0$ such that $(g(t),u) \in U$ for all $|t| < \varepsilon$. Hence, by the assumption, in this neighbourhood, we must have $\phi(\mathrm{Ad}(g(t))u)=\psi(u)$. Differentiating the identity $\phi(\mathrm{Ad}(g(t))u)=\psi(u)$ at $t=0$ gives
\begin{align}
0 = \left.\dfrac{d}{dt}\right\vert_{t= 0} \phi\left(\mathrm{Ad}(g\exp(tY)u)\right)  =  d\phi_{\mathrm{Ad}(g)u}\big(\mathrm{Ad}(g)[Y,u]\big) = B \left(\nabla \phi(\mathrm{Ad}(g)u),\mathrm{Ad}(g) [Y,u]\right). \label{eq:differentialra}
\end{align} With an $\mathrm{Ad}(G)$-invariant inner product $B$, the evaluation in \eqref{eq:differentialra} equals $ B([\mathrm{Ad}(g^{-1})\nabla\phi(\mathrm{Ad}(g)u),u],Y) $ for all $Y\in\mathfrak{g}$. Thus, \begin{align}
    [\mathrm{Ad}(g^{-1})\nabla\phi(\mathrm{Ad}(g)u),u]=0. \label{eq:commutatorrel}
\end{align}  Let $y=\mathrm{Ad}(g)u$ be generic. From the trivial commutator relations in \eqref{eq:commutatorrel}, we therefore get $[\nabla\phi(y),y]=0$, i.e., $\phi$ has zero directional derivatives along the tangent to the adjoint orbit through $y$. Thus, $\phi$ is constant on a Zariski open subset of each generic adjoint orbit in $\overline{\mathrm{Ad}(G)\mathfrak{m}}$, so $\phi$ is $\mathrm{Ad}(G)$-invariant.
\end{proof}

 By Proposition \ref{prop:inter}, we have $R_0=\mathfrak{F}_1^{\mathrm{poly}}\cap \mathfrak{F}_2^{\mathrm{poly}}$ inside $\mathcal{O}(T^*M)$. 
Then we always have the following inclusion:
\begin{equation}
K=\mathrm{Frac}(\mathfrak{F}_1^{\mathrm{poly}}\cap \mathfrak{F}_2^{\mathrm{poly}})\,\subseteq\, L\cap M   \label{eq:inclusion}
\end{equation} inside  $\mathrm{Frac}\,\mathcal{O}(T^*M)$. The statement of Lemma \ref{lem:LcapM} below is the reverse inclusion $L\cap M\subseteq K$, hence $L\cap M=K$.
  
\begin{lemma} \label{lem:LcapM}
With $L:=\mathrm{Frac} \, \left(\mathfrak{F}_1^{\mathrm{poly}}\right)=\mathbb{R}(Z_P)$, $M:=\mathrm{Frac} \, \left(\mathfrak{F}_2^{\mathrm{poly}}\right)=\mathbb{R}(\mathfrak{m})^A$, and $K:= \mathrm{Frac} \left( R_0\right)$, we have
\begin{align*}
L\cap M=K .
\end{align*}
\end{lemma}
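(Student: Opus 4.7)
Given the inclusion $K\subseteq L\cap M$ from \eqref{eq:inclusion}, my plan is to establish the reverse inclusion $L\cap M\subseteq K$. Fix any $f\in L\cap M$. Writing $L=\mathbb{R}(Z_P)$ with $Z_P=\mathrm{Ad}(G)(\mathfrak{m}-\varepsilon W)$ and using $\mathfrak{F}_1^{\mathrm{poly}}\cong\mathbb{R}[Z_P]$ from Lemma \ref{lem:poly} and Remark \ref{rem:ideal}, I represent $f=P^*\phi$ with $\phi=a/b\in\mathbb{R}(Z_P)$. Likewise, since $\pi_{\mathfrak{m}}^*$ is injective, there is $\psi=c/d\in\mathbb{R}(\mathfrak{m}-\varepsilon W)^A$ with $f=\pi_{\mathfrak{m}}^*\psi$. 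Equating these two pullbacks on the nonempty Zariski open subset $U'\subset T^*M$ where both denominators are nonvanishing produces, at any $(g,X)\in U'$ with $\xi:=X-\varepsilon W$, the pointwise identity $\phi(\mathrm{Ad}(g)\xi)=\psi(\xi)$.

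Next I invoke the affine-slice version of Lemma \ref{lem:orbitconst}: since the right-hand side is independent of $g$, the one-parameter flow argument $t\mapsto g\exp(tY)$, combined with $\mathrm{Ad}(A)W=W$, yields $[\nabla\phi(\mathrm{Ad}(g)\xi),\mathrm{Ad}(g)\xi]=0$ at generic points of $Z_P$. Hence $\phi$ is constant on a Zariski open subset of every generic adjoint orbit in $Z_P$, which forces $\phi\in\mathbb{R}(Z_P)^G$. The final assertion of Lemma \ref{lem:orbitconst} identifies this field of invariants with $\mathrm{Frac}(\mathbb{R}[Z_P]^G)$, and by Proposition \ref{prop:inter} together with Remark \ref{re:poisscenter} we have a canonical identification $R_0\cong \mathbb{R}[Z_P]^G$. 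Therefore $\phi\in\mathrm{Frac}(R_0)$, and pulling back via $P^*$ places $f$ inside $K$ as a subfield of $\mathrm{Frac}(\mathcal{O}(T^*M))$. Combined with \eqref{eq:inclusion}, this yields $L\cap M=K$.

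The only delicate point I anticipate is verifying that the proof of Lemma \ref{lem:orbitconst} transplants verbatim from $\mathrm{Ad}(G)\mathfrak{m}$ to the affine variety $Z_P=\mathrm{Ad}(G)(\mathfrak{m}-\varepsilon W)$. This reduces to observing that the flow $g\exp(tY)$ remains in the open set $U'$ for small $t$ by continuity, and that $W$ plays no active role in the differentiation step since it is constant along the flow. Both facts are immediate, so the main substantive ingredient is the identification $\mathbb{R}(Z_P)^G=\mathrm{Frac}(\mathbb{R}[Z_P]^G)$ supplied by Lemma \ref{lem:orbitconst}, which also reconciles the two notations for $K$ appearing in Lemmas \ref{lem:invalgclose} and \ref{lem:LcapM}.
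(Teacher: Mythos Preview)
Your proposal is correct and follows essentially the same approach as the paper: represent $f\in L\cap M$ by rational functions $\phi$ and $\psi$ pulled back along $P$ and $\pi_{\mathfrak{m}}$, use Lemma \ref{lem:orbitconst} (adapted to the affine slice $\mathfrak{m}-\varepsilon W$) to conclude $\phi$ is $\mathrm{Ad}(G)$-invariant on $Z_P$, and then invoke $\mathbb{R}(Z_P)^G=\mathrm{Frac}(\mathbb{R}[Z_P]^G)\cong\mathrm{Frac}(R_0)$ to land in $K$. The paper carries out the last step slightly more explicitly by writing $\phi=\overline{r}/\overline{t}$ with $r,t\in S(\mathfrak{g})^G$ and checking that $r\circ P,\,t\circ P\in\mathfrak{F}_1^{\mathrm{poly}}\cap\mathfrak{F}_2^{\mathrm{poly}}=R_0$, but this is exactly the identification you cite from Remark \ref{re:poisscenter}.
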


\begin{proof}
By Proposition \ref{prop:inter}, inclusion $K\subset L\cap M$ is immediate. For reverse inclusion, let $f\in L\cap M$. Then there exist $\phi\in \mathrm{Frac}\, S(\mathfrak{g})$ and $\psi\in \mathrm{Frac} \,S(\mathfrak{m})^A$ such that, as rational functions on $\mathcal{O}(T^*M)$, \begin{align*}
f=\phi\circ \bar{P} = \psi \circ \pi_\mathfrak{m} \quad\text{and}\quad f((g,u))=\psi(u).
\end{align*} writes $\phi=a/b$ with $a,b\in S(\mathfrak{g})$ and $\psi=c/d$ with $c,d\in S(\mathfrak{m})^A$. In $U$, we have that $\left(\phi \circ \bar{P}\right)\vert_U = \left( \psi \circ \pi_\mathfrak{m}\right)\vert_U$. This leads to the identity \begin{equation}
a(\mathrm{Ad}(g)u)\,d(u)= c(u)\,b(\mathrm{Ad}(g)u) \label{eq:cross}
\end{equation} on $U$. Fix $u$ in a non-empty Zariski open $\mathfrak{m}^\circ\subset\mathfrak{m}$ such that $d(u)\neq 0$ and $ G_u^\circ:=\{\,g\in G: (g,u)\in U\,\} $ is non-empty Zariski open in $G$. For all $g\in G_u^\circ$, \eqref{eq:cross} gives \begin{align*}
\frac{a(\mathrm{Ad}(g)u)}{b(\mathrm{Ad}(g)u)}=\frac{c(u)}{d(u)} .
\end{align*}  As the equality $\varphi(\mathrm{Ad}(g)u) = \psi(u)$ holds on $U$, by Lemma \ref{lem:orbitconst}, $\varphi$ is $\mathrm{Ad}(G)$-invariant on $\mathrm{Ad}(G) (\mathfrak{m}-\varepsilon W)$. It follows that the rational function $\phi=a/b$ is constant on a Zariski open subset of the adjoint orbit $\mathrm{Ad}(G)u$. As $u$ varies over $\mathfrak{m}^\circ$, the union of these orbits is Zariski dense in $\overline{\mathrm{Ad}(G)\mathfrak{m}}$. Thus, $\phi$ is $\mathrm{Ad}(G)$-invariant in $\overline{\mathrm{Ad}(G)\mathfrak{m}}$ as a rational function: \begin{align*}
\phi\in \mathbb{R}(\overline{\mathrm{Ad}(G)\mathfrak{m}})^G.
\end{align*} Here, by geometric invariant theory for semisimple groups, we have \begin{align*}
\mathbb{R}(\overline{\mathrm{Ad}(G)\mathfrak{m}})^G = \mathrm{Frac}\big(\mathbb{R}[\overline{\mathrm{Ad}(G)\mathfrak{m}}]^G\big),  \text{ where }  \mathbb{R}[\overline{\mathrm{Ad}(G)\mathfrak{m}}]^G\cong S(\mathfrak{g})^G\big/\big(I(\overline{\mathrm{Ad}(G)\mathfrak{m}}) \cap S(\mathfrak{g})^G\big).
\end{align*} Therefore, there exist invariant polynomials $r,t \in S(\mathfrak{g})^G$ mapping to $\overline r,\overline{t}\in \mathbb{R}[\overline{\mathrm{Ad}(G)\mathfrak{m}}]^G$ such that $\phi=\overline r/\overline{t}$ in $\mathbb{R}(\overline{\mathrm{Ad}(G)\mathfrak{m}})$. Pulling back along $\bar{P}$ yields \begin{align*}
f=\phi\circ\bar{P}=\frac{\overline r\circ\bar{P}}{\overline{t}\circ\bar{P}} =\frac{r\circ\bar{P}}{t\circ\bar{P}}.
\end{align*} Since $r,t$ are $\mathrm{Ad}(G)$-invariant on $\mathfrak{g}$, for all $(g,u)\in X$, we have \begin{align*}
(r\circ\bar{P})((g,u))=r(\mathrm{Ad}(g)u)=r(u)=(r\vert_{\mathfrak{m}})(u),
\end{align*} and similarly for $t$. Hence, $r\circ\bar{P},\,t\circ\bar{P}$ lies both in $\mathfrak{F}_1^{\mathrm{poly}}$ and in $\mathfrak{F}_2^{\mathrm{poly}}$, i.e., in $R_0$. Therefore, $f\in \mathrm{Frac} R_0=K$, proving $L\cap M\subset K$.

In conclusion, $K = L \cap M$ as required.
\end{proof}

Together with Lemma \ref{lem:invalgclose} and Lemma \ref{lem:LcapM}, we now upgrade the field intersection argument to the required linear disjointness statement by using the fixed field structure of the left $G$-action.

\begin{lemma}\cite[Ch. VIII,  Section 3]{LangAlgebra}
\label{lem:artinindependence}
Let a group $\Gamma$ act by automorphisms on a field $E$, and let $E_0:=E^\Gamma$. If $x_1,\ldots,x_N\in E$ are $E_0$-linearly independent, then there exist $\gamma_1,\ldots,\gamma_N\in\Gamma$ such that
\begin{align*} \det\big(\gamma_i(x_j)\big)_{1\leq i,j\leq N}\neq 0 . \end{align*}
\end{lemma}

\begin{proof}
We show the statement by contradiction. Assume that for every choice of $\gamma_1,\ldots,\gamma_N \in \Gamma$, the determinant in the statement vanishes. Then the vectors $$(\gamma(x_1),\ldots,\gamma(x_N))\in E^N = \underbrace{E \oplus \ldots \oplus E}_{N\text{ copies}}$$ span a proper $E$-subspace. Hence, there would be a non-trivial relation \begin{align} 
\sum_{j=1}^N a_j\gamma(x_j)=0\quad\text{for all }\gamma\in\Gamma,\qquad a_j\in E .  \label{eq:artinrelation}
\end{align} Choose such a relation with the smallest possible number of non-zero coefficients. Thus, there is a subset $J\subset\{1,\ldots,N\}$, with $J\neq\varnothing$, such that $a_j\neq 0$ exactly for $j\in J$, and $|J|$ is minimal among all non-trivial relations of the form \eqref{eq:artinrelation} holding for every $\gamma\in\Gamma$. Pick some $j_0\in J$ and divide the relation by $a_{j_0}$, so that after renumbering we may assume $j_0=1$ and \begin{align*}
 a_1=1,\qquad \sum_{j\in J} a_j\gamma(x_j)=0\quad\text{for all }\gamma\in\Gamma.
\end{align*} Now fix $\tau\in\Gamma$ and apply $\tau$ to the displayed relation. Since $\tau(\gamma(x_j))=(\tau\gamma)(x_j)$, we obtain \begin{align}
 \sum_{j\in J} \tau(a_j)(\tau\gamma)(x_j)=0\quad\text{for all }\gamma\in\Gamma. \label{eq:lineardecompos}
\end{align} Since left multiplication $\Gamma \to \Gamma$, $\gamma \mapsto \tau\gamma$, is a bijection, we may take $\gamma':= \tau \gamma$. Then \eqref{eq:lineardecompos} is equivalently \begin{align}
 \sum_{j\in J} \tau(a_j)\gamma(x_j)=0\quad\text{for all }\gamma \in \Gamma. \label{eq:obersvation}
\end{align} Subtracting \eqref{eq:obersvation} from \eqref{eq:lineardecompos} yields \begin{align*}
 \sum_{j\in J} (\tau(a_j)-a_j)\gamma(x_j)=0\quad\text{for all }\gamma\in\Gamma.
\end{align*} The coefficient of $\gamma(x_1)$ is $\tau(a_1)-a_1=0$, so this is a relation supported on (a subset of) $J\setminus\{1\}$. By the minimality of $|J|$, all remaining coefficients must vanish: $\tau(a_j)=a_j$ for every $j\in J$. Since $\tau\in\Gamma$ was arbitrary, and each $a_j$ is fixed by $\Gamma$, we have $a_j \in E^\Gamma=E_0$. Finally, taking $\gamma=\mathrm{id}$ gives a non-trivial $E_0$-linear relation among $x_1,\ldots,x_N$, contradicting their $E_0$-linear independence.
\end{proof}

 We now apply Lemma \ref{lem:artinindependence} to the left $G$-action on the field $L$, whose fixed field will be identified below with $K=L^G$. The lemma produces, from any $K$-linearly independent family $\ell_1,\ldots,\ell_N\in L$, group elements $g_1,\ldots,g_N\in G$ such that the matrix $(g_i(\ell_j))$ has a non-zero determinant. Since $G$ fixes $M$ pointwise, this non-degeneracy allows us to test $M$-linear relations among the $\ell_j$ by applying the $g_i$, forcing all coefficients to vanish. This yields the desired linear disjointness of $L$ and $M$ over $K$.

\begin{proposition} 
\label{prop:algdisjoint} \cite[Chapter VIII, Section 3 and 4]{LangAlgebra}
Let $\Omega = \mathrm{Frac} (\mathcal{O}(T^*M))$. With $L,M,K$ as in Lemma \ref{lem:LcapM}, the fields $M$ and $L$ are linearly disjoint over $K$ and
\begin{align*}
\mathrm{trdeg}_K(LM)=\mathrm{trdeg}_K L+\mathrm{trdeg}_K M.
\end{align*} In particular, $L \otimes_K M$ is an integral domain.
\end{proposition}



\begin{proof}
The left $G$-action on $T^*M\simeq G\times_A(\mathfrak{m}-\varepsilon W)$ is given by
\begin{align*} h\cdot[g,\xi]=[hg,\xi] . \end{align*}
Since $P(hg,\xi) = \mathrm{Ad}(h)P(g,\xi)$, the field $L = \mathrm{Frac}\,\mathfrak{F}_1^{\mathrm{poly}}$ is $G$-stable. Since $\pi_{\mathfrak{m}}(hg,\xi)=\xi=\pi_{\mathfrak{m}}(g,\xi)$, the field $M = \mathrm{Frac}\,\mathfrak{F}_2^{\mathrm{poly}}$ is fixed pointwise by $G$.

Moreover, the fixed field of the $G$-action on $L$ is exactly $K$. Indeed, $L=\mathbb{R}(Z_P)$ with $Z_P = \mathrm{Ad}(G)(\mathfrak{m}-\varepsilon W)$, and Proposition \ref{prop:inter} together with Remark \ref{re:poisscenter} identify $R_0$ with $\mathbb{R}[Z_P]^G$. Therefore, \begin{align*}
L^G = \mathbb{R}(Z_P)^G = \mathrm{Frac}\,\mathbb{R}[Z_P]^G = \mathrm{Frac}\,R_0=K . 
\end{align*} Now let $\ell_1,\ldots,\ell_N\in L$ be $K$-linearly independent. By Lemma \ref{lem:artinindependence}, there exist $g_1,\ldots,g_N\in G$ such that
\begin{align*} \det\big(g_i(\ell_j)\big)_{1\leq i,j\leq N}\neq 0 . \end{align*}
Suppose that $\sum_{j=1}^N m_j\ell_j=0$ with $m_j\in M$. Applying $g_i$ to this relation and using that $G$ fixes $M$ pointwise (so $g_i(m_j) = m_j$) gives
\begin{align*}
 \sum_{j = 1}^N m_j g_i(\ell_j) = 0\quad\text{for }i = 1,\ldots,N .
\end{align*}
In other words, if we set $A:=(g_i(\ell_j))_{1\leq i,j\leq N}\in M_N(L)\subset M_N(\Omega)$ and $m:= (m_1,\ldots,m_N)^t\in M^N\subset \Omega^N$, then the above equations read
\begin{align*}
 A\,m = 0\quad\text{in }\Omega^N.
\end{align*}
By construction, $\det(A)\neq 0$ in $L$ (hence also in $\Omega$), so $A$ is invertible over $\Omega$ (e.g., by the adjugate formula $A^{-1}=\mathrm{adj}(A)/\det(A)$). Multiplying the equation $A m = 0$ by $A^{-1}$ yields $m=0$, i.e. $m_1=\cdots=m_N=0$.

Therefore, any $K$-linearly independent family in $L$ is also $M$-linearly independent. Equivalently, the natural map $L\otimes_K M\to \Omega$ is injective, which is the definition of linear disjointness of $L$ and $M$ over $K$. The transcendence-degree identity then follows from standard properties of linearly disjoint field extensions.
\end{proof}

\begin{corollary}
\label{cor:deltaalgzero}
For the multiplication map $\mu:\widehat{\mathcal{A}} = \mathfrak{F}_1^{\mathrm{poly}}\otimes_{R_0}\mathfrak{F}_2^{\mathrm{poly}}\to\mathcal{O}(T^*M)$, we have  $\ker\mu = 0$ (and in particular $(R_0\setminus\{0\})^{-1}\ker\mu = 0$). Consequently, $\delta_{\mathrm{alg}} = 0$.
\end{corollary}

\begin{proof}
By Corollary \ref{cor:norela} applied to Proposition \ref{prop:algdisjoint}, we already know that $(R_0\setminus\{0\})^{-1}\ker\mu = 0$. That is, $\ker\mu$ is an $R_0$ torsion ideal in $\widehat{\mathcal{A}}$. Since $\widehat{\mathcal{A}}$ is an integral domain, multiplication by any $0\neq r\in R_0\subset \widehat{\mathcal{A}}$ is injective. Hence, an $R_0$ torsion element of $\widehat{\mathcal{A}}$ must be zero. Applying this to elements of $\ker\mu$, we conclude that $\ker\mu=0$.

Finally, the transcendence degree identity in Proposition \ref{prop:algdisjoint} gives $\delta_{\mathrm{alg}}=0$.
\end{proof}

   \bibliographystyle{unsrt}
\bibliography{bibliography.bib}
\end{document}